\newif\ifdraft\drafttrue
\newif\iffull\fulltrue
\newcommand{\tetsuya}[1]{}
\newcommand{\mg}[1]{}
\newcommand{\gb}[1]{}
\newcommand{\jh}[1]{}
\newcommand{\sk}[1]{}
\newcommand{\ol}{\overline}
\newcommand{\FinSet}{\mathbf{Fin}}
\newcommand{\Meas}{\mathbf{Meas}}
\newcommand{\Span}{\mathbf{Span}}
\newcommand{\wCPO}{\omega\mathbf{CPO}}
\newcommand{\BB}{\mathbb B}
\newcommand{\EE}{\mathbb E}
\newcommand{\NN}{\mathbb N}
\newcommand{\RR}{\mathbb R}
\newcommand{\ZZ}{\mathbb Z}
\newcommand{\labeleq}[1]{\mathbin{\ooalign{$=$\crcr\hss\raisebox{1.0ex}[0pt][0pt]{\scalebox{0.7}[0.7]{#1}}\hss}}}
\newcommand{\inverse}[1]{{#1}^{\hspace{-0.15em} - \hspace{-0.1em}1}\hspace{-0.1em}}
\newcommand{\lrangle}[1]{\langle #1\rangle}
\newcommand{\interpret}[1]{{\llbracket {#1} \rrbracket}}
\newcommand{\Rinterpret}[1]{{\llparenthesis {#1} \rrparenthesis}}
\newcommand{\defeq}{\stackrel{\mathsf{def}}{=}}
\newcommand{\PARAGRAPH}[1]{\iffull\subsection{#1}\else\vspace*{0.15em}\noindent\textbf{#1.}\fi}
\newtheorem{proposition}{Proposition}[section]
\newtheorem{remark}{Remark}[section]
\begin{document}

%% Title information
\title{Approximate Span Liftings}
\subtitle{Compositional Semantics for Relaxations of Differential Privacy}

\author{Tetsuya Sato}
\affiliation{%
  \institution{University at Buffalo, SUNY}
  \city{Buffalo}
  \state{New York}
  \country{USA}
}

\author{Gilles Barthe}
\affiliation{%
  \institution{IMDEA Software Institute}
  \city{Madrid}
  \country{Spain}
}

\author{Marco Gaboardi}
\affiliation{%
  \institution{University at Buffalo, SUNY}
  \city{Buffalo}
  \state{New York}
  \country{USA}
}

\author{Justin Hsu}
\affiliation{%
  \institution{Cornell University}
  \city{Ithaca}
  \state{New York}
  \country{USA}
}

\author{Shin-ya Katsumata}
\affiliation{%
  \institution{National Institute of Informatics}
  \city{Tokyo}
  \country{USA}
}

%% Abstract
%% Note: \begin{abstract}...\end{abstract} environment must come
%% before \maketitle command
\begin{abstract}
We develop new abstractions for reasoning about relaxations of differential
privacy: \emph{R\'enyi differential privacy}, \emph{zero-concentrated
differential privacy}, and \emph{truncated concentrated differential privacy},
which express different bounds on statistical divergences between two output
probability distributions. In order to reason about such properties
compositionally, we introduce \emph{approximate span-lifting}, a novel
construction extending the
approximate relational lifting approaches previously developed for standard
differential privacy to a more general class of divergences, and also to
continuous distributions. As an application, we develop a program logic
based on approximate span-liftings capable of proving relaxations of
differential privacy and other statistical divergence properties.
\end{abstract}

%% 2012 ACM Computing Classification System (CSS) concepts
%% Generate at 'http://dl.acm.org/ccs/ccs.cfm'.
\begin{CCSXML}
<ccs2012>
<concept>
<concept_id>10011007.10011006.10011008</concept_id>
<concept_desc>Software and its engineering~General programming languages</concept_desc>
<concept_significance>500</concept_significance>
</concept>
<concept>
<concept_id>10003456.10003457.10003521.10003525</concept_id>
<concept_desc>Social and professional topics~History of programming languages</concept_desc>
<concept_significance>300</concept_significance>
</concept>
</ccs2012>
\end{CCSXML}

\ccsdesc[500]{Software and its engineering~General programming languages}
\ccsdesc[300]{Social and professional topics~History of programming languages}
%% End of generated code

%% Keywords
%% comma separated list
% \keywords{relational monadic lifting, continuous probabilities, differential privacy} %% \keywords are mandatory in final camera-ready submission

%% \maketitle
%% Note: \maketitle command must come after title commands, author
%% commands, abstract environment, Computing Classification System
%% environment and commands, and keywords command.
\maketitle

\section{Introduction}
\label{sec:introduction}
%4page
%Right now, I often put the paragraphs from the long version.
%DP
Differential privacy~\citep{DworkMcSherryNissimSmith2006} is a strong,
statistical notion of data privacy that has attracted the attention of
theoreticians and practitioners alike.  One reason for its success is that
differential privacy can often be proved \emph{compositionally}, enabling easy
construction of new private algorithms and making formal verification practical.
By now, researchers have developed a wide variety of programming languages and
program analysis tools to prove differential
privacy~\citep{AH17,Barthe:2015:HAR:2676726.2677000,Barthe:2012:PRR:2103656.2103670,GHHNP13,McSherry:2009:PIQ:1559845.1559850,ReedPierce10,Winograd-Cort:2017:FAD:3136534.3110254,zhang2016autopriv}
(\citet{Murawski:2016:2893582} provide a recent survey).

%Renyi DP and zCDP
Seeking more refined composition properties, researchers have
recently proposed new relaxations of differential privacy: \emph{R\'enyi
  differential privacy} (RDP)~\citep{Mironov17}, \emph{zero-concentrated
  differential privacy} (zCDP)~\citep{BunS16}, and \emph{truncated concentrated
differential privacy} (tCDP)~\citep{BDRS18}. Roughly speaking, standard
differential privacy requires a bound on the magnitude of a random variable
measuring the privacy loss, while RDP, zCDP, and tCDP model finer bounds on the
\emph{moments} of this random variable. (Recall that the first moment of a
random variable is its average value, and the second moment of a random variable
is its variance.) These relaxations capture fine-grained aspects of the privacy
loss, enabling more precise privacy analyses and allowing algorithms to add less
random noise to achieve the same privacy level.
%such as privacy-preserving Bayesian inference~\citep{Geumlek2017RnyiDP}.  They
%are also useful for formal verification of differential privacy.  For instance
%for some algorithms, we obtain through zCDP rather than one obtained by
%advanced composition~\citep{5670947,DworkRothTCS-042}.

Each of RDP, zCDP, and tCDP is defined in terms \emph{R\'enyi
divergences}~\citep{Renyi1961}, sophisticated distances on distributions
originating from information theory. Inspiring our work, Barthe and Olmedo
previously developed abstractions for reasoning about a family of divergences
called $f$-\emph{divergences} as part of their work on the program logic
$f$pRHL~\citep{BartheOlmedo2013,olmedo2014approximate}. In particular, the
semantic foundation of $f$pRHL is a \emph{2-witness relational lifting} for
$f$-divergences, which tracks the $f$-divergence between relates pairs of
distributions.
% \mg{I feel that the next part is a bit too straightforward and may give
% the impression that our work is incremental. Can we try to justify it
% with less references to $f$pRHL? We refer 16 times to the work by
% Gilles and Federico, maybe we should try to reduce them.}
However, this framework is not sufficient to establish about our target
properties for two reasons. First, R\'enyi divergences are not
$f$-divergences,\footnote{%
  For instance, all $f$-divergences are jointly convex while R\'enyi divergences
are only quasi-convex~\citep{6832827}.}
while zCDP and tCDP are properly described as \emph{supremums} of \emph{R\'enyi
divergences}, rather than single divergences. As a result, these relaxations of
differential privacy cannot be described in terms of $f$-divergences, nor
captured in $f$pRHL. Accordingly, we develop new relational liftings
supporting significantly more general divergences, allowing direct reasoning
about RDP, zCDP, and tCDP.
% \gb{As noted later in the paper, there is an exact
% relationship between Renyi divergence and Hellinger distance. Because
% of this, it is not completelly clear to me why renyi divergence is
% harder to work with than f-divergences. A bit of explanation would
% help.}

A further challenge is that 2-witness relational liftings to date have only been
proposed for discrete distributions, while many algorithms satisfying
relaxations of differential privacy---indeed, the motivating examples of such
algorithms---sample from continuous distributions, such as the Gaussian
distribution.
Handling these distributions requires a careful treatment of measure theory.
\citet{Sato2016MFPS} has previously considered a different semantic
model for standard differential privacy over continuous distributions using
\emph{witness-free} relational lifting based on a categorical construction
called \emph{codensity lifting}~\citep{katsumata_et_al:LIPIcs:2015:5532}, but it
is not clear how to handle more general divergences with this method.
% this technique is difficult to extend ...
% (more details are in Section~\ref{sec:cont-case-motiv}).

% Shin-ya: Section 2.4 is not about the difficulty of handling general
% divergence with codensity method. It is more natural say in this
% way.

To overcome these difficulties, we generalize $2$-witness liftings in two
directions. First, we replace the notion of $f$-divergence with a more general
class of divergences, identifying the basic properties needed for compositional
reasoning. Second, we generalize these liftings to about continuous
probability measures.  The main challenge is establishing a sequential
composition principle---the continuous case introduces further measurability
requirements for composition.
%Roughly speaking, in 2-witness lifting, distributions are related by the \emph{existence}
%of two \emph{witness distributions} with bounded divergence.
%To check the property of 2-witness lifting for sequential executions of randomized algorithms, we need to find a selection function of witness distributions relating the output distributions.
%In discrete case, thanks to \emph{the axiom of choice}, the selection function exists.
%However, in the continuous case, the selection function should be \emph{measurable}, which is not guaranteed by the axiom of choice.
%To solve this difficulty, we package the measurable selection
%function with the relational lifting in advance.
%Accordingly,
Accordingly, we extend the structure of 2-witness liftings to a new notion
called \emph{approximate span-liftings}, which have the necessary data to ensure
closure under sequential composition. Finally, we specialize our general model
to R\'enyi divergence, divergences for zCDP, and divergences for tCDP,
establishing categorical properties needed to build approximate span-liftings.
As an extended application, we develop a relational program logic that can
verify differential privacy, RDP, zCDP, and tCDP within a single logic for
programs using discrete or continuous sampling, and interpret the logic via
approximate span-liftings.
%The semantics is based on approximate span-liftings, and soundness follows directly from abstract properties of approximate span-liftings.

After motivating the various relaxations of differential privacy and presenting
the key technical challenges (Section~\ref{sec:motivation}), and introducing
mathematical preliminaries (Section~\ref{sec:prelims}), we present our main
contributions.
\begin{itemize}
\item We identify a general class of divergences supporting basic properties
  composition properties, and we show that our class can model RDP, zCDP and
  tCDP (Section~\ref{sec:divergences}).
\item We extend $2$-witness relational liftings to the continuous case by
  introducing a novel notion of approximate span-lifting and showing how to
  translate composition properties of specific divergences  to their
  corresponding approximate span-liftings (Section~\ref{sec:span-liftings}).
\item We develop a program logic supporting four flavors of differential
  privacy---standard DP, RDP, zCDP, and tCDP---where programs may use both
  discrete and continuous random sampling, and show soundness
  (Section~\ref{sec:span-apRHL}). We demonstrate our logic on three
  examples (Section~\ref{sec:examples}).
\end{itemize}
We survey related work (Section~\ref{sec:rw}) and then conclude with promising
future directions (Section~\ref{sec:conclusion}).

\section{Background: Motivation and Technical Challenges}
\label{sec:motivation}
To better understand the key technical challenges, we first introduce relevant
background on privacy, divergences, and existing relational verification
techniques. %For simplicity, we will work with discrete distributions in this
%section. We describe difficulties in the continuous case at the end of this
%section, and develop the machinery needed to handle continuous measures in the
%remainder of the paper.
For simplicity, in this section we consider probability distributions
which have associated density functions.
% TTS:
% I put back the subsection for 2-witness liftings to clarify that:
% (1) DP,RDP,zCDP and f-divergence are given in the continuous case.
% (2) 2-witness lifting for f-divergence is given in the discrete case (Section 2.2.1/2.4). 
\subsection{Differential Privacy and its Relaxations}

We first introduce differential
privacy.  A \emph{randomized algorithm} is a
measurable function $\mathcal{A} \colon X \to \mathrm{Prob}(Y)$ from a set $X$
of inputs to the set $\mathrm{Prob}(Y)$ of \emph{probability distributions} on a
set $Y$ of outputs.
%DP
\begin{definition}[Differential Privacy (DP)~\citep{DworkMcSherryNissimSmith2006}]
  A randomized algorithm $\mathcal{A} \colon X \to \mathrm{Prob}(Y)$
  is \emph{$(\varepsilon,\delta)$-differentially private} w.r.t an adjacency
  relation $\Phi\subseteq X\times X$, if for any pairs of inputs $(x,x')\in
  \Phi$, and any measurable subset $S\subseteq Y$, we have 
  $
   \Pr[\mathcal{A}(x) \in S] \leq e^\varepsilon \Pr[\mathcal{A}(x') \in S] + \delta.
  $
\end{definition}
%Renyi-divergence
\begin{definition}[R\'enyi divergence~\citep{Renyi1961}]
  Let $\alpha > 1$. The \emph{R\'enyi divergence} of order $\alpha$
  between two probability distributions $\mu_1$ and $\mu_2$ on a measurable space $X$ is defined by:
\begin{equation}\label{eq:Renyi_divergences}
  D^{\alpha}_X(\mu_1||\mu_2) \defeq \frac{1}{\alpha - 1} \log \int_{X} \mu_2(x) \left( \frac{\mu_1(x)}{\mu_2(x)} \right)^\alpha~dx.
\end{equation}
\end{definition}
%\sk{Now my concern is that the continuous version is missing...}
%RDP
\begin{definition}[R\'enyi Differential Privacy (RDP)~\citep{Mironov17}]
A randomized algorithm $\mathcal{A}:X\to\mathrm{Prob}(Y)$ is 
\emph{$(\alpha,\rho)$-R\'enyi differentially private} w.r.t an adjacency relation $\Phi\subseteq X\times X$, if for any pairs of inputs $(x,x')\in \Phi$, we have 
$
D^{\alpha}_X(\mathcal{A}(x)||\mathcal{A}(y)) \leq \rho.
$
\end{definition}
%zCDP
\begin{definition}[zero-Concentrated Differential Privacy (zCDP)~\citep{BunS16}]
A randomized algorithm $\mathcal{A}:X\to\mathrm{Prob}(Y)$ is 
\emph{$(\xi,\rho)$-zero concentrated differentially private} w.r.t an adjacency relation $\Phi\subseteq X\times X$, if for any pairs of inputs $(x,x')\in \Phi$, we have
\begin{equation}\label{eq:zCDP_definition}
  \forall{\alpha > 1}.~ D^\alpha_Y(\mathcal{A}(x)||\mathcal{A}(x')) \leq \xi + \alpha\rho.
\end{equation}
\end{definition}

%tCDP
\begin{definition}[Truncated Concentrated Differential Privacy (tCDP)~\citep{BDRS18}]
A randomized algorithm $\mathcal{A}:X\to\mathrm{Prob}(Y)$ is 
\emph{$(\rho,\omega)$-truncated concentrated differentially private} w.r.t an adjacency relation $\Phi\subseteq X\times X$, if for any input pairs $(x,x')\in \Phi$, we have
\begin{equation}\label{eq:tCDP_definition}
  \forall{1 < \alpha < \omega}.~ D^\alpha_Y(\mathcal{A}(x)||\mathcal{A}(x')) \leq \alpha\rho.
\end{equation}
\end{definition}

While these notions may seem cryptic at first sight, they can all be understood
as bounds on the \emph{privacy loss}, defined for any two private inputs $x, x'$ by
\[
  \mathcal{L}^{x \to x'}(y) = \frac{\Pr[\mathcal{A}(x) = y]}{\Pr[\mathcal{A}(x') = y]} .
\]
Intuitively, the privacy loss measures how much information is revealed when the
output of a private algorithm is seen to be $y$.  While output values with a
high value of privacy loss are highly revealing---since they are far more likely
to result from a private input $x$ rather than a different private input
$x'$---if these outputs are only seen with very small probability, then their
influence can be discounted. Accordingly, the different privacy definitions
bound different functions of the privacy loss function, evaluated at some output
$y$ drawn from the output distribution of the private algorithm. The following
table summarizes these bounds. 

\begin{center}
  \begin{tabular}{ll}
    \toprule
    Privacy notion of $\mathcal{A}$
    &
    Bound on privacy loss $\mathcal{L}$
    \\
    \midrule
    $(\varepsilon,\delta)$-DP
    &
    $\Pr_{y \sim \mathcal{A}(x)} [ \mathcal{L}^{x \to x'}(y) \leq e^\varepsilon ] \geq 1 - \delta$
    \\
    $(\alpha,\rho)$-RDP
    &
    $\EE_{y \sim \mathcal{A}(x)} [\mathcal{L}^{x \to x'}(y)^\alpha] \leq e^{(\alpha - 1)\rho}$
    \\
    $(\xi,\rho)$-zCDP
    &
    $\forall{\alpha > 1}.~ \EE_{y \sim \mathcal{A}(x)} [\mathcal{L}^{x \to x'}(y)^\alpha] \leq e^{(\alpha - 1)(\xi + \alpha\rho)}$
    \\
    $(\omega,\rho)$-tCDP
    &
    $\forall{1 < \alpha < \omega}.~ \EE_{y \sim \mathcal{A}(x)} [\mathcal{L}^{x \to x'}(y)^\alpha] \leq e^{(\alpha - 1)\alpha\rho}$
    \\
    \bottomrule
  \end{tabular}
\end{center}

In particular, DP bounds the maximum value of the privacy loss,
$(\alpha,\cdot)$-RDP bounds the $\alpha$-moment, zCDP bounds all moments, and
$(\cdot, \omega)$-tCDP bounds the moments up to some cutoff $\omega$.  Many
conversions are known between these definitions; for instance, the relaxations
of RDP, zCDP, and tCDP are known to sit between $(\varepsilon, 0)$ and
$(\varepsilon, \delta)$-differential privacy in terms of expressivity, up to
some modification in the parameters.  While this means that RDP, zCDP, and tCDP
can sometimes be analyzed by reduction to standard differential privacy,
converting between the different notions requires weakening the parameters and
often the privacy analysis is simplest and most precise by working with RDP,
zCDP, or tCDP directly. For further details, the interested reader can refer to
the original papers~\citep{BunS16,Mironov17}.
%
% \mg{I also think that we should talk about the conversion from all the
% relaxations to $(\epsilon,\delta)$-DP. This will help I think in
% clarifying why the semantics is the real contribution. It will also
% set the right expectations towards the examples.}
%
% \iffull
% To see how RDP relaxes the definition of $(\varepsilon,0)$-differential
% privacy, consider the bound on the privacy loss: 
% \[
% \sup_{\alpha > 1} D^\alpha_Y(\mathcal{A}(x)||\mathcal{A}(x'))
% = \sup_{y \in Y} \mathcal{L}^{x\to x'}(y).
% \]
% Notice that the right-hand side of the equation is what we bound in
% the definition of $(\varepsilon, 0)$-differential
% privacy. 
% We can use this observation also to show that zCDP is a relaxation of
% $(\varepsilon,0)$-differential privacy.
% Indeed, by definition of DP, $(\xi,0)$-DP is exactly the same as $(\xi,0)$-zCDP:
% \[
% \forall{\alpha > 1}.~  D^\alpha_Y(\mathcal{A}(x)||\mathcal{A}(x')) \leq \xi
% \iff \sup_{\alpha > 1}  D^\alpha_Y(\mathcal{A}(x)||\mathcal{A}(x')) \leq \xi.
% \]
% \fi

A motivating example of a mechanism fitting these three definitions is the
\emph{Gaussian mechanism} and \emph{Sinh Normal mechanism}, which add noise
according to a Gaussian distribution and sinh-normal distribution over the
real numbers respectively. The distributions are generated by continuous
density functions.
% To check the differential privacy of these mechanisms, verification in the
% continuous model is useful, because discretizing differentially private
% mechanism is easy to verify differential privacy while the reverse is difficult. 
\mg{I think we should give here the Gaussian mechanism as well as the sinhGauss
  mechanism and try to stress that these mechanisms are naturally
``continuous'' and it is unclear how to best approximate them.}
\tetsuya{I tried to revise the above, please check.}

%f-divergences
%
%\subsection{Relational Reasoning about Differential Privacy and $f$-divergences}
\subsection{2-witness Relational Liftings for $f$-divergences in Discrete Case}

\citet{BartheOlmedo2013} observed that standard differential privacy can be
phrased in terms of a general class of divergences, called
$f$-\emph{divergences}.

\begin{definition}
A \emph{weight function} is a convex function $f \colon \RR_{\geq 0} \to
\RR$ continuous at $0$.\footnote{%
As is conventional~\citep{1705001_2006}, we exclude the condition $f(1) = 0$ from the
definition of weight function to support the exponential of R\'enyi divergence
of order $\alpha$.  We also assume $0f(a/0) = \lim_{t \to 0+} tf(a/t)$ for $a >
0$ and $0f(0/0) = 0$.}
\end{definition}

\begin{definition}[$f$-divergence]
For a weight function $f$, the \emph{$f$-divergence $\Delta^{f}$} between two
distributions $\mu_1, \mu_2$ over a measurable space $X$ is defined as
\begin{equation} \label{eq:f-divergences}
  \Delta^{f}_X(\mu_1,\mu_2) = \int_{X} \mu_2(x) f\left( \frac{\mu_1(x)}{\mu_2(x)} \right)~dx.
\end{equation}
%where we write $\mu_1(x)$ and $\mu_2(x)$ for the density functions.
\end{definition}
% The interest of this family comes from the fact that many statistical divergences used in practice are an instance of f-divergences.
%We remake that differential privacy can be formulated by the $f$-divergence $\Delta^{\mathtt{DP}(\varepsilon)}$ of $\mathtt{DP}(\varepsilon)(t) = \max(0,t-e^\varepsilon)$~\citep{BartheOlmedo2013},

In particular, differential privacy can be modeled by the $f$-divergence
$\Delta^{\mathtt{DP}(\varepsilon)}$ with weight function
$\mathtt{DP}(\varepsilon)(t) = \max(0,1-e^\varepsilon
t)$~\citep{BartheOlmedo2013,olmedo2014approximate}.
For any randomized algorithm $\mathcal{A}:X\to\mathrm{Prob}(Y)$
and adjacency relation $\Phi\subseteq X\times X$, we have
\[
\mathcal{A} \text{ is } (\varepsilon,\delta)\text{-DP }
\mathbin\text{ iff }
(\text{for all } (x,x') \in \Phi,~ \Delta^{\mathtt{DP}(\varepsilon)}_Y(\mathcal{A}(x),\mathcal{A}(x')) \leq \delta ).
\]
% \tts{
% We want to discuss $f$-divergence in the continuous case,
% and we want to refer verification of $f$-divergence in the
% discrete case.
% }
% \subsubsection{2-witness Lifting for $f$-divergence in the Discrete Case}

To verify $f$-divergence properties of probabilistic programs, Barthe and Olmedo
introduced \emph{2-witness relational lifting} for $f$-divergences as a key
abstraction. This construction lifts a relation $R \subseteq
X \times Y$ over discrete sets $X,Y$ to a relation $R^{\sharp(f,\delta)}
\subseteq \mathrm{Dist}(X) \times \mathrm{Dist}(Y)$ over subprobability
distributions:\footnote{%
  In order to reason about possibly non-terminating programs, they work with an
  extension of $f$-divergence to \emph{subprobability distributions}.}
\begin{equation} \label{eq:2-witness_lifting}
R^{\sharp(f,\delta)}
=
\Set{(\mu_1,\mu_2) | \exists{\mu_L,\mu_R \in \mathrm{Dist}(R)}.~
\pi_1(\mu_L) = \mu_1,~ \pi_2(\mu_R) = \mu_2,~
\Delta^f_R(\mu_L,\mu_R) \leq \delta}.
\end{equation}
\iffull
Above,
$\pi_i(\mu)$ is the $i$-th marginal of $\mu$, that is, 
$(\pi_1(\mu))(x) = \sum_{y \in Y}\mu(x,y)$ and $(\pi_2(\mu))(y) = \sum_{x \in X}\mu(x,y)$.
\fi
The distributions $\mu_L$ and $\mu_R$ are called \emph{witness distributions},
since to show that two distributions are related by a lifting, one must show the
existence of two appropriate witnesses.

% \jh{There is confusion between Prob and Dist. Someplaces they are the same,
%   sometimes one is full and one is sub-prob, sometimes one is continuous and the
% other is discrete, etc.}
Barthe and Olmedo used these relational liftings as the foundation of their
relational program logic $f$pRHL. These liftings have several attractive
features. First, they reflect $f$-divergences:
\[
\mathrm{Eq}_X^{\sharp(f,\delta)} = \Set{(x,x) \mid x \in X}^{\sharp(f,\delta)} = \Set{(\mu_1,\mu_2)| \Delta^f_X(\mu_1,\mu_2) \leq \delta}.
\]
\iffull
So, they can be used to characterize differential privacy: a program
$\mathcal{A} \colon X \to \mathrm{Dist}(Y)$ is $(\varepsilon,\delta)$-differentially
private w.r.t. an adjacency relation $\Phi$, if
$(\mathcal{A}(x),\mathcal{A}(x'))\in
\mathrm{Eq}_Y^{\sharp(\mathtt{DP}({\varepsilon}),\delta)}$, for every $(x,x')
\in \Phi$.
\fi
Second, 2-witness liftings satisfy various composition properties, enabling
clean verification of probabilistic programs.  However, this construction works
only in the discrete case---all subprobability distributions are over countable
discrete sets---and the logic $f$pRHL cannot reason about programs that sample
from continuous distributions, like the Gaussian distribution.
% Technically, we can give the relation part of 2-witness lifting
% (\ref{eq:2-witness_lifting}) itself since we can introduce the subspace
% structure on $R$, but we may not have the composition law of the lifting in the
% continuous case (see Challenge 2).

\subsection{Challenge 1: Handling Richer Divergences}

Much like standard differential privacy can be viewed in terms of
$f$-divergences, we would like to view RDP, zCDP, and tCDP as bounds on more
general divergences. A natural candidate for R\'enyi differential privacy is
R\'enyi divergence $D^\alpha$, as in its original definition. Indeed, we have:
\[
\mathcal{A} \text{ is } (\alpha,\rho)\text{-RDP }
\quad\text{iff}\quad
(\text{for all } (x,x') \in \Phi,~ D^\alpha_Y(\mathcal{A}(x)||\mathcal{A}(x')) \leq \rho ).
\]
However, the R\'enyi divergence $D^\alpha(\mu_1||\mu_2)$ of order $\alpha$
is not an $f$-divergence, and so it does not fit in the 2-witness
lifting framework.
% taking the exponential $\exp\left ((\alpha-1) D^\alpha(\mu_1||\mu_2)\right)$
% gives an $f$-divergence with weight function $f(t) = t^\alpha$, but this is not
% sufficient.
Likewise, zCDP~\citep{BunS16} and tCDP~\citep{BDRS18} can be defined via uniform
bounds on families of R\'enyi divergence:
\begin{equation}
\label{eq:divergence_zCDP}
\Delta^{\mathtt{zCDP}(\xi)}_X(\mu_1,\mu_2) = \sup_{1<  \alpha}
\frac{1}{\alpha}\left(D^\alpha_X(\mu_1||\mu_2) - \xi\right)
\quad\text{for $0 \leq \xi$},
\end{equation}
\begin{equation}
\label{eq:divergence_tCDP}
\Delta^{\omega-\mathtt{tCDP}}_X(\mu_1,\mu_2) = \sup_{1 < \alpha < \omega}
\frac{1}{\alpha}\left(D^\alpha_X(\mu_1||\mu_2)\right)
\quad\text{for $1 < \omega$},
\end{equation}
letting us reformulate zCDP and tCDP as
\begin{align*}
\mathcal{A} \text{ is } (\xi,\rho)\text{-zCDP }
\mathbin\text{ iff }&
(\text{for all } (x,x') \in \Phi,~ \Delta^{\mathtt{zCDP}(\xi)}_Y(\mathcal{A}(x), \mathcal{A}(x')) \leq \rho )\\
\mathcal{A} \text{ is } (\rho,\omega)\text{-tCDP }
\mathbin\text{ iff }&
(\text{for all } (x,x') \in \Phi,~ \Delta^{\omega-\mathtt{tCDP}}_Y(\mathcal{A}(x), \mathcal{A}(x')) \leq \rho ).
\end{align*}
These divergences are also not $f$-divergences. Furthermore, the RDP, zCDP and
tCDP divergences may take negative values when applied to sub-probability
distributions, which can arise from probabilistic computations that may not
terminate with probability $1$.  Accordingly, we generalize the notion of
divergence to go beyond $f$-divergences and also to handle sub-probability
distributions.
% \sk{How is it general? In the sense that it takes negative values?}
Starting from families of real valued functions from pairs of distributions, we
introduce basic properties needed to give good composition properties for their
corresponding liftings.

\subsection{Challenge 2: Extending 2-witness Liftings to the Continuous Case}
\label{sec:cont-case-motiv}
In order to support natural examples for RDP, zCDP, and tCDP, we need a framework
supporting continuous distributions, such as Gaussian, Laplace, and sinh-normal
distributions.
\mg{Also here I would mention Gauss and sinhGauss}
\tetsuya{I tried to revise the above, please check.}
Unfortunately, extending 2-witness relational liftings to the continuous case
presents further technical challenges related to composition. The relational
lifting $(-)^{\sharp(\mathtt{DP}({\varepsilon}),\delta)}$ for standard
differential privacy satisfies a sequential composition principle:
\[
	\AxiomC{$(f,g) \colon R \to S^{\sharp(\mathtt{DP}({\varepsilon_1}),\delta_1)}$ is a relation-preserving map.}
	\UnaryInfC{$(f^\sharp,g^\sharp) \colon R^{\sharp(\mathtt{DP}({\varepsilon_2}),\delta_2)} \to S^{\sharp(\mathtt{DP}({\varepsilon_1+\varepsilon_2}),\delta_1+\delta_2)}$ is a relation-preserving map.}
	\DisplayProof
\]
Here, $f^\sharp$ and $g^\sharp$ are the Kleisli liftings of $f$ and $g$ with
respect to the monad $\mathrm{Dist}$ of (discrete) subprobability distributions;
this composition property gives 2-witness relational liftings a
\emph{graded monad} structure~\citep{Katsumata2014PEM,FujiiKM16FoSSaCS}, highly
useful for compositional reasoning.
Since 2-witness lifting is defined through the existence of
witness distributions, for any $(d_1,d_2) \in
R^{\sharp(\mathtt{DP}({\varepsilon_2}),\delta_2)}$, we then need witness
distributions showing $(f^\sharp(d_1),g^\sharp(d_2)) \in
S^{\sharp(\mathtt{DP}({\varepsilon_1+\varepsilon_2}),\delta_1+\delta_2)}$.
In the discrete case, these witnesses can be constructed in two steps:
\begin{enumerate}
  \item
    For any $(x,y) \in R$, there exist witnesses $d'_L,d'_R \in
    \mathrm{Dist}(S)$ proving $(f(x),g(y)) \in
    S^{\sharp(\mathtt{DP}({\varepsilon_1}),\delta_1)}$.
    By applying the axiom of choice, we obtain a selection function
    \[
      \langle l_1, l_2 \rangle \colon R \to \Set{(d'_L,d'_R) \mid \Delta_S^{\mathtt{DP}({\varepsilon_1})}(d'_L,d'_R) \leq \delta_1}
    \]
  \item
    For any witnesses $d_L,d_R \in \mathrm{Dist}(R)$ proving $(d_1,d_2) \in
    R^{\sharp(\mathtt{DP}({\varepsilon_2}),\delta_2)}$,
    $(l_1^\sharp(d_L),
    l_2^\sharp(d_R))$ is a pair of witness distributions proving
    $(f^\sharp(d_1),g^\sharp(d_2)) \in
    S^{\sharp(\mathtt{DP}({\varepsilon_1+\varepsilon_2}),\delta_1+\delta_2)}$ by
    composability of $\Delta^{\mathtt{DP}(\varepsilon)}$.
\end{enumerate}
The first step is problematic to extend to the continuous case because the
witness-selecting functions $l_1$ and $l_2$ obtained by the axiom of choice may
not be measurable---the Kleisli extensions $l_1^\sharp$ and $l_2^\sharp$ in the
second step may not be well-defined in the continuous case.  

To resolve this difficulty, we introduce a novel notion of \emph{approximate
span-liftings}. The key idea is that morphisms between span-liftings carry a
built-in measurable witness selection function, making it unnecessary to use
the axiom of choice when proving sequential composition.

\mg{Here I would add a section, saying a bit more in details how we
  address these challenges. I would mention some of the properties we
  have and why they are important. I think that we should stress how
  having the right definition is important to 1) internalize the
  different primitives, 2) achieve composition and
  3) achieve group privacy. In particular, we should mention that we
  get some of them for free, once we have the right definition of
  lifting --- we should be careful because thse indeed follow from the
  same properties that we have in the different notions of privacy. We
  should also mention the fact that we will use a graded monad as a
  main reasoning tool. 
We should also talk about the properties
  we have: additivity, composability, approximability, etc. and what
  is their role, and how they could be of more interest in general. 
}

\section{Mathematical Preliminaries}
\label{sec:prelims}
%1.5p

\PARAGRAPH{Measure Theory}
We briefly review some definitions from measure theory;
readers should consult a textbook for more details~\citep{rudin-real-complex}.
Given a set $X$, a $\sigma$-\emph{algebra} on $X$ is a collection $\Sigma$ of
subsets of $X$ including the empty set, closed under complements, countable
unions, and countable intersections; a \emph{measurable space} $X$ is a
set $|X|$ with a $\sigma$-algebra $\Sigma_X$, called the measurable sets.
A countable set $X$ yields the \emph{discrete} measurable space where all subsets are measurable: $\Sigma_X = 2^X$.

A map $f \colon X \to Y$ between measurable spaces is
\emph{measurable} if $\inverse{f}(A) \in \Sigma_X$ for all $A \in \Sigma_Y$.
Any subset $S$ of measurable space $X$ forms a \emph{subspace} where the
$\sigma$-algebra is given by $\Sigma_S =\Set{A \cap S \mid A \in \Sigma_X}$.
$\Sigma_S$ is given as the coarsest one making the inclusion map
$S \hookrightarrow X$ measurable.%, namely $\Sigma_S =\Set{A \cap S \mid A \in \Sigma_X}$.

A \emph{measure} on a measurable space is a map $\mu \colon\!\Sigma_X \to \RR_{\geq 0} \cup \{\infty\}$
such that $\mu(\varnothing) = 0$ and $\mu(\cup_i X_i) = \sum_i \mu(X_i)$ for any countable family of
disjoint measurable sets $X_i$.
Measures with $\mu(X) = 1$ are called \emph{probability measures}, and measures with $\mu(X) \leq 1$ are called \emph{subprobability measures}.

For any pair of subprobability measures $\mu_1$ on $X$ and $\mu_2$ on $Y$,
the \emph{product measure} $\mu_1 \otimes \mu_2$ of $\mu_1$ and $\mu_2$ is
the unique measure on $X \times Y$ satisfying $(\mu_1 \otimes \mu_2)(A\times B)  = \mu_1(A) \cdot \mu_2(B)$.

For any measurable space $X$ and element $x \in X$, we write $\mathbf{d}_x$ for the Dirac measure on $X$ centered at $x$, defined as $\mathbf{d}_x(A) = 1$ if $x \in A$, and $\mathbf{d}_x(A) = 0$ otherwise.

Measurable spaces and measurable functions form a category $\Meas$; this
category has all limits and colimits, and finite products distribute over finite
coproducts.  We denote by $\FinSet$ the full subcategory of $\Meas$ consisting
of all finite discrete spaces.

\PARAGRAPH{The Sub-Giry Monad}
The \emph{sub-Giry monad} $\mathcal{G}$ is the subprobabilistic variant of the
Giry monad~\citep{Giry1982}.

\iffull
\begin{definition}
  The sub-Giry monad $(\mathcal{G},\eta,(-)^\sharp)$ over $\Meas$ is defined as
  follows:
  \begin{itemize}
  \item For any $X\in\Meas$, the measurable space
    $\mathcal{G}X$ is the set of subprobability measures ( measures whose mass is equal or less than $1$ ) on $X$
    equipped with the coarsest $\sigma$-algebra induced by the
    evaluation functions $\mathrm{ev}_A \colon \mathcal{G}X \to [0,1]$
    defined by $\nu\mapsto\nu(A)$ ($A \in \Sigma_X$).
  \item For each $f \colon X \to Y$ in $\Meas$,
    $\mathcal{G}f \colon \mathcal{G}X \to \mathcal{G}Y$ is defined by
    $(\mathcal{G}f)(\mu) = \mu(\inverse{f}(-))$.
  \item The unit $\eta$ is defined by the Dirac distributions
    $\eta_X (x) = \mathbf{d}_{x}$.
  \item The Kleisli extension $f^\sharp \colon \mathcal{G}X \to \mathcal{G}Y$ of $f \colon X \to \mathcal{G}Y$ is given
    by for any $\mu \in \mathcal{G}X$ and $A \in \Sigma_Y$, $f^\sharp(\mu)(A) = \int_{X} f(x)(A)~d\mu(x)$.
  \end{itemize}
\end{definition}

The sub-Giry monad satisfies useful properties for interpreting
probabilistic programs. It is commutative and strong with respect to
the Cartesian products of $\Meas$, where the double strength
$\mathrm{dst}_{X,Y} \colon \mathcal{G}({X})\times \mathcal{G}({Y})
\Rightarrow \mathcal{G}({X} \times {Y})$ is given by the product
measures $\mathrm{dst}_{X,Y}(\nu_1,\nu_2) = \nu_1 \otimes \nu_2$. The
double strength is used to define semantics for composition and to
interpret typing contexts. Additionally, the sub-Giry monad provides
a structure to interpret loops.
Namely, we can introduce an $\wCPO_\bot$ structure over measurable functions of
type $X \to \mathcal{G}(Y)$ with the following order:\footnote{%
	This ordering gives an $\wCPO_\bot$-enrichment of the Kleisli category $\Meas_\mathcal{G}$, 
	which is equivalent to the partial additivity of stochastic relations \citep{Panangaden1999171}.}
\[
f \sqsubseteq g \iff \forall{x \in X, B \in \Sigma_Y}. f(x)(B) \leq g(x)(B)
\quad (f,g \colon X \to \mathcal{G}(Y) \text{ in } \Meas).
\]
\else
In brief, $\mathcal{G}X$ is the set of
subprobability measures on $X$ with suitable $\sigma$-algebra for any $X \in
\Meas$; functor action $(\mathcal{G}f)(\mu) = \mu(\inverse{f}(-))$ for $f \colon X \to Y$;
unit $\eta_X (x) = \mathbf{d}_{x}$ for $X \in \Meas$ and $x \in X$; and Kleisli
lifting $f^\sharp(\mu)(A)
= \int_{X} f(x)(A)~d\mu(x)$ for $f \colon X \to \mathcal{G}Y$ and $A \in
\Sigma_Y$.  The monad $\mathcal{G}$ is commutative strong with respect to binary
products in $\Meas$; the \emph{double strength} $\mathrm{dst}_{X,Y} \colon
\mathcal{G}({X})\times \mathcal{G}({Y}) \Rightarrow \mathcal{G}({X} \times {Y})$
is given by the product measure $\mathrm{dst}_{X,Y}(\nu_1,\nu_2) = \nu_1 \otimes
\nu_2$.
%Also, the monad $\mathcal{G}$ provides an $\wCPO_\bot$-structure to interpret while-loops.
%\[
%f \sqsubseteq g \iff \forall{x \in X, B \in \Sigma_Y}. f(x)(B) \leq g(x)(B)
%\quad (f,g \colon X \to \mathcal{G}(Y) \text{ in } \Meas).
%\]
\fi

%\subsubsection{Graded Monads}
\PARAGRAPH{Graded Monads}
A \emph{graded monad}~\citep{Katsumata2014PEM,FujiiKM16FoSSaCS} is a monad refined by indices from
a monoid.  Let $A = (A,\cdot,1_A,{\preceq})$ be a preordered monoid.
An $A$-graded monad on a category $\mathbb{C}$ consists of 
\begin{itemize}
\item a family $\{T_{e}\}_{e \in M}$ of endofunctors $T_{e}$ on $\mathbb{C}$,
\item a morphism $\eta_X \colon X \to T_{1_A} X$ for $X \in \mathbb{C}$ (unit),
\item a morphism $({-})^{e_1 \sharp e_2} \colon
\mathbb{C}(X,T_{e_2} Y) \to \mathbb{C}(T_{e_1}X,T_{e_1e_2} Y)$ for $X, Y \in
\mathbb{C}$ and $e_1, e_2 \in A$ (Kleisli lifting),
\item a family $\{{\sqsubseteq}^{e_1, e_2} \}_{e_1 \preceq  e_2}$ of natural transformations
$\sqsubseteq^{e_1, e_2} \colon T_{e_1} \Rightarrow T_{e_2}$ (inclusion)
\end{itemize}
satisfying the following compatibility condition: for any $f \colon X \to T_{e_1} Y$ and $g \colon Y \to T_{e_2} Z$,
\begin{gather*}
{\sqsubseteq}^{(e_2 e_1), (e_2 e_3)}_Z \circ f^{e_2 \sharp e_1} = ({\sqsubseteq}^{e_1, e_2}_Y \circ f)^{e_2 \sharp e_3}, \quad
f^{e_3 \sharp e_1} \circ {\sqsubseteq}^{e_2, e_3}_X = {\sqsubseteq}^{(e_2 e_1), (e_3 e_1)}_Y \circ f^{e_2 \sharp e_1},\\
f^{1\sharp e_1} \circ \eta_X = f, \quad
\eta_X^{1\sharp e} = \mathrm{id}_{T_eX}, \quad
(g^{e_1 \sharp e_2} \circ f)^{e_0 \sharp e_1 e_2} =  g^{e_0 e_1 \sharp e_2} \circ f^{e_0 \sharp e_1}.
\end{gather*}

A typical way of constructing a graded monad is by refining a plain monad with
indices. An \emph{$A$-graded lifting} of a monad $(T,\eta^T,({-})^\sharp)$ on
$\mathbb{D}$ along a functor $U \colon \mathbb{C} \to \mathbb{D}$ is an
$A$-graded monad $\{T_{e}\}_{e \in A}$ on $\mathbb{C}$ satisfying $U\circ
{T_{e}} = T\circ U$, $U(f^{e_2\sharp e_1}) = (Uf)^\sharp$, $U(\eta_D) =
\eta^T_{UD}$, and $U({\sqsubseteq}^{e_1, e_2}_D) = \mathrm{id}_{TUD}$.
The functor $U$ erases the grading of $T_{e}$, yielding the original
(plain) monad $T$.

\PARAGRAPH{The Category of Spans on Measurable Spaces}
To extend the relational lifting approach to the continuous setting, we work
with the category of \emph{spans}, whose objects generalize relations by taking
arbitrary functions in place of projections. Morphisms between spans will encode
the information needed to ensure good compositional behavior.
\begin{definition}
  The category $\Span(\Meas)$ of spans in $\Meas$ consists of:
  \begin{itemize}
  \item Objects $(X,Y,\Phi,\rho_1,\rho_2)$ given by span $X
    \xleftarrow[]{\rho_1} \Phi \xrightarrow[]{\rho_1} Y$ in $\Meas$.
  \item Morphisms
    $(X,Y,\Phi,\rho_1,\rho_2) \to (Z,W,\Psi,\rho'_1,\rho'_2)$ given by
    triples $(h,k,l)$ of 
    morphisms $h \colon X \to Z$, $k \colon Y \to W$, and
    $l \colon \Phi \to \Psi$ in $\Meas$ satisfying
    $h \circ \rho_1 = \rho'_1 \circ l$ and
    $k \circ \rho_2 = \rho'_2 \circ l$.
  \end{itemize}
\end{definition}
For simplicity, we often denote a $\Span(\Meas)$-object $(X,Y,\Phi,\rho_1,\rho_2)$ by $\Phi$.
The category $\Span(\Meas)$ has several useful properties. First, the category
has binary products:
\[
 (X,Y,\Phi,\rho_1,\rho_2) \mathbin{\dot\times} (Z,W,\Psi,\rho'_1,\rho'_2)
=(X\times Z,Y \times W,\Phi\times\Psi,\rho_1\times\rho_1',\rho_2\times\rho_2').
\]
We will frequently use two notions of pairing on functions.  Let $f_1 \colon X
\to Y$, $f_2 \colon X \to W$, we have $\langle f_1, f_2 \rangle \colon X \to Y
\times W$ and $f_1 \times f_2 \colon X \times X \to Y \times W$.  As functions,
$\langle f_1, f_2 \rangle$ takes a single input $x$ and returns a pair
$(f_1(x),f_2(x))$. On the other hand, $f_1 \times f_2$ take a pair of inputs
$(x,y)$ and returns $(f_1(x),f_2(y))$.

The category $\Span(\Meas)$ also has coproducts:
\[
 (X,Y,\Phi,\rho_1,\rho_2) \mathbin{\dot+} (X',Y',\Phi',\rho_1',\rho_2')
=(X + X',Y + Y',\Phi + \Phi',\rho_1 + \rho_1',\rho_2 + \rho_2'). 
\]
Standard binary relations can be interpreted as spans. For $X,Y \in
\Meas$, any binary relation $\Phi \subseteq |X| \times |Y|$ determines a span $X
\xleftarrow[]{\pi_1} \Phi \xrightarrow[]{\pi_2} Y$ in $\Meas$,
where $\pi_1$ and $\pi_2$ are projections, and $\Phi$ is regarded as a subspace of $X \times Y$.
%, and the $\sigma$-algebra $\Sigma_\Phi$ of $\Phi$ is the
%coarsest one making the inclusion $\Phi \hookrightarrow |X| \times |Y|$
%measurable.

Finally, relation-preserving maps can be interpreted as morphisms of spans.
Consider two binary relations $\Phi \subseteq |X| \times |Y|$ and $\Psi \subseteq |Z| \times |W|$,
and suppose that they are interpreted as spans $(X,Y,\Phi,\pi_1,\pi_2)$ and $(Z,W,\Psi,\pi_1,\pi_2)$
as above.
If $f \colon X \to Z$ and $g \colon Y \to W$ in $\Meas$
satisfy $(f(x),g(y)) \in \Psi$ for any $(x,y) \in \Phi$, then we have the following morphism
\[
(f,g,f\times g|_\Phi) \colon (X,Y,\Phi,\pi_1,\pi_2) \to (Z,W,\Psi,\pi_1,\pi_2) \quad \text{ in } \Span(\Meas)
\]
where $f\times g|_\Phi$ is the restriction of $f\times g$ on $\Phi$ (we often write just $f \times g$).
These features are crucial to interpret probabilistic program logics,
as we will see in Section \ref{sec:span-apRHL}.

\section{General Statistical Divergences}
\label{sec:divergences}
Now that we have covered the preliminaries, our goal is
to build a suitable graded monad on $\Span(\Meas)$---this will be our
abstraction for relational reasoning about divergences. We
proceed in two stages.  In this section, we introduce a general class
of \emph{divergences}, real-valued functions on two measures
over the same space. Then, we identify important composition
properties inspired from analogous properties of
$f$-divergences~\citep{BartheOlmedo2013,1705001_2006}. We will leverage
these properties to give a graded monad structure on $\Span(\Meas)$
capturing these divergences in the next section.  We write $\ol\RR$
for the set $\RR\cup\{-\infty,+\infty\}$ of extended reals.  We regard
both $\ol\RR$ and $\RR_{\geq 0}$ as partially ordered additive
monoids. For the former one, the addition is extended by $\infty + (-\infty) = -\infty$.
\begin{definition}
A \emph{divergence} is a family $\Delta = \{\Delta_X\}_{X \in \Meas}$ of functions
\[
\Delta_X \colon |\mathcal{G}X| \times |\mathcal{G}X| \to \ol{\RR}.
\]
\end{definition}
To describe composition of divergences, it is useful to work with indexed
families of divergences; often, two divergences can be combined to give a new
divergence with different indices. For instance, the notion of zCDP can be
characterized by the family $\{\Delta^{\mathrm{zCDP}(\xi)}\}_{0 \leq \xi}$ of
divergences $\Delta^{\mathrm{zCDP}(\xi)}$ introduced in Section
\ref{sec:motivation} (Equation \ref{eq:divergence_zCDP}).
For this reason, we introduce the notion of \emph{graded families of divergences}.
\begin{definition}
Let $(A,\cdot,1_A,\preceq)$ be a preordered monoid.  An $A$-\emph{graded
family of divergences} is a family $\mathbf{\Delta} = \{\Delta^\alpha \}_{\alpha
\in A}$ such that
\[
	\alpha \preceq \beta \implies (\forall{X \in \Meas}.~\forall{\mu_1,\mu_2 \in \mathcal{G}X}.~ \Delta^\beta_X(\mu_1,\mu_2) \leq \Delta^\alpha_X(\mu_1,\mu_2)).
\]
\end{definition}
Note that the preorder on the grading is contravariant.
We will regard a divergence $\Delta$ as a
singleton-graded family $\{ \Delta \}$.

\subsection{Basic Properties of Divergences}
We define basic properties of graded families of divergences 
for given $(A,\cdot,1_A,\preceq)$.
\begin{definition}\label{def:div:properties}
  An $A$-graded family $\mathbf{\Delta} = \{\Delta^\alpha \}_{\alpha \in A}$ of divergences is:
  \begin{description}
    \item[reflexive:] if
      $\Delta^\alpha_X(\mu,\mu) \leq 0$.
    \item[functorial:] if 
      $\Delta^\alpha_Y(\mathcal{G}k(\mu_1),\mathcal{G}k(\mu_2)) \leq \Delta^\alpha_X(\mu_1,\mu_2)$
      for any $k\colon X \to Y$.
    \item[substitutive:] if 
      $\Delta^\alpha_Y(f^\sharp\mu_1,f^\sharp\mu_2) \leq \Delta^\alpha_X(\mu_1,\mu_2)$
      for any $f\colon X \to \mathcal{G}Y$.
    \item[additive:] if
      $\Delta^{\alpha\cdot\beta}_{X \times Y}(\mu_1 \otimes \mu_3, \mu_2 \otimes \mu_4) \leq \Delta^{\alpha}_X(\mu_1,\mu_2) + \Delta^{\beta}_Y(\mu_3,\mu_4)$.
    \item[continuous:] if
      $\Delta^\alpha_X(\mu_1,\mu_2)
      = \sup\Set{
      \Delta^\alpha_I (\mathcal{G}k(\mu_1),\mathcal{G}k(\mu_2)) \mid I \in \FinSet, k \colon X \to I}$.
    \item[composable:] if 
      $\Delta^{\alpha\cdot\beta}_Y(f^\sharp \mu_1, g^\sharp \mu_2) \leq
      \Delta^{\alpha}_X(\mu_1, \mu_2)  + \sup_{x \in X}\Delta^{\beta}_Y(f(x),
      g(x))$ for any $f,g\colon X \to \mathcal{G}Y$.
    \end{description}
  All functions are assumed to be measurable.
\end{definition}

These properties are inspired by properties from the literature on
$f$-divergences and differential privacy.
\iffull
For instance, substitutivity is the generalization of the usual notion of
\emph{data-processing inequality} for $f$-divergences~\citep[Chapter
2]{PardoVajda1997}, while functoriality is the special case where the
data-processing function is deterministic. These two properties are also known
in the differential privacy literature as \emph{resilience to
post-processing}~\citep[Proposition 2.1]{DworkRothTCS-042}, in the randomized and
deterministic case.
\fi
Composability corresponds to composition in differential privacy, which
states that we can adaptively compose two differentially private mechanisms.
Additivity corresponds to a simple instance of composition where the second
mechanism does not depend on the result of the first. 
Continuity is the generalization of the continuity of $f$-divergences~\citep[Theorem 16]{PardoVajda1997},
which approximates divergences of continuous distributions by divergences of discrete distributions.

Reflexivity and composability are key properties to give a structure of graded
monad.  Intuitively, reflexivity gives a unit, and composability gives a
(graded) Kleisli lifting. We also need additivity to give a \emph{strength} of
the graded monad, allowing a lifting on real-valued distributions---often
available from known results in probability theory---to be converted into a
lifting on distributions over larger spaces (e.g., program memories). In some
ways, composability is the key property: reflexivity is usually immediate, and
additivity is a consequence.

\begin{theorem}\label{thm:div:cont+comp->mon}
  An $A$-graded family $\mathbf{\Delta}$ is additive if it is continuous and composable.
\end{theorem}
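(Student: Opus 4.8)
The plan is to realise the two product measures as Kleisli extensions of the sub-Giry monad along suitable measurable maps $X\to\mathcal{G}(X\times Y)$, and then read off additivity directly from composability. Concretely, I would fix $X,Y\in\Meas$, $\mu_1,\mu_2\in\mathcal{G}X$, $\mu_3,\mu_4\in\mathcal{G}Y$, and set $f\defeq\mathrm{dst}_{X,Y}\circ\langle\eta_X,c_{\mu_3}\rangle$ and $g\defeq\mathrm{dst}_{X,Y}\circ\langle\eta_X,c_{\mu_4}\rangle$, where $c_\nu\colon X\to\mathcal{G}Y$ is the constant map at $\nu$. Both $f$ and $g$ are measurable, being composites of measurable maps, and by definition of $\mathrm{dst}$ and $\eta$ we have $f(x)=\mathbf{d}_x\otimes\mu_3$ and $g(x)=\mathbf{d}_x\otimes\mu_4$. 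A computation on the $\pi$-system of measurable rectangles together with uniqueness of product measures gives $f^\sharp\mu_1=\mu_1\otimes\mu_3$ and $g^\sharp\mu_2=\mu_2\otimes\mu_4$. Applying composability to this pair (with the target instantiated to $X\times Y$) then yields
\[
\Delta^{\alpha\cdot\beta}_{X\times Y}(\mu_1\otimes\mu_3,\mu_2\otimes\mu_4)
\;\le\;
\Delta^{\alpha}_X(\mu_1,\mu_2)+\sup_{x\in X}\Delta^{\beta}_{X\times Y}(\mathbf{d}_x\otimes\mu_3,\mathbf{d}_x\otimes\mu_4),
\]
so the whole proof reduces to bounding that supremum by $\Delta^{\beta}_Y(\mu_3,\mu_4)$.

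For the remaining bound I would first show that continuity implies \emph{functoriality}: given a measurable $m\colon Y\to Z$, expand $\Delta^{\beta}_Z(\mathcal{G}m\,\mu,\mathcal{G}m\,\mu')$ by the continuity equation at $Z$, use $\mathcal{G}k\circ\mathcal{G}m=\mathcal{G}(k\circ m)$, and observe that for each $I\in\FinSet$ and $k\colon Z\to I$ the composite $k\circ m\colon Y\to I$ is among the maps quantified over in the continuity equation at $Y$; hence each term is at most $\Delta^{\beta}_Y(\mu,\mu')$, and taking the supremum gives $\Delta^{\beta}_Z(\mathcal{G}m\,\mu,\mathcal{G}m\,\mu')\le\Delta^{\beta}_Y(\mu,\mu')$. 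Applying this to the measurable inclusion $\iota_x\colon Y\to X\times Y$, $\iota_x(y)=(x,y)$, for which $\mathcal{G}\iota_x(\nu)=\mathbf{d}_x\otimes\nu$ (again checked on rectangles), yields $\Delta^{\beta}_{X\times Y}(\mathbf{d}_x\otimes\mu_3,\mathbf{d}_x\otimes\mu_4)\le\Delta^{\beta}_Y(\mu_3,\mu_4)$ for every $x\in X$. Substituting into the displayed inequality, using that $a\mapsto a+b$ is monotone on $\ol\RR$ with the stated convention $\infty+(-\infty)=-\infty$, gives exactly additivity.

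The main obstacle is the "continuity $\Rightarrow$ functoriality" step, which is the only place the two hypotheses interact nontrivially; everything else is routine. In particular one must check that $k\circ m$ really is an admissible test map in the continuity supremum at the source (it is, since $I$ is finite discrete and $k\circ m$ is measurable), and one must discharge the measure-theoretic bookkeeping: measurability of $f$ and $g$, and the identities $f^\sharp\mu_1=\mu_1\otimes\mu_3$ and $\mathcal{G}\iota_x(\nu)=\mathbf{d}_x\otimes\nu$ via agreement on generating rectangles and uniqueness of product measures. No other property of $\mathbf{\Delta}$ (reflexivity, substitutivity, etc.) is needed.
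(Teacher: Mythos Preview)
Your proof is correct. The organization differs from the paper's: there, one first expands $\Delta^{\alpha\beta}_{X\times Y}(\mu_1\otimes\mu_3,\mu_2\otimes\mu_4)$ via continuity as a supremum over finite discretizations $k\colon X\times Y\to I$, then for each such $k$ applies composability with maps $K_{\mu_3},K_{\mu_4}\colon X\to\mathcal{G}I$ satisfying $K_\mu^\sharp(\mu')=\mathcal{G}k(\mu'\otimes\mu)$ and $K_\mu(x)=\mathcal{G}k(x,-)(\mu)$, and finally uses continuity once more to bound $\Delta^\beta_I(\mathcal{G}k(x,-)(\mu_3),\mathcal{G}k(x,-)(\mu_4))\le\Delta^\beta_Y(\mu_3,\mu_4)$. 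You instead apply composability directly with target $X\times Y$ and isolate the ``continuity $\Rightarrow$ functoriality'' step as an explicit lemma, applying it to the section $\iota_x\colon Y\to X\times Y$. Both routes use the two hypotheses in exactly the same roles (composability for the main inequality, continuity to pull the second summand back to $Y$); yours is a bit more streamlined since it avoids the outer discretization and names the functoriality step, while the paper's version keeps the composability application at a finite target $I$, which dovetails with how finite-composability and approximability are handled elsewhere in the development.
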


Although these properties have been studied before in the discrete case, there
are subtleties when passing to our continuous ones. For example, in the case of
discrete distributions, additivity is an instance of
composability~\citep[Proposition 4]{BartheOlmedo2013}.  In the case of continuous
distributions, this may no longer hold. However, one can recover additivity from
composability by using a continuity property.

%\iffull
%Substitutivity also follows from composability and reflexivity. 
%
%\begin{lemma}\label{lem:div:comp+ref->subst}
%An $A$-graded family $\mathbf{\Delta}$ is substitutive if it is reflexive and composable.
%\end{lemma}
%\begin{proof}%[Lemma \ref{lem:div:comp+ref->subst}]
%Let $\mu_1,\mu_2 \in \mathcal{G}X$ and $f \colon X \to \mathcal{G}Y$.
%Since $\alpha = \alpha \cdot 1_A$, and $\mathbf{\Delta}$ is reflexive and composable, we have
%\begin{align*}
%\Delta^\alpha_{Y}(f^\sharp(\mu_1),f^\sharp(\mu_2))
%&\leq \Delta^\alpha_{X}(\mu_1,\mu_2) + \sup_{x \in X}\Delta^{1_A}_{Y}(f(x),f(x))\\
%&\leq \Delta^\alpha_{X}(\mu_1,\mu_2) + 0 = \Delta^\alpha_{X}(\mu_1,\mu_2).
%\end{align*}
%\end{proof}
%\fi

To prove composability, it is often easier to establish two other properties of
families of divergences first: approximability and finite-composability.  These
properties describe divergences that are well-behaved with respect to
discretization, in order to smoothly extend properties in the discrete case to
the continuous case.

\begin{definition}
  An $A$-graded family $\mathbf{\Delta} = \{\Delta^\alpha \}_{\alpha \in A}$ of divergences is:
  \begin{description}
    \item[approximable:]
      if for any $X \in \Meas$ and $I \in \FinSet$, $f,g \colon X \to \mathcal{G}I$, and
      $\mu_1,\mu_2 \in \mathcal{G}X$, there are $J_n \in \FinSet$ and
      $m^\ast_n \colon X \to J_n$ and $m_n \colon J_n \to X$ in $\Meas$ such that
      \[
        \Delta^\alpha_I(f^\sharp(\mu_1),g^\sharp(\mu_2))
        = \lim_{n \to \infty}
        \Delta^\alpha_{I}({(f \circ m_n \circ m^\ast_n)}^\sharp(\mu_1), {(g \circ m_n \circ m^\ast_n )}^\sharp(\mu_2)) .
      \]
    \item[finite-composable:]
      if for any $I,J \in \FinSet$, $f,g \colon I \to \mathcal{G}J$, and $d_1,d_2 \in \mathcal{G}I$,
      \[
        \Delta^{\alpha\cdot\beta}_J(f^\sharp d_1, g^\sharp d_2)
        \leq \Delta^{\alpha}_I(d_1, d_2)  + \sup_{i \in I}\Delta^{\beta}_J(f(i), g(i)) .
      \]
  \end{description}
\iffull
The function $m^\ast_n$ in the definition of the approximability of $\mathbf{\Delta}$ discretizes points in $X$ to $J_n$, and $m_n$ reconstructs points in $X$ from $J_n$.
Finite-composability of $\mathbf{\Delta}$ means the composability of $\mathbf{\Delta}$ in the discrete case.
\fi
\end{definition}

These properties allow us to extend composability of divergences in the discrete
case, witnessed by finite-composability, to the continuous case.
Finite-composability  is often known for standard divergences, or can be
established by direct calculations. If $\mathbf{\Delta}$ is approximable and
continuous, finite-composability implies composability. Formally, we have the
following theorem.

\begin{theorem}\label{thm:div:fin-comp->comp}
A continuous approximable $A$-graded family $\mathbf{\Delta}$ is composable if
finite-composable.
\end{theorem}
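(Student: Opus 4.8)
The plan is to reduce the continuous composability inequality to the finite one by discretizing both the source measures and the kernels $f,g$, then passing to the limit using approximability and continuity. Fix $X, Y \in \Meas$, measurable kernels $f, g \colon X \to \mathcal{G}Y$, measures $\mu_1, \mu_2 \in \mathcal{G}X$, and suppose finite-composability holds. We must bound $\Delta^{\alpha\cdot\beta}_Y(f^\sharp\mu_1, g^\sharp\mu_2)$. First I would use continuity of $\mathbf{\Delta}$ on the space $Y$: it suffices to bound $\Delta^{\alpha\cdot\beta}_I(\mathcal{G}k(f^\sharp\mu_1), \mathcal{G}k(g^\sharp\mu_2))$ uniformly over all finite $I \in \FinSet$ and all $k \colon Y \to I$, since $\Delta^{\alpha\cdot\beta}_Y(f^\sharp\mu_1, g^\sharp\mu_2)$ is the supremum of these. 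Now $\mathcal{G}k \circ f^\sharp = (\mathcal{G}k \circ f)^\sharp$ by functoriality of the Kleisli extension, so writing $f' = \mathcal{G}k \circ f \colon X \to \mathcal{G}I$ and $g' = \mathcal{G}k \circ g \colon X \to \mathcal{G}I$, we have reduced to bounding $\Delta^{\alpha\cdot\beta}_I((f')^\sharp\mu_1, (g')^\sharp\mu_2)$ where now the \emph{target} is finite but the source $X$ and the source measures are still arbitrary.

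Next I would bring in approximability, applied to $X$, $I$, the kernels $f', g' \colon X \to \mathcal{G}I$, and the measures $\mu_1, \mu_2$. This yields finite spaces $J_n \in \FinSet$ and maps $m^\ast_n \colon X \to J_n$, $m_n \colon J_n \to X$ with
\[
\Delta^\alpha_I((f')^\sharp\mu_1, (g')^\sharp\mu_2)
= \lim_{n\to\infty} \Delta^\alpha_I\bigl((f' \circ m_n \circ m^\ast_n)^\sharp\mu_1, (g' \circ m_n \circ m^\ast_n)^\sharp\mu_2\bigr),
\]
except I need this for the grade $\alpha\cdot\beta$, not $\alpha$ — so I would invoke approximability at grade $\alpha\cdot\beta$ directly (the definition is stated for every $\alpha \in A$). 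For each fixed $n$, the composite $f' \circ m_n \circ m^\ast_n$ factors as $(f' \circ m_n)^\sharp$ applied after $\mathcal{G}m^\ast_n$-pushforward-style bookkeeping; concretely, $(f' \circ m_n \circ m^\ast_n)^\sharp \mu_1 = (f'\circ m_n)^\sharp(\mathcal{G}m^\ast_n(\mu_1))$, again by the monad laws (the Kleisli extension of a composite). So setting $d_i^{(n)} = \mathcal{G}m^\ast_n(\mu_i) \in \mathcal{G}J_n$ and $\hat f_n = f' \circ m_n \colon J_n \to \mathcal{G}I$, $\hat g_n = g' \circ m_n \colon J_n \to \mathcal{G}I$, the $n$-th approximant equals $\Delta^{\alpha\cdot\beta}_I(\hat f_n^\sharp d_1^{(n)}, \hat g_n^\sharp d_2^{(n)})$, which is now a purely finite situation: $J_n, I \in \FinSet$.

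Now I apply finite-composability to each $n$:
\[
\Delta^{\alpha\cdot\beta}_I(\hat f_n^\sharp d_1^{(n)}, \hat g_n^\sharp d_2^{(n)})
\leq \Delta^\alpha_{J_n}(d_1^{(n)}, d_2^{(n)}) + \sup_{j \in J_n} \Delta^\beta_I(\hat f_n(j), \hat g_n(j)).
\]
The first term is $\Delta^\alpha_{J_n}(\mathcal{G}m^\ast_n(\mu_1), \mathcal{G}m^\ast_n(\mu_2)) \leq \Delta^\alpha_X(\mu_1,\mu_2)$ by functoriality. For the second term, $\hat f_n(j) = f'(m_n(j)) = \mathcal{G}k(f(m_n(j)))$ and similarly for $\hat g_n$, so $\Delta^\beta_I(\hat f_n(j),\hat g_n(j)) = \Delta^\beta_I(\mathcal{G}k(f(m_n(j))), \mathcal{G}k(g(m_n(j)))) \leq \Delta^\beta_Y(f(m_n(j)), g(m_n(j))) \leq \sup_{x\in X}\Delta^\beta_Y(f(x),g(x))$, again by functoriality. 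Hence every approximant is bounded by $\Delta^\alpha_X(\mu_1,\mu_2) + \sup_{x\in X}\Delta^\beta_Y(f(x),g(x))$, a bound independent of $n$, $I$, and $k$. Taking the limit in $n$, then the supremum over $(I,k)$, and applying continuity on $Y$ gives $\Delta^{\alpha\cdot\beta}_Y(f^\sharp\mu_1, g^\sharp\mu_2) \leq \Delta^\alpha_X(\mu_1,\mu_2) + \sup_{x\in X}\Delta^\beta_Y(f(x),g(x))$, which is composability.

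The step I expect to be the main obstacle is the careful interleaving of the two discretizations: continuity discretizes the target $Y$, while approximability discretizes the source $X$, and one has to be sure these can be applied in sequence without interfering — in particular that after pushing forward along $k$ the hypotheses of approximability still apply to the new kernels $f', g'$, and that all the intermediate maps ($\mathcal{G}m^\ast_n$, the composites, the restrictions) are genuinely measurable so that every $\Delta$ expression is well-defined. A secondary subtlety is bookkeeping with the extended reals $\ol\RR$: the limit in the approximability condition must be interpreted in $\ol\RR$, and one should check that the monotone/bounded manipulations (sup of a bounded family, limit dominated by a constant) behave correctly there, especially since the divergences may take value $-\infty$ on subprobability measures.
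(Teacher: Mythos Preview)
Your proposal is correct and follows essentially the same route as the paper: discretize the target via continuity, discretize the source via approximability, apply finite-composability to the resulting finite situation, and bound each term uniformly in $n$, $I$, $k$.

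One small correction in labeling: the two bounds you justify ``by functoriality'' --- namely $\Delta^\alpha_{J_n}(\mathcal{G}m^\ast_n(\mu_1),\mathcal{G}m^\ast_n(\mu_2)) \leq \Delta^\alpha_X(\mu_1,\mu_2)$ and $\Delta^\beta_I(\mathcal{G}k(f(x)),\mathcal{G}k(g(x))) \leq \Delta^\beta_Y(f(x),g(x))$ --- should instead be attributed to \emph{continuity}, since functoriality is not among the hypotheses. Continuity suffices here precisely because in both cases the target space ($J_n$, respectively $I$) is finite, so the pushed-forward divergence is one of the terms appearing in the supremum that defines $\Delta^\alpha_X$ (respectively $\Delta^\beta_Y$). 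The paper's proof invokes continuity at exactly these two points.
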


\subsection{Basic Properties of $f$-divergences}

To discuss basic properties of divergences for DP, RDP, zCDP, and tCDP,
we begin with basic properties of $f$-divergences since DP can be formulated by
a graded family $\mathbf{\Delta^{\mathtt{DP}}} = \{\Delta^{\mathtt{DP}(\varepsilon)}\}_{0 \leq \varepsilon}$ of $f$-divergences, and R\'enyi divergences are logarithms of $f$-divergences.
An $f$-divergence $\Delta^{f}$ of subprobability measures is defined in the same way as 
$f$-divergence of probability measures (\ref{eq:f-divergences}).
\iffull
The $f$-divergences are not necessarily positive for subprobability
measures, though they are positive for proper probability measures.
We can extend the continuity of $f$-divergences \citep[Theorem
16]{1705001_2006} to support subprobability measures.

\begin{theorem}[Cf. {\citet[Theorem 16]{1705001_2006}}]\label{f-divergence:continuity}
For any weight function $f$, the $f$-divergence $\Delta^f$ is
continuous:\footnote{%
  Note that a measurable finite partition $\{A_i\}_{i = 0}^n$ on $X$ is
  equivalent to a measurable function $k \colon X \to I$ where $I =
\{0,1,\ldots,n\}$.}
for any \emph{subprobability measures} $\mu_1, \mu_2 \in \mathcal{G}X$ on $X$,
we have
\[
\Delta^f_X(\mu_1,\mu_2)
= \sup \Set{\sum_{i = 0}^n \mu_2(A_i)f\left(\frac{\mu_1(A_i)}{\mu_2(A_i)}\right) \mid \{A_i\}_{i = 0}^n
\text{ is a measurable finite partition of } X}.
\]
\end{theorem}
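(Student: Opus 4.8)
The plan is to prove the two inequalities separately, following the classical argument of Liese and Vajda for probability measures and checking that nothing breaks when $\mu_1, \mu_2$ are merely subprobability (hence still finite) measures. Throughout I fix a common dominating measure $\nu \defeq \mu_1 + \mu_2$, with Radon--Nikodym densities $p \defeq d\mu_1/d\nu$ and $q \defeq d\mu_2/d\nu$ (so $p+q = 1$ $\nu$-a.e.), and recall that, with the conventions on $f$ from the definition of weight function, $\Delta^f_X(\mu_1,\mu_2) = \int_X q\, f(p/q)\, d\nu$. It is convenient to name the \emph{perspective function} $g(s,t) \defeq t\, f(s/t)$, extended to the boundary by $g(s,0) = s\, f'(\infty)$ and $g(0,0) = 0$, where $f'(\infty) \defeq \lim_{t \to \infty} f(t)/t$; these are exactly the conventions used in the statement, and a standard fact of convex analysis is that $g$ is jointly convex, positively homogeneous, and closed on $\RR_{\geq 0}\times\RR_{\geq 0}$.

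For the inequality "$\geq$" (every partition sum is a lower bound), I would fix a measurable finite partition $\{A_i\}_{i=0}^n$ and apply Jensen's inequality to the random vector $(p,q)$ under $\nu$ restricted to $A_i$: joint convexity and homogeneity of $g$ give $\mu_2(A_i)\, f\!\big(\mu_1(A_i)/\mu_2(A_i)\big) = g\big(\mu_1(A_i),\mu_2(A_i)\big) \leq \int_{A_i} g(p,q)\, d\nu = \int_{A_i} q\, f(p/q)\, d\nu$, with the degenerate cells $\nu(A_i) = 0$ or $\mu_2(A_i) = 0$ handled by the boundary conventions. Summing over $i$ yields $\sum_i \mu_2(A_i) f(\mu_1(A_i)/\mu_2(A_i)) \leq \Delta^f_X(\mu_1,\mu_2)$. (Equivalently, this is the data-processing inequality for $f$-divergences along the classifier $k \colon X \to \{0,\dots,n\}$, whose classical proof works verbatim for finite measures.)

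For the converse "$\leq$", I would use that a convex function $f$ on $[0,\infty)$ continuous at $0$ is the pointwise supremum of a countable family of affine minorants $t \mapsto a_j t + b_j \leq f(t)$, which can moreover be chosen so that $\sup_j a_j = f'(\infty)$. With the conventions this gives the pointwise identity $g(p(x),q(x)) = \sup_j \big(a_j p(x) + b_j q(x)\big)$ on all of $X$, including where $q(x) = 0$. For each $N$, partitioning $X$ by which of the first $N$ affine functions attains $\max_{l \leq N}(a_l p + b_l q)$ (breaking ties by least index) produces a finite measurable partition $\{A_i^N\}$, and since $f(t) \geq a_i t + b_i$ (using $a_i \leq f'(\infty)$ on the cells with $\mu_2(A_i^N) = 0$) we get $\mu_2(A_i^N)\, f\!\big(\mu_1(A_i^N)/\mu_2(A_i^N)\big) \geq a_i \mu_1(A_i^N) + b_i \mu_2(A_i^N) = \int_{A_i^N}(a_i p + b_i q)\, d\nu$. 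Summing, the partition sum for $\{A_i^N\}$ is at least $\int_X \max_{l \leq N}(a_l p + b_l q)\, d\nu$, and as $N \to \infty$ the integrands increase pointwise to $g(p,q)$, so monotone convergence gives $\Delta^f_X(\mu_1,\mu_2)$ in the limit (allowing the value $+\infty$). Hence the supremum over finite partitions is at least $\Delta^f_X(\mu_1,\mu_2)$, and combining with the first inequality gives equality.

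The only real subtlety compared with the probability-measure statement — and the step I expect to need the most care — is the monotone convergence argument: it must be applied to the nonnegative functions $\max_{l \leq N}(a_l p + b_l q) - (a_1 p + b_1 q)$, which requires $a_1 p + b_1 q$ to be $\nu$-integrable and $g(p,q) \geq a_1 p + b_1 q$, both of which hold precisely because $\mu_1$ and $\mu_2$ have finite total mass. This also shows $\Delta^f_X(\mu_1,\mu_2) > -\infty$ always, so the claimed identity is meaningful regardless of finiteness. Beyond this, the work is bookkeeping: keeping the conventions $0\, f(a/0) = a\, f'(\infty)$ and $0\, f(0/0) = 0$ consistent across the three places they appear — the integral representation of $\Delta^f$, the pointwise identity $g(p,q) = \sup_j(a_j p + b_j q)$, and the evaluation of the partition terms with $\mu_2(A_i) = 0$.
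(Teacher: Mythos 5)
Your proof is correct, but it follows a genuinely different route from the paper's. The paper argues measure-theoretically: it first proves the identity for a \emph{positive} weight function by splitting $\mu_1$ via Lebesgue decomposition into a part absolutely continuous with respect to $\mu_2$ and a singular part (handled through $f^\ast(0)$), and for the absolutely continuous part it partitions along level sets of $d\mu_1/d\mu_2$ --- a lower bound on partition sums using the two monotonicity intervals of the convex $f$ and a mean-value argument, equality first when the density is bounded (uniform continuity of $f$ on compacts, dyadic level-set partitions), then in general by slicing $X$ into bands where the density lies in $[n,n+1)$ together with Jensen; finally it removes the positivity assumption by adding an affine term $a+bt$, which shifts both sides by $a\mu_2(X)+b\mu_1(X)$. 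You instead use convex duality: writing $\Delta^f_X(\mu_1,\mu_2)=\int_X g(p,q)\,d\nu$ for the closed perspective function $g$ and $\nu=\mu_1+\mu_2$, the easy inequality is sublinearity/Jensen (data processing along the finite quantizer), and the hard one comes from representing $g$ as a countable supremum of linear functionals $a_jp+b_jq$ with $\sup_j a_j=f'(\infty)$, taking the argmax partition of the first $N$ minorants, and applying monotone convergence after subtracting the integrable minorant $a_1p+b_1q$. Your route dispenses with the Lebesgue decomposition (the singular part is absorbed automatically by the convention $0f(a/0)=af'(\infty)$) and with the positivity reduction (the integrable-minorant trick handles signed $f$ directly and shows $\Delta^f_X(\mu_1,\mu_2)>-\infty$ along the way), at the price of invoking standard convex-analysis inputs that a complete write-up should verify: that a convex $f$ continuous at $0$ is the pointwise supremum on all of $[0,\infty)$ of countably many affine minorants whose slopes sup to $f'(\infty)$ (continuity at $0$, guaranteed by the paper's definition of weight function, is exactly what is needed at the left endpoint), and that the paper's convention $0f(a/0)=\lim_{t\to0^+}tf(a/t)$ coincides with $af'(\infty)$. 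The paper's proof is more elementary and its restricted-to-$D$ lemmas are reused later (e.g.\ in the approximability argument), but both arguments establish the theorem.
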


%We omit the proof. See appendix (Section \ref{sec_cont_f-div}).
%\else
%Continuity of $f$-divergences can be seen as an adaptation to
%subprobability distributions of the result in~\citet[Theorem 16]{1705001_2006}, while approximability of $f$-divergences follows by the convexity of weight functions (we defer proofs to the long version~\citep{SBGHKlongversion}).
%\fi

As we have seen, DP can be formulated by the $\RR_{\geq 0}$-graded family
$\mathbf{\Delta^{\mathtt{DP}}} = \{\Delta^{\mathtt{DP}(\varepsilon)}\}_{0 \leq
\varepsilon}$ of $f$-divergences, while the R\'enyi divergences supporting RDP,
zCDP, and tCDP are logarithms of $f$-divergences.
Before proving basic properties of divergences for DP, RDP, zCDP, and tCDP,
we first need two important basic properties of $f$-divergences, continuity and approximability,
and we show that finite-composability of $f$-divergences are extended to (proper) composability.

\begin{theorem}\label{f-divergence:approximability}
The $f$-divergence $\Delta^f$ is approximable for any weight function $f$. 
\end{theorem}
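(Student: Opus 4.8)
The plan is to reduce the statement to approximating finitely many integrals and then to control the $f$-divergence term by term. Fix $X \in \Meas$, a finite discrete space $I$, measurable maps $f,g \colon X \to \mathcal{G}X'$ — more precisely $f,g\colon X \to \mathcal{G}I$ — and $\mu_1,\mu_2 \in \mathcal{G}X$; write $f_i(x) = f(x)(\{i\})$ and $g_i(x) = g(x)(\{i\})$, which are measurable $[0,1]$-valued maps. Since $I$ is finite, $f^\sharp\mu_1$ carries no more information than the numbers $(f^\sharp\mu_1)(\{i\}) = \int_X f_i\,d\mu_1$, and likewise $(g^\sharp\mu_2)(\{i\}) = \int_X g_i\,d\mu_2$; the approximants $(f\circ m_n\circ m^\ast_n)^\sharp\mu_1$ have the same shape with $f_i$ replaced by $f_i \circ m_n\circ m^\ast_n$. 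So it suffices to build discretizations making all these integrals converge compatibly with $\Delta^f$. I would take $\mathcal{P}_n$ to be the finite measurable partition of $X$ that commonly refines the $1/n$-level-set partitions $\{f_i^{-1}[k/n,(k+1)/n)\}_k$ and $\{g_i^{-1}[k/n,(k+1)/n)\}_k$ over all $i \in I$, so that every $f_i$ and $g_i$ oscillates by at most $1/n$ on each block; let $J_n$ be the finite set of nonempty blocks and $m^\ast_n$ the (measurable) block-assignment map.

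The delicate point is the choice of the representative map $m_n \colon J_n \to X$. Note $\Delta^f_I(\nu_1,\nu_2) = \sum_{i\in I} p(\nu_1(\{i\}),\nu_2(\{i\}))$, where $p(a,b) = b\,f(a/b)$ is the perspective of $f$, extended by the paper's conventions to $p(a,0) = a f'_\infty$ for $a>0$ and $p(0,0)=0$, with $f'_\infty = \lim_{t\to\infty} f(t)/t \in (-\infty,+\infty]$. Here $p$ is continuous wherever the second argument is positive, but need not be continuous at $(0,0)$, so merely making the measures converge does not suffice. I would repair this by forcing the approximants to reproduce the vanishing coordinates exactly. Let $Z_f = \{i : \int f_i\,d\mu_1 = 0\}$ and $Z_g = \{i : \int g_i\,d\mu_2 = 0\}$; since $f_i,g_i\ge 0$, the set $E = \bigcap_{i\in Z_f}\{f_i=0\} \cap \bigcap_{i\in Z_g}\{g_i=0\}$ has full $\mu_1$-measure and full $\mu_2$-measure, hence full $(\mu_1+\mu_2)$-measure, so it meets every block of positive $(\mu_1+\mu_2)$-mass; I pick $m_n$'s representative of such a block inside $E$, and arbitrarily on the remaining null blocks. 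Measurability of $m_n$ is automatic since $J_n$ is finite discrete, and the appeal to choice is harmless because only finitely many points are selected.

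To conclude I would verify convergence. Unfolding the Kleisli lifting of $\mathcal{G}$, $\nu_1^{(n)}(\{i\}) := (f\circ m_n\circ m^\ast_n)^\sharp\mu_1(\{i\}) = \sum_{A\in J_n} f_i(m_n(A))\,\mu_1(A)$, which by the oscillation bound is within $\mu_1(X)/n \le 1/n$ of $\int_X f_i\,d\mu_1$, and which equals $0$ exactly when $i\in Z_f$ (a block of positive $\mu_1$-mass has positive $(\mu_1+\mu_2)$-mass, so its representative lies in $E\subseteq\{f_i=0\}$); symmetrically for $\nu_2^{(n)}$ and $Z_g$. Then a coordinatewise case split finishes the job: if $(g^\sharp\mu_2)(\{i\})>0$ the $i$-th summand of $\Delta^f_I$ converges by continuity of $p$ at points with positive second argument; if $i\in Z_g$ the approximant's second argument is $0$ for all $n$, so the summand is $p(\nu_1^{(n)}(\{i\}),0)$, which is the constant $0$ when moreover $i\in Z_f$, and otherwise equals $\nu_1^{(n)}(\{i\})\,f'_\infty \to (f^\sharp\mu_1)(\{i\})\,f'_\infty$ in $(-\infty,+\infty]$. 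Since $I$ is finite and every summand lies in $(-\infty,+\infty]$, termwise convergence gives $\Delta^f_I(\nu_1^{(n)},\nu_2^{(n)}) \to \Delta^f_I(f^\sharp\mu_1,g^\sharp\mu_2)$.

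I expect the main obstacle to be precisely the discontinuity of the perspective $p$ at the origin — equivalently, the erratic behaviour of $\Delta^f$ on coordinates that are null for both compared measures. Naive discretization only delivers ``$\liminf \ge$'' via lower semicontinuity of $\Delta^f_I$, and it is the representative-selection trick above, choosing witnesses inside the full-measure set $E$, that upgrades this to the required equality. The remaining ingredients — finiteness of the refining partitions, measurability of $m^\ast_n$, and the $1/n$-estimates — are routine.
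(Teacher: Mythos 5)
Your proof is correct and follows essentially the same route as the paper's: a finite partition of $X$ refining level sets of the coordinate functions of the two kernels, representative maps $m_n$ giving uniform approximation of those coordinates, and termwise convergence of the finite sum defining $\Delta^f_I$. The only real difference is in how exact vanishing is preserved: you choose representatives inside the full-measure set $E$, while the paper gets the same effect by making $\{0\}$ a separate cell of its dyadic partition (so any block on which a coordinate is positive is null for the relevant measure), and your explicit case analysis at the origin spells out what the paper compresses into ``continuity of the weight function''.
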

Therefore, any finite-composable family of $f$-divergences is composable.
\begin{theorem}\label{f-divergence:composability}
An $A$-graded family $\mathbf{\Delta} = \{\Delta^{f_\alpha}
\}_{\alpha \in A}$ of the $f_\alpha$-divergences is composable if it is finite-composable.
\end{theorem}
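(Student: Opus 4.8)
The plan is to obtain this as an immediate corollary of Theorems~\ref{thm:div:fin-comp->comp}, \ref{f-divergence:continuity}, and \ref{f-divergence:approximability}, with essentially no new work. The one thing to notice up front is that the \textbf{continuous} and \textbf{approximable} clauses of Definition~\ref{def:div:properties} are purely \emph{indexwise}: each quantifies over a single grade $\alpha \in A$ and never refers to the monoid multiplication. Consequently a graded family $\mathbf{\Delta} = \{\Delta^\alpha\}_{\alpha \in A}$ is continuous (resp. approximable) exactly when every member $\Delta^\alpha$, regarded as a singleton-graded divergence, is continuous (resp. approximable). Only finite-composability---which compares grade $\alpha\cdot\beta$ with grades $\alpha$ and $\beta$---is genuinely a property of the family, and that is precisely what we are handed as a hypothesis.

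Given this, the argument is short. First I would record that the partition formulation of continuity used in Theorem~\ref{f-divergence:continuity} is literally the $\FinSet$-formulation of Definition~\ref{def:div:properties}: as already noted there, a measurable finite partition $\{A_i\}_{i=0}^n$ of $X$ is the same datum as a measurable map $k\colon X \to I$ with $I = \{0,\dots,n\}$, and under this identification $\mathcal{G}k(\mu)(\{i\}) = \mu(A_i)$, so $\sum_{i} \mu_2(A_i) f(\mu_1(A_i)/\mu_2(A_i)) = \Delta^{f}_I(\mathcal{G}k\,\mu_1, \mathcal{G}k\,\mu_2)$. Hence Theorem~\ref{f-divergence:continuity} says exactly that each $\Delta^{f_\alpha}$ is continuous, so by the indexwise remark the family $\{\Delta^{f_\alpha}\}_{\alpha\in A}$ is continuous. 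Likewise Theorem~\ref{f-divergence:approximability} says each $\Delta^{f_\alpha}$ is approximable, so the family is approximable. The family is finite-composable by hypothesis, so Theorem~\ref{thm:div:fin-comp->comp} applies verbatim and yields that $\mathbf{\Delta}$ is composable.

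I do not expect any real obstacle here: all the substantive content---the measure-theoretic discretization arguments---has already been discharged in Theorems~\ref{f-divergence:continuity}, \ref{f-divergence:approximability}, and \ref{thm:div:fin-comp->comp}, and this statement is simply their composite, specialized from abstract divergences to $f$-divergences. The only care needed is bookkeeping: checking that the quantifier ranges ($X \in \Meas$, $I,J \in \FinSet$, the grades $\alpha,\beta$) line up when passing from the single-divergence statements to the graded-family statement, and that Theorem~\ref{thm:div:fin-comp->comp} is indeed stated at exactly this level of generality (it is). If one wanted to be fully self-contained one could also spell out why continuity and approximability are indexwise, but this is immediate from the definitions.
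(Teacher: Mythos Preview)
Your proposal is correct and matches the paper's own approach exactly: the paper states Theorem~\ref{f-divergence:composability} immediately after Theorem~\ref{f-divergence:approximability} with only the one-line remark ``Therefore, any finite-composable family of $f$-divergences is composable,'' making clear it is meant as the direct corollary of Theorems~\ref{thm:div:fin-comp->comp}, \ref{f-divergence:continuity}, and \ref{f-divergence:approximability} that you describe. Your observation that continuity and approximability are indexwise properties is the right bookkeeping step to make the deduction precise.
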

%\begin{proof}
%By Theorems \ref{f-divergence:continuity} and \ref{f-divergence:approximability},
%the family $\mathbf{\Delta}$ is continuous and approximable.
%Since $\mathbf{\Delta}$ is also finite-composable,
%by Theorem \ref{thm:div:fin-comp->comp},
%we conclude the composability of $\mathbf{\Delta}$.
%\end{proof}

We remark here that any composable family of $f$-divergences is also additive by
applying Theorem \ref{thm:div:cont+comp->mon}, since $f$-divergences are always
continuous (Theorem \ref{f-divergence:continuity}).
%\else
%\begin{theorem}
%\label{f-divergence:cont+approx}
%For any weight function $f$, the $f$-divergence $\Delta^f$ is continuous and approximable. 
%\end{theorem}
%
%By combining Theorems \ref{thm:div:fin-comp->comp} and \ref{f-divergence:cont+approx}, 
%we have:
%\begin{theorem}\label{f-divergence:composability}
%A graded family $\mathbf{\Delta} = \{\Delta^{f_\alpha}
%\}_{\alpha \in A}$ of $f_\alpha$-divergences is composable if finite-composable.
%\end{theorem}
%\fi
%As we have seen, DP can be formulated by a graded family $\mathbf{\Delta^{\mathrm{DP}}}$ of $f$-divergences.
\subsection{Properties of Divergences for DP, RDP, zCDP, and tCDP}

As we have seen, DP can be formulated by the $\mathbb{R}_{\geq 0}$-graded family $\mathbf{\Delta^{\mathtt{DP}}}$ of $f$-divergences.
By Theorem \ref{thm:div:cont+comp->mon} and \ref{f-divergence:composability} and \citet[Theorem 1]{BartheOlmedo2013},
we obtain the basic properties of the divergences $\mathbf{\Delta^{\mathtt{DP}}}$ for DP as follows:
\begin{theorem}[Cf. {\citet[Theorem 1]{BartheOlmedo2013}}]
\label{properties:DP}
The $\mathbb{R}_{\geq 0}$-graded family $\mathbf{\Delta^{\mathtt{DP}}}=\{\Delta^{\mathtt{DP}(\varepsilon)}\}_{0 \leq \varepsilon}$
%\sk{graded by which monoid?}
is reflexive, continuous, approximable, composable, and additive.
\end{theorem}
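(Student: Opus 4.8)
The plan is to assemble Theorem~\ref{properties:DP} from pieces that are, by this point, either already proved in the excerpt or quotable from \citet{BartheOlmedo2013}. Recall that DP is modeled by the $\RR_{\geq 0}$-graded family $\mathbf{\Delta^{\mathtt{DP}}} = \{\Delta^{\mathtt{DP}(\varepsilon)}\}_{0 \leq \varepsilon}$ of $f$-divergences with weight functions $\mathtt{DP}(\varepsilon)(t) = \max(0, 1 - e^\varepsilon t)$, and that the grading monoid is $(\RR_{\geq 0}, +, 0, \leq)$ (note the contravariance: a smaller $\varepsilon$ gives a larger divergence, which matches $\mathtt{DP}(\varepsilon)(t) \leq \mathtt{DP}(\varepsilon')(t)$ when $\varepsilon' \leq \varepsilon$, so the graded-family monotonicity condition holds). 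So the first step is a one-line check that $\mathbf{\Delta^{\mathtt{DP}}}$ really is an $\RR_{\geq 0}$-graded family of $f$-divergences in the formal sense of the earlier definition.

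Next I would dispatch the five claimed properties in turn. \emph{Continuity} is immediate from Theorem~\ref{f-divergence:continuity} (continuity of arbitrary $f$-divergences on subprobability measures), once one observes that a measurable finite partition of $X$ is the same data as a measurable map $k \colon X \to I$ with $I \in \FinSet$, so the two formulations of continuity coincide. \emph{Approximability} is immediate from Theorem~\ref{f-divergence:approximability}, which already asserts approximability of $\Delta^f$ for \emph{every} weight function $f$, in particular for each $\mathtt{DP}(\varepsilon)$. \emph{Reflexivity} is a direct calculation: $\Delta^{\mathtt{DP}(\varepsilon)}_X(\mu,\mu) = \int_X \mu(x)\max(0, 1 - e^\varepsilon) \, dx = 0$ since $\varepsilon \geq 0$ forces $1 - e^\varepsilon \leq 0$, giving $\leq 0$ as required. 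For \emph{composability}, the route is to invoke Theorem~\ref{f-divergence:composability}: it suffices to show $\mathbf{\Delta^{\mathtt{DP}}}$ is finite-composable, i.e. composability in the purely discrete setting, and this is exactly \citet[Theorem 1]{BartheOlmedo2013} (the sequential/adaptive composition theorem for the DP $f$-divergence over discrete distributions). Finally, \emph{additivity} follows from Theorem~\ref{thm:div:cont+comp->mon} applied to the continuity and composability just established (or, alternatively, one can cite that additivity for the DP divergence is already known in the discrete case and lifts via continuity, but routing through Theorem~\ref{thm:div:cont+comp->mon} is cleaner).

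I do not expect a serious obstacle here: the theorem is essentially a bookkeeping corollary that collects the general machinery of Sections~\ref{sec:divergences} and the $f$-divergence lemmas and instantiates it at the specific weight functions $\mathtt{DP}(\varepsilon)$. The one point requiring a little care is the \emph{grading}: one must verify that discrete composability of $\Delta^{\mathtt{DP}(\varepsilon_1)}$ and $\Delta^{\mathtt{DP}(\varepsilon_2)}$ produces a bound in terms of $\Delta^{\mathtt{DP}(\varepsilon_1 + \varepsilon_2)}$ — i.e. that the Barthe--Olmedo discrete composition theorem really lands at the sum $\varepsilon_1 + \varepsilon_2$ and the sum $\delta_1 + \delta_2$ of the $\delta$'s, matching the monoid operation on $\RR_{\geq 0}$. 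This is precisely what \citet[Theorem 1]{BartheOlmedo2013} gives, so the check is routine, but it is the only place where the choice of grading monoid interacts nontrivially with the cited result. The subprobability (non-terminating) case is already subsumed: Theorems~\ref{f-divergence:continuity}, \ref{f-divergence:approximability}, and \ref{f-divergence:composability} are all stated for subprobability measures, so no separate argument is needed.
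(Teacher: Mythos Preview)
Your proposal is correct and follows essentially the same route as the paper: the paper derives Theorem~\ref{properties:DP} by combining Theorem~\ref{f-divergence:composability} with \citet[Theorem~1]{BartheOlmedo2013} for (finite-)composability and then Theorem~\ref{thm:div:cont+comp->mon} for additivity, with continuity and approximability coming from Theorems~\ref{f-divergence:continuity} and~\ref{f-divergence:approximability}. Your write-up is in fact slightly more explicit than the paper's (you spell out reflexivity and the graded-family monotonicity check), but the argument structure is identical.
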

%\iffull
%\begin{proof}
%The reflexivity of $\mathbf{\Delta^{\mathrm{DP}}}$ is obvious.
%We recall that the finite-composability  of $\mathbf{\Delta^{\mathrm{DP}}}$ is already proved as \citet[Theorem 1]{BartheOlmedo2013}.
%By Theorems \ref{f-divergence:continuity}, \ref{f-divergence:approximability}, and \ref{f-divergence:composability},
%the family $\mathbf{\Delta^{\mathrm{DP}}}$ is continuous, approximable, and composable.
%We have the additivity by Theorem \ref{thm:div:cont+comp->mon}.
%\end{proof}
%\fi
Similarly, we can obtain basic properties for RDP, zCDP, and tCDP.
First, by Theorem~\ref{f-divergence:continuity} and Theorem~\ref{f-divergence:approximability},
the exponential $\exp(D^\alpha)$ of R\'enyi divergence of order $\alpha$ is
continuous and approximable because is exactly the $f$-divergence 
with weight function $t \mapsto \exp(\alpha/(1-\alpha))t^\alpha$.

Since the logarithm function is monotone and continuous except at $0$, R\'enyi divergence is continuous and approximable too.
Reflexivity and finite-composability of R\'enyi divergences follow by direct calculations.
Theorem \ref{f-divergence:composability} yields:
\begin{theorem}\label{properties:Renyi}
  For any $\alpha > 1$, the R\'enyi divergence $D^\alpha$ of order $\alpha$ is reflexive,
  continuous, approximable, composable, and additive (as a singleton-graded
  family).  
\end{theorem}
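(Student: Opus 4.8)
The plan is to prove Theorem~\ref{properties:Renyi} by reducing each property to the corresponding property of the $f$-divergence with weight function $f_\alpha(t) = \exp\!\big(\tfrac{\alpha}{1-\alpha}\big)\,t^\alpha$, using that $D^\alpha_X(\mu_1\|\mu_2) = \tfrac{1}{\alpha-1}\log \Delta^{f_\alpha}_X(\mu_1,\mu_2)$. First I would check the algebraic identity: unfolding Definition (\ref{eq:f-divergences}) with $f_\alpha$ gives $\Delta^{f_\alpha}_X(\mu_1,\mu_2) = \exp\!\big(\tfrac{\alpha}{1-\alpha}\big)\int_X \mu_2(x)(\mu_1(x)/\mu_2(x))^\alpha\,dx$, so that $\tfrac{1}{\alpha-1}\log\Delta^{f_\alpha}_X(\mu_1,\mu_2) = \tfrac{1}{\alpha-1}\big(\tfrac{\alpha}{1-\alpha} + \log\int_X \mu_2(\mu_1/\mu_2)^\alpha\big)$. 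This does not quite match (\ref{eq:Renyi_divergences}) on the nose, but differs only by an affine reparametrization; the cleanest route is to absorb the constant, so that $\exp((\alpha-1)D^\alpha_X(\mu_1\|\mu_2))$ equals $\Delta^{g_\alpha}_X(\mu_1,\mu_2)$ for the weight function $g_\alpha(t) = t^\alpha$ (which is convex on $\RR_{\geq 0}$ for $\alpha>1$ and continuous at $0$, hence a legitimate weight function). I would state explicitly $D^\alpha_X(\mu_1\|\mu_2) = \tfrac{1}{\alpha-1}\log \Delta^{g_\alpha}_X(\mu_1,\mu_2)$ and carry that identity through the rest.

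Next I would handle the easy properties. \emph{Reflexivity}: $\Delta^{g_\alpha}_X(\mu,\mu) = \int_X \mu(x)\cdot 1\,dx = \mu(X) \leq 1$, so $D^\alpha_X(\mu\|\mu) = \tfrac{1}{\alpha-1}\log\mu(X) \leq 0$. \emph{Continuity and approximability}: by Theorem~\ref{f-divergence:continuity} and Theorem~\ref{f-divergence:approximability}, $\Delta^{g_\alpha}$ is continuous and approximable; since $x \mapsto \tfrac{1}{\alpha-1}\log x$ is monotone increasing and continuous on $(0,\infty)$ (and order-continuous on $[0,\infty]$ with the convention $\log 0 = -\infty$), it commutes with the suprema over finite partitions in the definition of continuity and with the limits in the definition of approximability; the monotonicity is what lets $\sup$ pass through the $\log$, and continuity of $\log$ is what lets the limit pass through. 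One small check: in the approximability limit the argument $\Delta^{g_\alpha}_I(\cdots)$ could in principle hit $0$, but since $\log$ is continuous at every point of $[0,\infty]$ in the order topology this causes no trouble.

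For \emph{finite-composability} I would argue directly at the level of R\'enyi divergences on finite sets, or equivalently derive it from finite-composability of $\Delta^{g_\alpha}$ via a multiplicative-to-additive translation. The known fact (from the original RDP paper, or by a direct computation using Hölder / the tensorization of R\'enyi divergence) is that for $f,g\colon I\to\mathcal GJ$ and $d_1,d_2\in\mathcal GI$ on finite spaces, $D^\alpha_J(f^\sharp d_1\|g^\sharp d_2) \le D^\alpha_I(d_1\|d_2) + \sup_{i\in I} D^\alpha_J(f(i)\|g(i))$; exponentiating by $(\alpha-1)$ turns the right-hand sum into a product of $\Delta^{g_\alpha}$-values, matching the multiplicative form of finite-composability for $\Delta^{g_\alpha}$. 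I would present this as a short direct calculation in the discrete case, being careful that the grading monoid for a singleton-graded family is trivial ($\alpha\cdot\beta = \alpha = \beta$ forces the statement to read $D^\alpha_J(f^\sharp d_1\|g^\sharp d_2) \le D^\alpha_I(d_1\|d_2) + \sup_i D^\alpha_J(f(i)\|g(i))$, which is exactly the classical inequality). Then \emph{composability} follows immediately: $D^\alpha = \tfrac{1}{\alpha-1}\log\Delta^{g_\alpha}$, $\Delta^{g_\alpha}$ is continuous and approximable and (just shown) finite-composable, so by Theorem~\ref{f-divergence:composability} the family $\{\Delta^{g_\alpha}\}$ is composable; applying $\tfrac{1}{\alpha-1}\log(-)$ and using monotonicity once more transfers composability to $D^\alpha$. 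Finally \emph{additivity} is free: $D^\alpha$ is continuous and composable, so Theorem~\ref{thm:div:cont+comp->mon} gives additivity (alternatively, additivity of $D^\alpha$ is the classical tensorization identity $D^\alpha_{X\times Y}(\mu_1\otimes\mu_3\|\mu_2\otimes\mu_4) = D^\alpha_X(\mu_1\|\mu_2) + D^\alpha_Y(\mu_3\|\mu_4)$, which one can quote directly).

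The main obstacle I anticipate is \textbf{not} any single inequality but the bookkeeping around the logarithm and the subprobability setting: making sure that applying $\tfrac{1}{\alpha-1}\log(-)$ genuinely preserves continuity (the $\sup$-formula), approximability (the limit), and the composability inequality, including edge cases where the $f$-divergence argument is $0$ or $+\infty$ (so $D^\alpha$ is $-\infty$ or $+\infty$), and checking that the weight function $g_\alpha(t)=t^\alpha$ satisfies the paper's conventions $0\,g_\alpha(a/0) = \lim_{t\to0+} t\,g_\alpha(a/t)$ and $0\,g_\alpha(0/0)=0$. Everything else is either quoting Theorems~\ref{f-divergence:continuity}, \ref{f-divergence:approximability}, \ref{f-divergence:composability}, and \ref{thm:div:cont+comp->mon}, or a short classical computation for the discrete finite-composability step; the conceptual content is entirely in the $f$-divergence machinery already established.
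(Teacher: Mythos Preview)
Your overall approach matches the paper's: express $D^\alpha$ as $\tfrac{1}{\alpha-1}\log\Delta^{g_\alpha}$ with $g_\alpha(t)=t^\alpha$, inherit continuity and approximability from Theorems~\ref{f-divergence:continuity} and~\ref{f-divergence:approximability} via the monotone continuous logarithm, check reflexivity by $D^\alpha_X(\mu\|\mu)=\tfrac{1}{\alpha-1}\log\mu(X)\le 0$, establish finite-composability by a direct discrete computation (the paper does this via Jensen's inequality applied to $t\mapsto t^\alpha$), and obtain additivity from Theorem~\ref{thm:div:cont+comp->mon}.

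There is one genuine gap in your composability step. You write that $\Delta^{g_\alpha}$ is ``(just shown) finite-composable'' and then invoke Theorem~\ref{f-divergence:composability} on $\Delta^{g_\alpha}$. But what you showed is the \emph{additive} inequality for $D^\alpha$, which, after exponentiating by $\alpha-1$, becomes a \emph{multiplicative} inequality
\[
\Delta^{g_\alpha}_J(f^\sharp d_1,g^\sharp d_2)\;\le\;\Delta^{g_\alpha}_I(d_1,d_2)\cdot\sup_{i\in I}\Delta^{g_\alpha}_J(f(i),g(i)).
\]
This is not the paper's (additive) finite-composability for $\Delta^{g_\alpha}$ as a singleton-graded family, so Theorem~\ref{f-divergence:composability} does not apply to $\Delta^{g_\alpha}$. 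The fix is to bypass $\Delta^{g_\alpha}$ entirely at this point: you have already shown that $D^\alpha$ itself is continuous, approximable, and finite-composable, so invoke Theorem~\ref{thm:div:fin-comp->comp} directly on $D^\alpha$ to conclude composability. This is exactly what the paper does (its phrase ``it suffices to show the reflexivity and finite-composability of $D^\alpha$'' is an appeal to Theorem~\ref{thm:div:fin-comp->comp}). With that one-line correction your argument is complete and coincides with the paper's.
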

%\iffull
%\fi

\iffull
We extend the following properties of R\'enyi divergences which give the
transitive laws of RDP and zCDP to support subprobability measures. (An known
analogous law for tCDP is not known.)
%
% \jh{Do we have this for tCDP too?}
% \tetsuya{In the paper of tCDP, we could not find it.}
%
\begin{proposition}[Cf. {\citet[Theorem 3]{6832827}}]\label{lem:Renyi:monotonicity}
We have
\[
1 < \alpha \leq \beta
\implies
D^\alpha_X(\mu_1||\mu_2) \leq D^\beta_X(\mu_1||\mu_2).
\]
\end{proposition}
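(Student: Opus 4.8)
The plan is to reduce to the absolutely continuous, nonzero case and then rewrite $D^\alpha_X(\mu_1||\mu_2)$ as a sum of two quantities, each of which is manifestly nondecreasing in $\alpha$; this mirrors the classical argument for probability measures (cf.\ \citet[Theorem 3]{6832827}), the only genuinely new point being the bookkeeping forced by subprobability mass.

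I would dispose of the degenerate cases first. If $\mu_1$ is the zero measure then, by the conventions on $0f(0/t)$ and $0f(0/0)$, the integral defining $D^\alpha$ vanishes, so both sides equal $\frac1{\alpha-1}\log 0 = -\infty$ and the claim holds. If $\mu_1 \not\ll \mu_2$, choosing a common dominating measure $\lambda$ with densities $p_i = d\mu_i/d\lambda$, there is a set of positive $\mu_1$-mass on which $p_2 = 0 < p_1$; since $1-\alpha<0$, the integrand $p_1^\alpha p_2^{1-\alpha}$ is $+\infty$ there (via $0f(a/0)=\lim_{t\to0+}tf(a/t)=+\infty$ for $f(u)=u^\alpha$ and $a>0$), so $D^\gamma_X(\mu_1||\mu_2)=+\infty$ for every $\gamma>1$ and the inequality is again trivial. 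The case $\alpha=\beta$ is immediate, so I would assume henceforth $\mu_1 \ll \mu_2$, $\mu_1 \neq 0$, and $\alpha < \beta$.

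Under these hypotheses I would set $c := \mu_1(X) \in (0,1]$ --- where $c\le 1$ is exactly the subprobability assumption --- put $p := d\mu_1/d\mu_2$, and let $\nu := \mu_1/c$, a genuine probability measure on $X$. Using $\mu_2$ itself as the dominating measure in (\ref{eq:Renyi_divergences}) gives $\int_X p^\alpha\,d\mu_2 = \int_X p^{\alpha-1}\,d\mu_1 = c\int_X p^{\alpha-1}\,d\nu$, hence
\[
  D^\alpha_X(\mu_1||\mu_2) = \frac{\log c}{\alpha-1} + \frac{1}{\alpha-1}\log\!\int_X p^{\alpha-1}\,d\nu = \frac{\log c}{\alpha-1} + \log\|p\|_{L^{\alpha-1}(\nu)},
\]
with $\|p\|_{L^r(\nu)} := \big(\int_X p^r\,d\nu\big)^{1/r} \in [0,+\infty]$; this identity stays valid when $\int_X p^{\alpha-1}\,d\nu=\infty$ (then $D^\alpha_X(\mu_1||\mu_2)=+\infty$).

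Finally I would observe that both summands are nondecreasing on $(1,\infty)$: since $\log c\le 0$ and $\alpha\mapsto\alpha-1$ is positive and increasing, $\frac{\log c}{\alpha-1}$ is nondecreasing; and $\alpha\mapsto\|p\|_{L^{\alpha-1}(\nu)}$ is nondecreasing because $L^r$-norms increase in $r$ on a probability space (Jensen's inequality applied to the convex map $t\mapsto t^{(\beta-1)/(\alpha-1)}$, valid with values in $[0,+\infty]$), using $\alpha-1\le\beta-1$. Adding the two inequalities gives $D^\alpha_X(\mu_1||\mu_2)\le D^\beta_X(\mu_1||\mu_2)$. The only delicate point --- and the one genuinely particular to the subprobability setting --- is that the term $\frac{\log c}{\alpha-1}$ must be controlled; the hypothesis $c\le1$ (so $\log c\le0$) is precisely what makes it cooperate, whereas for proper probability measures $c=1$, that term vanishes, and one recovers the familiar statement verbatim.
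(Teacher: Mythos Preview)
Your proof is correct and follows essentially the same approach as the paper: normalize $\mu_1$ to a probability measure, invoke Jensen's inequality (equivalently, monotonicity of $L^r$-norms on a probability space), and use $\mu_1(X)\le 1$ to control the residual normalization term. The only difference is cosmetic---the paper first reduces to the finite discrete case via continuity of $D^\alpha$ and then carries out the same computation as a single chain of inequalities, whereas you work directly in the continuous setting and split $D^\alpha$ into two monotone summands.
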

%
%\begin{lemma}[{\citet[Lemma 4.1]{Langlois2014}}]\label{lem:Renyi:preservation}
%For any $\alpha > 1$, $\mu_1,\mu_2 \in \mathcal{G}X$ and $A \in \Sigma_X$, we have
%$\mu_1(A)^\alpha \mu_2(A)^{1-\alpha} \leq \Delta^\alpha_X(\mu_1,\mu_2)$.
%Moreover, if $\mu_1 \ll \mu_2$ then $\mu_1(A) \leq (e^{D^\alpha_X(\mu_1||\mu_2)} \mu_2(A))^{(\alpha-1)/\alpha}$.
%\end{lemma}
%
\begin{proposition}[Cf. {\citet[Lemma 4.1]{Langlois2014}}]\label{lem:Renyi:weak-triangle}
For any $\alpha > 1$, $\mu_1,\mu_2, \mu_3 \in \mathcal{G}X$, and  $p,q > 1$ satisfying $\frac{1}{p}+\frac{1}{q} = 1$, we have
\[
D^\alpha_X(\mu_1||\mu_3) \leq \frac{p\alpha -1 }{p(\alpha - 1)}D^{p\alpha}_X(\mu_1||\mu_2) + D^{\frac{q}{p}(p\alpha - 1)}_X(\mu_1||\mu_2)
.\]
\end{proposition}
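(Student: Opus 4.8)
The plan is to derive this weak triangle inequality from H\"older's inequality, following the classical argument for R\'enyi divergences (as in \citet[Lemma 4.1]{Langlois2014}) while checking the measure-theoretic details needed for subprobability measures. Unfolding the definition, $D^\alpha_X(\mu_1||\mu_3) = \frac{1}{\alpha-1}\log\int_X \mu_1(x)^\alpha\mu_3(x)^{1-\alpha}\,dx$ (read with respect to a common $\sigma$-finite dominating measure such as $\mu_1+\mu_2+\mu_3$), so it suffices to bound $\int_X \mu_1^\alpha\mu_3^{1-\alpha}\,dx$. I would factor the integrand pointwise as
\[
\mu_1^\alpha\mu_3^{1-\alpha} = \Bigl(\mu_1^{\alpha}\mu_2^{1/p-\alpha}\Bigr)\cdot\Bigl(\mu_2^{\alpha-1/p}\mu_3^{1-\alpha}\Bigr)
\]
and apply H\"older with conjugate exponents $p,q$. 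The $p$-th power of the first factor integrates to $\int_X \mu_1^{p\alpha}\mu_2^{1-p\alpha}\,dx = \exp\bigl((p\alpha-1)D^{p\alpha}_X(\mu_1||\mu_2)\bigr)$, and the $q$-th power of the second to $\int_X \mu_2^{\gamma}\mu_3^{1-\gamma}\,dx = \exp\bigl((\gamma-1)D^{\gamma}_X(\mu_2||\mu_3)\bigr)$ where $\gamma = q(\alpha-1/p) = \tfrac{q}{p}(p\alpha-1)$; the identity $\gamma-1 = q(\alpha-1)$, which is just a restatement of $1/p+1/q=1$, is exactly what makes the H\"older exponents align.

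Carrying this out gives $\int_X\mu_1^\alpha\mu_3^{1-\alpha}\,dx \le \exp\bigl(\tfrac{p\alpha-1}{p}D^{p\alpha}_X(\mu_1||\mu_2)+(\alpha-1)D^{\gamma}_X(\mu_2||\mu_3)\bigr)$, and applying the monotone map $\tfrac{1}{\alpha-1}\log(\cdot)$ (valid since $\alpha>1$) yields $D^\alpha_X(\mu_1||\mu_3)\le\tfrac{p\alpha-1}{p(\alpha-1)}D^{p\alpha}_X(\mu_1||\mu_2)+D^{\gamma}_X(\mu_2||\mu_3)$, which is the claimed inequality; I note the last term of the displayed statement should read $\mu_2||\mu_3$ rather than $\mu_1||\mu_2$, with the exponent $\tfrac{q}{p}(p\alpha-1)$ exactly as stated. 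One should also record that the invoked orders are legitimate: since $\alpha>1$ and $p>1$ we have $p\alpha>1$, and $\gamma = \tfrac{p\alpha-1}{p-1}>1$ (equivalently $\gamma>1\iff\alpha>1$), so both $D^{p\alpha}$ and $D^{\gamma}$ are well-defined divergences.

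The genuinely delicate point is not the H\"older step but the behaviour on null sets for \emph{sub}probability measures. When $\mu_1$ is not absolutely continuous with respect to $\mu_3$ the left-hand side is $+\infty$; on the set where $\mu_3=0<\mu_1$ one has either $\mu_2=0$ there, forcing $D^{p\alpha}_X(\mu_1||\mu_2)=+\infty$, or $\mu_2>0$ there, forcing $D^{\gamma}_X(\mu_2||\mu_3)=+\infty$, so the right-hand side is $+\infty$ and the inequality is trivial. This case analysis is compatible with the conventions $0f(a/0)=\lim_{t\to0+}tf(a/t)$ and $0f(0/0)=0$ from the definition of weight function, under which $\exp(D^\beta)$ is the $f$-divergence with weight $t\mapsto\exp(\beta/(1-\beta))t^\beta$. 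Away from those sets all integrands are nonnegative and measurable, so H\"older applies verbatim and the computation above is rigorous; the only remaining routine check is that the factorization is well-defined wherever the integrand is finite.
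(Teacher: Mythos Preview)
Your proof is correct and follows essentially the same route as the paper: factor the integrand, apply H\"older with exponents $p,q$, and identify the two resulting integrals as $\Delta^{\mathrm{R}(p\alpha)}(\mu_1,\mu_2)^{1/p}$ and $\Delta^{\mathrm{R}(q(\alpha-1/p))}(\mu_2,\mu_3)^{1/q}$. The paper writes the computation with Radon--Nikodym derivatives against $\mu_3$ (using the chain rule $d\mu_1/d\mu_3 = (d\mu_1/d\mu_2)(d\mu_2/d\mu_3)$) rather than densities against a common dominating measure, but the factorization and the H\"older step are identical; your null-set case analysis is in fact more explicit than the paper's one-line reduction to $\mu_1\ll\mu_2\ll\mu_3$, and you correctly flag the typo in the displayed statement (the last term should indeed read $\mu_2||\mu_3$).
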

\fi

As we have seen in Section \ref{sec:cont-case-motiv}, we can define
divergences for zCDP and tCDP by Equation (\ref{eq:divergence_zCDP}) and
Equation (\ref{eq:divergence_tCDP}).
Explicitly, we introduce the divergences for zCDP and tCDP by 
$\Delta^{\mathtt{zCDP}(\xi,\rho)} = \sup_{1 < \alpha} \frac{1}{\alpha}( D^\alpha-\xi)$ and 
$\Delta^{\omega-\mathtt{tCDP}(\rho)} = \sup_{1 < \alpha < \omega} \frac{1}{\alpha} D^\alpha$ respectively.
Since two supremums are commutative ($\sup_x \sup_y A(x,y) = \sup_y \sup_x
A(x,y)$) in general, the following basic properties of the graded family of zCDP
and the divergence of tCDP are obtained from Theorem \ref{properties:Renyi}.

\begin{theorem}\label{properties:zCDP}
The $\mathbb{R}_{\geq 0}$-graded family $\mathbf{\Delta^{\mathtt{zCDP}}}=\{\Delta^{\mathtt{zCDP}(\xi)}\}_{0 \leq \xi}$ for zCDP %\sk{graded by which monoid?}
is reflexive, continuous, composable, and additive.
\end{theorem}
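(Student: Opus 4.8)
The plan is to obtain every asserted property of $\mathbf{\Delta^{\mathtt{zCDP}}}=\{\Delta^{\mathtt{zCDP}(\xi)}\}_{0\leq\xi}$ from the corresponding property of the R\'enyi divergences $D^\alpha$ (Theorem~\ref{properties:Renyi}), exploiting that $\Delta^{\mathtt{zCDP}(\xi)}=\sup_{1<\alpha}\frac1\alpha(D^\alpha-\xi)$ is built from the $D^\alpha$ by an affine reparametrisation followed by a supremum. Throughout I would use only two elementary facts about $\ol\RR$: iterated suprema commute, and for fixed $\alpha>1$ and $\xi\geq 0$ the map $t\mapsto\frac1\alpha(t-\xi)$ is monotone and commutes with suprema (the constant $-\xi$ being finite, no indeterminate $\infty-\infty$ arises). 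First, the graded-family condition itself is immediate: if $\xi\leq\xi'$ then $\frac1\alpha(D^\alpha_X(\mu_1||\mu_2)-\xi')\leq\frac1\alpha(D^\alpha_X(\mu_1||\mu_2)-\xi)$ for each $\alpha$, so $\Delta^{\mathtt{zCDP}(\xi')}_X\leq\Delta^{\mathtt{zCDP}(\xi)}_X$ pointwise. For \emph{reflexivity}, reflexivity of $D^\alpha$ gives $D^\alpha_X(\mu||\mu)\leq 0$, hence $\frac1\alpha(D^\alpha_X(\mu||\mu)-\xi)\leq 0$ for all $\alpha>1$ and thus $\Delta^{\mathtt{zCDP}(\xi)}_X(\mu,\mu)\leq 0$.

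For \emph{continuity}, I would expand, for $k\colon X\to I$ with $I\in\FinSet$,
\[
  \Delta^{\mathtt{zCDP}(\xi)}_I(\mathcal{G}k(\mu_1),\mathcal{G}k(\mu_2))
  =\sup_{\alpha>1}\tfrac1\alpha\bigl(D^\alpha_I(\mathcal{G}k(\mu_1)||\mathcal{G}k(\mu_2))-\xi\bigr),
\]
take the supremum over all such $k$, interchange the suprema over $k$ and over $\alpha$, and pull $\frac1\alpha(\cdot-\xi)$ through the supremum over $k$; continuity of each $D^\alpha$ then replaces $\sup_k D^\alpha_I(\mathcal{G}k(\mu_1)||\mathcal{G}k(\mu_2))$ by $D^\alpha_X(\mu_1||\mu_2)$, recovering $\Delta^{\mathtt{zCDP}(\xi)}_X(\mu_1,\mu_2)$.

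For \emph{composability} I would fix $f,g\colon X\to\mathcal{G}Y$ and split the grading $\xi_1+\xi_2$ in the additive monoid $\RR_{\geq 0}$. For each $\alpha>1$, composability of $D^\alpha$ yields $D^\alpha_Y(f^\sharp\mu_1||g^\sharp\mu_2)\leq D^\alpha_X(\mu_1||\mu_2)+\sup_{x}D^\alpha_Y(f(x)||g(x))$, so
\[
  \tfrac1\alpha\bigl(D^\alpha_Y(f^\sharp\mu_1||g^\sharp\mu_2)-\xi_1-\xi_2\bigr)
  \leq \tfrac1\alpha\bigl(D^\alpha_X(\mu_1||\mu_2)-\xi_1\bigr)+\sup_{x}\tfrac1\alpha\bigl(D^\alpha_Y(f(x)||g(x))-\xi_2\bigr);
\]
the first term on the right is bounded by $\Delta^{\mathtt{zCDP}(\xi_1)}_X(\mu_1,\mu_2)$, and after commuting the suprema over $x$ and $\alpha$ the second is bounded by $\sup_{x}\Delta^{\mathtt{zCDP}(\xi_2)}_Y(f(x),g(x))$, so taking $\sup_{\alpha>1}$ on the left gives composability. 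Finally \emph{additivity} is then free: a continuous composable graded family is additive by Theorem~\ref{thm:div:cont+comp->mon}.

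I expect the only real work to be bookkeeping around $\ol\RR$: justifying each interchange of suprema and each use of the affine reparametrisation in the cases where $D^\alpha$ attains $\pm\infty$ on subprobability measures. Note that, unlike Theorem~\ref{properties:DP}, we do not assert approximability of $\mathbf{\Delta^{\mathtt{zCDP}}}$, since the discretizing data $(J_n,m^\ast_n,m_n)$ witnessing approximability of each $D^\alpha$ would in general depend on $\alpha$ and need not be choosable uniformly over all $\alpha>1$.
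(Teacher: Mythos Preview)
Your proposal is correct and follows essentially the same route as the paper: both derive each property of $\mathbf{\Delta^{\mathtt{zCDP}}}$ from the corresponding property of $D^\alpha$ via the affine map $t\mapsto\tfrac1\alpha(t-\xi)$ and commutation of suprema, and both obtain additivity from Theorem~\ref{thm:div:cont+comp->mon}. The paper packages the intermediate quantity $\tfrac1\alpha(D^\alpha-\xi)$ as an auxiliary graded family $\mathbf{\Delta}^{\mathtt{zCDP+}(\alpha)}$, but the content is identical, and your explicit remarks on $\ol\RR$-bookkeeping and on why approximability is omitted match the paper's footnote and the comment following the theorem.
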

\begin{theorem}\label{properties:tCDP}
For each $1 < \omega$, the divergence $\Delta^{\omega-\mathtt{tCDP}}$ for $\omega$-tCDP is reflexive, continuous, composable, and additive.
\end{theorem}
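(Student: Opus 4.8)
The plan is to reduce every property to the corresponding property of the R\'enyi divergences $D^\alpha$ from Theorem~\ref{properties:Renyi} by pushing it through the supremum in the definition $\Delta^{\omega-\mathtt{tCDP}}_X(\mu_1,\mu_2)=\sup_{1<\alpha<\omega}\tfrac1\alpha D^\alpha_X(\mu_1||\mu_2)$. The only arithmetic we need is that on $\ol\RR$, with the paper's convention $\infty+(-\infty)=-\infty$, addition is monotone, and multiplication by the positive scalar $\tfrac1\alpha$ distributes over both $+$ and arbitrary suprema; I would record these facts at the outset. Reflexivity is then immediate: $D^\alpha_X(\mu,\mu)\le0$ for each $\alpha\in(1,\omega)$ by reflexivity of $D^\alpha$, so $\tfrac1\alpha D^\alpha_X(\mu,\mu)\le0$, and the supremum over the nonempty interval $(1,\omega)$ is $\le0$. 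Moreover, once continuity and composability are proved, additivity is free by Theorem~\ref{thm:div:cont+comp->mon} (viewing $\Delta^{\omega-\mathtt{tCDP}}$ as a singleton-graded family), so only two properties really need argument.

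For continuity, I would unfold the definition, use continuity of each $D^\alpha$ to rewrite $D^\alpha_X(\mu_1||\mu_2)=\sup\{D^\alpha_I(\mathcal{G}k(\mu_1),\mathcal{G}k(\mu_2))\mid I\in\FinSet,\ k\colon X\to I\}$, pull the positive factor $\tfrac1\alpha$ inside this inner supremum, and then commute the supremum over $\alpha$ with the supremum over discretizations $(I,k)$, which is valid for arbitrary families in the complete lattice $\ol\RR$. Reassembling the inner supremum over $\alpha$ gives exactly $\sup\{\Delta^{\omega-\mathtt{tCDP}}_I(\mathcal{G}k(\mu_1),\mathcal{G}k(\mu_2))\mid I\in\FinSet,\ k\colon X\to I\}$, i.e. continuity; in one line, a supremum of continuous divergences is continuous because suprema commute.

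For composability, fix measurable $f,g\colon X\to\mathcal{G}Y$. For each $\alpha\in(1,\omega)$, composability of $D^\alpha$ gives $D^\alpha_Y(f^\sharp\mu_1||g^\sharp\mu_2)\le D^\alpha_X(\mu_1||\mu_2)+\sup_{x\in X}D^\alpha_Y(f(x)||g(x))$; multiplying by $\tfrac1\alpha>0$ and distributing over the sum and the inner supremum, each summand is bounded by the corresponding tCDP divergence, since $\tfrac1\alpha D^\alpha\le\sup_{1<\beta<\omega}\tfrac1\beta D^\beta=\Delta^{\omega-\mathtt{tCDP}}$ (note $\alpha$ lies in this sup range) and hence $\tfrac1\alpha\sup_{x}D^\alpha_Y(f(x)||g(x))=\sup_x\tfrac1\alpha D^\alpha_Y(f(x)||g(x))\le\sup_x\Delta^{\omega-\mathtt{tCDP}}_Y(f(x),g(x))$. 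So $\tfrac1\alpha D^\alpha_Y(f^\sharp\mu_1||g^\sharp\mu_2)\le\Delta^{\omega-\mathtt{tCDP}}_X(\mu_1,\mu_2)+\sup_x\Delta^{\omega-\mathtt{tCDP}}_Y(f(x),g(x))$ for all $\alpha\in(1,\omega)$, and taking the supremum over $\alpha$ on the left yields composability; additivity then follows from Theorem~\ref{thm:div:cont+comp->mon}.

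The only real point of care is the bookkeeping on $\ol\RR$: because subprobability measures can make the R\'enyi divergences negative (or $\pm\infty$), one must check that scaling by $\tfrac1\alpha$ genuinely commutes with $+$ and $\sup$ and that addition is monotone under the chosen convention, so that the termwise bounds survive passing to suprema. Otherwise the argument is exactly the one used for zCDP (Theorem~\ref{properties:zCDP}), $\Delta^{\omega-\mathtt{tCDP}}$ being the variant where the additive shift $\xi$ is replaced by a cutoff $\omega$ on the range of $\alpha$; in particular I would not claim approximability, since the limit over discretizations in that property need not commute with the supremum over $\alpha$.
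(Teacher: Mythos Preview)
Your proposal is correct and follows essentially the same approach as the paper: derive reflexivity, continuity, and composability of $\Delta^{\omega-\mathtt{tCDP}}$ from the corresponding properties of $D^\alpha$ (Theorem~\ref{properties:Renyi}) by commuting suprema, then obtain additivity via Theorem~\ref{thm:div:cont+comp->mon}, while explicitly not claiming approximability. The paper gives this argument in detail only for the zCDP case (Theorem~\ref{properties:zCDP}) and remarks that the tCDP case is analogous, which is exactly what you have spelled out.
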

Note that we may not have approximability, but the family is still composable.
These results also hold for subprobability measures where R\'enyi divergence and
divergences for zCDP and tCDP are defined in a way similar to Equation
(\ref{eq:Renyi_divergences}) and Equation (\ref{eq:zCDP_definition})
respectively.

\section{Approximate Span-Lifting}
\label{sec:span-liftings}

We are now ready to combine graded divergences with spans, leading to our
new relational liftings.  Given an $A$-graded family $\mathbf{\Delta} =
\{\Delta^\alpha\}_{\alpha \in A}$ of divergences, we introduce a graded monad on
$\Span(\Meas)$ called the \emph{approximate span-lifting}
$(-)^{\sharp(\mathbf{\Delta},\alpha,\delta)}$ for the family $\mathbf{\Delta}$,
where $\alpha \in A$ and $\delta \in \ol{\RR}$.  We first
define its action on objects.

\begin{definition}\label{def:span}
We define the span-constructor $(-)^{\sharp(\mathbf{\Delta},\alpha,\delta)}$ as follows:
for any $(X,Y,\Phi,\rho_1,\rho_2)$ in $\Span(\Meas)$, we define the $\Span(\Meas)$-object
\begin{align*}
(X,Y,\Phi,\rho_1,\rho_2)^{\sharp(\mathbf{\Delta},\alpha,\delta)}
&= (\mathcal{G}X,\mathcal{G}Y,W(\Phi,\mathbf{\Delta},\alpha,\delta),~\mathcal{G}\rho_1 \circ \pi_1,~\mathcal{G}\rho_1 \circ \pi_2)\\
\text{ where }&
W(\Phi,\mathbf{\Delta},\alpha,\delta)
= \Set{(\nu_1,\nu_2) \in \mathcal{G}\Phi \times \mathcal{G}\Phi \mid \Delta^\alpha_{\Phi}(\nu_1,\nu_2) \leq \delta}.
\end{align*}
%The $\sigma$-algebra of $W(\Phi,\mathbf{\Delta},\alpha,\delta)$ is the coarsest
%one making the inclusion $W(\Phi,\mathbf{\Delta},\alpha,\delta) \hookrightarrow
%\mathcal{G}\Phi \times \mathcal{G}\Phi$ measurable, that is,
We view $W(\Phi,\mathbf{\Delta},\alpha,\delta)$ as a subspace of the measurable space
$\mathcal{G}\Phi \times \mathcal{G}\Phi$.
\end{definition}
Intuitively, $(X,Y,\Phi,\rho_1,\rho_2)^{\sharp(\mathbf{\Delta},\alpha,\delta)}$
relates subprobability measures with $\Delta^\alpha$-distance at most $\delta$.
The set $W(\Phi,\mathbf{\Delta},\alpha,\delta)$ contains all possible witness distributions,
and $\pi_1$ and $\pi_2$ are canonical projections from
$W(\Phi,\mathbf{\Delta},\alpha,\delta)$ to $\mathcal{G}\Phi$. As a special case,
the approximate span-lifting $(-)^{\sharp(\mathbf{\Delta},\alpha,\delta)}$
recovers the divergence $\Delta^\alpha$ by applying the equality relation
$(X,X,\mathtt{Eq}_X,\pi_1,\pi_1)^{\sharp(\mathbf{\Delta},\alpha,\delta)}$.
\begin{theorem}
\label{prop:span-lift:describe}
For any $A$-graded family $\mathbf{\Delta}$, $\alpha \in A$, and $\delta \in \ol{\RR}$, we have
\[
(X,X,X,\mathtt{id}_X,\mathtt{id}_X)^{\sharp(\mathbf{\Delta},\alpha,\delta)}
=  (\mathcal{G}X,\mathcal{G}X,\Set{(\mu_1,\mu_2)\mid\Delta^\alpha_X (\mu_1,\mu_2) \leq \delta},\pi_1,\pi_2).
\]
Here, $(X,X,X,\mathrm{id}_X,\mathrm{id}_X)$ is isomorphic to the equality
relation $(X,X,\mathrm{Eq}_X,\pi_1|_{\mathrm{Eq}_X},\pi_1|_{\mathrm{Eq}_X})$.
\end{theorem}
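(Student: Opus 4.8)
The plan is to prove the displayed equation by directly unfolding Definition~\ref{def:span} at the span $(X,X,X,\mathrm{id}_X,\mathrm{id}_X)$, and then to establish the concluding ``isomorphic to the equality relation'' remark separately. Instantiating the span-constructor with apex $\Phi = X$ and both legs equal to $\mathrm{id}_X$, Definition~\ref{def:span} gives a $\Span(\Meas)$-object with left and right feet $\mathcal{G}X$, apex $W(X,\mathbf{\Delta},\alpha,\delta) = \Set{(\nu_1,\nu_2) \in \mathcal{G}X \times \mathcal{G}X \mid \Delta^\alpha_X(\nu_1,\nu_2) \leq \delta}$ viewed as a subspace of $\mathcal{G}X \times \mathcal{G}X$, and legs $\mathcal{G}(\mathrm{id}_X) \circ \pi_1$ and $\mathcal{G}(\mathrm{id}_X) \circ \pi_2$. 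Since $\mathcal{G}$ is a functor, $\mathcal{G}(\mathrm{id}_X) = \mathrm{id}_{\mathcal{G}X}$, so the legs are exactly $\pi_1$ and $\pi_2$; the apex and legs thus obtained coincide on the nose with those of the span on the right-hand side, both middle objects being the same set carrying the same subspace $\sigma$-algebra inherited from $\mathcal{G}X \times \mathcal{G}X$. So this half is pure bookkeeping, with no calculation involved.

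For the final sentence, I would exhibit a $\Span(\Meas)$-isomorphism from $(X,X,X,\mathrm{id}_X,\mathrm{id}_X)$ to $(X,X,\mathrm{Eq}_X,\pi_1|_{\mathrm{Eq}_X},\pi_1|_{\mathrm{Eq}_X})$ given by the triple $(\mathrm{id}_X,\mathrm{id}_X,\langle \mathrm{id}_X,\mathrm{id}_X\rangle)$. The diagonal $\langle \mathrm{id}_X,\mathrm{id}_X\rangle \colon X \to X \times X$ is measurable and corestricts to a bijection $X \to \mathrm{Eq}_X$ whose inverse is $\pi_1|_{\mathrm{Eq}_X}$; since the latter is measurable, the corestricted diagonal is an isomorphism in $\Meas$ (this needs only the subspace structure on $\mathrm{Eq}_X$, not measurability of $\mathrm{Eq}_X$ inside $X \times X$). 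The two span-morphism squares hold because $\mathrm{id}_X \circ \mathrm{id}_X = \pi_i|_{\mathrm{Eq}_X} \circ \langle \mathrm{id}_X,\mathrm{id}_X\rangle$ for $i = 1,2$ (the projections agreeing on $\mathrm{Eq}_X$), and the inverse morphism is $(\mathrm{id}_X,\mathrm{id}_X,\pi_1|_{\mathrm{Eq}_X})$. This isomorphism already lives at the level of the input spans, which is exactly what the final sentence of the theorem asserts.

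There is essentially no mathematical obstacle here: the statement is a sanity check that along the equality relation the span-lifting computes exactly the divergence $\Delta^\alpha$. The only points needing care are the distinction between the legs $\mathcal{G}\rho_i \circ \pi_i$ of Definition~\ref{def:span} and the expression $\mathcal{G}(\rho_i \circ \pi_i)$, and the observation that ``subspace of $\mathcal{G}\Phi \times \mathcal{G}\Phi$'' forces both sides of the displayed equation to carry the same $\sigma$-algebra, so that they are literally equal as $\Span(\Meas)$-objects rather than merely isomorphic.
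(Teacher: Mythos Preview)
Your proposal is correct. The paper states this theorem without proof, treating it as an immediate consequence of Definition~\ref{def:span}; your explicit unfolding of the span-constructor at $\Phi = X$, $\rho_1 = \rho_2 = \mathrm{id}_X$ together with the functoriality $\mathcal{G}(\mathrm{id}_X) = \mathrm{id}_{\mathcal{G}X}$ is exactly the routine verification the paper leaves to the reader, and your isomorphism via the diagonal is the standard one.
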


Next, we give approximate span-liftings the structure of a graded monad with
double strength.  We consider the important case where $\mathbf{\Delta}$ is a
reflexive, composable, and additive $A$-graded family of divergences; in some
cases, we can recover more limited versions of approximate span-liftings by
dropping or weakening these properties.
\begin{theorem}\label{thm:span-lifting:conclusion}
If an $A$-graded family $\mathbf{\Delta}$ is reflexive, composable, and additive, 
then the approximate span-lifting $(-)^{\sharp(\mathbf{\Delta},\alpha,\delta)}$
form an $A \times\ol\RR$-graded monad with double
strength. Namely, there are maps
\vspace*{0.2em}
\begin{description}
\item[Functor:]
For any morphism $(h,k,l) \colon {(X,Y,\Phi,\rho_1,\rho_2)} \to {(Z,W,\Psi,\rho'_1,\rho'_2)}$ in the category $\Span(\Meas)$ and any $(\alpha,\delta) \in A \times\ol\RR$,
\[
  (\mathcal{G}h,\mathcal{G}k,\mathcal{G}l \times \mathcal{G}l)
  \colon (X,Y,\Phi,\rho_1,\rho_2)^{\sharp(\mathbf{\Delta},\alpha,\delta)}
  \to (Z,W,\Psi,\rho'_1,\rho'_2)^{\sharp(\mathbf{\Delta},\alpha,\delta)}.
\]
\item[Unit:] For any morphism $(X,Y,\Phi,\rho_1,\rho_2)$ in $\Span(\Meas)$,
\[
{(\eta_X,\eta_Y,\langle \eta_\Phi,\eta_\Phi \rangle)} \colon
{(X,Y,\Phi,\rho_1,\rho_2)} \to {(X,Y,\Phi,\rho_1,\rho_2)}^{\sharp(\mathbf{\Delta},1_A,0)}.
\]

\item[Kleisli lifting:] For any morphism $(h,k,l)\colon(X,Y,\Phi,\rho_1,\rho_2) \to(Z,W,\Psi,\rho'_1,\rho'_2)^{\sharp(\mathbf{\Delta},\alpha,\delta)}$ in $\Span(\Meas)$ and $(\beta, \gamma) \in A \times\ol\RR$,
\[
(h^\sharp, k^\sharp, (\pi_1\circ l)^\sharp \times (\pi_2\circ l)^\sharp )\colon
(X,Y,\Phi,\rho_1,\rho_2)^{\sharp(\mathbf{\Delta},\beta,\gamma)} \to (Z,W,\Psi,\rho'_1,\rho'_2)^{\sharp(\mathbf{\Delta},\alpha\beta,\delta+\gamma)}
\]
\item[Inclusions:] For any $(X,Y,\Phi,\rho_1,\rho_2)$ in $\Span(\Meas)$, and any $\alpha \preceq \beta$ and $\delta \leq \gamma$,
\[
(\mathtt{id}_{\mathcal{G}X},\mathtt{id}_{\mathcal{G}Y}, \mathtt{id}_{\mathcal{G}\Phi} \times \mathtt{id}_{\mathcal{G}\Phi})
\colon
(X,Y,\Phi,\rho_1,\rho_2)^{\sharp(\mathbf{\Delta},\alpha,\delta)}\to (X,Y,\Phi,\rho_1,\rho_2)^{\sharp(\mathbf{\Delta},\beta,\gamma)}.
\]

\item[Double strength:]
For any $(X,Y,\Phi,\rho_1,\rho_2)$ and $(Z,W,\Psi,\rho'_1,\rho'_2)$ in $\Span(\Meas)$,
and parameters $(\alpha,\delta)$ and $(\beta,\gamma)$ in $A \times\ol\RR$, by letting
$\theta_i = \mathtt{dst}_{\Phi,\Psi}\circ(\pi_i \times \pi_i)$ where $i = 1,2$,
\[
\begin{aligned}
&(\mathtt{dst}_{X,Z},\mathtt{dst}_{Y,W},\langle \theta_1, \theta_2\rangle)\\
&\qquad \colon
(X,Y,\Phi,\rho_1,\rho_2)^{\sharp(\mathbf{\Delta},\alpha,\delta)} \mathbin{\dot\times} (Z,W,\Psi,\rho'_1,\rho'_2)^{\sharp(\mathbf{\Delta},\beta,\gamma)} \to (\Phi \mathbin{\dot\times} \Psi)^{\sharp(\mathbf{\Delta},\alpha\beta,\delta+\gamma)}
\end{aligned}.
\]
\end{description}
\end{theorem}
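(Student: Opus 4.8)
The plan is to verify each of the five structure maps in turn, checking first that the stated triple is a well-defined morphism in $\Span(\Meas)$ (i.e.\ the underlying maps are measurable, the middle component lands in the correct witness subspace $W(\Psi,\mathbf{\Delta},-,-)$, and the two square-commutation equations hold), and then that the graded-monad laws from the preliminaries are satisfied. The crucial observation, which makes the whole argument go through without the axiom of choice, is that a morphism $(h,k,l)\colon (X,Y,\Phi,\rho_1,\rho_2)\to(Z,W,\Psi,\rho'_1,\rho'_2)^{\sharp(\mathbf{\Delta},\alpha,\delta)}$ \emph{already packages} a measurable witness-selection map: the component $l\colon \Phi \to W(\Psi,\mathbf{\Delta},\alpha,\delta)\subseteq \mathcal{G}\Psi\times\mathcal{G}\Psi$ is a measurable function, so $\pi_1\circ l,\pi_2\circ l\colon \Phi\to\mathcal{G}\Psi$ are measurable and their Kleisli extensions $(\pi_i\circ l)^\sharp\colon \mathcal{G}\Phi\to\mathcal{G}\Psi$ are well-defined in $\Meas$. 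This is exactly the data that the discrete-case argument in Section~\ref{sec:cont-case-motiv} had to conjure by choice.

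I would proceed as follows. (1) \emph{Functor.} Measurability of $\mathcal{G}h,\mathcal{G}k,\mathcal{G}l$ is standard; that $(\mathcal{G}l\times\mathcal{G}l)(\nu_1,\nu_2)=(\mathcal{G}l(\nu_1),\mathcal{G}l(\nu_2))$ lands in $W(\Psi,\mathbf{\Delta},\alpha,\delta)$ follows from \emph{functoriality} of $\mathbf{\Delta}$ applied to $l\colon\Phi\to\Psi$: $\Delta^\alpha_\Psi(\mathcal{G}l(\nu_1),\mathcal{G}l(\nu_2))\le\Delta^\alpha_\Phi(\nu_1,\nu_2)\le\delta$. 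The commutation squares reduce, after applying $\mathcal{G}(-)$ to $h\circ\rho_1=\rho'_1\circ l$ etc., to functoriality of $\mathcal{G}$. (2) \emph{Unit.} $\eta_\Phi$ is measurable; $\langle\eta_\Phi,\eta_\Phi\rangle(x)=(\mathbf{d}_x,\mathbf{d}_x)$ lies in $W(\Phi,\mathbf{\Delta},1_A,0)$ by \emph{reflexivity} ($\Delta^{1_A}_\Phi(\mathbf{d}_x,\mathbf{d}_x)\le 0$); the squares are the monad unit naturality $\mathcal{G}\rho_i\circ\eta_\Phi=\eta_X\circ\rho_i$. (3) \emph{Kleisli lifting.} Here the middle map is $(\pi_1\circ l)^\sharp\times(\pi_2\circ l)^\sharp$; for $(\nu_1,\nu_2)\in W(\Phi,\mathbf{\Delta},\beta,\gamma)$ we must show $\Delta^{\alpha\beta}_\Psi((\pi_1\circ l)^\sharp\nu_1,(\pi_2\circ l)^\sharp\nu_2)\le\delta+\gamma$. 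This is precisely where \emph{composability} enters: with $f=\pi_1\circ l$, $g=\pi_2\circ l\colon\Phi\to\mathcal{G}\Psi$, composability gives a bound by $\Delta^\beta_\Phi(\nu_1,\nu_2)+\sup_{x\in\Phi}\Delta^\alpha_\Psi(f(x),g(x))\le\gamma+\delta$, the last inequality because $l(x)\in W(\Psi,\mathbf{\Delta},\alpha,\delta)$ means exactly $\Delta^\alpha_\Psi((\pi_1\circ l)(x),(\pi_2\circ l)(x))\le\delta$ for every $x$. The commutation squares unfold to $\mathcal{G}\rho'_i\circ(\pi_i\circ l)^\sharp=h^\sharp\circ\mathcal{G}\rho_i$ (for $i=1$) resp.\ $k^\sharp\circ\mathcal{G}\rho_2$, which follow from the hypothesis that $(h,k,l)$ is a span morphism into the lifting together with the compatibility of $\mathcal{G}$-Kleisli extension with post-composition. (4) \emph{Inclusions.} Immediate from the contravariance of $\preceq$ in the grading and monotonicity in $\delta$: if $\alpha\preceq\beta$ and $\delta\le\gamma$ then $W(\Phi,\mathbf{\Delta},\alpha,\delta)\subseteq W(\Phi,\mathbf{\Delta},\beta,\gamma)$, so the identity triple is a morphism. (5) \emph{Double strength.} The middle map $\langle\theta_1,\theta_2\rangle$ sends a pair of witness pairs to $(\nu_1\otimes\nu_3,\ \nu_2\otimes\nu_4)$ viewed in $\mathcal{G}(\Phi\times\Psi)$; it lands in $W(\Phi\dot\times\Psi,\mathbf{\Delta},\alpha\beta,\delta+\gamma)$ by \emph{additivity}: $\Delta^{\alpha\beta}_{\Phi\times\Psi}(\nu_1\otimes\nu_3,\nu_2\otimes\nu_4)\le\Delta^\alpha_\Phi(\nu_1,\nu_2)+\Delta^\beta_\Psi(\nu_3,\nu_4)\le\delta+\gamma$. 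Measurability of $\mathtt{dst}$ and of $\theta_i$ comes from strength of $\mathcal{G}$, and the squares follow from naturality of $\mathtt{dst}$ against $\rho_1\times\rho'_1$ and $\rho_2\times\rho'_2$.

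Finally I would check the graded-monad axioms themselves (unit laws $f^{1\sharp e}\circ\eta=f$ and $\eta^{1\sharp e}=\mathrm{id}$, associativity $(g^{e_1\sharp e_2}\circ f)^{e_0\sharp e_1 e_2}=g^{e_0e_1\sharp e_2}\circ f^{e_0\sharp e_1}$, and compatibility of the inclusions), together with the strength coherence diagrams. These reduce componentwise: on the $X$- and $Y$-components they are exactly the (plain) monad and strength laws for $\mathcal{G}$, which hold in $\Meas$; on the $\Phi$-component they are the corresponding identities for the pair-of-Kleisli-extensions $(\pi_i\circ l)^\sharp$, which again follow from the $\mathcal{G}$-monad laws once one observes that $(\pi_i\circ(\text{Kleisli composite of } l,l'))^\sharp=(\pi_i\circ l')^\sharp\circ(\pi_i\circ l)^\sharp$. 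The grading bookkeeping—that composing an $(\alpha,\delta)$-morphism with a $(\beta,\gamma)$-structure yields $(\alpha\beta,\delta+\gamma)$, matching multiplication in the monoid $A$ and addition in $\ol\RR$—is routine given that $\alpha\beta$ uses composability and $\delta+\gamma$ uses the additive monoid structure on $\ol\RR$ (with the stated convention $\infty+(-\infty)=-\infty$). I expect the main obstacle to be purely bookkeeping: verifying the associativity axiom for the Kleisli lifting on the witness component, since one must carefully track how the two independent witness-selection maps $\pi_1\circ l$ and $\pi_2\circ l$ interact across a double composite and confirm that composability chains correctly, $\Delta^{\alpha_1\alpha_2\beta}\le\Delta^{\alpha_1\alpha_2}(\text{on }\Phi)+\sup\Delta^{\beta}$, without any measurability gap—precisely the point where the built-in measurable witness selection pays off relative to the discrete construction.
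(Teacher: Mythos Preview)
Your proposal is correct and takes essentially the same approach as the paper: verify well-definedness of each of the five structure maps using the corresponding divergence property (reflexivity for the unit, composability for the Kleisli lifting, the grading contravariance for inclusions, additivity for the double strength), then observe that the graded-monad and strength coherence axioms reduce componentwise to the plain monad and strength laws for the sub-Giry monad $\mathcal{G}$. One small point to tighten: in the Functor step you invoke \emph{functoriality} of $\mathbf{\Delta}$, which is not among the stated hypotheses; it does follow from reflexivity plus composability (take $f=g=\eta_\Psi\circ l$ in the composability inequality, so the $\sup$ term is $\le 0$ by reflexivity), and the paper makes this derivation explicit before treating the functor case.
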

\begin{proof}[Proof Sketch]
Checking of the axioms of graded monad is straightforward since all structures are inherited from the sub-Giry monad $\mathcal{G}$.
It suffices to prove the well-definedness of the above maps.
For example, we check the well-definedness of the Kleisli lifting of a morphism 
$(h,k,l)\colon(X,Y,\Phi,\rho_1,\rho_2) \to(Z,W,\Psi,\rho'_1,\rho'_2)^{\sharp(\mathbf{\Delta},\alpha,\delta)}$ in $\Span(\Meas)$.
To prove this, we first show that the third component $(\pi_1\circ l)^\sharp \times (\pi_2\circ l)^\sharp$ of the Kleisli lifting forms a measurable function from $W(\Phi,\mathbf{\Delta},\beta,\gamma)$ to $W(\Psi,\mathbf{\Delta},\alpha\beta,\delta+\gamma)$ by using the composability of $\mathbf{\Delta}$ where measurability is obvious since $W(\Phi,\mathbf{\Delta},\beta,\gamma)$ and $W(\Psi,\mathbf{\Delta},\alpha\beta,\delta+\gamma)$ are the subspaces of $\mathcal{G}\Phi \times \mathcal{G}\Phi$ and $\mathcal{G}\Psi \times \mathcal{G}\Psi$.
Next, we show $\mathcal{G}\rho'_1\circ \pi_1 \circ ((\pi_1\circ l)^\sharp \times (\pi_2\circ l)^\sharp) = h^\sharp\circ \rho_1$ and $\mathcal{G}\rho'_2\circ \pi_2 \circ ((\pi_1\circ l)^\sharp \times (\pi_2\circ l)^\sharp) = k^\sharp\circ \rho_2$, but this is given from the assumption $\rho'_1 \circ l = h \circ \rho_1$ and $\rho'_2 \circ l = k \circ \rho_2$.

Similary, the well-definedness of functor part and unit are proved by using the composability and reflexivity of $\mathbf{\Delta}$; the inclusion is obtained from the definition of $A$-graded family of divergences; the double strength is obtained from the additivity of $\mathbf{\Delta}$.
\end{proof}

%\iffull
%\PARAGRAPH{Summary of Categorical Structure on Approximate Span-liftings}
%To sum up, we have the following relationship between categorical structures on
%$(-)^{\sharp(\mathbf{\Delta},\alpha,\delta)}$ and basic properties of the
%$A$-graded family $\mathbf{\Delta}$ of divergences as follows:
%
%\begin{align*}
%\text{functoriality of }\mathbf{\Delta}
%&\implies \text{functoriality of each constructor } (-)^{\sharp(\mathbf{\Delta},\alpha,\delta)} \\
%\text{reflexivity + composability of }\mathbf{\Delta} &\implies \text{graded monad structure of } \{(-)^{\sharp(\mathbf{\Delta},\alpha,\delta)}\}_{\alpha,\delta}\\
%\text{additivity of }\mathbf{\Delta}
%&\implies \text{double strength of } \{(-)^{\sharp(\mathbf{\Delta},\alpha,\delta)}\}_{\alpha,\delta}
%\end{align*}

\PARAGRAPH{Remark: Adaptive Compositions}
Many composition theorems of differential privacy are based on the notion of $k$-fold adaptive composition~\citep[Definition 2.3]{Winograd-Cort:2017:FAD:3136534.3110254} and \citep[Section A]{5670947}.
Roughly speaking, for $k$ programs $q_1,\ldots,q_k$ their $k$-fold adaptive composition $q_1 \rhd q_2 \rhd \cdots \rhd q_k$ calculates in the following way:
\begin{enumerate}
\item  The first program $q_1$ takes an input $x$ in $X$, and returns an output $y_1$ in $Y_1$.
\item  The second program $q_2$ takes an input $x \in X$ and the output $y_1 \in Y_1$ of the previous program $q_1$, and returns an output $y_2 \in Y_2$.
\item[\ldots]
\item[($k$)] The $k$-th program $q_k$ takes an input $x \in X$ and the outputs $y_1,\dots,y_{k-1}$ of previous programs $q_1,\dots,q_{k-1}$, and returns an output $y_k \in Y_k$.
\end{enumerate}

We observe that our definition of composability of divergences covers the
\emph{standard} composability with respect to $k$-fold adaptive
composition.\footnote{%
  For differential privacy, there are \emph{advanced composition} theorems such as
  \citet[Theorem 3.3]{5670947}, \citet[Theorem 3.20]{DworkRothTCS-042}, which give
stronger privacy guarantees.}
For example, adaptive composition of two randomized programs can be formulated categorically as follows: let $f \colon X \to \mathcal{G}Y$ and $f \colon Y \times X \to \mathcal{G}X$ be two randomized programs.
The adaptive composition $f \rhd g \colon X \to \mathcal{G}(Y \times Z)$ is defined by
 \[
 f \rhd g
 =
 (\mathrm{st}_{Y,Z} \circ (\mathrm{id}_Y \times g) \circ \alpha_{Y,Y,Z})^\sharp \circ 
 \mathrm{st}'_{Y \times Y,X} \circ (\mathcal{G}\mathrm{copy}_Y \times \mathrm{id}_X) \circ (f \times \mathrm{id}_X) \circ \mathrm{copy}_X.
 \]
 Here, $\mathrm{st}'_{Y \times Y,X}$ is the costrength $\mathcal{G}(Y \times Y) \times X \to \mathcal{G}((Y \times Y)\times X)$;
 $\mathrm{copy}_X$ is the diagonal map $X \to X \times X$ ($x \mapsto (x,x)$) on $X$; 
 $\alpha_{Y,Y,Z}$ is the associativity $(Y \times Y) \times Z \to Y \times (Y \times Z)$ of cartesian product of $\Meas$.
 We show that the composability of $\mathbf{\Delta}$ is stronger than the adaptive composability.
 Suppose that $\mathbf{\Delta}$ reflexive, continuous and composable. 
 Since $(-)^{\sharp(\mathbf{\Delta},\alpha,\delta)}$ is a graded span-lifting
 with a double strength, the adaptive composition of the following two morphisms
 $(f_1,f_2,f_3) \colon \Phi \to \Psi^{\sharp(\mathbf{\Delta},\alpha,\delta)}$ and
 $(g_1,g_2,g_3) \colon \Psi \mathbin{\dot\times} \Phi \to \Omega^{\sharp(\mathbf{\Delta},\beta,\gamma)}$
 of spans is given by $(f_1 \rhd g_1,f_2 \rhd g_2,l) \colon \Phi \to (\Psi
 \mathbin{\dot\times}
 \Omega)^{\sharp(\mathbf{\Delta},\alpha\beta,\delta+\gamma)}$ (we omit details
 of $l$).
%In the same way, we can prove the adaptive composability of differential privacy.
%If $f$ is $(\varepsilon_1,\delta_1)$-DP, that is, there is a witness function $l$ such that
%$(f,f,l) \colon \Phi \to \mathrm{Eq}_Y^{\sharp(\mathbf{\Delta}^\mathrm{DP},\varepsilon_1,\delta_1)}$, and 
%$g(y,-)$ is $(\varepsilon_2,\delta_2)$-DP for any $y \in Y$, that is, there is a witness function $l''$ such that
%$(g,g,l') \colon \mathrm{Eq}_Y \mathbin{\dot\times} \Phi \to \mathrm{Eq}_Z^{\sharp(\mathbf{\Delta}^\mathrm{DP},\varepsilon_2,\delta_2)}$ then
%the adaptive composition $f \rhd g$ is $(\varepsilon_1+\varepsilon_2,\delta_1+\delta_2)$-DP, because,
%$(f \rhd g,f \rhd g,l'') \colon \Phi \to (\mathrm{Eq}_Y\mathbin{\dot\times}\mathrm{Eq}_Z)^{\sharp(\mathbf{\Delta}^\mathrm{DP},\varepsilon_1+\varepsilon_2,\delta_1+\delta_2)}$.
\else
Moreover, the approximate span-lifting
$(-)^{\sharp(\mathbf{\Delta},\alpha,\delta)}$ is an $A \times
\ol\RR$-graded lifting of the monad
$\mathcal{G}\times\mathcal{G}$ on category $\Meas\times\Meas$ along the forgetful functor $U \colon
\Span(\Meas) \to \Meas\times\Meas$, which sends $(X,Y,\Phi,\rho_1,\rho_2)$ to $(X,Y)$.
\fi
\PARAGRAPH{Approximate Span-liftings for DP, RDP, and zCDP}
Finally, we build approximate span-liftings for DP, RDP, zCDP, and tCDP by combining
Theorems \ref{properties:DP}, \ref{properties:Renyi}, \ref{properties:zCDP}, and
\ref{properties:tCDP} 
with the construction of categorical structures of approximate span-liftings
(Theorem \ref{thm:span-lifting:conclusion}).
\begin{theorem}[Approximate span-lifting for DP, RDP, zCDP, tCDP]
The following approximate span-liftings are graded liftings
with a double strength of $\mathcal{G} \times \mathcal{G}$ along $U \colon \Span(\Meas) \to \Meas\times\Meas$.
%$\{ (-)^{\sharp (\mathbf{\Delta}^\mathtt{DP},\varepsilon,\delta)} \}_{0 \leq \varepsilon,\delta}$,
%$\{ (-)^{\sharp (D^\alpha,\ast,\rho)} \}_{\ast \in \{\ast\},~\rho \in \ol\RR}$, and
%$\{ (-)^{\sharp (\mathbf{\Delta}^\mathtt{zCDP},\xi,\rho)} \}_{\rho \in \ol\RR,~0 \leq \xi}$ are
%graded liftings with a double strength of $\mathcal{G} \times \mathcal{G}$ w.r.t $U \colon \Span(\Meas) \to \Meas^2$.
\begin{center}
\begin{tabular}{llll}
\toprule
Privacy &
(Graded family of )Divergence &
Approximate span-lifting &
Grading Monoid\\
\midrule
DP &
$\mathbf{\Delta}^\mathtt{DP} = \{\Delta^\mathtt{DP(\varepsilon)}\}_{0 \leq \varepsilon}$
%\quad (see \citet{BartheOlmedo2013,olmedo2014approximate})
&
$\{ (-)^{\sharp (\mathbf{\Delta}^\mathtt{DP},\varepsilon,\delta)} \}_{0 \leq \varepsilon,0 \leq \delta}$ &
$\RR_{\geq 0} \times \RR_{\geq 0}$
\\
RDP &
$D^\alpha$
\quad (R\'enyi divergence; see (\ref{eq:Renyi_divergences})) &
$\{ (-)^{\sharp (D^\alpha,\ast,\rho)} \}_{\ast \in \{\ast\},\rho \in \ol\RR}$ &
$\ol\RR$
\\
zCDP &
$\mathbf{\Delta}^\mathtt{zCDP} = \{\Delta^\mathtt{zCDP(\xi)}\}_{0 \leq \xi}$
\quad (see (\ref{eq:divergence_zCDP}))&
$\{ (-)^{\sharp (\mathbf{\Delta}^\mathtt{zCDP},\xi,\rho)} \}_{0 \leq \xi,\rho \in \ol\RR}$ &
$\RR_{\geq 0} \times \ol\RR$
\\
tCDP &
$\mathbf{\Delta}^{\omega-\mathtt{tCDP}} = \{\Delta^{\omega-\mathtt{tCDP}}\}$
\quad (see (\ref{eq:divergence_tCDP}))&
$\{ (-)^{\sharp (\Delta^{\omega-\mathtt{tCDP}},\ast,\rho)} \}_{\ast \in \{\ast\},\rho \in \ol\RR}$ &
$\ol\RR$
\\
\bottomrule
\end{tabular}
\end{center}
\end{theorem}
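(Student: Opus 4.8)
The plan is to derive this theorem as an immediate corollary of Theorem~\ref{thm:span-lifting:conclusion}, applied once for each of the four divergence families, fed by the basic properties already established for them. The first step is to pin down, for each privacy notion, the preordered monoid grading the corresponding family of divergences and to check it is a preordered additive monoid. For DP this is $\mathbf{\Delta}^{\mathtt{DP}} = \{\Delta^{\mathtt{DP}(\varepsilon)}\}_{0 \leq \varepsilon}$ over $A = \RR_{\geq 0}$, where the usual order on $\varepsilon$ is exactly the contravariant order required for a graded family of divergences, since $\varepsilon \leq \varepsilon'$ implies $\Delta^{\mathtt{DP}(\varepsilon')} \leq \Delta^{\mathtt{DP}(\varepsilon)}$. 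For RDP we take, for each fixed $\alpha > 1$, the singleton-graded family $D^\alpha$ over the trivial monoid $A = \{\ast\}$. For zCDP this is $\mathbf{\Delta}^{\mathtt{zCDP}} = \{\Delta^{\mathtt{zCDP}(\xi)}\}_{0 \leq \xi}$ over $A = \RR_{\geq 0}$, where again $\xi \leq \xi'$ implies $\Delta^{\mathtt{zCDP}(\xi')} \leq \Delta^{\mathtt{zCDP}(\xi)}$. For tCDP we take, for each fixed $\omega > 1$, the singleton-graded family $\Delta^{\omega-\mathtt{tCDP}}$ over $A = \{\ast\}$. In every case $A$, $\ol\RR$, and hence the product monoid $A \times \ol\RR$ appearing in the grading column ($\RR_{\geq 0} \times \RR_{\geq 0}$, $\ol\RR$, $\RR_{\geq 0} \times \ol\RR$, and $\ol\RR$ respectively) is a preordered additive monoid; restricting the slack parameter to $\RR_{\geq 0}$ in the DP case selects a submonoid, so it poses no problem.

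Second, I would invoke the property theorems to supply exactly the hypotheses of Theorem~\ref{thm:span-lifting:conclusion}, namely reflexivity, composability, and additivity. These are provided for $\mathbf{\Delta}^{\mathtt{DP}}$ by Theorem~\ref{properties:DP}, for each $D^\alpha$ by Theorem~\ref{properties:Renyi}, for $\mathbf{\Delta}^{\mathtt{zCDP}}$ by Theorem~\ref{properties:zCDP}, and for each $\Delta^{\omega-\mathtt{tCDP}}$ by Theorem~\ref{properties:tCDP}. Note that although the zCDP and tCDP families are not known to be approximable, Theorem~\ref{thm:span-lifting:conclusion} does not require approximability---only reflexivity, composability, and additivity---and all three hold here. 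Applying Theorem~\ref{thm:span-lifting:conclusion} in each case then yields an $A \times \ol\RR$-graded monad with double strength on $\Span(\Meas)$, with the functor, unit, Kleisli lifting, inclusions, and double strength given explicitly there.

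Third, I would check that each of these graded monads is in fact a graded lifting of $\mathcal{G} \times \mathcal{G}$ along $U \colon \Span(\Meas) \to \Meas \times \Meas$, compatibly with the double strength. This is by inspection of Definition~\ref{def:span} and the structure maps of Theorem~\ref{thm:span-lifting:conclusion}: $U$ sends a span $(X,Y,\Phi,\rho_1,\rho_2)$ to $(X,Y)$ and a span morphism $(h,k,l)$ to $(h,k)$, forgetting the witness component and the grading. Since $(X,Y,\Phi,\rho_1,\rho_2)^{\sharp(\mathbf{\Delta},\alpha,\delta)}$ has underlying pair $(\mathcal{G}X,\mathcal{G}Y)$, we get $U \circ (-)^{\sharp(\mathbf{\Delta},\alpha,\delta)} = (\mathcal{G}\times\mathcal{G}) \circ U$; the first two components of the functor action, unit, Kleisli lifting, and inclusion are respectively $(\mathcal{G}h,\mathcal{G}k)$, $(\eta_X,\eta_Y)$, $(h^\sharp,k^\sharp)$, and $(\mathrm{id}_{\mathcal{G}X},\mathrm{id}_{\mathcal{G}Y})$, which are exactly the functor action, unit, Kleisli extension, and trivial inclusion of $\mathcal{G}\times\mathcal{G}$; and the first two components of the double strength are $(\mathrm{dst}_{X,Z},\mathrm{dst}_{Y,W})$, the double strength of $\mathcal{G}\times\mathcal{G}$. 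Hence the graded-lifting equations $U \circ T_{(\alpha,\delta)} = (\mathcal{G}\times\mathcal{G}) \circ U$, $U(f^{\sharp}) = (Uf)^\sharp$, $U(\eta) = \eta^{\mathcal{G}\times\mathcal{G}}$, and $U({\sqsubseteq}) = \mathrm{id}$ hold on the nose, and likewise for the double strength.

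I do not anticipate a genuine obstacle: the theorem is a corollary, and all the real content was isolated in Theorems~\ref{properties:DP}--\ref{properties:tCDP} and Theorem~\ref{thm:span-lifting:conclusion}. The only points warranting a little care are the bookkeeping of the contravariant order convention on the divergence grading for DP and zCDP, citing exactly the three required properties (and not approximability) for zCDP and tCDP, and recalling that additivity is needed for the double strength---which is why it is listed in the property theorems even though it also follows from continuity and composability via Theorem~\ref{thm:div:cont+comp->mon}.
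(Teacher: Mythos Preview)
Your proposal is correct and follows essentially the same approach as the paper: the theorem is presented there as an immediate consequence of combining Theorems~\ref{properties:DP}, \ref{properties:Renyi}, \ref{properties:zCDP}, and \ref{properties:tCDP} with Theorem~\ref{thm:span-lifting:conclusion}. Your write-up is in fact more explicit than the paper's own treatment---in particular, your verification that the construction is a graded lifting along $U$ (by inspecting the first two components of each structure map) and your remark about restricting the slack parameter to the submonoid $\RR_{\geq 0}$ in the DP case are details the paper leaves implicit.
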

%\begin{theorem}[Approximate span-lifting for DP]
%$\{ (-)^{\sharp (\mathbf{\Delta}^\mathtt{DP},\varepsilon,\delta)} \}_{0 \leq \varepsilon,\delta}$
%is a $\RR_{\geq 0} \times \RR_{\geq 0}$-graded lifting of $\mathcal{G} \times
%\mathcal{G}$ with a double strength.
%\end{theorem}
%
%\begin{theorem}[Approximate span-lifting for RDP]
%$\{ (-)^{\sharp (D^\alpha,\ast,\rho)} \}_{\ast \in \{\ast\},~\rho \in \ol\RR}$
%is a $\ol\RR$-graded lifting of $\mathcal{G} \times \mathcal{G}$ with a double strength.
%\end{theorem}
%
%\begin{theorem}[Approximate span-lifting for zCDP]
%$\{ (-)^{\sharp (\mathbf{\Delta}^\mathtt{zCDP},\xi,\rho)} \}_{\rho \in \ol\RR,~0 \leq \xi}$
%is a $\RR_{\geq 0} \times \ol\RR$-graded lifting of $\mathcal{G} \times \mathcal{G}$ with a double strength.
%\end{theorem}

\section{Case Study: the Program Logic Span-apRHL}
\label{sec:span-apRHL}
The previous section showed that the RDP, zCDP, and tCDP relaxations of differential
privacy can be captured by relational liftings with the same categorical
properties enjoyed by relational liftings for standard differential privacy.  As
a result, we can use these liftings to give the semantic foundation for formal
verification of these relaxations.  To demonstrate a concrete application, we design
a program logic span-apRHL that can prove DP, RDP, zCDP, and tCDP for randomized
algorithms, supporting both discrete and continuous random samplings.

\PARAGRAPH{The Language pWHILE}
We take a standard, first-order language pWHILE, augmenting the usual imperative
commands with a random sampling statement (we omit the grammar of
expressions which is largely standard).
\begin{align*}
	\tau &
		::= \mathtt{bool} \mid \mathtt{int} \mid \mathtt{real} \mid \tau^d ~(d \in\NN) \mid \ldots
		&\text{(basic types)}\\
    \iffull
	  e &
	  	::= x \mid b \in \BB \mid n \in \ZZ \mid r \in \RR \mid e_1 \mathbin{\oplus} e_2 \mid e_1 \mathbin{\bowtie} e_2 \mid e_1[e_2] \mid \ldots
	  	&\text{(expressions)}\\
	  \lefteqn{\qquad{\oplus} ::= {+} \mid {-} \mid {\ast} \mid {/} \mid {\min} \mid {\max} \mid {\land} \mid {\vee}
	  \qquad {\bowtie}::= {\leq} \mid {\geq} \mid {=} \mid {\neq} \mid {<} \mid {>}}\\
    \fi
	\nu &
		::=
		\mathtt{Dirac}(e) \mid \mathtt{Bern}(e)
		\mid \mathtt{Lap}(e_1,e_2) \mid \mathtt{Gauss}(e_1,e_2) \mid \ldots
		&\text{(probabilistic expression)}\\
	c &
		::= \mathtt{skip} \mid x \xleftarrow{\$} \nu \mid c_1; c_2 \mid  \mathtt{if}~e~\mathtt{then}~c_1~\mathtt{else}~c_2 \mid \mathtt{while}~e~\mathtt{do}~c
		&\text{(commands)}
\end{align*}
\iffull
Here, $b$, $n$, and $r$ are constants; $\tau$ is a \emph{value type}; $x$ is a
\emph{variable}; $e$ is an \emph{expression}; $\nu$ is a \emph{probabilistic
expression}; $\mathtt{Dirac}$,
$\mathtt{Bern}$,
$\mathtt{Lap}$, and
$\mathtt{Gauss}$ represent the Dirac,
Bernoulli,
Laplace, and the Gaussian
distributions, respectively; $c$ is a \emph{command/program}.
We will use the following shorthands:
$
x \leftarrow e
 \labeleq{\text{def}} x \xleftarrow{\$} \mathtt{Dirac}(e)
$
and
$
\mathtt{if}~b~\mathtt{then}~c 
 \labeleq{\text{def}} \mathtt{if}~b~\mathtt{then}~c~\mathtt{else}~\mathtt{skip}
$.
We consider programs that are well typed. The type system is largely
standard, with three kinds of judgments: $\Gamma \vdash^t e \colon
\tau$, $\Gamma \vdash^p \nu : \tau$, and $\Gamma \vdash
c$ for expressions, distributions and programs, respectively.
For details, see Appendix.
\else
The type system is standard, and
the value types are interpreted as measurable spaces as expected.
To give a semantics to expressions, distribution expressions, and commands, we
interpret their associated typing/well-formedness judgments in some context
$\Gamma$, which is itself interpreted as a product space.
We interpret an expression judgment $\Gamma \vdash^t e \colon \tau$ as a map
$\interpret{\Gamma \vdash^t e \colon \tau} \colon \interpret{\Gamma} \to \interpret{\tau}$ in $\Meas$;
we interpret a probabilistic expression judgment $\Gamma \vdash^p \nu \colon
\tau$ as a map
$\interpret{\Gamma \vdash^p \nu \colon \tau} \colon \interpret{\Gamma} \to \mathcal{G}\interpret{\tau}$ in $\Meas$;
and we interpret a command judgment $\Gamma \vdash c$ as a map
$\interpret{\Gamma \vdash c } \colon \interpret{\Gamma} \to \mathcal{G}\interpret{\Gamma}$ in $\Meas$.
\fi

\PARAGRAPH{Relational Assertions}
%We introduce \emph{expressions of relations} to describe assertions in span-apRHL.
% \subsubsection{Syntax}
Our assertion logic uses formulas of the form
%MG: it seems fixed now.
%\gb{one usually defines relational expressions and then have FO formulae over
%relational expressions (I think Benton calls them generalized not relational}
\[
  \Phi,\Psi ::= \mathcal{E} \mid\Phi \land \Psi \mid \Phi \vee \Psi \mid \neg \Phi
\]
where $\mathcal{E}$ represents basic relational expressions,
\iffull
namely:
\[
  \mathcal{E} ::= e_1\lrangle{1} \mathbin{\bowtie} e_2\lrangle{2}
	\mid (e_1\lrangle{1} \oplus_1  e_2\lrangle{2}) \bowtie (e_3\lrangle{1} \oplus_2 e_4\lrangle{2}) .
\]
\else
e.g. $e\lrangle{1}\leq e\lrangle{2}$.
\fi
As usual in relational logics, we use the tags $\lrangle{1}$ and $\lrangle{2}$
to distinguish expressions evaluated in the first and second memory,
respectively.
\iffull
For simplicity, we consider only the relations given in the above syntax, the
language can be easily extended with other constructions. In the
following we will use some syntactic sugar for constant $k$: $(e\lrangle{1}
\bowtie k) \labeleq{\text{def}} (e \bowtie k)\lrangle{1} =
\mathtt{true}\lrangle{2}$, and $  (e\lrangle{2} \bowtie k) \labeleq{\text{def}}
\mathtt{true}\lrangle{1} = (e \bowtie k)\lrangle{2}$.
%
% \begin{align*}
%   (e\lrangle{1} \bowtie k) &\labeleq{\text{def}} (e \bowtie k)\lrangle{1} = \mathtt{true}\lrangle{2} \\
%   (e\lrangle{2} \bowtie k) &\labeleq{\text{def}} \mathtt{true}\lrangle{1} = (e \bowtie k)\lrangle{2} \\
%   ((e_1\lrangle{1} \otimes e_2\lrangle{2}) \bowtie k) &\labeleq{\text{def}} (e_1\lrangle{1} \otimes_1  e_2\lrangle{2}) \bowtie \max(k\lrangle{1}, k\lrangle{2}).
% \end{align*}
%
%\jh{I don't understand the last case. First, is $e_1\lrangle{1} \otimes_1
%e_2\lrangle{2}$ of the form $e\lrangle{1}$? Second, what is $k\lrangle{1} \land
%k\lrangle{2}$ when $k$ is a constant?}
%\subsubsection{Forming Relation Expressions}
We consider only  relation expression $\Phi$ that are well-formed in a context
$\Gamma$, and we denote this by the judgment $\Gamma \vdash^R \Phi$. Rules for
deriving this kind of judgments are standard, and postponed to Appendix.
\fi

Since we use span-liftings instead of relational liftings, we interpret
relational assertions as spans, that is, as $\Span(\Meas)$-objects. This can be
done by first interpreting assertions $\Gamma \vdash^R \Phi$ as binary relations
$\interpret{\Phi} \subseteq \interpret{\Gamma} \times \interpret{\Gamma}$, and
then converting to spans
$(\interpret{\Gamma},\interpret{\Gamma},\interpret{\Phi},\pi_1,\pi_2)$.
\iffull
We describe the semantics of relation assertions in the next section.

We will also use implications of relations $\Gamma \vdash^I \Phi \implies \Psi$,
which is defined when $\Gamma \vdash^R \Phi$ and $\Gamma \vdash^R \Psi$, and the
implication $\Phi \implies \Psi$ forms a tautology under the typing context
$\Gamma$. For example, we have the following inclusion, where $\Gamma \vdash^t
x \colon \mathtt{real}$:
 \[
 \Gamma \vdash^I ((x\lrangle{1} \leq x\lrangle{2}) \wedge (x\lrangle{1} \geq x\lrangle{2})) \implies (x\lrangle{1} = x\lrangle{2}).
 \]
\fi

\PARAGRAPH{Relational Program Logic Judgments, Axioms and Rules}
In span-apRHL we can prove three kinds of judgments corresponding to
differential privacy, RDP, zCDP, and tCDP.
For well-typed commands $\Gamma \vdash c_1$ and $\Gamma \vdash c_2$ and assertions $\Gamma \vdash^R \Phi$ and $\Gamma
\vdash^R \Psi$, we define judgments:
\begin{align*}
\Gamma \vdash c_1 \sim^\mathbf{\mathtt{DP}}_{\varepsilon, \delta} c_2 &\colon \Phi \implies \Psi
	\quad \text{$(\varepsilon, \delta)$-differential privacy (DP)}\\
\Gamma \vdash c_1 \sim^\mathbf{\alpha\mathtt{-RDP}}_{\rho} c_2 &\colon \Phi \implies \Psi
	\quad \text{$(\alpha, \rho)$-R\'enyi differential privacy (RDP)}\\
\Gamma \vdash c_1 \sim^\mathbf{\mathtt{zCDP}}_{\xi, \rho} c_2 &\colon \Phi \implies \Psi
	\quad \text{$(\xi, \rho)$-zero-concentrated differential privacy (zCDP)}\\
\Gamma \vdash c_1 \sim^\mathbf{\omega-\mathtt{tCDP}}_{\rho} c_2 &\colon \Phi \implies \Psi
	\quad \text{$(\omega, \rho)$-truncated-concentrated differential privacy (tCDP)}
\end{align*}
\iffull
% We mark judgments derivable in the proof system by using the head symbol
% $\vdash$.

% \gb{We should explain the proof system, even if briefly}

% The validity of these judgments are denoted respectively by
% \[
% \vdash c_1 \sim^\mathbf{\mathrm{DP}}_{\varepsilon, \delta} c_2 \colon \Phi \implies \Psi,
% \quad 
% \vdash c_1 \sim^\mathbf{\alpha-\mathrm{RDP}}_{\rho} c_2 \colon \Phi \implies \Psi,
% \quad 
% \vdash c_1 \sim^\mathbf{\mathrm{zCDP}}_{\xi, \rho} c_2 \colon \Phi \implies \Psi.
% \]
%To verify DP, RDP, and zCDP of a command $c$, 
%we prove the following validities:
%\[
%\vdash c \sim^\mathbf{\mathrm{DP}}_{\varepsilon, \delta} c \colon \mathrm{adj} \implies \mathrm{Eq},
%\quad 
%\vdash c \sim^\mathbf{\alpha-\mathrm{RDP}}_{\rho} c \colon \mathrm{adj} \implies \mathrm{Eq},
%\quad 
%\vdash c \sim^\mathbf{\mathrm{zCDP}}_{\xi, \rho} c \colon \mathrm{adj} \implies \mathrm{Eq}.
%\]
%Here, the assertion $\mathrm{adj}$ of precondition is defined as the adjacency of inputs and the assertion $\mathrm{Eq}$ of postcondition is defined as the equality of outputs.
%\subsection{Axioms and Proof Rules}
We divide the proof rules of span-apRHL in four classes: basic rules
(Figure~\ref{fig:basic-rules}), rules for basic
mechanisms (Figure~\ref{fig:mechanisms}), rules for reasoning about 
transitivity (Figure~\ref{fig:transitivity}), and rules for
conversions (Figure~\ref{fig:conversions}). 
The basic rules can be used to reason about either differential privacy, RDP, zCDP, and tCDP. We describe the basic rules in a parametric way by considering $\{
  \sim^{\mathbf{\Delta}}_{\alpha,\delta} \}_{\alpha \in A, 0 \leq \delta}$ to
  stand for one of the families
$\{ \sim^\mathbf{\mathrm{DP}}_{\varepsilon,\delta} \}_{0 \leq \varepsilon, 0 \leq \delta}$,
$\{ \sim^\mathbf{\alpha-\mathrm{RDP}}_{\rho} \}_{\ast \in \{\ast\}, 0 \leq \rho}$,
\mbox{$\{ \sim^\mathbf{\mathrm{zCDP}}_{\xi,\rho} \}_{0 \leq \xi, 0 \leq \rho}$}, and 
\mbox{$\{ \sim^\mathbf{\omega-\mathrm{tCDP}}_{\rho} \}_{0 \leq \rho}$}.
We give a selection of the proof rules in Figure~\ref{fig:basic-rules}; the rest
of the rules are standard and we defer them to the appendix.
Here, we comment
briefly on the rules. The [assn] rule for assignment is mostly standard, the
only non-standard aspect is that depending on which notion of privacy we want to
use, we need to select the corresponding unit $1_A$. The rule [seq] is the
sequential composition of commands and takes the same form no matter which
family of divergence we consider. The rule [weak] is our version of the usual
consequence rule, where additionally we can weaken also the privacy parameters
for each of the privacy definitions.

In Figure~\ref{fig:mechanisms}, we show some rules for the basic mechanisms that
we support: Bernoulli, Laplace, and Gauss. We give several of them to show the
difference, in terms of the parameters, for the same mechanism, that we have in
the different logics. All of them are supported in the continuous case.
\mg{I find it strange that we give all these primitives if then
  we don't used them. I suggest that either we drop some of them, e.g.
Bernoulli, or we add an example using some of them. Can Bernoulli be
used to do the attribute mean example but using a randomized-response
like approach?}
\tetsuya{I postpone some of Bernoulli and Laplace axioms, and added the below.}
We show only DP rules for Bernoulli and Laplace mechanisms, and 
postpone other Bernoulli and Laplace mechanism rules to the Appendix.

In Figure~\ref{fig:transitivity}, we show rules for transitivity in span-apRHL.
Transitivity is important because it allows one to reason about group
privacy~\citep{DworkRothTCS-042}. The different flavors of the logic have
different numeric parameters for these rules, reflecting the slight differences
in group privacy~\citep{DworkRothTCS-042,BunS16,Mironov17}.
\mg{We should talk about group privacy much earlier in the paper.}
Finally, Figure~\ref{fig:conversions} gives rules for converting between
judgments for different flavors of differential privacy. In some of them we have
a loss in the parameters, in others there is no loss. These rules correspond to
the different conversion theorems for the different
logics~\citep{BunS16,Mironov17}. Notice that most of these rules require
lossless programs because they have been formulated in terms of distributions,
rather than subdistributions.

\begin{figure*}
\begin{mathpar}
\Gamma \vdash x_1 \leftarrow e_1 \sim^{\mathbf{\Delta}}_{1_A,0} x_2 \leftarrow e_2 \colon \Phi\{e_1\lrangle{1}, e_2\lrangle{2} /x_1\lrangle{1},x_2\lrangle{2}\} \implies \Phi \quad \text{[assn]}

	\AxiomC{$
		\Gamma \vdash c_1 \sim^{\mathbf{\Delta}}_{\alpha,\delta} c_1' \colon \Phi \implies \Phi'
		\quad
		\Gamma \vdash c_2 \sim^{\mathbf{\Delta}}_{\beta,\gamma} c_2' \colon \Phi' \implies \Psi
		$}
	\RightLabel{{[seq]}}
	\UnaryInfC{$\Gamma \vdash c_1;c_2 \sim^{\mathbf{\Delta}}_{\alpha\beta,\delta+\gamma} c_1';c_2' \colon \Phi \implies \Psi$}
	\DisplayProof

	\AxiomC{$
			~~{\Gamma \vdash^I \Phi' \implies \Phi}~~{\Gamma \vdash^I \Psi \implies \Psi'}~~
			{\Gamma \vdash c_1 \sim^{\mathbf{\Delta}}_{\alpha, \delta} c_2 \colon \Phi \implies \Psi}
			~~{\alpha \leq \beta}~~{\delta \leq \gamma}
	$}
	\RightLabel{{[weak]}}
	\UnaryInfC{$
		\Gamma \vdash c_1 \sim^{\mathbf{\Delta}}_{\beta, \gamma} c_2 \colon \Phi' \implies \Psi'
	$}
	\DisplayProof
\end{mathpar}
%}
\caption{Selection of span-apRHL basic rules.}
\label{fig:basic-rules}
\end{figure*}
%\paragraph{Mechanism Rules}
%We introduce the following proof rules for mechanisms, such as Gaussian mechanism.
%{\small
\begin{figure*}
\begin{align*}
\lefteqn{\Gamma \vdash x_1 \xleftarrow{\$} \mathtt{Bern}(e_1) \sim^{\mathtt{DP}}_{\log\max(p,1-p)-\log\min(p,1-p),0} x_2 \xleftarrow{\$} \mathtt{Bern}(e_2) \colon}\\
&\qquad\qquad\qquad\qquad\qquad((e_1\lrangle{1} = p)\wedge(1 - e_1\lrangle{1} = e_2\lrangle{2}) \implies (x_1\lrangle{1} = x_2\lrangle{2}) &\text{[DP-Bern]}\\
\lefteqn{\Gamma \vdash x_1 \xleftarrow{\$} \mathtt{Bern}(e_1) \sim^{\mathtt{DP}}_{0,0} x_2 \xleftarrow{\$} \mathtt{Bern}(e_2) \colon}\\
&\qquad\qquad\qquad\qquad\qquad(e_1\lrangle{1} = e_2\lrangle{2}) \implies (x_1\lrangle{1} = x_2\lrangle{2}) &\text{[DP-Bern-Eq]}\\
%\lefteqn{\Gamma \vdash x_1 \xleftarrow{\$} \mathtt{Bern}(e_1) \sim^{\alpha-\mathtt{RDP}}_{\frac{1}{\alpha - 1}((1-p)^{1-\alpha}p^\alpha + p^{1-\alpha}(1-p)^\alpha)} x_2 \xleftarrow{\$} \mathtt{Bern}(e_2) \colon}\\
%&\qquad\qquad\qquad\qquad\qquad(e_1\lrangle{1} = p)\wedge(1 - e_1\lrangle{1} = e_2\lrangle{2}) \implies (x_1\lrangle{1} = x_2\lrangle{2}) &\text{[RDP-Bern]}\\
%\lefteqn{\Gamma \vdash x_1 \xleftarrow{\$} \mathtt{Bern}(e_1) \sim^{\alpha-\mathtt{RDP}}_{0} x_2 \xleftarrow{\$} \mathtt{Bern}(e_2) \colon}\\
%&\qquad\qquad\qquad\qquad\qquad(e_1\lrangle{1} = e_2\lrangle{2}) \implies (x_1\lrangle{1} = x_2\lrangle{2}) &\text{[RDP-Bern-Eq]}\\
%\lefteqn{\Gamma \vdash x_1 \xleftarrow{\$} \mathtt{Bern}(e_1) \sim^{\mathtt{zCDP}}_{\log\max(p,1-p)-\log\min(p,1-p),0} x_2 \xleftarrow{\$} \mathtt{Bern}(e_2) \colon}\\
%&\qquad\qquad\qquad\qquad\qquad(e_1\lrangle{1} = p)\wedge(1 - e_1\lrangle{1} = e_2\lrangle{2}) \implies (x_1\lrangle{1} = x_2\lrangle{2}) &\text{[zCDP-Bern]}\\
%\lefteqn{\Gamma \vdash x_1 \xleftarrow{\$} \mathtt{Bern}(e_1) \sim^{\mathtt{zCDP}}_{0,0} x_2 \xleftarrow{\$} \mathtt{Bern}(e_2) \colon}\\
%&\qquad\qquad\qquad\qquad\qquad(e_1\lrangle{1} = e_2\lrangle{2}) \implies (x_1\lrangle{1} = x_2\lrangle{2}) &\text{[zCDP-Bern-Eq]}\\
\lefteqn{\Gamma \vdash x_1 \xleftarrow{\$} \mathtt{Lap}(e_1,\lambda) \sim^{\mathtt{DP}}_{r/\lambda,0} x_2 \xleftarrow{\$} \mathtt{Lap}(e_2,\lambda) \colon}\\
& \qquad\qquad\qquad\qquad\qquad (|e_1\lrangle{1} - e_2\lrangle{2}| \leq r) \implies (x_1\lrangle{1} = x_2\lrangle{2})& \text{[DP-Lap]}\\
%\lefteqn{\Gamma \vdash x_1 \xleftarrow{\$} \mathtt{Lap}(e_1,\lambda) \sim^{\alpha-\mathtt{RDP}}_{\frac{1}{\alpha - 1}\log\{\frac{\alpha}{2\alpha - 1}e^{(\alpha-1)/\lambda} + \frac{\alpha - 1}{2\alpha - 1}e^{-\alpha/\lambda}\}} x_2 \xleftarrow{\$} \mathtt{Lap}(e_2,\lambda) \colon}\\
%& \qquad\qquad\qquad\qquad\qquad (|e_1\lrangle{1} - e_2\lrangle{2}| \leq 1) \implies (x_1\lrangle{1} = x_2\lrangle{2})& \text{[RDP-Lap]}\\
%\lefteqn{\Gamma \vdash x_1 \xleftarrow{\$} \mathtt{Lap}(e_1,\lambda) \sim^{\mathtt{zCDP}}_{r/\lambda,0} x_2 \xleftarrow{\$} \mathtt{Lap}(e_2,\lambda) \colon}\\
%& \qquad\qquad\qquad\qquad\qquad (|e_1\lrangle{1} - e_2\lrangle{2}| \leq r) \implies (x_1\lrangle{1} = x_2\lrangle{2})& \text{[zCDP-Lap]}\\
\lefteqn{\Gamma \vdash x_1 \xleftarrow{\$} \mathtt{Gauss}(e_1,\sigma^2) \sim^{\alpha-\mathtt{RDP}}_{\alpha r^2/2\sigma^2} x_2 \xleftarrow{\$} \mathtt{Gauss}(e_2,\sigma^2)}\\
&\qquad\qquad\qquad\qquad\qquad (|e_1\lrangle{1} - e_2\lrangle{2}| \leq r) \implies (x_1\lrangle{1} = x_2\lrangle{2}) & \text{[RDP-G]}\\
\lefteqn{\Gamma \vdash x_1 \xleftarrow{\$} \mathtt{Gauss}(e_1,\sigma^2) \sim^{\mathtt{zCDP}}_{0, r^2/2\sigma^2} x_2 \xleftarrow{\$} \mathtt{Gauss}(e_2,\sigma^2)}\\
&\qquad\qquad\qquad\qquad\qquad (|e_1\lrangle{1} - e_2\lrangle{2}| \leq r) \implies (x_1\lrangle{1} = x_2\lrangle{2}) & \text{[zCDP-G]}\\
\lefteqn{\Gamma \vdash x_1 \xleftarrow{\$} \mathtt{Gauss}(e_1,\sigma^2) \sim^{\mathtt{tCDP}}_{0, r^2/2\sigma^2} x_2 \xleftarrow{\$} \mathtt{Gauss}(e_2,\sigma^2)}\\
&\qquad\qquad\qquad\qquad\qquad (|e_1\lrangle{1} - e_2\lrangle{2}| \leq r) \implies (x_1\lrangle{1} = x_2\lrangle{2}) & \text{[tCDP-G]}\\
&\AxiomC{
	$
		\begin{array}{l@{}}
		\exists{c >\frac{1+\sqrt{3}}{2}}.~
(2\log(0.66/\delta) \leq c^2)\wedge (\frac{c r}{\varepsilon} \leq \sigma)
		\end{array}
	$}
	\UnaryInfC{$
	\begin{array}{l@{}}
		\Gamma \vdash x_1 \xleftarrow{\$} \mathtt{Gauss}(e_1,\sigma^2) \sim^{\mathtt{DP}}_{\varepsilon,~\delta} x_2 \xleftarrow{\$} \mathtt{Gauss}(e_2,\sigma^2) \colon\\
		\qquad\qquad\qquad\qquad\qquad\qquad
		 (|e_1\lrangle{1} - e_2\lrangle{2}| \leq r) \implies (x_1\lrangle{1} = x_2\lrangle{2})
	\end{array}
	$}
	\DisplayProof
 & \text{[DP-G]}\\
	&\AxiomC{$
	1 < 1/\sqrt{\rho} \leq A/\delta
	$}
	\UnaryInfC{$
	\begin{array}{l@{}}
		\Gamma \vdash x_1 \xleftarrow{\$} e_1 + A \cdot\mathtt{arsinh}\left(\frac{1}{A}\mathtt{Gauss}(0,\delta^2/2\rho)\right)\\ \qquad\qquad\qquad\qquad\sim^{A/8\delta-\mathtt{tCDP}}_{16\rho} x_2 \xleftarrow{\$} e_2 + A\cdot \mathtt{arsinh}\left(\frac{1}{A}\mathtt{Gauss}(0,\delta^2/2\rho)\right) \colon\\
		\qquad\qquad\qquad\qquad\qquad\qquad
		 (|e_1\lrangle{1} - e_2\lrangle{2}| \leq \delta) \implies (x_1\lrangle{1} = x_2\lrangle{2})
	\end{array}
	$}
	\DisplayProof
 & \text{[tCDP-SinhG]}
\end{align*}
\caption{Rules for basic mechanisms for DP, RDP, zCDP, and tCDP in span-apRHL.}
\label{fig:mechanisms}
\end{figure*}

\begin{figure*}
\begin{align*}
&\AxiomC{$
	\begin{array}{l@{}}
	\Gamma \vdash c_1 \sim^{\mathtt{DP}}_{\varepsilon_1,\delta_1} c_2 \colon \Phi \implies x_1\lrangle{1}=x_2\lrangle{2}
	\quad
	\Gamma \vdash c_2 \sim^{\mathtt{DP}}_{\varepsilon_2,\delta_2} c_3 \colon \Psi \implies x_2\lrangle{1}=x_3\lrangle{2}\\
	\end{array}
	$}
%\RightLabel{{[DP-Trans]}}
\UnaryInfC{$\Gamma \vdash c_1 \sim^{\mathtt{DP}}_{\varepsilon_1+\varepsilon_2,~ \max(e^{\varepsilon_2}\delta_1+\delta_2,e^{\varepsilon_1}\delta_2+\delta_1)} c_3 \colon \Phi \circ \Psi \implies x_1\lrangle{1}=x_3\lrangle{2}$}
\DisplayProof
%\]
%\[
&\text{[DP-Trans]}\\
&\AxiomC{$
	\begin{array}{l@{}}
	\Gamma \vdash c_1 \sim^{p\alpha-\mathtt{RDP}}_{\rho_1} c_2 \colon \Phi \implies x_1\lrangle{1}=x_2\lrangle{2}
	\\
	\Gamma \vdash c_2 \sim^{q(p\alpha-1)/p-\mathtt{RDP}}_{\rho_2} c_3 \colon \Psi \implies x_2\lrangle{1}=x_3\lrangle{2}
	\quad
	{\frac{1}{p} + \frac{1}{q} = 1} \quad {1 < p} \quad {1 < q}
	\end{array}
	$}
%\RightLabel{{[RDP-Trans]}}
\UnaryInfC{$\Gamma \vdash c_1 \sim^{\alpha-\mathtt{RDP}}_{((p\alpha - 1)\rho_1/p(\alpha-1)) + \rho_2} c_3 \colon \Phi \circ \Psi \implies x_1\lrangle{1}=x_3\lrangle{2}$}
\DisplayProof
&\text{[RDP-Trans]}\\
&\AxiomC{$
	\begin{array}{l@{}}
	\Gamma  \vdash c_1 \sim^{\mathtt{zCDP}}_{\xi(k-1)\sum_{i = 1}^{k-1},(k^2 - 1)\rho} c_2 \colon \Phi \implies x_1\lrangle{1}=x_2\lrangle{2}
	\\
	\Gamma  \vdash c_2 \sim^{\mathtt{zCDP}}_{\xi,\rho} c_3 \colon \Psi \implies x_2\lrangle{1}=x_3\lrangle{2}
	\quad {k \in \mathbb{N}} \quad {1 < k} 
	\end{array}
	$}
%\RightLabel{{[RDP-Trans]}}
\UnaryInfC{$\Gamma  \vdash c_1 \sim^{\mathtt{zCDP}}_{\xi k \sum_{i = 1}^{k},k^2 \rho} c_3 \colon \Phi \circ \Psi \implies x_1\lrangle{1}=x_3\lrangle{2}$}
\DisplayProof
&\text{[zCDP-Trans]}
%\]
\end{align*}
\caption{Span-apRHL transitivity rules for group privacy}
\label{fig:transitivity}
\end{figure*}

\begin{figure*}
\begin{mathpar}
\AxiomC{
	$\Gamma \vdash c_1 \sim^{\mathtt{DP}}_{\varepsilon,~0} c_2 \colon \Phi \implies \Psi$
	\quad $c_1, c_2$: lossless
	}
\doubleLine
\RightLabel{{[D/z]}}
\UnaryInfC{$\Gamma \vdash c_1 \sim^{\mathtt{zCDP}}_{\varepsilon,~0} c_2 \colon \Phi \implies \Psi$
	\quad $c_1, c_2$: lossless}
\DisplayProof

\AxiomC{
	$\Gamma \vdash c_1 \sim^{\mathtt{zCDP}}_{0,~\rho} c_2 \colon \Phi \implies \Psi$
	%\quad $c_1, c_2$: lossless
	}
%\doubleLine
\RightLabel{{[z/R]}}
\UnaryInfC{$\forall \alpha > 1.~\Gamma \vdash c_1 \sim^{\alpha-\mathtt{RDP}}_{\rho} c_2 \colon \Phi \implies \Psi$}
\DisplayProof

\AxiomC{
	$\Gamma \vdash c_1 \sim^{\mathtt{zCDP}}_{\xi,~\rho} c_2 \colon \Phi \implies \Psi$
	\quad $c_1, c_2$: lossless
	\quad $0 < \delta < 1$
	}
\RightLabel{{[z/D]}}
\UnaryInfC{$\Gamma \vdash c_1 \sim^{\mathtt{DP}}_{\xi+\rho+2\sqrt{\rho\log(1/\delta)},~\delta} c_2 \colon \Phi \implies \Psi$}
\DisplayProof

\AxiomC{
	$\Gamma \vdash c_1 \sim^{\omega-\mathtt{tCDP}}_{\rho} c_2 \colon \Phi \implies \Psi$,
	 $c_1, c_2$: lossless,
	 $\beta = \min(\omega,1+\sqrt{\log(1/\delta)/\rho})$,
	 $0 < \delta < 1$
	}
\RightLabel{{[t/D]}}
\UnaryInfC{$\Gamma \vdash c_1 \sim^{\mathtt{DP}}_{\rho\beta+\log(1/\delta)/(\beta-1),~\delta} c_2 \colon \Phi \implies \Psi$}
\DisplayProof

\AxiomC{
	$\Gamma \vdash c_1 \sim^{\alpha-\mathtt{RDP}}_{\rho} c_2 \colon \Phi \implies \Psi$
	\quad $c_1, c_2$: lossless
	\quad $0 < \delta < 1$
	}
\RightLabel{{[R/D]}}
\UnaryInfC{$\Gamma \vdash c_1 \sim^{\mathtt{DP}}_{\rho-{\log\delta}/(\alpha-1),~\delta} c_2 \colon \Phi \implies \Psi$}
\DisplayProof
\end{mathpar}
\caption{Rules for conversions between DP, RDP and zCDP in span-apRHL.}
\label{fig:conversions}
\end{figure*}
\else
The proof rules of span-apRHL in Figure~\ref{fig:rules} belong to three classes:
basic rules, rules for basic mechanisms, and rules converting between
DP/RDP/zCDP.
\begin{figure*}
\begin{gather*}
\Gamma \vdash x_1 \leftarrow e_1 \sim^{\mathbf{\Delta}}_{1_A,0} x_2 \leftarrow e_2 \colon \Phi\{e_1\lrangle{1}, e_2\lrangle{2} /x_1\lrangle{1},x_2\lrangle{2}\} \implies \Phi \quad \text{[assn]}
\\
	\AxiomC{$
		\Gamma \vdash c_1 \sim^{\mathbf{\Delta}}_{\alpha,\delta} c_1' \colon \Phi \implies \Phi'
		\quad
		\Gamma \vdash c_2 \sim^{\mathbf{\Delta}}_{\beta,\gamma} c_2' \colon \Phi' \implies \Psi
		$}
	\RightLabel{{[seq]}}
	\UnaryInfC{$\Gamma \vdash c_1;c_2 \sim^{\mathbf{\Delta}}_{\alpha\beta,\delta+\gamma} c_1';c_2' \colon \Phi \implies \Psi$}
	\DisplayProof
\\
	\AxiomC{$
			~~{\Gamma \vdash^I \Phi' \implies \Phi}~~{\Gamma \vdash^I \Psi \implies \Psi'}~~
			{\Gamma \vdash c_1 \sim^{\mathbf{\Delta}}_{\alpha, \delta} c_2 \colon \Phi \implies \Psi}
			~~{\alpha \leq \beta}~~{\delta \leq \gamma}
	$}
	\RightLabel{{[weak]}}
	\UnaryInfC{$
		\Gamma \vdash c_1 \sim^{\mathbf{\Delta}}_{\beta, \gamma} c_2 \colon \Phi' \implies \Psi'
	$}
	\DisplayProof
\\
\begin{array}{l@{}}
\Gamma \vdash x_1 \xleftarrow{\$} \mathtt{Gauss}(e_1,\sigma^2) \sim^{\mathtt{zCDP}}_{0, r^2/2\sigma^2} x_2 \xleftarrow{\$} \mathtt{Gauss}(e_2,\sigma^2)\colon \\
\qquad\qquad\qquad\qquad\qquad 
(|e_1\lrangle{1} - e_2\lrangle{2}| \leq r) \implies (x_1\lrangle{1} = x_2\lrangle{2})
\end{array}
\quad \text{[zCDP-G]}
\\
\AxiomC{
	$
		\begin{array}{l@{}}
		\exists{c >\frac{1+\sqrt{3}}{2}}.~
(2\log(0.66/\delta) \leq c^2)\land (\frac{c r}{\varepsilon} \leq \sigma)
		\end{array}
	$}
	\UnaryInfC{$
	\begin{array}{l@{}}
		\Gamma \vdash x_1 \xleftarrow{\$}
          \mathtt{Gauss}(e_1,\sigma^2)
          \sim^{\mathtt{DP}}_{\varepsilon,~\delta} x_2 \xleftarrow{\$}
          \mathtt{Gauss}(e_2,\sigma^2) \colon\\
		\qquad\qquad\qquad\qquad\qquad\qquad
		 (|e_1\lrangle{1} - e_2\lrangle{2}| \leq r) \implies (x_1\lrangle{1} = x_2\lrangle{2})
	\end{array}
	$}
	\DisplayProof
\quad\text{[DP-G]}
\\
\AxiomC{
	$\Gamma \vdash c_1 \sim^{\mathtt{zCDP}}_{\xi,~\rho} c_2 \colon \Phi \implies \Psi$
	\quad $c_1, c_2$: terminating
	\quad $0 < \delta < 1$
	}
\RightLabel{{[z/D]}}
\UnaryInfC{$\Gamma \vdash c_1 \sim^{\mathtt{DP}}_{\xi+\rho+2\sqrt{\rho\log(1/\delta)},~\delta} c_2 \colon \Phi \implies \Psi$}
\DisplayProof
\end{gather*}
\caption{Selection of Span-apRHL Rules.}
\label{fig:rules}
\end{figure*}
\fi

\iffull
\subsection{Denotational Semantics of pWHILE}
To prove the soundness of span-apRHL we interpret pWHILE in $\Meas$
using the sub-Giry monad $\mathcal{G}$. Most of the definitions are
standard. The value types are interpreted as expected.
%  with 
% $\mathtt{bool}$, $\mathtt{int}$, and $\mathtt{real}$ as the
% finite discrete space $\mathbb{B} = 1 + 1 = \{\mathtt{true},\mathtt{false}\}$,
% the countable discrete space $\mathbb{Z} = \{0,1,\ldots \}$, and the Lebesgue
% measurable space $\mathbb{R}$ respectively.  We interpret $\tau^d$ as the
% product $\interpret{\tau}^d$, and likewise we interpret a typing context $\Gamma
% = \{ x_1\colon\tau_1,x_2\colon\tau_2,\ldots,x_n\colon\tau_n \}$ as a product
% $\interpret{\tau_1} \times \interpret{\tau_2} \times \cdots \times
% \interpret{\tau_n}$.
To give a semantics to expressions, distribution expressions, and commands, we
interpret their associated typing/well-formedness judgments in some context
$\Gamma$, which is interpreted as usual as a product. We interpret an expression judgment $\Gamma \vdash^t e \colon \tau$
as a measurable function $\interpret{\Gamma \vdash^t e \colon \tau} \colon
\interpret{\Gamma} \to \interpret{\tau}$; for instance, the variable case
$\Gamma \vdash^t x\colon \tau$ is interpreted as the projection $\pi_x \colon
\interpret{\Gamma} \to \interpret{\tau}$.
% A reference $\interpret{\Gamma \vdash^t e_1[e_2]\colon
% \tau}$ of an element is interpreted by $\mathrm{ref}\lrangle{\tau,n}
% (\interpret{\Gamma \vdash^t e_1},\interpret{\Gamma \vdash^t e_2})$ where
% $\mathrm{ref}\lrangle{\tau,n} \colon \interpret{\tau}^n \times \mathbb{Z} \to
% \interpret{\tau}$ is defined by
% $\mathrm{ref}\lrangle{\tau,n}((x_0,\ldots,x_{n-1}),k) =
% x_{\min(\max(k,0),n)}$.\footnote{%
%   We can describe it categorically by using products and coproducts in $\Meas$.}
Note that all operators ${\oplus}$ and comparisons ${\bowtie}$ are interpreted
to measurable functions ${\oplus} \colon \interpret{\tau} \times
\interpret{\tau} \to \interpret{\tau}$ and ${\bowtie} \colon \interpret{\tau}
\times \interpret{\tau} \to \interpret{\mathtt{bool}}$ respectively.  
%We omit more detail of interpretations of expressions.
%
Likewise, we interpret a distribution expression judgment $\Gamma \vdash^p \nu
\colon \tau$ as a measurable function $\interpret{\Gamma \vdash^p \nu \colon
\tau} \colon \interpret{\Gamma} \to \mathcal{G}\interpret{\tau}$; for
instance, the Gaussian expression 
$\Gamma \vdash^p \mathtt{Gauss}(e_1,e_2) \colon \mathtt{real}$
is interpreted   as a Gaussian distribution.
$\mathcal{N}(\interpret{\Gamma \vdash^t e_1 \colon \mathtt{real}},\interpret{\Gamma \vdash^t e_2 \colon \mathtt{real}})$.
% %
% \begin{align*}
% \interpret{\Gamma \vdash^p \mathtt{Dirac}(e) \colon \tau}
% & = \eta_{\interpret{\tau}} \circ \interpret{\Gamma \vdash^t e \colon \tau},\\
% \interpret{\Gamma \vdash^p \mathtt{Bern}(e) \colon \mathtt{bool}}
% & = \mathtt{Bern}(\interpret{\Gamma \vdash^t e \colon \mathtt{real}}),\\
% \interpret{\Gamma \vdash^p \mathtt{Lap}(e_1,e_2) \colon \mathtt{real}}
% & = \mathtt{Lap}(\interpret{\Gamma \vdash^t e_1 \colon \mathtt{real}},\interpret{\Gamma \vdash^t e_2 \colon \mathtt{real}}),\\
% \interpret{\Gamma \vdash^p \mathtt{Gauss}(e_1,e_2) \colon \mathtt{real}}
% & = \mathcal{N}(\interpret{\Gamma \vdash^t e_1 \colon \mathtt{real}},\interpret{\Gamma \vdash^t e_2 \colon \mathtt{real}}).
% \end{align*}
%
Finally, we interpret a command judgment $\Gamma \vdash c$ as a measurable
function $\interpret{\Gamma \vdash c } \colon \interpret{\Gamma} \to
\mathcal{G}\interpret{\Gamma}$ defined inductively as
\begin{mathpar}
\interpret{\Gamma\vdash x \xleftarrow{\$} \nu}
	= \mathcal{G}(\mathrm{rw}\lrangle{\Gamma \mid x\colon\tau})
        \circ\mathrm{st}_{\interpret{\Gamma},\interpret{\tau}}
        \circ\lrangle{\mathrm{id}_{\interpret{\Gamma}},\interpret{\nu}},

\interpret{\Gamma\vdash c_1 ; c_2}
	= {\interpret{\Gamma\vdash c_2}}^\sharp \circ
        \interpret{\Gamma\vdash c_1},

\interpret{\Gamma\vdash\mathtt{skip}}
	 = \eta_{\interpret{\Gamma}}

\interpret{\Gamma \vdash \mathtt{if}~b~\mathtt{then}~c_1~\mathtt{else}~c_2}
	=\left[\interpret{\Gamma\vdash c_1},\interpret{\Gamma\vdash c_2} \right]
	\circ \mathrm{br}\lrangle{\Gamma}
	\circ\lrangle{\interpret{\Gamma\vdash b} ,\mathrm{id}_{\interpret{\Gamma}}}
\end{mathpar}
Here, $\mathrm{rw}\lrangle{\Gamma \mid x\colon\tau} \colon \interpret{\Gamma}
\times \interpret{x\colon\tau} \to \interpret{\Gamma}$ (${x\colon\tau} \in
\Gamma $) is an overwriting operation of memory $((a_1,\ldots,a_k,\ldots,
a_n),b_k) \mapsto (a_1,\ldots,b_k,\ldots, a_n)$, which is given from the
Cartesian products in $\Meas$.  The function $\mathrm{br}\lrangle{\Gamma} \colon
2 \times \interpret{\Gamma} \to \interpret{\Gamma} + \interpret{\Gamma}$ comes
from the canonical isomorphism $2 \times \interpret{\Gamma} \cong
\interpret{\Gamma} + \interpret{\Gamma}$ given from the distributivity of
$\Meas$.

To interpret loops, we introduce the dummy ``abort'' command $\Gamma \vdash
\mathtt{null}$ that is interpreted by the null/zero measure $\interpret{\Gamma
\vdash \mathtt{null}} = 0$, and the following commands corresponding to the
finite unrollings of the loop:
\[
[\mathtt{while}~b~\mathtt{do}~c]_n
=
\begin{cases} 
\mathtt{if}~b~\mathtt{then}~\mathtt{null}~\mathtt{else}~\mathtt{skip}, & \text{ if } n=0
\\
\mathtt{if}~b~\mathtt{then}~c;[\mathtt{while}~b~\mathtt{do}~c]_{k}, & \text{ if } n=k+1
\end{cases}
\]
We then interpret loops as:
$
\interpret{\Gamma \vdash \mathtt{while}~b~\mathtt{do}~c}
	= \sup_{ n\in\mathbb{N}} \interpret{\Gamma\vdash [\mathtt{while}~e~\mathtt{do}~c]_n}.
$
This is well-defined, since the family $\{ \interpret{\Gamma\vdash
[\mathtt{while}~e~\mathtt{do}~c]_n} \}_{n \in \mathbb{N}}$ is an $\omega$-chain
with respect to the $\omega\mathbf{CPO}_\bot$-enrichment $\sqsubseteq$ of
$\Meas_{\mathcal{G}}$.

\subsection{Semantics of Relations}
\label{sec:semantics_relations}
Since we use span-liftings instead of relational liftings, we need to interpret relation
expressions to spans, that is, $\Span(\Meas)$-objects.  We proceed in two steps:
first interpreting expressions as binary relations, and then converting
relations to spans. In the first step, we interpret a relation expression
$\Gamma \vdash^R \Phi$ as a binary relation over $\interpret{\Gamma}$:

\begin{align*}
\lefteqn{\Rinterpret{\Gamma \vdash^R e_1\lrangle{1} \mathbin{\bowtie} e_2\lrangle{2}}}\\
&\qquad \qquad = \Set{(m_1,m_2) \in \interpret{\Gamma} \times \interpret{\Gamma}
\mid
\interpret{\Gamma \vdash^t e_1 \colon \tau}(m_1) \bowtie \interpret{\Gamma \vdash^t e_2 \colon \tau}(m_2)
 }
\\
\lefteqn{
\Rinterpret{\Gamma \vdash^R  (e_1\lrangle{1} \otimes_1  e_2\lrangle{2}) \bowtie (e_3\lrangle{1} \otimes_2 e_4\lrangle{2})}
}\\
&\qquad \qquad = \Set{(m_1,m_2) \in \interpret{\Gamma} \times \interpret{\Gamma}
\mid
\begin{array}{l@{}}
\interpret{\Gamma \vdash^t e_1 \colon \tau}(m_1) \otimes_1 \interpret{\Gamma \vdash^t e_2 \colon \tau}(m_2)\\
\quad \bowtie \interpret{\Gamma \vdash^t e_3 \colon \tau}(m_1) \otimes_2 \interpret{\Gamma \vdash^t e_4 \colon \tau}(m_2)
\end{array}
}
\end{align*}
We interpret the connectives in the expected way:
\begin{mathpar}
  \Rinterpret{\Gamma \vdash^R \Phi \wedge \Psi} = \Rinterpret{\Gamma \vdash^R \Phi} \cap \Rinterpret{\Gamma \vdash^R \Psi}

  \Rinterpret{\Gamma \vdash^R \Phi \vee \Psi} = \Rinterpret{\Gamma \vdash^R \Phi} \cup \Rinterpret{\Gamma \vdash^R \Psi}

  \Rinterpret{\Gamma \vdash^R \neg \Phi} = (\interpret{\Gamma} \times \interpret{\Gamma}) \setminus \Rinterpret{\Gamma \vdash^R \Phi}
\end{mathpar}
Then, we can convert the binary relation $\Rinterpret{\Gamma \vdash^R \Phi}
\subseteq \interpret{\Gamma} \times \interpret{\Gamma}$ to the span
\[
  \interpret{\Gamma \vdash^R \Phi}  = (\interpret{\Gamma},\interpret{\Gamma},\Rinterpret{\Gamma \vdash^R \Phi},\pi_1|_\Rinterpret{\Gamma \vdash^R \Phi}, \pi_2|_\Rinterpret{\Gamma \vdash^R \Phi}) .
\]
We interpret the implication $\Gamma \vdash^I \Phi \implies \Psi$ by the
following morphism in $\Span(\Meas)$:
\[
\interpret{\Gamma \vdash^I \Phi \implies \Psi}
= (\mathrm{id}_\interpret{\Gamma},\mathrm{id}_\interpret{\Gamma}, (\mathrm{id}_\interpret{\Gamma} \times \mathrm{id}_\interpret{\Gamma})|_\Rinterpret{\Gamma \vdash^R \Phi}) \colon \interpret{\Gamma \vdash^R \Phi} \to \interpret{\Gamma \vdash^R \Psi}.
\]
\fi

\PARAGRAPH{Validity of Judgments}
We say a judgment $\Gamma \vdash c_1 \sim^\mathbf{\Delta}_{\alpha, \delta} c_2 \colon \Phi
\implies \Psi$ is valid if there exists a measurable function $l \colon
\Rinterpret{\Gamma \vdash^R \Phi} \to W(\interpret{\Gamma \vdash^R
\Psi},\mathbf{\Delta},\alpha,\delta)$ (we call it a \emph{witness function})
such that 
\[
  (\interpret{\Gamma \vdash c_1},\interpret{\Gamma \vdash c_2},l)
  \colon \interpret{\Gamma \vdash^R \Phi} \to \interpret{\Gamma \vdash^R \Psi}^{\sharp(\mathbf{\Delta},\alpha,\delta)}
\]
is a morphism in $\Span(\Meas)$. Concretely, we define the validity in span-apRHL as follows:
\begin{align*}
\Gamma \models c_1 \sim^\mathbf{\mathtt{DP}}_{\varepsilon, \delta} c_2 \colon \Phi \implies \Psi
& \mathbin{\text{ iff }}
\exists l.~(\interpret{\Gamma \vdash c_1}, \interpret{\Gamma \vdash c_2}, l)
\colon \interpret{\Gamma \vdash^R \Phi} \to \interpret{\Gamma \vdash^R \Psi}^{\sharp(\mathbf{\Delta}^\mathtt{DP},\varepsilon,\delta)},\\
\Gamma \models c_1 \sim^\mathbf{\alpha-\mathtt{RDP}}_{\rho} c_2 \colon \Phi \implies \Psi
& \mathbin{\text{ iff }}
\exists l.~(\interpret{\Gamma \vdash c_1}, \interpret{\Gamma \vdash c_2}, l)
\colon \interpret{\Gamma \vdash^R \Phi} \to \interpret{\Gamma \vdash^R \Psi}^{\sharp(D^\alpha,\ast,\rho)},\\
\Gamma \models c_1 \sim^\mathbf{\mathtt{zCDP}}_{\xi,\rho} c_2 \colon \Phi \implies \Psi
& \mathbin{\text{ iff }}
\exists l.~(\interpret{\Gamma \vdash c_1}, \interpret{\Gamma \vdash c_2}, l)
\colon \interpret{\Gamma \vdash^R \Phi} \to \interpret{\Gamma \vdash^R \Psi}^{\sharp(\mathbf{\Delta}^\mathtt{zCDP},\xi,\rho)}\\
\Gamma \models c_1 \sim^\mathbf{\omega{-}\mathtt{tCDP}}_{\rho} c_2 \colon \Phi \implies \Psi
& \mathbin{\text{ iff }}
\exists l.~(\interpret{\Gamma \vdash c_1}, \interpret{\Gamma \vdash c_2}, l)
\colon \interpret{\Gamma \vdash^R \Phi} \to \interpret{\Gamma \vdash^R \Psi}^{\sharp(\mathbf{\Delta}^{\omega{-}\mathtt{tCDP}},\ast,\rho)}.
\end{align*}

\PARAGRAPH{Soundness}

%We now prove soundness of our logic.  %We recall that for any binary relations
%$\Phi \subseteq X \times Y$ and $\Psi \subseteq Z \times W$ between measurable
%spaces and two measurable functions $f \colon X \to Z$ and $f \colon Y \to W$
%such that $(f(x),g(y))\in\Psi$ for any $(x,y) \in \Phi$ then we have a
%$\Span(\Meas)$-morphism $(f,g,(f \times g)) \colon \Phi \to \Psi$, and that the
%span-lifting of equality relations are indeed binary relations.
%
%We begin with the soundness of [assn].
%We may assume $x_1 \neq x_2$.
%Let $((\phi^1,a_1^1,a_2^1),(\phi^2,a_1^2,a_2^2)) \in \interpret{\Gamma \vdash^R \Phi\{e_1\lrangle{1},e_2\lrangle{2}/x_1\lrangle{1},x_2\lrangle{2}\}}$ where $a^i_j$ is a value of variable $x_j$ ($i = 1,2$).
%Since $x_1$ and $x_2$ are not free variables in $e_1$ and $e_2$ respectively, we have $((\phi_1,\interpret{\Gamma \vdash^t e_1}(\phi^1),a_2^1),(\phi^2,a^2_1,\interpret{\Gamma \vdash^t e_2}(\phi_2)) \in \interpret{\Gamma \vdash^R \Phi}$.
%From this, we prove the soundness of [assn] by definitions of $\interpret{\Gamma \vdash^t x_i \leftarrow e_i}$ ($i = 1,2$).
\begin{theorem}
If $\Gamma \vdash c_1 \sim^\mathbf{\Delta}_{\alpha, \delta} c_2 \colon \Phi \implies \Psi$ is derivable in span-apRHL, then it is valid.
\end{theorem}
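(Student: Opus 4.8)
The plan is to prove soundness by induction on the derivation of $\Gamma \vdash c_1 \sim^{\mathbf{\Delta}}_{\alpha,\delta} c_2 \colon \Phi \implies \Psi$. The key reformulation is that a judgment is \emph{valid} precisely when $(\interpret{\Gamma \vdash c_1}, \interpret{\Gamma \vdash c_2}, l)$ is a morphism $\interpret{\Gamma \vdash^R \Phi} \to \interpret{\Gamma \vdash^R \Psi}^{\sharp(\mathbf{\Delta},\alpha,\delta)}$ in $\Span(\Meas)$ for some measurable witness function $l$; so the whole argument amounts to building such morphisms compositionally, using the graded-monad-with-double-strength structure of the approximate span-lifting (Theorem~\ref{thm:span-lifting:conclusion}) for the structural rules, and direct divergence estimates for the axioms. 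For each rule I would exhibit the required morphism of spans from the morphisms supplied by the induction hypotheses, at the grading dictated by the rule.

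\textbf{Structural rules.} The rule \textbf{[assn]} is discharged by the unit $(\eta,\eta,\langle\eta,\eta\rangle)$ of the graded monad at grading $(1_A,0)$, together with the usual substitution lemma identifying $\interpret{\Phi\{e_1\langle 1\rangle,e_2\langle 2\rangle/x_1\langle 1\rangle,x_2\langle 2\rangle\}}$ with the pullback of $\interpret{\Phi}$ along the relevant evaluation maps. The rule \textbf{[seq]} is exactly graded Kleisli composition: span morphisms at gradings $(\alpha,\delta)$ and $(\beta,\gamma)$ compose to one at $(\alpha\beta,\delta+\gamma)$, and the composed witness function $(\pi_1\circ l)^\sharp \times (\pi_2\circ l)^\sharp$ is measurable \emph{by construction}, so no axiom of choice is needed — this is the whole point of working in $\Span(\Meas)$. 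The rule \textbf{[weak]} combines the inclusions $\sqsubseteq^{(\alpha,\delta),(\beta,\gamma)}$ of the graded monad (for $\alpha \preceq \beta$, $\delta \leq \gamma$, valid since the preorder on the grading is contravariant) with pre- and post-composition by the span morphisms interpreting $\Gamma \vdash^I \Phi' \implies \Phi$ and $\Gamma \vdash^I \Psi \implies \Psi'$, where functoriality of $(-)^{\sharp(\mathbf{\Delta},\alpha,\delta)}$ lets the latter act on the lifted spans. The (appendix) rules for conditionals use the coproducts of $\Span(\Meas)$ together with the distributivity of $\Meas$ used to interpret branching, and the \texttt{while} rule uses the $\omega\mathbf{CPO}_\bot$-enrichment of $\Meas_{\mathcal G}$: a witness for each finite unrolling $[\mathtt{while}~b~\mathtt{do}~c]_n$ should lift to a witness for the supremum interpretation.

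\textbf{Axioms.} The mechanism rules \textbf{[DP-Bern]}, \textbf{[DP-Lap]}, \textbf{[RDP-G]}, \textbf{[zCDP-G]}, \textbf{[tCDP-G]}, \textbf{[DP-G]}, \textbf{[tCDP-SinhG]} are verified directly, by taking the witness distributions to be the evident couplings (typically the graph coupling of a deterministic shift) and checking that the relevant divergence between the two marginals over $\Psi$ is bounded by the stated parameter: the Gaussian cases reduce to the standard value of the R\'enyi divergence between two Gaussians of equal variance (Theorem~\ref{properties:Renyi}), the Laplace/Bernoulli cases to the corresponding $f$-divergence computations, and the sinh-normal case to the bound of~\citet{BDRS18}; measurability of the witness follows because it is an explicit measurable expression in the two memories. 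The conversion rules \textbf{[D/z]}, \textbf{[z/R]}, \textbf{[z/D]}, \textbf{[t/D]}, \textbf{[R/D]} reduce to pointwise inequalities between the divergences $\Delta^{\mathtt{DP}(\varepsilon)}$, $D^\alpha$, $\Delta^{\mathtt{zCDP}(\xi)}$, $\Delta^{\omega\text{-}\mathtt{tCDP}}$ on (sub)probability measures (the tail/moment comparisons of~\citep{BunS16,Mironov17,BDRS18}), applied at the span level by keeping the same witness distribution and only changing the grading. The transitivity rules \textbf{[DP-Trans]}, \textbf{[RDP-Trans]}, \textbf{[zCDP-Trans]} use composition of relations together with the (weak) triangle inequalities for the divergences — e.g.\ Proposition~\ref{lem:Renyi:weak-triangle} for R\'enyi divergence — with the witness for $\Phi \circ \Psi$ obtained by gluing the two given witnesses along the shared middle component.

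\textbf{Main obstacle.} I expect the hard part to be the \texttt{while} case. There one must show that if each finite unrolling is valid with witness $l_n$, then $\sup_n \interpret{[\mathtt{while}~b~\mathtt{do}~c]_n}$ is valid: this needs both a uniform bound on the divergence along the $\omega$-chain (forcing the loop body to carry grading $(1_A,0)$, or otherwise to control the accumulated cost) and a \emph{measurable} limiting witness — exactly the measurability-of-witnesses difficulty that motivated approximate span-liftings, but now in the presence of an $\omega$-limit rather than a single Kleisli composition. Discharging it cleanly will likely require an $\omega$-continuity property of the span-lifting (monotone $\omega$-chains of witness distributions have measurable suprema in $\mathcal G\Phi$, and $\Delta^\alpha$ behaves well under such suprema), in the spirit of the partial-additivity structure already noted for $\Meas_{\mathcal G}$.
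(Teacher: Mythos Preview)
Your overall plan is correct and matches the paper's approach: induction on derivations, with the structural rules discharged by the graded-monad structure (unit for [assn], graded Kleisli extension for [seq], inclusions plus functoriality for [weak], coproducts for [cond]/[case]), the mechanism axioms by direct divergence calculations together with the double strength, and the conversion and transitivity rules by the known divergence-level inequalities.

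However, your ``main obstacle'' is a misreading of the logic. The paper's [while] rule is \emph{bounded}: its premises include a fixed integer $n$, a variant expression $e$, and the side condition $\Gamma \vdash^I \Theta \wedge (e\langle 1\rangle \geq n) \implies \Theta \wedge \neg b_1\langle 1\rangle$, which forces the loop to terminate within $n$ iterations from any state satisfying the precondition. On such states the semantics $\interpret{\mathtt{while}~b~\mathtt{do}~c}$ coincides with the $n$-th finite unrolling, so no $\omega$-limit of witnesses is ever taken. The paper makes this explicit: ``The soundness of the [while] rule is a consequence of the soundness of [seq], [weak], and the [case] rule since the [while] in our logic deal only with finite-loops.'' Hence the $\omega$-continuity / measurable-limiting-witness difficulty you anticipate simply does not arise here; the rule is syntactic sugar for an $n$-fold sequential composition guarded by conditionals, with grading $(\prod_k \alpha_k, \sum_k \delta_k)$ accumulated exactly by repeated [seq]. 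You should therefore drop that part of the plan and treat [while] as derived from the other structural rules.
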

\begin{proof}[Proof sketch]
The soundness of the basic rules is derived from the unit, graded
Kleisli liftings, and inclusions of the graded span-lifting $\{
(-)^{\sharp (\mathbf{\Delta},\alpha,\delta)} \}_{\alpha,\delta}$ given
in Section \ref{sec:span-liftings}.
We focus here on the soundness of the [seq] rule.
Since the judgments $\Gamma \vdash c_1\sim^{\mathbf{\Delta}}_{\alpha,\delta} c'_1 \colon \Phi \implies \Phi'$ and
$\Gamma \vdash c_2 \sim^{\mathbf{\Delta}}_{\alpha\beta,\delta+\gamma} c'_2 \colon \Phi' \implies \Psi$ are valid,
for some witness functions $l_1$ and $l_2$ we have
%\begin{align*}
%(\interpret{\Gamma \vdash c_1}, \interpret{\Gamma \vdash c_1'}, l_1) &\colon \interpret{\Gamma \vdash^R \Phi} \to %\interpret{\Gamma \vdash^R \Phi'}^{\sharp(\mathbf{\Delta},\alpha,\delta)}\\
%(\interpret{\Gamma \vdash c_2}, \interpret{\Gamma \vdash c_2'}, l_2) &\colon \interpret{\Gamma \vdash^R \Phi'} \to %\interpret{\Gamma \vdash^R \Psi}^{\sharp(\mathbf{\Delta},\beta,\gamma)}
%\end{align*}
\[
(\interpret{\Gamma \vdash c_1}, \interpret{\Gamma \vdash c_1'}, l_1)\colon \interpret{\Phi}\!\to\! \interpret{\Phi'}^{\sharp(\mathbf{\Delta},\alpha,\delta)},\quad\hspace{-0.8em}
(\interpret{\Gamma \vdash c_2}, \interpret{\Gamma \vdash c_2'}, l_2)\colon \interpret{\Phi'}\!\to\! \interpret{\Psi}^{\sharp(\mathbf{\Delta},\beta,\gamma)} 
.
\]
By taking the graded Kleisli extension of the second morphism
$(\interpret{\Gamma \vdash c_2}, \interpret{\Gamma \vdash c_2'}, l_2)$, for some
witness function $l_3$ given by the construction in Theorem
\ref{thm:span-lifting:conclusion} (Kleisli lifting), we have the following
morphism in the category $\Span(\Meas)$:
\[
(\interpret{\Gamma \vdash c_2}^\sharp, \interpret{\Gamma \vdash c_2'}^\sharp, l_3) \colon \interpret{\Phi'}^{\sharp(\mathbf{\Delta},\alpha,\delta)} \to \interpret{\Psi}^{\sharp(\mathbf{\Delta},\alpha\beta,\delta+\gamma)}.
\]
Composing them, we conclude the validity of $\Gamma \vdash c_1;c_2 \sim^{\mathbf{\Delta}}_{\alpha\beta,\delta+\gamma} c'_1;c'_2 \colon \Phi \implies \Psi$.
%\[
%(\interpret{\Gamma \vdash c_2}^\sharp \circ \interpret{\Gamma \vdash c_1}, \interpret{\Gamma \vdash c_2'}^\sharp \circ \interpret{\Gamma \vdash c_1'}, l_3 \circ l_1) \colon \interpret{\Gamma \vdash^R \Phi} \to \interpret{\Gamma \vdash^R \Psi}^{\sharp(\mathbf{\Delta},\alpha\beta,\delta+\gamma)}.
%\]
% The soundness of [cond] and [case] are proved by using the coproduct of $\Span(\Meas)$.
% The soundness of [while] is proved in the similar way as \citet[Lemma 3.4]{Sato2016MFPS}
% from the rules [skip], [cond], [seq], and [weak] and the semantical equivalence
% \[
% \interpret{\Gamma \vdash \mathtt{while}~b~\mathtt{do}~c} = \interpret{\Gamma \vdash [\mathtt{while}~b~\mathtt{do}~c]^n;(\mathtt{while}~b~\mathtt{do}~c)}.
% \]

The soundness of the mechanism rules are proved by interpreting known results of
mechanisms for DP, RDP, zCDP, and tCDP to span-liftings.  For example, the
soundness of [RDP-G] proved by interpreting the R\'enyi differential privacy
of Gaussian mechanism to span-liftings.  First, the function $f =
{\mathcal{N}({-},\sigma^2)} \colon \mathbb{R} \to \mathcal{G}\mathbb{R}$
describing a Gaussian mechanism is measurable.  From the previous result
\citet[Proposition 7]{Mironov17} of R\'enyi differential privacy of the Gaussian
mechanism, the measurable function $f$ satisfies the following implication:
\[
|x - y| \leq r
\implies
D^\alpha(f(x)||f(y)) \leq \alpha r^2/2\sigma^2.
\]
This implies that we have the below morphism in the category $\Span(\Meas)$:
\[
(f,f,(f \times f)|_\Phi) \colon \Set{(x,y) \in \mathbb{R} \times \mathbb{R}|{|x - y|} \leq r} \to \mathrm{Eq}_{\mathbb{R}}^{\sharp(D^\alpha,\ast,\alpha r^2/2\sigma^2)}.
\]
From this, by straightforward calculations, we obtain the soundness of [RDP-G].

Note that we need to give \emph{measurable} functions $l$ selecting witness
distributions when proving these rules---in the discrete case, these functions
can be obtained by the axiom of choice.
In the case of [RDP-G], we could give the witness $l = f \times f$ directly.

Similary, the soundness of the rest of mechanism rules follows from the following previous results on DP, RDP, zCDP and tCDP:
\citet[Propositions 6]{Mironov17},
\citet[Proposition 1]{DworkMcSherryNissimSmith2006},
\citet[Lemma 4.2]{Sato2016MFPS}
(an enhancement of \citet[Theorem 3.22]{DworkRothTCS-042}), and
\citet[Theorem 19]{BDRS18},
and the soundness of transitive rules follows from:
\citet[Lemma 4.2(iii)]{olmedo2014approximate},
\citet[Proposition 27]{BunS16} and \citet[Lemma 4.1]{Langlois2014}.
The soundness of the conversion rules follows by applying the
comparison theorems of divergences
\citet[Proposition 4]{BunS16},
\citet[Proposition 3]{Mironov17},
\citet[Lemmas 3.2, 3.5]{BunS16},
\citet[Lemma 8]{BDRS18}
to the following inclusion between the approximate span-liftings:
\[
(\Delta^1_\alpha \leq \delta \implies \Delta^2_\beta \leq \gamma)
\implies
((\mathrm{id},\mathrm{id},\mathrm{id}) \colon (\Phi)^{\sharp(\Delta^1,\alpha,\delta)} \to (\Phi)^{\sharp(\Delta^2,\beta,\gamma)} \text{ in } \Span(\Meas)).
\]
\end{proof}

\section{Verification Examples} \label{sec:examples}
\iffull
We show how we can use the span-pRHL program logic to verify concrete
programs. We stress an important point here, since the guarantees
provided by RDP, zCDP, and tCDP can all be converted in guarantees
about $(\epsilon,\delta)$-differential privacy, one could just use the
latter for analyze all the examples we will show. The interest however
in performing as much reasoning as possible using these relaxations is
that one can achieve better values of the parameters. This will become
particularly evident in the last example.  
\mg{I am a bit hesitating with respect to the examples. Here my
  suggestion on how we should present them:
  \begin{itemize}
    \item One way marginal with Gaussian noise and with SinhNormal
      noise. This would be a warm up example showing how to use the
      logic for the relaxations.
\item Histogram in RDP to show how composition is just addition.
\item k-fold composition of Gaussian to show that using zCDP gives a
  better bound that DP.
  \end{itemize}
If we agree with this, we need to fix the examples. 
}
\tetsuya{
I did:
 \begin{itemize}
  \item One way marginal $\Rightarrow$ Add SinhNormal noise version; and tCDP verification; 
  \item Histogram$ \Rightarrow$ Fix typo and emphasis that we use additions of grading. 
  \item k-fold composition $\Rightarrow$ Give more expression.
\end{itemize}
}

\subsection{One-way Marginals}
As a warm up, we begin with the following classic example of a one-way
marginal algorithm with additive noise.
\begin{algorithm}[H]
	\caption{A mechanism estimates the attribute means}
\begin{algorithmic}[1]
    \Procedure{$\mathtt{AttMean}$}{$n \colon \mathtt{int}$, $\rho \colon \mathtt{real}$ (const.),
$x \colon \mathtt{bool}^n$ (dataset),
$i \colon \mathtt{int}$,
$y,z,w \colon \mathtt{real}$}
    	\State{$i \leftarrow 0$;$y \leftarrow 0$;}
            \While{$i < n$}				
                \State{$y \leftarrow y + x[i]$;~$i \leftarrow  i + 1$;}
            \EndWhile
	\State{$z \leftarrow y / n$;}
        \State{$w \xleftarrow{\$} \mathtt{Gauss}(z, 1/2n^2\rho)$;}
    \EndProcedure
\end{algorithmic}
\end{algorithm}
\noindent
We first show the R\'enyi-differential privacy of $\mathtt{AttMean}$.
We set a typing context $\Gamma$ of $\mathtt{AttMean}$ by
$x \colon \mathtt{bool}^n$ (dataset),
$i \colon \mathtt{int}$, and
$y,z,w \colon \mathtt{real}$.
We show the following judgment:
\[
\Gamma\vdash \mathtt{AttMean} \sim^{\mathtt{RDP}}_{\alpha\rho} \mathtt{AttMean}
\colon \mathtt{adj}(x\lrangle{1},x\lrangle{2}) \implies w\lrangle{1} = w\lrangle{2}.
\]
Here, the adjacent relation $\mathtt{adj}(x\lrangle{1},x\lrangle{2})$ means that two datasets $x\lrangle{1}$ and $x\lrangle{2}$ differs at most in one record.
Explicitly, we define it by the following relation expression:
\[
\mathtt{adj}(x\lrangle{1},x\lrangle{2})
= \bigwedge_{1 \leq i \leq n} \left( (x[i]\lrangle{1} \neq x[i]\lrangle{2}) \implies \bigwedge_{1 \leq j < i, i < j \leq n}(x[j]\lrangle{1} = x[j]\lrangle{2}) \right).
\]
The proof of this judgment follows by splitting $\mathtt{AttMean}$ into two
commands $\mathtt{LoopAM};\mathtt{NoiseG}$ where $\mathtt{NoiseG}= w
\xleftarrow{\$} \mathtt{Gauss}(z, 1/2n^2\rho)$, and $\mathtt{LoopAM}$ is the
rest of the program.
Since the loop part $\mathtt{LoopAM}$ is deterministic, by standard reasoning, we obtain:
% by [assn], [while], [weak], [seq], and [case], 
\[
\Gamma \vdash \mathtt{LoopAM} \sim^{\alpha-\mathtt{RDP}}_{0} \mathtt{LoopAM} \colon \mathtt{adj}(x\lrangle{1},x\lrangle{2})
\implies (|z\lrangle{1} - z\lrangle{2}| \leq 1/n)
.
\]
By applying [RDP-G], for the noise-adding step $\mathtt{NoiseG}$ we have:
\[
\Gamma \vdash \mathtt{NoiseG} \sim^{\alpha-\mathtt{RDP}}_{\alpha\rho} \mathtt{NoiseG}  \colon  (|z\lrangle{1} - z\lrangle{2}| \leq 1/n) \implies  (w\lrangle{1} = w\lrangle{2}).
\]
Thus, by applying [seq] we complete the proof.
A similar proof could have been carried out with both the rules for differential privacy, zCDP, and tCDP.
Due to the simplicity of the example (that is, $\mathtt{LoopAM}$ is deterministic),
the resulting guarantee would have been the same.

%\subsubsection{tCDP of One-way Marginals with SinhNormal Noise}
\begin{algorithm}[H]
	\caption{A mechanism estimates the attribute means with SinhNormal noise}
\begin{algorithmic}[1]
    \Procedure{$\mathtt{AMSinh}$}{$n \colon \mathtt{int}$, $\rho \colon \mathtt{real}$ (const.),
$x \colon \mathtt{bool}^n$ (dataset),
$i \colon \mathtt{int}$,
$y,z,w \colon \mathtt{real}$}
    	\State{$i \leftarrow 0$;$y \leftarrow 0$;}
            \While{$i < n$}				
                \State{$y \leftarrow y + x[i]$;~$i \leftarrow  i + 1$;}
            \EndWhile
	\State{$z \leftarrow y / n$;}
        \State{$w \xleftarrow{\$} w + A \cdot\mathtt{arsinh}\left(\frac{1}{A}\mathtt{Gauss}(0,/2n^2\rho)\right)$;}
    \EndProcedure
\end{algorithmic}
\end{algorithm}
\noindent
We change the noise in the algorithm $\mathtt{AttMean}$ from Gaussian noise to SinhNormal noise.
Explicitly, we define a new algorithm $\mathtt{AMSinh}=\mathtt{LoopAM};\mathtt{NoiseSinh}$ where the noise-adding part is changed to 
 $\mathtt{NoiseSinh} = w \xleftarrow{\$} w + A \cdot\mathtt{arsinh}\left(\frac{1}{A}\mathtt{Gauss}(0,/2n^2\rho)\right)$, where $A$ is a constant satisfying $1 < 1/\sqrt{\rho} \leq A/n$.
In the similar way as the previous example $\mathtt{AttMean}$, for the loop part $\mathtt{LoopAM}$, we obtain:
% by [assn], [while], [weak], [seq], and [case], 
\[
\Gamma \vdash \mathtt{LoopAM} \sim^{n \cdot A/8-\mathtt{tCDP}}_{0} \mathtt{LoopAM} \colon \mathtt{adj}(x\lrangle{1},x\lrangle{2})
\implies (|z\lrangle{1} - z\lrangle{2}| \leq 1/n).
\]
By applying [tCDP-SinhG], the noise-adding part $\mathtt{NoiseSinh}$ satisfies 
\[
\Gamma \vdash \mathtt{NoiseSinh} \sim^{n \cdot A/8-\mathtt{tCDP}}_{16\rho} \mathtt{NoiseSinh} \colon  (|z\lrangle{1} - z\lrangle{2}| \leq 1/n) \implies  (w\lrangle{1} = w\lrangle{2}).
\]
Thus, by applying [seq], we conclude that the algorithm $\mathtt{AMSinh}$ is $(16\rho,n \cdot A/8)$-tCDP.
\subsection{Histograms}
%We introduce the data type $[T]$ interpreted by the discrete space
%$\interpret{[T]}=\{0,1,\ldots,T\}$ for $T \in \mathbb{N}$
%\mg{I would
%  say that we use a primitive data type with this property, no need to
%define it.}
%and we use
%it to define the
The following algorithm gives the histograms of dataset $x$ over the finite set $T$ with additive noise.
We use a primitive data type $T$ as a finite set of size $T$.
\begin{algorithm}[H]
	\caption{A mechanism estimates the histogram}
\begin{algorithmic}[1]
    \Procedure{$\mathtt{Histogram}$}{$n \mathtt{int}$, $\rho \colon \mathtt{real}$ (const.), $x \colon [T]^n$ (dataset), $y,z \colon \mathtt{real}^T$,$i \colon \mathtt{int}$}
    	\State{$i \leftarrow 0$; $y \leftarrow (0,\ldots,0)$;}
            \While{$i < n$}				
                \State{$y[x[i]] \leftarrow y[x[i]] + 1 $;~$i \leftarrow  i + 1$;}
            \EndWhile
	\State{$i \leftarrow 0$; $z \leftarrow (0,\ldots,0)$;}
          \While{$i < T$}				
                \State{$z[i] \xleftarrow{\$} \mathtt{Gauss}(y[i],1/\rho)$;~$i \leftarrow  i + 1$;}
            \EndWhile
    \EndProcedure
\end{algorithmic}
\end{algorithm}
\noindent
%\mg{isn't there a typo in line 4? Isn't $y[x[i]] \leftarrow
%  y[x[i]]+1$? Also the noise seems strange...since the sensitivity of
%  histogram is 2, shouldn't we have a 2 somewhere?}
%\tetsuya{I fixed the typo. That 2 is under cancellation (for sensitivity $\Delta$, the noise will be $\mathtt{Gauss}({...},\Delta/2\rho)$) for $\rho$-zCDP.}
We show the zCDP of the algorithm $\mathtt{Histogram}$.
We set a typing context $\Gamma$ by $x \colon [T]^n$ (dataset), $y,z \colon \mathtt{real}^T$, and $i \colon \mathtt{int}$.
We want to prove the validity of the following judgment:
\[
\Gamma \vdash \mathtt{Histogram} \sim^{\mathtt{zCDP}}_{0,\rho} \mathtt{Histogram}
\colon \mathtt{adj}(x\lrangle{1},x\lrangle{2}) \implies z\lrangle{1} = z\lrangle{2}.
\]
Here, $\mathtt{adj}(x\lrangle{1},x\lrangle{2})$ is defined in the similar way as the previous algorithm.
We split the algorithm $\mathtt{Histogram}$ into $\mathtt{Histogram} = \mathtt{HGCalc};\mathtt{HGNoise}$ where 
$\mathtt{HGNoise}$ is the second loop for adding noise, and $\mathtt{HGCalc}$ is
the rest of the program that calculates a histogram without noise.
We can now define two additional assertions for $0 \leq K \neq L < T$ and $0 \leq I < n$:
% {\small
\begin{align*}
\Phi_{I,K,L}
&= (x[I]\lrangle{1} \neq x[I]\lrangle{2}) \wedge (i \neq I \implies x[i]\lrangle{1} = x[i]\lrangle{2})\wedge (x[I]\lrangle{1} = K) \wedge (x[I]\lrangle{2} = L) 
\\
\Psi_{K,L}
&= (y[K]\lrangle{1} =  y[K]\lrangle{2} + 1) \wedge (y[L]\lrangle{1} + 1 =  y[L]\lrangle{2})\wedge (j \neq K,L \implies y[j]\lrangle{1} =  y[j]\lrangle{2}).
\end{align*}%}
It is easy to see that  $\mathtt{adj}(x\lrangle{1},x\lrangle{2}) \iff
\exists{I,K,L}.~\Phi_{I,K,L}$. Using this and some standard reasoning, we have 
\[
\Gamma \vdash \mathtt{HGCalc} \sim^{\mathtt{zCDP}}_{0,0} \mathtt{HGCalc}\colon \Phi(I,K,L)(x\lrangle{1},x\lrangle{2}) \implies \Theta(K,L)\land (i\lrangle{1} = 0)
\]
where $\Theta(K,L) = \Psi(K,L) \land (z\lrangle{1} = z\lrangle{2})\land(i\lrangle{1} = i\lrangle{2})$.
For proving the right judgment for $\mathtt{HGNoise}$ we also use the following
additional axiom for zCDP that concludes $(0,0)$-zCDP if both noises and inputs are the same (the soundness is rather straightforward):
%\mg{can we say why we do need this axiom? or at least
%  what the axiom is useful for?}
%\tetsuya{
%This axiom says that if the input is the same then we have $(0,0)$-zCDP regardless of the algorithm, and 
%is used not to pay the cost for privacy where we do not need to do, in the histogram example.}
% {\small
\[\Gamma \vdash x_1 \xleftarrow{\$} \mathtt{Gauss}(e_1,\sigma^2) \sim^{\mathtt{zCDP}}_{0,0} x_2 \xleftarrow{\$} \mathtt{Gauss}(e_2,\sigma^2) \colon
(e_1\lrangle{1} = e_2\lrangle{2}) \implies (x_1\lrangle{1} = x_2\lrangle{2}).
\]% }
Now by using this axiom, [zCDP-G], and some basic reasoning for the loop we obtain:
\[
\Gamma \vdash \mathtt{HGNoise}\sim^{\mathtt{zCDP}}_{0,\rho} \mathtt{HGNoise}
\colon \Theta(K,L)\land (i\lrangle{1} = 0) \implies \Theta(K,L)\land (i\lrangle{1} = T).
\]
Roughly speaking, we may regard $\mathtt{HGNoise}$ as a composition $c[0];c[1];\cdots;c[T-1]$
where $c[j]$ is the $j$-th execution of the loop body of $\mathtt{HGNoise}$.
For $j \neq K,L$ by using the new axiom,
\[
\Gamma \vdash c[j] \sim^{\mathtt{zCDP}}_{0,0} c[j] \colon \Theta(K,L) \land (i\lrangle{1} = j) \implies \Theta(K,L)\land(i\lrangle{1} = j+1).
\]
On the other hand, for $j = K,L$ by applying [zCDP-G] (with $\sigma^2 = \rho/2$), we obtain
\[
\Gamma \vdash c[j] \sim^{\mathtt{zCDP}}_{0,\rho/2} c[j] \colon \Theta(K,L) \land (\lrangle{1} = j) \implies \Theta(K,L)\land(i\lrangle{1} = j+1).
\]
Note that the second case occurs twice.  The [seq] rule sums up the grading of
each execution $c[j]$, and we conclude $\rho$-zCDP of $\mathtt{HGNoise}$.
Finally, by using [seq] and some conditional computations, we complete the proof. 
\fi

\iffull \subsection{A $k$-fold Gaussian mechanism} \fi
%Consider $K$ queries $q(i,-) \colon \mathtt{bool}^n \to \mathtt{real}$ ($0 \leq
%i < K$) with sensitivity $1$, that is, $|q(i,D) - q(i,D') | \leq 1$
%for every two
%dataset $D$ and $D'$ differing in one one bit.
%
%\mg{Can we leave the type of databases as generic instead of
%  specializing it to n-bits?}
%\tetsuya{In this example, we can do.}
%
Consider a type $\mathtt{DATA}$ of dataset and an predicate $\mathtt{ADJ}({-},{=})$ of adjacency for the type $\mathtt{DATA}$, and 
consider $K$ queries $q(i,-) \colon \mathtt{DATA} \to \mathtt{real}$
($0 \leq i < K$) with sensitivity $1$, that is, 
\[
\mathtt{ADJ}(D,D') \implies |q(i,D) - q(i,D') | \leq 1.
\]
We want now to prove private the following $K$-fold Gaussian mechanism.  Even
though standard DP can already be handled by other verification techniques, our
proof applies the conversion rules between DP and zCDP along with composition in
zCDP, yielding a more precise analysis for standard DP.

\begin{algorithm}[H]
	\caption{Sum of $K$ Gaussian mechanisms}
\begin{algorithmic}[1]
    \Procedure{$\mathtt{FoldG}_K$}{$K \colon \mathtt{int}$, $\sigma \colon \mathtt{real}$ (const.), $D \colon \mathtt{DATA}$, $x,y,z \colon\mathtt{real}$, $i \colon \mathtt{int}$ }
    	\State{$i \leftarrow 0; z \leftarrow 0;$}
            \While{$i < K$}				
                \State{$x \leftarrow q(i,D);
                y \xleftarrow{\$} \mathtt{Gauss}(0,\sigma);
                z \leftarrow x+y+z;
                i \leftarrow i + 1;$}\label{algorithm_loopbody}
            \EndWhile
\EndProcedure
\end{algorithmic}
\end{algorithm}
\noindent
\iffull
We set a typing context of $\mathtt{FoldG}_K$ by
$D \colon \mathtt{DATA}$, $x,y,z \colon\mathtt{real}$, and $i \colon \mathtt{int}$.
\fi
%First, we define the adjacency relation $\mathtt{adj}(D\lrangle{1},D\lrangle{2})$ by
%\[
%\textstyle
%\mathtt{adj}(D\lrangle{1},D\lrangle{2})
%= 
%\bigwedge_{1 \leq j \leq n} \left( (D[j]\lrangle{1} \neq D[j]\lrangle{2}) \implies \bigwedge_{1 \leq j' < j, j < j' \leq n}(D[j']\lrangle{1} = D[j']\lrangle{2}) \right).
%\]
%Following the sensitivity of queries $q$, for any $0 \leq i < K$ we may assume the validity of
Following sensitivity of queries $q$, for any $0 \leq i < K$ we may assume 
\[
\textstyle
\Gamma \vdash x \leftarrow q(i,D) \sim^{\mathtt{zCDP}}_{0,0} x \leftarrow q(i,D)
\colon \mathtt{ADJ}(D\lrangle{1},D\lrangle{2}) \implies |x\lrangle{1} - x\lrangle{2}| \leq 1
.
\]
Thus, for the loop body $c$ (line \ref{algorithm_loopbody}), by applying [zCDP-G], [seq] and [assn], we have
\[
\textstyle
\Gamma \vdash c \sim^{\mathtt{zCDP}}_{0,1/2\sigma^2} c
\colon \mathtt{ADJ}(D\lrangle{1},D\lrangle{2})\land (z\lrangle{1} = z\lrangle{2}) \implies z\lrangle{1} = z\lrangle{2}
.
\]
Then, by applying [assn], [seq], and [while] (the proof rule for while-loop) rules, 
we conclude
\[
\Gamma \vdash \mathtt{FoldG}_K \sim^{\mathtt{zCDP}}_{0,K/2\sigma^2} \mathtt{FoldG}_K \colon
\mathtt{ADJ}(D\lrangle{1},D\lrangle{2}) \implies z\lrangle{1} = z\lrangle{2}.
\]
Hence, the algorithm $\mathtt{FoldG}_K$ is $(0,K/2\sigma^2)$-zCDP.
Furthermore, by applying [z/D], we conclude that the algorithm $\mathtt{FoldG}_K$ is $\left(\frac{K}{2\sigma^2} + \frac{\sqrt{2K\log(1/\delta)}}{\sigma},\delta \right)$-DP for any $0 < \delta < 1/2$.

This analysis gives a more precise bound compared to reasoning in terms of
standard differential privacy. First, by [DP-G], [seq] and [assn], for any
$0 < \delta_1 < 1/2$, the loop body $c$ satisfies
\[
\Gamma \vdash c \sim^{\mathtt{DP}}_{\max((1+\sqrt{3})/2\sigma, \sqrt{2\log(0.66/\delta_1)}/\sigma),\delta_1} c
\colon\mathtt{adj}(D\lrangle{1},D\lrangle{2})\land (z\lrangle{1} = z\lrangle{2}) \implies z\lrangle{1} = z\lrangle{2}.
\]
Let $\varepsilon = \max((1+\sqrt{3})/2
\sigma, \sqrt{2\log(0.66/\delta_1)}/\sigma)$.
The algorithm $\mathtt{FoldG}_K$ can be seen as $K$-fold adaptive composition of
the loop body $c; \cdots ; c$.  By applying the advanced composition theorem
\citep[Theorem 3.20]{DworkRothTCS-042}, the algorithm $\mathtt{FoldG}_K$ is
\[\left(\varepsilon \cdot\sqrt{2K\log(1/\delta_2)} + K\varepsilon^2, K\delta_1 +
\delta_2 \right)\text{-DP} \quad\text{for any } 0 < \delta_1,\delta_2 < 1/2.
\] 
%This bound is weaker
%than the previous bound we obtained through zCDP in the common case when
%$\delta$ is small:
We compare this bound and the bound given in the avove.
When $\delta_2 < 0.4$,
we have $2\log(0.66/\delta_2) > 1$.
We also have $\varepsilon > 1.36/\sigma$ by the definition.
Then, we can compute:
\[
\frac{K}{2\sigma^2} + \frac{\sqrt{2K\log(1/\delta_2)}}{\sigma}
< \frac{K}{2\sigma^2} + \frac{\sqrt{2K\log(1/\delta_2)}}{\sigma} \cdot \sqrt{2\log(0.66/\delta_1)} \leq \varepsilon \cdot\sqrt{2K\log(1/\delta_2)} + K\varepsilon^2.
\]
Hence,
$\varepsilon \cdot\sqrt{2K\log(1/\delta_2)} + K\varepsilon^2 >
\frac{K}{2\sigma^2} + \frac{\sqrt{2K\log(1/\delta)}}{\sigma}$ whenever $\delta = K\delta_1 + \delta_2$ and $\delta_2 < 0.4$.

%This example shows that verification via zCDP is actually better than advanced composition in some case.
We can conclude that verification via zCDP is actually better than advanced
composition for the algorithm $\mathtt{FoldG}$.  First, in the verification via
zCDP, the approximation error $\delta$ is given regardless of the number of
queries $K$.  Second, if the approximation error satisfies $\delta < 0.4$ then
the verification is significantly better than advanced composition.  The
restriction $\delta < 0.4$ is quite weak since the approximation error $\delta$
in the $(\varepsilon,\delta)$-DP is thought as the probability of failure of
$\varepsilon$-DP.  Moreover in practical use of $(\varepsilon,\delta)$-DP, the
parameter $\delta$ is usually taken to be quite small (e.g., $\delta \approx
10^{-5}$).
%\mg{The description is quite dry. I think we could explain this better.}
\section{Related Works}
\label{sec:rw}

\PARAGRAPH{Relational liftings for $f$-divergences}
As we have mentioned, our work is inspired by work on verifying probabilistic
relational properties involving $f$-divergences by \citet{BartheOlmedo2013}; we
generalize their results to a broader class of divergences and also to handle
continuous distributions.
% 
% Barthe and Olmedo devise approximate versions of probabilistic lifting
% for several $f$-divergences, show composability in certain cases, and
% design relational program logics based on these semantic concepts.
% Their liftings also involve two witness distributions with a specific
% support modeling the two related distributions, combined with a distance
% condition between the two witnesses.
% However, they only consider discrete distributions; programs cannot
% sample from continuous distributions, like the Gaussian or Laplace
% distributions. We extend their work to continuous distributions by introducing
% span-based liftings, and also to divergences beyond $f$-divergences. Both
% extensions are critical for verifying more advanced properties, like
% zero-Concentrated Differential Privacy and R\'enyi differential privacy.
% 
Barthe and Olmedo also consider $f$-divergences that satisfy a more limited
version of composability, called \emph{weak composability}. Roughly, these
composition results only apply when corresponding pairs of distributions have
equal weight; the KL-divergence, Hellinger distance, and $\chi^2$ divergences
only satisfy this weaker version of composability. While we do not detail this
extension, our framework can naturally handle weakly composable
divergences in the continuous case.

A similar approach has also been used by \citet{BartheFGAGHS16} in
the context of an higher order functional language for reasoning about Bayesian
inference. Their type system uses a graded monad to reason about
$f$-divergences. The graded monad supports only discrete distributions and is
interpreted via a set-theoretic semantics, again using the lifting by
\citet{BartheOlmedo2013}.

\PARAGRAPH{Relational liftings for differential privacy}
Approximate relational liftings were originally proposed for program logics
targeting differential privacy.
The first such system used a one-witness definition of
lifting~\citep{Barthe:2012:PRR:2103656.2103670}, which was subsequently refined
to several notions of two-witness lifting~\citep{BartheOlmedo2013,Barthe:2016:APC:2976749.2978391}.
\citet{Sato2016MFPS} developed approximate liftings and a program logic for
continuous distribution using witness-free lifting based on a categorical monad
lifting~\citep{Katsumata2005,katsumata_et_al:LIPIcs:2015:5532}.
\iffull
A \emph{witness-free} relational lifting for differential privacy was
introduced by~\citet{Sato2016MFPS}.  This can be seen as an application of the
general construction of \emph{graded relational lifting} \citep[Section
5]{Katsumata2014PEM} to the Giry monad, using the technique of \emph{codensity
lifting} \citep[Section 3.3]{katsumata_et_al:LIPIcs:2015:5532} instead of
$\top\top$-lifting.
The witness-free relational lifting by~\citet{Sato2016MFPS} sends a binary relation $R$ between
measurable spaces $X,Y$ to the following one between
$\mathcal{G}X,\mathcal{G}Y$:
\begin{align*}
  R^{\top\top(\varepsilon,\delta)}  &= \bigcap_{(k,l) \colon R \dot\to S^{(\varepsilon',\delta')}} \inverse{(k^\sharp \times l^\sharp)}
   S^{(\varepsilon+\varepsilon',\delta+\delta')} \\
  \text{where}~~S^{(\varepsilon',\delta')} & =\Set{ (x,y) \in \mathcal{G}1
  \times \mathcal{G}1 \mid x \leq e^{\varepsilon'} y + \delta' }.
\end{align*}
where $\mathcal{G}$ is the sub-Giry monad, 
$k^\sharp$ and $l^\sharp$ denote the Kleisli extensions of $k$ and $l$ respectively,
$\dot\to$ denotes a relation-preserving map, and $\top\top$ is used to denote
the codensity lifting and to distinguish it from our 2-witness lifting.
Here, the intersection is taken over all measurable functions
$k:X\rightarrow\mathcal{G}1,l:Y\rightarrow\mathcal{G}1$ mapping pairs related by
$R$ to those related by $S^{(\varepsilon',\delta')}$. We note that the binary
relation $S^{(\varepsilon',\delta')}$ is a parameter of this witness-free
lifting, and by changing it, we can derive other graded relational liftings of
$\mathcal{G}$.

Checking the membership for
 $R^{\top\top(\varepsilon,\delta)}$ is complex: we have to test the
pair $(x,y)$ against every pair $(k,l)$ of measurable functions such
that $(k,l)\colon R \dot\to S^{(\varepsilon, \delta)}$. Fortunately,
since the divergence
$\Delta^{\mathtt{DP}(\varepsilon)}$ is defined by a linear inequality
of measures, the witness-free lifting
$R^{\top\top(\epsilon,\delta)}$ can be \emph{simplified} to the following
\[
R^{\top\top(\varepsilon,\delta)} = \Set{(d_1,d_2) \in \mathcal{G}X \times
\mathcal{G}Y \mid \forall{A \subseteq \Sigma_X}.~d_1(A) \leq e^\varepsilon
d_2(R(A))+\delta} .
\]

While we would like to generalize this lifting construction to handle more
general divergences for RDP, zCDP, and tCDP, there are at least two obstacles.  First,
it is not clear how to find a parameter $S$ to derive the suitable graded
relational lifting for a given general divergence. Second, even if we can find a
suitable parameter $S$, it is awkward to work with the lifting unless we can
simplify the large intersection into a more convenient form.
In contrast, 2-witness liftings seem more concrete and easier to work with: It
suffices to give witness distributions to check the membership of lifted
relations. 
\fi

In the discrete case, witness-free liftings are equivalent to the
witness-/span-based liftings by~\citet{barthe_et_al:LIPIcs:2017:7435}. Recent
work also considers liftings with more fine-grained parameters that can vary
over different pairs of samples~\citep{AH17}.

\PARAGRAPH{Other techniques for verifying privacy}
R\'enyi and zero-concentrated differential privacy were recently proposed in the
differential privacy literature; to the best of our knowledge, we are the first
to verify these properties. In contrast, there are now numerous systems
targeting differential privacy using a wide range of techniques beyond program
logics, including dynamic analyses~\citep{McSherry:2009:PIQ:1559845.1559850},
linear~\citep{ReedPierce10,GHHNP13,AGGH14} and
dependent~\citep{Barthe:2015:HAR:2676726.2677000} type systems, product
programs~\citep{6957126}, partial
evaluation~\citep{Winograd-Cort:2017:FAD:3136534.3110254}, and
constraint-solving~\citep{zhang2016autopriv,AH17}; see the recent
survey~\citep{Murawski:2016:2893582} for more details.
\section{Conclusion and Future Work}
\label{sec:conclusion}
We have developed a framework for reasoning about three relaxations of
differential privacy:  R\'enyi differential privacy, zero concentrated
differential privacy, and truncated concentrated differential privacy. We
extended the notion of divergences to a more general class, and to support
subprobability measures. Additionally, we have introduced a novel notion of
approximate span-lifting supporting these divergences and continuous
distributions.

One promising direction for future work is to study the moment-accountant
composition method~\citep{AbadiCGMMT016}. This composition method tracks the
moments of the privacy loss random variable, although it does not directly
correspond to composition for RDP or zCDP. Another interesting direction would
be to analyze recently-proposed RDP mechanisms for posterior
sampling~\citep{Geumlek17nips}, and the GAP-Max tCDP algorithm
by~\citet{BDRS18}.

\bibliography{header,reference}
\iffull
\appendix
\newpage
\section{Continuity of $f$-divergences of Subprobability Measures}
\label{sec_cont_f-div}

In this section we show the subprobability version of continuity of
$f$-divergences~\citep[Theorem 16]{1705001_2006} in a different way from the paper~\citep{1705001_2006}.

\begin{theorem}[Theorem \ref{f-divergence:continuity} / Subprobability version of{\citep[Theorem 16]{1705001_2006}}]
%[{\citep[Theorem 16]{1705001_2006}}, Extended]\label{f-divergence:continuity}
For any weight function $f$, the $f$-divergence $\Delta^f$ is continuous: for
any \emph{subprobability measures} $\mu_1, \mu_2 \in \mathcal{G}X$ on $X$, we
have
\[
\Delta^f_X(\mu_1,\mu_2)
= \sup \Set{\sum_{i = 0}^n \mu_2(A_i)f\left(\frac{\mu_1(A_i)}{\mu_2(A_i)}\right)
  \mid \{A_i\}_{i = 0}^n
\text{ is a measurable finite partition of } X}.
\]
\end{theorem}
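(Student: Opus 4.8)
The plan is to establish the two inequalities between $\Delta^f_X(\mu_1,\mu_2)$ and the supremum $S$ on the right-hand side separately, working directly with the Lebesgue decomposition of $\mu_1$ against $\mu_2$ rather than reducing to the probability-measure case of \citet[Theorem 16]{1705001_2006}. Write $\mu_1 = g\cdot\mu_2 + \mu_1^\perp$, where $g \colon X \to \RR_{\geq 0}$ is the Radon--Nikodym density of the absolutely continuous part and $\mu_1^\perp \perp \mu_2$ is carried by a measurable set $N$ with $\mu_2(N) = 0$; let $f'_\infty = \lim_{t\to\infty}f(t)/t \in \RR \cup \{+\infty\}$ be the recession slope of the convex function $f$. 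Using the conventions on $0f(a/0)$ fixed in the paper (one checks $\lim_{t\to 0+}tf(a/t) = a f'_\infty$ for $a > 0$), the $f$-divergence equals $\Delta^f_X(\mu_1,\mu_2) = \int_X f(g)\,d\mu_2 + f'_\infty\cdot\mu_1^\perp(X)$; both summands are $> -\infty$ because a convex $f$ lies above an affine function whose slope is at most $f'_\infty$, so this is well defined in $\ol\RR$.

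For $S \leq \Delta^f_X(\mu_1,\mu_2)$: fix a finite measurable partition $\{A_i\}_{i=0}^n$ and write $p_i = \mu_2(A_i)$, $m_i = \int_{A_i} g\,d\mu_2$, $s_i = \mu_1^\perp(A_i)$. For cells with $p_i > 0$, Jensen's inequality gives $p_i f(m_i/p_i) \leq \int_{A_i} f(g)\,d\mu_2$, and the monotonicity of the difference quotients of a convex function gives $f\bigl((m_i+s_i)/p_i\bigr) - f(m_i/p_i) \leq (s_i/p_i)\,f'_\infty$; hence the $i$-th term $p_i f(\mu_1(A_i)/p_i)$ is at most $\int_{A_i}f(g)\,d\mu_2 + f'_\infty s_i$. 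For cells with $p_i = 0$ one has $m_i = 0$ and $\mu_1(A_i) = s_i$, so by the conventions the $i$-th term is exactly $f'_\infty s_i = \int_{A_i}f(g)\,d\mu_2 + f'_\infty s_i$. Summing over $i$ and using $\sum_i s_i = \mu_1^\perp(X)$ yields the bound.

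For $S \geq \Delta^f_X(\mu_1,\mu_2)$: it suffices to produce a sequence of finite partitions whose sums have $\liminf$ at least $\Delta^f_X(\mu_1,\mu_2)$. Always take $N$ itself as one cell, which contributes exactly $f'_\infty\cdot\mu_1^\perp(X)$, and partition $X\setminus N$ (on which $\mu_1 = g\cdot\mu_2$) by the $g$-preimages of the standard dyadic partition of $\RR_{\geq 0}$ truncated at level $n$. The generated $\sigma$-algebras $\mathcal{F}_n$ increase to a $\sigma$-algebra making $g$ measurable (since $\sigma(g)$ is countably generated), so the martingale convergence theorem, applied to the finite measure $\mu_2$, gives $\mathbb{E}_{\mu_2}[g\mid\mathcal{F}_n] \to g$ $\mu_2$-a.e.; since $f$ is continuous on $\RR_{\geq 0}$ (convex on $(0,\infty)$ and continuous at $0$ by hypothesis) and bounded below by an affine function, Fatou's lemma yields $\int_X f(g)\,d\mu_2 \leq \liminf_n \int_{X\setminus N} f(\mathbb{E}_{\mu_2}[g\mid\mathcal{F}_n])\,d\mu_2$, and the right-hand integral is precisely the partition sum over $X\setminus N$ (cells of $\mu_2$-measure $0$ also have $\mu_1$-measure $0$, contributing $0$ to both). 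Adding back the $N$-cell, the full partition sums have $\liminf \geq \Delta^f_X(\mu_1,\mu_2)$, and together with the first inequality we get $S = \Delta^f_X(\mu_1,\mu_2)$. The degenerate configurations $\mu_2(X) = 0$, $f'_\infty = +\infty$ with $\mu_1^\perp(X) > 0$, and $\int_X f(g)\,d\mu_2 = +\infty$ fall out of the same argument directly.

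I expect the main obstacle to be the lower-bound step: one must choose the approximating partitions so that their generated $\sigma$-algebra genuinely recovers $g$, verify that martingale convergence and Fatou apply verbatim to the non-normalized finite measure $\mu_2$, and isolate the singular mass into a single cell so that the conventions $0f(a/0) = a f'_\infty$ reproduce \emph{exactly} the recession term of $\Delta^f$, including the case $f'_\infty = +\infty$. The upper bound, by contrast, reduces to Jensen plus the monotonicity of difference quotients of convex functions. Crucially, nothing in either direction uses that $\mu_1(X)$ or $\mu_2(X)$ equals $1$, so the argument extends \citet[Theorem 16]{1705001_2006} to subprobability measures with no essential change.
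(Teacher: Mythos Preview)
Your proposal is correct and takes a genuinely different route from the paper's own proof.

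The paper also splits along the Lebesgue decomposition of $\mu_1$ with respect to $\mu_2$, but its analysis of the absolutely continuous part is elementary rather than probabilistic. For the inequality $\Delta^f \leq S$ it compares Riemann-type lower sums $\sum_i \mu_2(A_i)\inf_{A_i} f(d\mu_1/d\mu_2)$ with the partition sums (using the mean-value theorem and convexity of $f$); for $S \leq \Delta^f$ it first restricts to sets where $d\mu_1/d\mu_2$ is bounded, invokes uniform continuity of $f$ on a compact interval to control the gap between partition sums and the integral, and then passes to the unbounded case by exhausting $X$ with the level sets $\{n \leq d\mu_1/d\mu_2 < n+1\}$. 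Finally the paper reduces an arbitrary weight function to a nonnegative one by adding an affine correction $t\mapsto a + bt$.

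Your argument replaces all of this by Jensen's inequality (for $S \leq \Delta^f$, together with the difference-quotient bound to absorb the singular mass) and by the martingale convergence theorem plus Fatou's lemma (for $\Delta^f \leq S$). This is more conceptual and handles non-positive $f$ directly via the affine lower bound, so no separate reduction step is needed; the cost is invoking martingale convergence, which the paper avoids entirely in favour of bare-hands measure theory. Both approaches isolate the singular set $N$ as a dedicated cell so that the convention $0f(a/0)=a f'_\infty$ reproduces the recession term exactly, and both go through unchanged for subprobability measures because neither Jensen, nor Fatou, nor the paper's uniform-continuity argument needs the measures to have total mass $1$.
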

To prove this proposition,
we introduce the singularity of measures.
%A measure $\mu_1$ on $X$ is said to be \emph{absolutely continuous} with respect
%to a measure $\mu_2$ over $X$ (written $\mu_1 \ll \mu_2$) if $\mu_2(A) = 0$
%implies $\nu_1(A) = 0$ for all $A \in \Sigma_X$. To define many of our
%divergences, we will need the Radon-Nikodym Theorem.
%
%\begin{lemma}[Radon-Nikodym Theorem]
%Let $\mu_1$ and $\mu_2$ be $\sigma$-finite measures on $X$.  If $\mu_1 \ll
%\mu_2$ then there is a measurable function $\frac{d\mu_1}{d\mu_2} \colon X \to
%\mathbb{R}_{\geq 0}$ (called the \emph{Radon-Nikodym derivative}) such that
%$\mu_1(A) = \int_A \frac{d\mu_1}{d\mu_2} ~d\mu_2$ for all $A \in \Sigma_X$. The
%Radon-Nikodym derivative is unique up to a set of measure zero.
%\end{lemma}
%
%Next, 
Two measures $\mu_1$ and $\mu_2$ on $X$ are said to be mutually singular (written $\nu_1 \perp \nu_2$) if there are partition $A_1, A_2 \in \Sigma_X$ of $X$ such that $\mu_i(E) = \mu_i(A_i \cap E)$ for any $E \in \Sigma_X$ ($i = 1,2$).

\begin{lemma}[Lebesgue's Decomposition Theorem]
Let $\mu_1$ and $\mu_2$ be $\sigma$-finite measures on $X$.
There are unique finite measures $\mu_1^\bullet$ and $\mu_1^\perp$ on $X$ such that $\mu_1^\bullet \ll \mu_2$ and $\mu_1^\perp \perp \mu_2$.
\end{lemma}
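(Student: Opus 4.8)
This is the classical Lebesgue decomposition theorem, so the plan is to derive it quickly from the Radon--Nikodym theorem rather than reprove anything from scratch. First I would introduce the reference measure $\lambda = \mu_1 + \mu_2$, which is again $\sigma$-finite, and observe that both $\mu_1$ and $\mu_2$ are absolutely continuous with respect to $\lambda$; Radon--Nikodym then supplies nonnegative measurable densities $h_1$ and $h_2$ with $\mu_i(E) = \int_E h_i \, d\lambda$ for every $E \in \Sigma_X$. The decomposition is then read off from the zero set of $h_2$: I would put $N = \{ x \in X \mid h_2(x) = 0 \} \in \Sigma_X$ and define $\mu_1^\perp(E) = \mu_1(E \cap N)$ and $\mu_1^\bullet(E) = \mu_1(E \setminus N)$, so that $\mu_1 = \mu_1^\bullet + \mu_1^\perp$ by additivity, with both summands finite whenever $\mu_1$ is --- which is the only case needed, since in the intended application $\mu_1$ is a subprobability measure.

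The two defining properties then follow by short density computations. Since $h_2$ vanishes on $N$ we get $\mu_2(N) = \int_N h_2 \, d\lambda = 0$, while $\mu_1^\perp$ is carried by $N$, so the partition $(N, X \setminus N)$ witnesses $\mu_1^\perp \perp \mu_2$. Conversely, if $\mu_2(E) = 0$ then $\int_E h_2 \, d\lambda = 0$, which forces $\lambda(E \setminus N) = \lambda(E \cap \{h_2 > 0\}) = 0$, and therefore $\mu_1^\bullet(E) = \int_{E \setminus N} h_1 \, d\lambda = 0$; hence $\mu_1^\bullet \ll \mu_2$.

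For uniqueness, I would suppose $\mu_1 = \nu^\bullet + \nu^\perp$ is a second such decomposition and look at the finite signed measure $\mu_1^\bullet - \nu^\bullet = \nu^\perp - \mu_1^\perp$: its left-hand expression is absolutely continuous and its right-hand expression is singular with respect to $\mu_2$, and a signed measure that is simultaneously absolutely continuous and singular with respect to $\mu_2$ must vanish identically (evaluate it on any $\mu_2$-null carrier of its singular part). Hence $\mu_1^\bullet = \nu^\bullet$ and $\mu_1^\perp = \nu^\perp$. I do not expect a genuine obstacle here; the hypotheses only bite at the Radon--Nikodym step (this is where $\sigma$-finiteness is used) plus some minor bookkeeping to keep the pieces finite. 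The lemma will then feed into the continuity proof, where one splits $\mu_1$ against $\mu_2$ to separate an honest density --- approximated from below by sums over finite measurable partitions --- from a singular part, which is accounted for by the $0f(a/0)$ convention in the definition of a weight function.
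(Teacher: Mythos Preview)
The paper does not actually supply a proof of this lemma: it is stated as a classical result and immediately used in the proof of Theorem~\ref{f-divergence:continuity}. Your proposed argument via Radon--Nikodym with respect to the reference measure $\lambda = \mu_1 + \mu_2$ is a standard textbook proof and is entirely adequate here; your closing remarks about how the decomposition is consumed in the continuity argument also match the paper's use of it.

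One small observation: the lemma as stated in the paper is slightly sloppy---it omits the condition $\mu_1 = \mu_1^\bullet + \mu_1^\perp$, and asserts the pieces are \emph{finite} while only assuming $\mu_1$ is $\sigma$-finite. You have correctly noted that in the intended application $\mu_1$ is a subprobability measure, so finiteness of the parts is not an issue, and you implicitly build in the decomposition equation. So your write-up is, if anything, more careful than the paper's statement.
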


We recall that the $f$-divergence for subprobability measures is defined by
for any $\mu_1,\mu_2, \mu \in \mathcal{G}X$ such that $\mu_1,\mu_2 \ll \mu$,
\[
\Delta^f_X(\mu_1,\mu_2) = \int_X \frac{d\mu_2}{d\mu} f\left(\frac{d\mu_1/d\mu}{d\mu_2/d\mu}\right) ~d\mu.
\]
We remark that $\mu$ satisfying $\mu_1,\mu_2 \ll \mu$ always exists (e.g. $(\mu_1 + \mu_2)/2$), and 
the $\Delta^f_X(\mu_1,\mu_2)$ does not depend on the choice of $\mu$.
We want to prove the continuity:
\[
\Delta^f_X(\mu_1,\mu_2) =
\sup\Set{\sum_{i = 0}^n \mu_2(A_i) f\left(\frac{\mu_1(A_i)}{\mu_2(A_i)}\right)
\mid \{A_i\}_{i = 0}^n \text{ is a finite measurable partition of } X}.
\]

We define the following restricted sum of $f$-divergences.
For any measurable subset $D \in \Sigma_X$,
\begin{align*}
\Delta^{f}_X(\mu_1,\mu_2)|_D
&= \int_D \frac{d\mu_2}{d\mu} f\left(\frac{d\mu_1 / d\mu}{d\mu_2 / d\mu} \right) d\mu.\\
\overline{\Delta}^{f}_X(\mu_1,\mu_2)|_D
&= \sup\Set{\sum_{i = 0}^n
\mu_2(A_i)f\left(\frac{\mu_1(A_i)}{\mu_2(A_i)}\right) \mid \{A_i\}_{i = 0}^n \text{ is a finite measurable partition of } D}\\
&= \sup\Set{\sum_{i \in I}
\mu_2(\inverse{k}(i))f\left(\frac{\mu_1(\inverse{k}(i))}{\mu_2(\inverse{k}(i))}\right)
\mid I \in \FinSet, k \colon D \to I}
\end{align*}
Of course, $\Delta^{f}_X(\mu_1,\mu_2) = \Delta^{f}_X(\mu_1,\mu_2)|_X$.
We write $\overline{\Delta}^{f}_X(\mu_1,\mu_2) =  \overline{\Delta}^{f}_X(\mu_1,\mu_2)|_X$ 

We temporary consider a positive weight function $f$.

\begin{lemma}\label{lem:f-div:conti:1}
If $\mu_1 \ll \mu_2$ then $\Delta^{f}_X(\mu_1,\mu_2)|_D \leq \overline{\Delta}^{f}_X(\mu_1,\mu_2)|_D$ for any $D \in \Sigma_X$.
\end{lemma}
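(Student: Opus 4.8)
The plan is to exploit $\mu_1 \ll \mu_2$ by choosing $\mu = \mu_2$ in the definition of the restricted $f$-divergence: writing $g = d\mu_1/d\mu_2$, this gives $\Delta^f_X(\mu_1,\mu_2)|_D = \int_D f(g)\,d\mu_2$, and moreover $g < \infty$ holds $\mu_2$-almost everywhere because $\int_D g\,d\mu_2 = \mu_1(D) \leq 1$. Since $f$ is a \emph{positive} weight function here, $f(g) \geq 0$, and the claim reduces to showing that for every value strictly below $\int_D f(g)\,d\mu_2$ there is a finite measurable partition of $D$ whose associated sum exceeds that value. (The reverse inequality, which we do not need, is the easy direction, obtained by applying Jensen's inequality on each block of a partition.)

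First I would dispose of the case where $g$ is bounded on $D$, say $g \leq M$. Being convex, $f$ is continuous on $(0,M]$, and continuous at $0$ by definition of weight function, hence uniformly continuous on $[0,M]$; so given $\eta > 0$ there is $\delta > 0$ with $|f(s) - f(t)| < \eta$ whenever $s,t \in [0,M]$ and $|s - t| < \delta$. Cut $[0,M]$ into finitely many Borel pieces $I_1,\dots,I_m$ of diameter $< \delta$ and put $A_j = D \cap g^{-1}(I_j)$, a finite measurable partition of $D$. For each $j$ with $\mu_2(A_j) > 0$, the value $\mu_1(A_j)/\mu_2(A_j) = \tfrac{1}{\mu_2(A_j)}\int_{A_j} g\,d\mu_2$ and every value $g(x)$ with $x \in A_j$ lie in $\overline{I_j}$, hence within $\delta$ of one another, so $|f(\mu_1(A_j)/\mu_2(A_j)) - f(g(x))| < \eta$ on $A_j$; blocks with $\mu_2(A_j) = 0$ (hence $\mu_1(A_j) = 0$) contribute $0$ on both sides by the convention $0f(0/0)=0$. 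Integrating this estimate over each $A_j$ and summing yields
\[
\Bigl|\,\textstyle\sum_j \mu_2(A_j)\, f\bigl(\tfrac{\mu_1(A_j)}{\mu_2(A_j)}\bigr) \;-\; \int_D f(g)\,d\mu_2\,\Bigr| \;\leq\; \eta\,\mu_2(D) \;\leq\; \eta,
\]
and letting $\eta \to 0$ settles the bounded case.

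For unbounded $g$ I would truncate: set $D_M = \{x \in D \mid g(x) \leq M\}$, so $D_M \uparrow D$ up to a $\mu_2$-null set, and since $f(g) \geq 0$ monotone convergence gives $\int_{D_M} f(g)\,d\mu_2 \to \int_D f(g)\,d\mu_2$ in $[0,\infty]$. If $\int_D f(g)\,d\mu_2 < \infty$, fix $\varepsilon > 0$, choose $M$ with $\int_{D_M} f(g)\,d\mu_2 \geq \int_D f(g)\,d\mu_2 - \varepsilon/2$, apply the bounded case to $g$ on $D_M$ to get a finite measurable partition $\{A_j\}$ of $D_M$ with sum at least $\int_{D_M} f(g)\,d\mu_2 - \varepsilon/2$, and adjoin the single block $D \setminus D_M$; its extra term is $\geq 0$ since $f \geq 0$, so the total sum over this partition of $D$ is at least $\int_D f(g)\,d\mu_2 - \varepsilon$. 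If instead $\int_D f(g)\,d\mu_2 = \infty$, the same construction with $\int_{D_M} f(g)\,d\mu_2$ made arbitrarily large shows the partition sums are unbounded. In either case $\overline{\Delta}^f_X(\mu_1,\mu_2)|_D \geq \int_D f(g)\,d\mu_2 = \Delta^f_X(\mu_1,\mu_2)|_D$, as required.

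The truncation step and the treatment of null blocks are routine, using only $f \geq 0$ and finiteness of $\mu_2$. The one delicate point is the bounded case, and within it the observation that each block average $\mu_1(A_j)/\mu_2(A_j)$ stays in the closure of the level-interval $I_j$ used to cut $D$: this is precisely what lets uniform continuity of $f$ on $[0,M]$ turn "$g$ is nearly constant on $A_j$" into "$f(g)$ is uniformly close to $f(\mu_1(A_j)/\mu_2(A_j))$ on $A_j$", and it is where convexity of $f$ (via its continuity) is used.
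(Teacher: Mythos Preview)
Your proof is correct. The approach, however, is genuinely different from the paper's.

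The paper's argument for this lemma never splits into bounded and unbounded cases. Instead it exploits convexity of $f$ through \emph{unimodality}: there is a point $\alpha\in\RR_{\geq 0}$ with $f$ monotone on each of $[0,\alpha)$ and $[\alpha,\infty)$. For any finite partition $\{A_i\}$ of $D$ refining the two-block partition $\{g^{-1}[0,\alpha)\cap D,\;g^{-1}[\alpha,\infty)\cap D\}$, monotonicity of $f$ on the relevant half-line combined with $\inf_{A_i} g \leq \mu_1(A_i)/\mu_2(A_i)\leq \sup_{A_i} g$ gives $\inf_{x\in A_i} f(g(x))\leq f(\mu_1(A_i)/\mu_2(A_i))$ on every block. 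Hence every lower Lebesgue sum $\sum_i \mu_2(A_i)\inf_{A_i} f\circ g$ is dominated by the corresponding partition sum, and taking the supremum over partitions yields $\int_D f(g)\,d\mu_2\leq \overline{\Delta}^f_X(\mu_1,\mu_2)|_D$ in one stroke.

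By contrast, you use convexity of $f$ only through \emph{continuity} (hence uniform continuity on $[0,M]$), first proving the inequality when $g$ is bounded and then truncating via $D_M=\{g\leq M\}$ and monotone convergence. Your bounded step actually yields the two-sided estimate and so coincides with what the paper isolates separately as Lemma~\ref{lem:f-div:conti:2}; your truncation step is a finitary variant of the countable decomposition $D=\bigcup_n g^{-1}[n,n+1)\cap D$ the paper uses in Lemma~\ref{lem:f-div:conti:3}. In effect you have collapsed the paper's Lemmas~\ref{lem:f-div:conti:1}--\ref{lem:f-div:conti:3} (restricted to the direction $\Delta\leq\overline{\Delta}$) into one argument. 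This buys you a self-contained proof using only standard approximation ideas; the paper's route is a little slicker for Lemma~\ref{lem:f-div:conti:1} alone, since the unimodality trick avoids any case split on the size of $g$.
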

\begin{proof}%[Lemma \ref{lem:f-div:conti:1}]
Since $\mu_1 \ll \mu_2$, we may assume $\mu = \mu_2$ (hence $d\mu_2 / d\mu = 1$).
Then, we have $\Delta^{f}_X(\mu_1,\mu_2)|_D = \int_D f(\frac{d\mu_1}{d\mu_2}) d\mu_2$.
Since $f$ is convex, there is $\alpha \in \mathbb{R}_{\geq 0}$ which makes that
$f$ is monotone increasing on the interval $[0,\alpha)$ and monotone decreasing on $[\alpha, \infty)$.
Let $\{A_i\}_{i = 0}^n$ be an arbitrary finite partition of $D$ which is finer than the partition
$\{\inverse{\frac{d\mu_1}{d\mu_2}}([0,\alpha))\cap D,\inverse{\frac{d\mu_1}{d\mu_2}}([\alpha,\infty))\cap D\}$.
The function $f\circ \frac{d\mu_1}{d\mu_2}$ is either monotone increasing or monotone decreasing, on each partition $A_i$.
Hence, $\inf_{x \in A_i}f(\frac{d\mu_1}{d\mu_2})(x)$ is either $f(\inf_{x \in A_i}\frac{d\mu_1}{d\mu_2}(x))$ or $f(\sup_{x \in A_i}\frac{d\mu_1}{d\mu_2}(x))$.
From the mean-value theorem for measures, we obtain
\[
\inf_{x \in A_i}\frac{d\mu_1}{d\mu_2}(x)
\leq \frac{\mu_1(A_i)}{\mu_2(A_i)}
\leq \sup_{x \in A_i}\frac{d\mu_1}{d\mu_2}(x).
\]
Hence,
\[
\sum_{i = 0}^n \mu_2(A_i) \inf_{x \in A_i}f(\frac{d\mu_1}{d\mu_2})(x)
\leq \sum_{i = 0}^n \mu_2(A_i)f(\frac{\mu_1(A_i)}{\mu_2(A_i)}).
\]
Since $\{A_i\}_{i = 0}^n$ is arbitrary, we conclude $\Delta^{f}_X(\mu_1,\mu_2)|_D \leq \overline{\Delta}^{f}_X(\mu_1,\mu_2)|_D$.
\end{proof}

\begin{lemma}\label{lem:f-div:conti:2}
If $\mu_1 \ll \mu_2$ and the Radon-Nikodym derivative ${d\mu_1}/{d\mu_2}$ is bounded on $D$ then $\Delta^{f}_X(\mu_1,\mu_2)|_D = \overline{\Delta}^{f}_X(\mu_1,\mu_2)|_D$.
\end{lemma}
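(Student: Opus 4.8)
The plan is to prove the two inequalities separately. The inequality $\Delta^{f}_X(\mu_1,\mu_2)|_D \leq \overline{\Delta}^{f}_X(\mu_1,\mu_2)|_D$ is already Lemma~\ref{lem:f-div:conti:1} (it only uses $\mu_1\ll\mu_2$), so it remains to establish the reverse inequality $\overline{\Delta}^{f}_X(\mu_1,\mu_2)|_D \leq \Delta^{f}_X(\mu_1,\mu_2)|_D$. As in the proof of Lemma~\ref{lem:f-div:conti:1}, since $\mu_1\ll\mu_2$ I would take $\mu=\mu_2$, so that $d\mu_2/d\mu\equiv 1$ and, writing $h = d\mu_1/d\mu_2$ for the Radon--Nikodym derivative, $\Delta^{f}_X(\mu_1,\mu_2)|_D = \int_D f(h)\,d\mu_2$. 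By hypothesis $h$ is bounded on $D$, say $0\le h\le M$ on $D$.

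Next I would fix an arbitrary finite measurable partition $\{A_i\}_{i=0}^n$ of $D$ and bound the associated sum by Jensen's inequality, applied piecewise. For each index $i$ with $\mu_2(A_i) > 0$, the ratio
\[
  \frac{\mu_1(A_i)}{\mu_2(A_i)} = \frac{1}{\mu_2(A_i)}\int_{A_i} h\,d\mu_2
\]
is the average of $h$ over $A_i$ with respect to the probability measure $\nu_i(E) = \mu_2(E\cap A_i)/\mu_2(A_i)$; since $h$ is bounded it is $\nu_i$-integrable, so convexity of $f$ gives
\[
  f\!\left(\frac{\mu_1(A_i)}{\mu_2(A_i)}\right) \le \int f(h)\,d\nu_i = \frac{1}{\mu_2(A_i)}\int_{A_i} f(h)\,d\mu_2 ,
\]
hence $\mu_2(A_i)\,f(\mu_1(A_i)/\mu_2(A_i)) \le \int_{A_i} f(h)\,d\mu_2$. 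For indices with $\mu_2(A_i) = 0$ we have $\mu_1(A_i) = 0$ by absolute continuity, so the left side is $0\,f(0/0) = 0$ and the inequality is trivial. Summing over $i$ yields
\[
  \sum_{i=0}^n \mu_2(A_i)\,f\!\left(\frac{\mu_1(A_i)}{\mu_2(A_i)}\right) \le \int_D f(h)\,d\mu_2 = \Delta^{f}_X(\mu_1,\mu_2)|_D .
\]
Since the partition was arbitrary, taking the supremum gives $\overline{\Delta}^{f}_X(\mu_1,\mu_2)|_D \le \Delta^{f}_X(\mu_1,\mu_2)|_D$, which together with Lemma~\ref{lem:f-div:conti:1} completes the argument.

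The only point requiring care — and it is exactly what the boundedness hypothesis buys us — is the legitimacy of applying Jensen's inequality and of splitting the integral as above: one must know that $h$ is $\mu_2$-integrable on each piece (immediate from $h\le M$ and $\mu_2(D)<\infty$) and that $f(h)$ has a well-defined Lebesgue integral. Since we are in the case of a \emph{positive} weight function $f$, and a convex function $\RR_{\ge 0}\to\RR$ continuous at $0$ is continuous and hence bounded on the compact interval $[0,M]$, the composite $f(h)$ is a bounded nonnegative measurable function on $D$ and all the integrals involved are finite. I expect no further obstacle here; the removal of the boundedness hypothesis is precisely the content deferred to the next lemma, where one truncates $h$ and passes to the limit by monotone convergence.
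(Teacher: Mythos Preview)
Your proof is correct, and it takes a genuinely different route from the paper's. The paper establishes $\overline{\Delta}^{f}_X(\mu_1,\mu_2)|_D \le \Delta^{f}_X(\mu_1,\mu_2)|_D$ by an approximation argument: it builds a specific partition of $D$ from level sets of $h=d\mu_1/d\mu_2$, invokes uniform continuity of $f$ on the compact interval $[0,M]$ to control $|f(h(x))-f(\mu_1(A_i)/\mu_2(A_i))|$ uniformly, and concludes that partition sums are within $\varepsilon$ of the integral. Your argument instead applies Jensen's inequality cellwise, which is the standard ``data-processing'' proof for $f$-divergences and is both shorter and conceptually cleaner.

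It is worth noticing that your Jensen step does not actually use the boundedness of $h$: since $f\ge 0$ here, $f(h)$ has a well-defined integral in $[0,\infty]$, and $h$ is $\mu_2$-integrable on each cell simply because $\int_{A_i} h\,d\mu_2=\mu_1(A_i)\le 1$. So your argument already yields $\overline{\Delta}\le\Delta$ under the sole hypothesis $\mu_1\ll\mu_2$, which combined with Lemma~\ref{lem:f-div:conti:1} would give Lemma~\ref{lem:f-div:conti:3} directly, bypassing the truncation step. The paper's uniform-continuity approach, by contrast, genuinely needs the bounded range to get compactness; that is why the paper splits off Lemma~\ref{lem:f-div:conti:2} and then removes the bound in Lemma~\ref{lem:f-div:conti:3}.
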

\begin{proof}%[Lemma \ref{lem:f-div:conti:2}]
We fix a positive integer $1 \leq K \in \mathbb{N}$ such that $0 \leq \frac{d\mu_1}{d\mu_2} \leq M$.
For given $N \in \mathbb{N}$, we define the partition $\{A_i\}_{i=0}^{2^N K}$ of $D$ by
\[
A_i = \left(\inverse{\left(\frac{d\mu_1}{d\mu_2}\right)}(B_i)\right) \cap D, 
\quad
B_i =
\begin{cases}
[ \frac{i}{2^N}, \frac{i+1}{2^N} ) & 0 \leq i < 2^N\\
\{1\} & i = 2^N\\
( \frac{i-1}{2^N}, \frac{i}{2^N} ] & 2^N <  i \leq 2^N K.
\end{cases}
\]
%ere, $\{B_i\}_{i=0}^{2^N K}$ is a partition of the interval $[0,M]$.

Since $\mu_1 \ll \mu_2$ and $0f(0/0)=0$, 
if $\mu_2(A_i) = 0$ then $\mu_2(A_i)\frac{\mu_1(A_i)}{\mu_2(A_i)} = 0$.
If $\mu_2(A_i) > 0$ then
$
\left| \frac{d\mu_1}{d\mu_2}(x) - \frac{\mu_1(A_i)}{\mu_2(A_i)} \right| ~\leq~ 2^{-(N-1)}
$
for all $x \in A_i$, from the definition of $\{A_i\}_{i=0}^{2^N K}$,
\[
{\frac{i-1}{2^N}}
~\leq~{\inf_{x \in A_i} \frac{d\mu_1}{d\mu_2}(x)}
~\leq~{\frac{\mu_1(A_i)}{\mu_2(A_i)}}
~\leq~{\sup_{x \in A_i} \frac{d\mu_1}{d\mu_2}(x)}
~\leq~{\frac{i+1}{2^N}}.
\]

Consider an arbitrary $\varepsilon > 0$.
Since $f$ is uniformly continuous on the closed interval $[0,K]$,
there are large enough $N_2 \in \mathbb{N}$ and the corresponding partition $\{A_i\}_{i=0}^{2^N K}$
such that
\[
\mu_2(A_i) > 0 \implies 
\left| \inf_{x \in A_i}f(\frac{d\mu_1}{d\mu_2})(x) - f(\frac{\mu_1(A_i)}{\mu_2(A_i)}) \right| < \varepsilon
\]
Hence, for any partition $\{C_i\}_{i=0}^{n}$ of $D$ finer than $\{A_i\}_{i=0}^{2^N K}$,
we obtain
\[
\sum_{i=0}^{n}\mu_2(C_i) f\left(\frac{\mu_1(C_i)}{\mu_2(C_i)}\right)
~\leq~ \sum_{i=0}^{n}\mu_2(C_i) \left(\inf_{x \in C_i}f\left(\frac{d\mu_1}{d\mu_2}\right)(x)\right) + \varepsilon.
\]
This implies $\overline{\Delta}^{f}_X(\mu_1,\mu_2)|_D \leq \Delta^{f}_X(\mu_1,\mu_2)|_D + \varepsilon$.
Since $\varepsilon > 0$ is arbitrary, we conclude $\overline{\Delta}^{f}_X(\mu_1,\mu_2)|_D \leq \Delta^{f}_X(\mu_1,\mu_2)|_D$.
\end{proof}

\begin{lemma}\label{lem:f-div:conti:3}
We have $\Delta^{f}_X(\mu_1,\mu_2)|_D = \overline{\Delta}^{f}_X(\mu_1,\mu_2)|_D$ when $\mu_1 \ll \mu_2$.
\end{lemma}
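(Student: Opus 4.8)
The plan is to lift Lemma~\ref{lem:f-div:conti:2}, which handles a bounded Radon--Nikodym derivative, to the general case by truncation, combining it with the inequality $\Delta^{f}_X(\mu_1,\mu_2)|_D \le \overline{\Delta}^{f}_X(\mu_1,\mu_2)|_D$ already supplied by Lemma~\ref{lem:f-div:conti:1}. Since $\mu_1 \ll \mu_2$ we may write $h = d\mu_1/d\mu_2$; because $\mu_1$ is a subprobability measure, $\int_X h\,d\mu_2 = \mu_1(X) \le 1$, so $h$ is $\mu_2$-a.e.\ finite. For $K \in \NN$ set $D_K = \{x \in D \mid h(x) \le K\}$, an increasing sequence of measurable sets with $\bigcup_K D_K = D$ modulo a $\mu_2$-null (hence, by absolute continuity, also $\mu_1$-null) set. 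On $D_K$ the derivative $h$ is bounded, so Lemma~\ref{lem:f-div:conti:2} gives $\Delta^{f}_X(\mu_1,\mu_2)|_{D_K} = \overline{\Delta}^{f}_X(\mu_1,\mu_2)|_{D_K}$ for every $K$. Moreover, since $f \ge 0$ (we are in the positive-weight case), monotone convergence applied to $\Delta^{f}_X(\mu_1,\mu_2)|_{D_K} = \int_{D_K} f(h)\,d\mu_2$ yields $\lim_{K} \Delta^{f}_X(\mu_1,\mu_2)|_{D_K} = \int_D f(h)\,d\mu_2 = \Delta^{f}_X(\mu_1,\mu_2)|_D$.

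It remains to show $\overline{\Delta}^{f}_X(\mu_1,\mu_2)|_D \le \lim_K \overline{\Delta}^{f}_X(\mu_1,\mu_2)|_{D_K}$; the limit exists because, for $f \ge 0$, adjoining $D\setminus D_K$ to a partition of $D_K$ only adds a nonnegative term, so $\overline{\Delta}^{f}|_{D_K}$ is nondecreasing in $K$. Fix a finite measurable partition $\{A_i\}_{i=0}^n$ of $D$. Then $\{A_i \cap D_K\}_i$ is a finite measurable partition of $D_K$, so $\sum_i \mu_2(A_i\cap D_K) f\bigl(\mu_1(A_i\cap D_K)/\mu_2(A_i\cap D_K)\bigr) \le \overline{\Delta}^{f}_X(\mu_1,\mu_2)|_{D_K}$. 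As $K \to \infty$ we have $\mu_j(A_i\cap D_K) \to \mu_j(A_i)$ for $j = 1,2$ by continuity of measures; for the indices $i$ with $\mu_2(A_i) > 0$ the corresponding summand converges to $\mu_2(A_i) f(\mu_1(A_i)/\mu_2(A_i))$ by continuity of $f$ on $[0,\infty)$ (convexity plus continuity at $0$), while for $i$ with $\mu_2(A_i) = 0$ absolute continuity forces $\mu_1(A_i) = 0 = \mu_2(A_i \cap D_K) = \mu_1(A_i\cap D_K)$, so both the summand and its limit vanish by the convention $0f(0/0) = 0$. Passing to the limit therefore gives $\sum_i \mu_2(A_i) f(\mu_1(A_i)/\mu_2(A_i)) \le \lim_K \overline{\Delta}^{f}_X(\mu_1,\mu_2)|_{D_K}$, and taking the supremum over partitions $\{A_i\}$ proves the claim. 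Chaining the estimates, $\overline{\Delta}^{f}_X(\mu_1,\mu_2)|_D \le \lim_K \overline{\Delta}^{f}|_{D_K} = \lim_K \Delta^{f}|_{D_K} = \Delta^{f}_X(\mu_1,\mu_2)|_D \le \overline{\Delta}^{f}_X(\mu_1,\mu_2)|_D$, so all are equal.

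The step I expect to be the main obstacle is the limit interchange in the second paragraph: one has to split off the degenerate blocks $\mu_2(A_i) = 0$ and treat them via absolute continuity and the $0f(0/0) = 0$ convention, while on the remaining blocks the argument genuinely relies on $f$ being continuous up to and including $0$ --- exactly the clause singled out in the definition of weight function, and also the reason the present argument is restricted to positive $f$.
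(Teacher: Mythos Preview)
Your proof is correct and follows the same overall truncation strategy as the paper: restrict to regions where the Radon--Nikodym derivative is bounded, invoke Lemma~\ref{lem:f-div:conti:2} there, and patch things together using Lemma~\ref{lem:f-div:conti:1} for the remaining inequality. The difference lies in the patching step. The paper uses the \emph{disjoint} level sets $D_n = h^{-1}[n,n+1)\cap D$ and applies Jensen's inequality (convexity of $f$) to compare a partition sum over $D$ with the refined sums over the $D_n$, getting $\overline{\Delta}^f|_D \le \sum_n \overline{\Delta}^f|_{D_n} = \sum_n \Delta^f|_{D_n} = \Delta^f|_D$ directly, with no limits. You instead use the \emph{increasing} exhaustion $D_K = h^{-1}[0,K]\cap D$ and pass to the limit inside each fixed partition sum, relying on the continuity of $f$ (including at $0$) rather than its convexity. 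Your route trades the one-line Jensen step for a short limit argument; it is slightly more elementary in that it never invokes Jensen, but it leans more heavily on the continuity clause in the definition of weight function, as you yourself flag.
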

\begin{proof}%[Lemma \ref{lem:f-div:conti:3}]
Let $D_n = \left(\inverse{\left(\frac{d\mu_1}{d\mu_2}\right)}[n,n+1)\right) \cap D$ ($n \in \mathbb{N}$).
From Jensen's inequality, we obtain for any partition $\{A_i\}_{i = 0}^m$ of $D$,
\begin{align*}
\sum_{i = 0}^m \mu_2(A_i) f\left( \frac{\mu_1(A_i)}{\mu_2(A_i)}\right)
&= \sum_{i = 0}^m (\sum_{n \in \mathbb{N}} \mu_2(D_n \cap A_i)) f\left( \frac{\sum_{n \in \mathbb{N}} \mu_1(D_n \cap A_i)}{\sum_{n \in \mathbb{N}} \mu_2(D_n \cap A_i)}\right)\\
&\leq \sum_{i = 0}^m \sum_{n \in \mathbb{N}} \mu_2(D_n \cap A_i) f\left(\frac{\mu_1(D_n \cap A_i)}{\mu_2(D_n \cap A_i)}\right)\\
& =  \sum_{n \in \mathbb{N}} \sum_{i = 0}^m \mu_2(D_n \cap A_i) f\left(\frac{\mu_1(D_n \cap A_i)}{\mu_2(D_n \cap A_i)}\right)
\end{align*}
This implies $\overline{\Delta}^{f}_X(\mu_1,\mu_2)|_D \leq \sum_{n=0}^{\infty}\overline{\Delta}^{f}_X(\mu_1,\mu_2)|_{D_n}$ for each $n \in \mathbb{N}$.

Since the Radon-Nikodym derivative $\frac{d\mu_1}{d\mu_2}$ is bounded on each $D_n$, by Lemmas \ref{lem:f-div:conti:1} and \ref{lem:f-div:conti:2},
$\Delta^{f}_X(\mu_1,\mu_2)|_{D_n} = \overline{\Delta}^{f}_X(\mu_1,\mu_2)|_{D_n}$ for each $n \in \mathbb{N}$.
Hence,
\[
\overline{\Delta}^{f}_X(\mu_1,\mu_2)
\leq \sum_{n=0}^{\infty}\overline{\Delta}^{f}_X(\mu_1,\mu_2)|_{D_n}
=\sum_{n=0}^{\infty}\Delta^{f}_X(\mu_1,\mu_2)|_{D_n}
=\Delta^{f}_X(\mu_1,\mu_2)
\leq \overline{\Delta}^{f}_X(\mu_1,\mu_2).
\]
This implies $\Delta^{f}_X(\mu_1,\mu_2) = \overline{\Delta}^{f}_X(\mu_1,\mu_2)$.
\end{proof}

\begin{proof}[Theorem \ref{f-divergence:continuity}, Positive Case]
We show that for any positive weight function $f$, the continuity $\overline{\Delta}^{f}_X(\mu_1,\mu_2) = \Delta^{f}_X(\mu_1,\mu_2)$ holds.
Let $(\mu_1^\bullet, \mu^\perp_1)$ be the Lebesgue decomposition of $\mu_1$ with respect to $\mu_2$.
Since $(\mu_1^\bullet, \mu^\perp_1)$ is the Lebesgue decomposition of $\mu_1$ with respect to $\mu_2$, there is $A \in \Sigma_X$ such that $\mu_2(E) = \mu_2(E \setminus A)$ and $\mu_1^\perp(E) = \mu_1^\perp(E \cap A)$ for any $E \in \Sigma_X$.
The subset $A$ also satisfies $\mu_1(E \setminus A) = \mu_1^\bullet(E \setminus A)$ for any $E \in \Sigma_X$.
We then obtain 
\begin{align*}
\Delta^{f}_X(\mu_1,\mu_2)
&=
\Delta^{f}_X(\mu_1,\mu_2)|_{X \setminus A} + \Delta^{f}_X(\mu_1,\mu_2)|_{A}
= \Delta^{f}_X(\mu_1^\bullet,\mu_2)|_{X \setminus A} + \Delta^{f}_X(\mu_1^\perp,\mu_2)|_{A}\\
&= \overline{\Delta}^{f}_X(\mu_1^\bullet,\mu_2)|_{X\setminus A} + \overline{\Delta}^{f}_X(\mu_1^\perp,\mu_2)|_{A}
= \overline{\Delta}^{f}_X(\mu_1,\mu_2)|_{X\setminus A} + \overline{\Delta}^{f}_X(\mu_1,\mu_2)|_{A} = \overline{\Delta}^{f}_X(\mu_1,\mu_2)
\end{align*}
From Lemma \ref{lem:f-div:conti:3}, $\Delta^{f}_X(\mu_1^\bullet,\mu_2)|_{X \setminus A} = \overline{\Delta}^{f}_X(\mu_1^\bullet,\mu_2)|_{X\setminus A}$ holds, and using the dual $f^\ast$ we have
\[
\Delta^{f}_X(\mu_1^\perp,\mu_2)|_{A}
= \int_{A} \frac{d\mu_2}{d\mu}f\left(\frac{d\mu_1^\perp/d\mu}{d\mu_2/d\mu}\right)d\mu
= \int_{A} f^\ast(0) \frac{d\mu_1^\perp}{d\mu} d\mu
= f^\ast(0)\mu_1(A)
= \overline{\Delta}^{f}_X(\mu_1^\perp,\mu_2)|_{A}.
\]
\end{proof}

\begin{proof}[Theorem \ref{f-divergence:continuity}, General case]
We show the continuity of $\Delta^{f}$ for arbitrary weight function $f$.
Let $\alpha,\beta \colon \mathbb{R}_{\geq 0} \to \mathbb{R}$ the functions be defined by $\alpha(t) = a$ and $\beta(t) = bt$ respectively where $a, b \geq 0$.
Since $f$ is convex, there are $\alpha$ and $\beta$ that makes $f+\alpha+\beta$ positive.
Hence,
\begin{align*}
\overline{\Delta}^{f}_X(\mu_1,\mu_2) + a\mu_2(X) + b\mu_1(X)
& = \overline{\Delta}^{f+\alpha+\beta}_X(\mu_1,\mu_2) \\
& = \Delta^{f+\alpha+\beta}_X(\mu_1,\mu_2) \\
& = \Delta^{f}_X(\mu_1,\mu_2) + a\mu_2(X) + b\mu_1(X).
\end{align*}
This completes the proof.
\end{proof}

\section{Omitted Structures of the Program Logic}
\label{sec:logic_full}
\subsection{Typing Rules for Expressions and Programs}
\label{sec:typing_pWHILE}
Before we give the semantics of programs, we first give a type system for
expressions, distributions, and programs. A typing context is a finite set $\Gamma =
\{x_1\colon\tau_1,x_2\colon\tau_2,\ldots,x_n\colon\tau_n \}$ of pairs of a
variable and a value type such that each variable occurs only once in the
context. The type system is largely standard, with two kinds of judgments:
$\Gamma \vdash^t e \colon \tau$ states that expression $e$ has type $\tau$ in
context $\Gamma$, while $\Gamma \vdash^p \nu : \tau$ states that $\nu$ is a
distribution over $\tau$ in context $\Gamma$.  The third judgment $\Gamma \vdash
c$ states that program $c$ is well-typed in context $\Gamma$, e.g., all guards
are booleans, assignments are well-typed, etc. The expression typing rules are
as follows:
{\small
\[
	\AxiomC{$x \colon \tau \in \Gamma$}
	\UnaryInfC{$\Gamma \vdash^t x \colon \tau$}
	\DisplayProof
\quad
	\AxiomC{$\Gamma\vdash^t e_1\colon\tau$ \quad $\Gamma\vdash^t e_2\colon\tau$}
	\UnaryInfC{$\Gamma \vdash^t e_1 \oplus e_2 \colon \tau$}
	\DisplayProof
\quad
	\AxiomC{$\Gamma\vdash^t e_1\colon\tau$ \quad $\Gamma\vdash^t e_2\colon\tau$}
	\UnaryInfC{$\Gamma \vdash^t e_1 \bowtie e_2 \colon \mathtt{bool}$}
	\DisplayProof
\quad
 	\AxiomC{$\Gamma\vdash^t e_1\colon\tau^d$ \quad $\Gamma\vdash^t e_2\colon\mathtt{int}$}
 	\UnaryInfC{$\Gamma \vdash^t e_1[e_2]\colon \tau$}
 	\DisplayProof
\]
\[
\quad
	\AxiomC{$\Gamma \vdash^t e\colon \mathtt{real}$}
	\UnaryInfC{$\Gamma \vdash^p \mathtt{Bern}(e)\colon \mathtt{bool}$}
	\DisplayProof
\quad
	\AxiomC{$\Gamma \vdash^t e_1 \colon \mathtt{real} \quad \Gamma \vdash^t e_2 \colon \mathtt{real}$}
	\UnaryInfC{$\Gamma \vdash^p \mathtt{Lap}(e_1,e_2)\colon \mathtt{real}$}
	\DisplayProof
\quad
	\AxiomC{$\Gamma \vdash^t e_1 \colon \mathtt{real} \quad \Gamma \vdash^t e_2 \colon \mathtt{real}$}
	\UnaryInfC{$\Gamma \vdash^p \mathtt{Gauss}(e_1,e_2)\colon \mathtt{real}$}
	\DisplayProof
\]
\[
	\AxiomC{$\Gamma \vdash^t e\colon \tau$}
	\UnaryInfC{$\Gamma \vdash^p \mathtt{Dirac}(e)\colon \tau$}
	\DisplayProof
\quad
	\AxiomC{\hspace{1em}}
	\UnaryInfC{$\Gamma \vdash \mathtt{skip}$}
	\DisplayProof
\quad
	\AxiomC{
	$\Gamma, x\colon\tau \vdash^p \nu \colon \tau$}
	\UnaryInfC{$\Gamma, x\colon\tau \vdash x \xleftarrow{\$} \nu$}
	\DisplayProof
\quad
	\AxiomC{$\Gamma \vdash c_1 $\quad$\Gamma \vdash c_2$}
	\UnaryInfC{$\Gamma \vdash c_1 ; c_2$}
	\DisplayProof
\]
\[
	\AxiomC{$\Gamma \vdash^t b\colon\mathtt{bool}$\quad$\Gamma \vdash c_1$\quad$\Gamma \vdash c_2$}
	\UnaryInfC{$\Gamma \vdash \mathtt{if}~b~\mathtt{then}~c_1~\mathtt{else}~c_2$}
	\DisplayProof
\quad
	\AxiomC{$\Gamma \vdash^t b\colon\mathtt{bool}$\quad$\Gamma \vdash c$}
	\UnaryInfC{$\Gamma \vdash \mathtt{while}~b~\mathtt{do}~c$}
	\DisplayProof
\]
}

\subsubsection{Forming Relation Expressions}
The judgment $\Gamma \vdash^R \Phi$ states that the relation expression $\Phi$
is well-formed in context $\Gamma$.

%\jh{Should the conclusion of the first rule be $\bowtie$ or $\leq$?}
{\small
\[
	\AxiomC{$\Gamma \vdash^t e_1 \bowtie e_2 \colon \mathtt{bool}$}
	\UnaryInfC{$\Gamma \vdash^R e_1 \lrangle{1} \bowtie e_2 \lrangle{2}$}
	\DisplayProof
\quad
	\AxiomC{$\Gamma \vdash^t (e_1 \oplus_1 e_2) \bowtie (e_3 \oplus_2 e_4) \colon \mathtt{bool}$}
	\UnaryInfC{$\Gamma \vdash^R (e_1 \lrangle{1} \oplus_1 e_2 \lrangle{2}) \bowtie (e_3 \lrangle{1} \oplus_2 e_4 \lrangle{2})$}
	\DisplayProof
\]
\[
	\AxiomC{$\Gamma \vdash^R \Phi$\quad$\Gamma \vdash^R \Psi$}
	\UnaryInfC{$\Gamma \vdash^R \Phi \wedge \Psi$}
	\DisplayProof
\quad
	\AxiomC{$\Gamma \vdash^R \Phi$\quad$\Gamma \vdash^R \Psi$}
	\UnaryInfC{$\Gamma \vdash^R \Phi \vee \Psi$}
	\DisplayProof
\quad
	\AxiomC{$\Gamma \vdash^R \Phi$}
	\UnaryInfC{$\Gamma \vdash^R \neg \Phi$}
	\DisplayProof
\]
}
%\paragraph{Implications of Relations.}
%We define an implication $\Gamma \vdash^I \Phi \implies \Psi$ of relations $\Phi$ and $\Psi$ when
%$\Gamma \vdash^R \Phi$, $\Gamma \vdash^R \Psi$, and  the logical implication $(\Phi \implies \Psi)$
%forms a tautology under the typing context $\Gamma$.
%For example, we have the following inclusion (assume $\Gamma \vdash^t x \colon \mathtt{real}$):
%\[
%\Gamma \vdash^I ((x\lrangle{1} \leq x\lrangle{2}) \wedge (x\lrangle{1} \geq x\lrangle{2})) \implies (x\lrangle{1} = x\lrangle{2}).
%\]

\subsubsection{Basic proof rules}
The basic proof rules are given in
Figure~\ref{fig:app-basic-rules}.
\begin{figure*}
%{\small
\[
\Gamma \vdash x_1 \leftarrow e_1 \sim^{\mathbf{\Delta}}_{1_A,0} x_2 \leftarrow e_2 \colon \Phi\{e_1\lrangle{1}, e_2\lrangle{2} /x_1\lrangle{1},x_2\lrangle{2}\} \implies \Phi \quad \text{[assn]}
\]
\[	\AxiomC{$
		\Gamma \vdash c_1 \sim^{\mathbf{\Delta}}_{\alpha,\delta} c_1' \colon \Phi \implies \Phi'
		\quad
		\Gamma \vdash c_2 \sim^{\mathbf{\Delta}}_{\beta,\gamma} c_2' \colon \Phi' \implies \Psi
		$}
	\RightLabel{{[seq]}}
	\UnaryInfC{$\Gamma \vdash c_1;c_2 \sim^{\mathbf{\Delta}}_{\alpha\beta,\delta+\gamma} c_1';c_2' \colon \Phi \implies \Psi$}
	\DisplayProof
\]
\[
\Gamma \vdash \mathtt{skip} \sim^{\mathbf{\Delta}}_{1_A,0} \mathtt{skip} \colon \Phi \implies \Phi \quad \text{[skip]}
\]
\[
\AxiomC{$
	\begin{array}{l@{}}
	\Gamma \vdash^I \Phi \implies b\lrangle{1}  = b' \lrangle{2} \quad\\
	\Gamma \vdash c_1 \sim^{\mathbf{\Delta}}_{\alpha,\delta} c_1' \colon \Phi\wedge b\lrangle{1} \implies \Psi \quad
	\Gamma \vdash c_2 \sim^{\mathbf{\Delta}}_{\alpha,\delta} c_2' \colon \Phi\wedge \neg b\lrangle{1} \implies \Psi
\end{array}
$}
\RightLabel{{[cond]}}
\UnaryInfC{$
	\vdash \mathtt{if}~b~\mathtt{then}~c_1~\mathtt{else}~c_2 \sim^{\mathbf{\Delta}}_{\alpha,\delta} \mathtt{if}~b'~\mathtt{then}~c_1'~\mathtt{else}~c_2' \colon \Phi\implies \Psi
$}
\DisplayProof
\]
\[
\AxiomC{$
	\begin{array}{l@{}}
		{\Gamma \vdash^t e \colon \mathtt{int}}\quad 
		{\Gamma \vdash^I \Theta \implies \Theta \wedge (b_1\lrangle{1}  = b_2\lrangle{2})}\quad
		{\Gamma \vdash^I \Theta \wedge (e \lrangle{1} \geq n )\implies \Theta \wedge \neg b_1\lrangle{1}}\\
		\forall{0 \leq k \leq n-1}.~{\Gamma \vdash c_1 \sim^{\mathbf{\Delta}}_{\alpha_k,\delta_k} c_2 \colon \Theta \wedge (e\lrangle{1} = k) \wedge (e \lrangle{1} \leq n) \implies \Theta \wedge (e\lrangle{1} > k)}\\
	\end{array}
$}
\RightLabel{{[while]}}
\UnaryInfC{$
	\begin{array}{l@{}}
		\Gamma \vdash \mathtt{while}~{b_1}\mathtt{do}~{c_1}
    \sim^{\mathbf{\Delta}}_{\prod_{k=0}^{n-1} \alpha_k, \sum_{k=0}^{n-1}
    \delta_k} \mathtt{while}~{b_2}\mathtt{do}~{c_2} \colon~\Theta\wedge
    b_1\lrangle{1}
    \wedge (e\lrangle{1} \geq 0) \implies \Theta \wedge \neg b_1\lrangle{1}
	\end{array}
$}
	\DisplayProof
\]
\[
	\AxiomC{$
		\begin{array}{l@{}}
			\Gamma\vdash c_1 \sim^{\mathbf{\Delta}}_{\alpha, \delta} c_2 \colon \Phi_1 \implies \Psi
			\quad
			\Gamma\vdash c_1 \sim^{\mathbf{\Delta}}_{\alpha, \delta} c_2 \colon \Phi_2 \implies \Psi
		\end{array}
	$}
	\RightLabel{{[case]}}
	\UnaryInfC{$
		\Gamma\vdash c_1 \sim^{\mathbf{\Delta}}_{\alpha, \delta} c_2 \colon \Phi_1 \vee \Phi_2 \implies \Psi
	$}
	\DisplayProof
\]
\[
	\AxiomC{$
			~~{\Gamma\vdash^I \Phi' \implies \Phi}~~{\Gamma\vdash^I \Psi \implies \Psi'}~~
			{\Gamma\vdash c_1 \sim^{\mathbf{\Delta}}_{\alpha, \delta} c_2 \colon \Phi \implies \Psi}~~{\alpha \leq \beta}~~{\delta \leq \gamma}
	$}
	\RightLabel{{[weak]}}
	\UnaryInfC{$
		\Gamma \vdash c_1 \sim^{\mathbf{\Delta}}_{\beta, \gamma} c_2 \colon \Phi' \implies \Psi'
	$}
	\DisplayProof
\]
%}
\caption{Basic rules.}
\label{fig:app-basic-rules}
\end{figure*}

\subsection{mechanism rules}
Figure \ref{figure:mechanism:rules} is the list of mechanism rules in span-apRHL.
\begin{figure*}
%\tetsuya{Do we need to send Bernoulli rules to appendix? }
\begin{align*}
\lefteqn{\Gamma \vdash x_1 \xleftarrow{\$} \mathtt{Bern}(e_1) \sim^{\mathtt{DP}}_{\log\max(p,1-p)-\log\min(p,1-p),0} x_2 \xleftarrow{\$} \mathtt{Bern}(e_2) \colon}\\
&\qquad\qquad\qquad\qquad\qquad((e_1\lrangle{1} = p)\wedge(1 - e_1\lrangle{1} = e_2\lrangle{2}) \implies (x_1\lrangle{1} = x_2\lrangle{2}) &\text{[DP-Bern]}\\
\lefteqn{\Gamma \vdash x_1 \xleftarrow{\$} \mathtt{Bern}(e_1) \sim^{\mathtt{DP}}_{0,0} x_2 \xleftarrow{\$} \mathtt{Bern}(e_2) \colon}\\
&\qquad\qquad\qquad\qquad\qquad(e_1\lrangle{1} = e_2\lrangle{2}) \implies (x_1\lrangle{1} = x_2\lrangle{2}) &\text{[DP-Bern-Eq]}\\
\lefteqn{\Gamma \vdash x_1 \xleftarrow{\$} \mathtt{Bern}(e_1) \sim^{\alpha-\mathtt{RDP}}_{\frac{1}{\alpha - 1}((1-p)^{1-\alpha}p^\alpha + p^{1-\alpha}(1-p)^\alpha)} x_2 \xleftarrow{\$} \mathtt{Bern}(e_2) \colon}\\
&\qquad\qquad\qquad\qquad\qquad(e_1\lrangle{1} = p)\wedge(1 - e_1\lrangle{1} = e_2\lrangle{2}) \implies (x_1\lrangle{1} = x_2\lrangle{2}) &\text{[RDP-Bern]}\\
\lefteqn{\Gamma \vdash x_1 \xleftarrow{\$} \mathtt{Bern}(e_1) \sim^{\alpha-\mathtt{RDP}}_{0} x_2 \xleftarrow{\$} \mathtt{Bern}(e_2) \colon}\\
&\qquad\qquad\qquad\qquad\qquad(e_1\lrangle{1} = e_2\lrangle{2}) \implies (x_1\lrangle{1} = x_2\lrangle{2}) &\text{[RDP-Bern-Eq]}\\
\lefteqn{\Gamma \vdash x_1 \xleftarrow{\$} \mathtt{Bern}(e_1) \sim^{\mathtt{zCDP}}_{\log\max(p,1-p)-\log\min(p,1-p),0} x_2 \xleftarrow{\$} \mathtt{Bern}(e_2) \colon}\\
&\qquad\qquad\qquad\qquad\qquad(e_1\lrangle{1} = p)\wedge(1 - e_1\lrangle{1} = e_2\lrangle{2}) \implies (x_1\lrangle{1} = x_2\lrangle{2}) &\text{[zCDP-Bern]}\\
\lefteqn{\Gamma \vdash x_1 \xleftarrow{\$} \mathtt{Bern}(e_1) \sim^{\mathtt{zCDP}}_{0,0} x_2 \xleftarrow{\$} \mathtt{Bern}(e_2) \colon}\\
&\qquad\qquad\qquad\qquad\qquad(e_1\lrangle{1} = e_2\lrangle{2}) \implies (x_1\lrangle{1} = x_2\lrangle{2}) &\text{[zCDP-Bern-Eq]}\\
\lefteqn{\Gamma \vdash x_1 \xleftarrow{\$} \mathtt{Lap}(e_1,\lambda) \sim^{\mathtt{DP}}_{r/\lambda,0} x_2 \xleftarrow{\$} \mathtt{Lap}(e_2,\lambda) \colon}\\
& \qquad\qquad\qquad\qquad\qquad (|e_1\lrangle{1} - e_2\lrangle{2}| \leq r) \implies (x_1\lrangle{1} = x_2\lrangle{2})& \text{[DP-Lap]}\\
\lefteqn{\Gamma \vdash x_1 \xleftarrow{\$} \mathtt{Lap}(e_1,\lambda) \sim^{\alpha-\mathtt{RDP}}_{\frac{1}{\alpha - 1}\log\{\frac{\alpha}{2\alpha - 1}e^{(\alpha-1)/\lambda} + \frac{\alpha - 1}{2\alpha - 1}e^{-\alpha/\lambda}\}} x_2 \xleftarrow{\$} \mathtt{Lap}(e_2,\lambda) \colon}\\
& \qquad\qquad\qquad\qquad\qquad (|e_1\lrangle{1} - e_2\lrangle{2}| \leq 1) \implies (x_1\lrangle{1} = x_2\lrangle{2})& \text{[RDP-Lap]}\\
\lefteqn{\Gamma \vdash x_1 \xleftarrow{\$} \mathtt{Lap}(e_1,\lambda) \sim^{\mathtt{zCDP}}_{r/\lambda,0} x_2 \xleftarrow{\$} \mathtt{Lap}(e_2,\lambda) \colon}\\
& \qquad\qquad\qquad\qquad\qquad (|e_1\lrangle{1} - e_2\lrangle{2}| \leq r) \implies (x_1\lrangle{1} = x_2\lrangle{2})& \text{[zCDP-Lap]}\\
\lefteqn{\Gamma \vdash x_1 \xleftarrow{\$} \mathtt{Gauss}(e_1,\sigma^2) \sim^{\alpha-\mathtt{RDP}}_{\alpha r^2/2\sigma^2} x_2 \xleftarrow{\$} \mathtt{Gauss}(e_2,\sigma^2)}\\
&\qquad\qquad\qquad\qquad\qquad (|e_1\lrangle{1} - e_2\lrangle{2}| \leq r) \implies (x_1\lrangle{1} = x_2\lrangle{2}) & \text{[RDP-G]}\\
\lefteqn{\Gamma \vdash x_1 \xleftarrow{\$} \mathtt{Gauss}(e_1,\sigma^2) \sim^{\mathtt{zCDP}}_{0, r^2/2\sigma^2} x_2 \xleftarrow{\$} \mathtt{Gauss}(e_2,\sigma^2)}\\
&\qquad\qquad\qquad\qquad\qquad (|e_1\lrangle{1} - e_2\lrangle{2}| \leq r) \implies (x_1\lrangle{1} = x_2\lrangle{2}) & \text{[zCDP-G]}\\
\lefteqn{\Gamma \vdash x_1 \xleftarrow{\$} \mathtt{Gauss}(e_1,\sigma^2) \sim^{\mathtt{tCDP}}_{0, r^2/2\sigma^2} x_2 \xleftarrow{\$} \mathtt{Gauss}(e_2,\sigma^2)}\\
&\qquad\qquad\qquad\qquad\qquad (|e_1\lrangle{1} - e_2\lrangle{2}| \leq r) \implies (x_1\lrangle{1} = x_2\lrangle{2}) & \text{[tCDP-G]}\\
&\AxiomC{
	$
		\begin{array}{l@{}}
		\exists{c >\frac{1+\sqrt{3}}{2}}.~
(2\log(0.66/\delta) \leq c^2)\wedge (\frac{c r}{\varepsilon} \leq \sigma)
		\end{array}
	$}
	\UnaryInfC{$
	\begin{array}{l@{}}
		\Gamma \vdash x_1 \xleftarrow{\$} \mathtt{Gauss}(e_1,\sigma^2) \sim^{\mathtt{DP}}_{\varepsilon,~\delta} x_2 \xleftarrow{\$} \mathtt{Gauss}(e_2,\sigma^2) \colon\\
		\qquad\qquad\qquad\qquad\qquad\qquad
		 (|e_1\lrangle{1} - e_2\lrangle{2}| \leq r) \implies (x_1\lrangle{1} = x_2\lrangle{2})
	\end{array}
	$}
	\DisplayProof
 & \text{[DP-G]}\\
	&\AxiomC{$
	1 < 1/\sqrt{\rho} \leq A/\delta
	$}
	\UnaryInfC{$
	\begin{array}{l@{}}
		\Gamma \vdash x_1 \xleftarrow{\$} e_1 + A \cdot\mathtt{arsinh}\left(\frac{1}{A}\mathtt{Gauss}(0,\delta^2/2\rho)\right)\\ \qquad\qquad\qquad\qquad\sim^{\mathtt{tCDP}}_{16\rho,A/8\delta} x_2 \xleftarrow{\$} e_2 + A \mathtt{arsinh}\left(\frac{1}{A}\mathtt{Gauss}(0,\delta^2/2\rho)\right) \colon\\
		\qquad\qquad\qquad\qquad\qquad\qquad
		 (|e_1\lrangle{1} - e_2\lrangle{2}| \leq \delta) \implies (x_1\lrangle{1} = x_2\lrangle{2})
	\end{array}
	$}
	\DisplayProof
 & \text{[tCDP-SinhG]}
\end{align*}
\caption{Rules for basic mechanisms for DP, RDP, zCDP, and tCDP in span-apRHL.}
\label{figure:mechanism:rules}
\end{figure*}
\subsection{Denotational Semantics of pWHILE}
To prove the soundness of span-apRHL we  interpret pWHILE in $\Meas$
using the sub-Giry monad $\mathcal{G}$. First, we interpret
the value types $\mathtt{bool}$, $\mathtt{int}$, and $\mathtt{real}$ as the
finite discrete space $\mathbb{B} = 1 + 1 = \{\mathtt{true},\mathtt{false}\}$,
the countable discrete space $\mathbb{Z} = \{0,1,\ldots \}$, and the Lebesgue
measurable space $\mathbb{R}$ respectively.  We interpret $\tau^d$ as the
product $\interpret{\tau}^d$ and we interpret a typing context $\Gamma
= \{ x_1\colon\tau_1,x_2\colon\tau_2,\ldots,x_n\colon\tau_n \}$ as a product
$\interpret{\tau_1} \times \interpret{\tau_2} \times \cdots \times
\interpret{\tau_n}$.

To give a semantics to expressions, distribution expressions, and commands, we
interpret their associated typing/well-formedness judgments in a context
$\Gamma$.  We interpret an expression judgment $\Gamma \vdash^t e \colon \tau$
as a measurable function $\interpret{\Gamma \vdash^t e \colon \tau} \colon
\interpret{\Gamma} \to \interpret{\tau}$; for instance, the variable case
$\Gamma \vdash^t x\colon \tau$ is interpreted as the projection $\pi_x \colon
\interpret{\Gamma} \to \interpret{\tau}$.

We interpret a reference $\interpret{\Gamma \vdash^t e_1[e_2]\colon
\tau}$ of an element by $\mathrm{ref}\lrangle{\tau,n}
(\interpret{\Gamma \vdash^t e_1},\interpret{\Gamma \vdash^t e_2})$ where
$\mathrm{ref}\lrangle{\tau,n} \colon \interpret{\tau}^n \times \mathbb{Z} \to
\interpret{\tau}$ is defined by
$\mathrm{ref}\lrangle{\tau,n}((x_0,\ldots,x_{n-1}),k) =
x_{\min(\max(k,0),n)}$.\footnote{%
We can describe it categorically by using products and coproducts in $\Meas$.}

All operators ${\oplus}$ and comparisons ${\bowtie}$ are interpreted
as measurable functions ${\oplus} \colon \interpret{\tau} \times
\interpret{\tau} \to \interpret{\tau}$ and ${\bowtie} \colon \interpret{\tau}
\times \interpret{\tau} \to \interpret{\mathtt{bool}}$ respectively.
Likewise, we interpret a distribution expression judgment $\Gamma \vdash^p \nu
\colon \tau$ as a measurable function $\interpret{\Gamma \vdash^p \nu \colon
\tau} \colon \interpret{\Gamma} \to \mathcal{G}\interpret{\tau}$ as follows:
\begin{align*}
\interpret{\Gamma \vdash^p \mathtt{Dirac}(e) \colon \tau}
& = \eta_{\interpret{\tau}} \circ \interpret{\Gamma \vdash^t e \colon \tau},\\
\interpret{\Gamma \vdash^p \mathtt{Bern}(e) \colon \mathtt{bool}}
& = \mathtt{Bern}(\interpret{\Gamma \vdash^t e \colon \mathtt{real}}),\\
\interpret{\Gamma \vdash^p \mathtt{Lap}(e_1,e_2) \colon \mathtt{real}}
& = \mathtt{Lap}(\interpret{\Gamma \vdash^t e_1 \colon \mathtt{real}},\interpret{\Gamma \vdash^t e_2 \colon \mathtt{real}}),\\
\interpret{\Gamma \vdash^p \mathtt{Gauss}(e_1,e_2) \colon \mathtt{real}}
& = \mathcal{N}(\interpret{\Gamma \vdash^t e_1 \colon \mathtt{real}},\interpret{\Gamma \vdash^t e_2 \colon \mathtt{real}}).
\end{align*}
Finally, we interpret a command judgment $\Gamma \vdash c$ inductively as a measurable
function $\interpret{\Gamma \vdash c } \colon \interpret{\Gamma} \to
\mathcal{G}\interpret{\Gamma}$ by
\begin{align*}
\interpret{\Gamma\vdash x \xleftarrow{\$} \nu}
	&= \mathcal{G}(\mathrm{rw}\lrangle{\Gamma \mid x\colon\tau}) \circ\mathrm{st}_{\interpret{\Gamma},\interpret{\tau}} \circ\lrangle{\mathrm{id}_{\interpret{\Gamma}},\interpret{\nu}},\\
\interpret{\Gamma\vdash c_1 ; c_2}
	&= {\interpret{\Gamma\vdash c_2}}^\sharp \circ \interpret{\Gamma\vdash c_1},\\
\interpret{\Gamma\vdash\mathtt{skip}}
	& = \eta_{\interpret{\Gamma}}\\
\interpret{\Gamma \vdash \mathtt{if}~b~\mathtt{then}~c_1~\mathtt{else}~c_2}
	&
	=\left[\interpret{\Gamma\vdash c_1},\interpret{\Gamma\vdash c_2} \right]
	\circ \mathrm{br}\lrangle{\Gamma}
	\circ\lrangle{\interpret{\Gamma\vdash b} ,\mathrm{id}_{\interpret{\Gamma}}}
\end{align*}
Here, $\mathrm{rw}\lrangle{\Gamma \mid x\colon\tau} \colon \interpret{\Gamma}
\times \interpret{x\colon\tau} \to \interpret{\Gamma}$ (${x\colon\tau} \in
\Gamma $) is an overwriting operation of memories mapping $((a_1,\ldots,a_k,\ldots,
a_n),b_k) \mapsto (a_1,\ldots,b_k,\ldots, a_n)$; this is given by the
Cartesian products in $\Meas$.  The function $\mathrm{br}\lrangle{\Gamma} \colon
2 \times \interpret{\Gamma} \to \interpret{\Gamma} + \interpret{\Gamma}$ comes
from the canonical isomorphism $2 \times \interpret{\Gamma} \cong
\interpret{\Gamma} + \interpret{\Gamma}$ from the distributivity of $\Meas$.

To interpret loops, we introduce the dummy ``abort'' command $\Gamma \vdash
\mathtt{null}$ that is interpreted by the null/zero measure $\interpret{\Gamma
\vdash \mathtt{null}} = 0$, and the following commands corresponding to the
finite unrollings of the loop:
\[
[\mathtt{while}~b~\mathtt{do}~c]_n
=
\begin{cases} 
\mathtt{if}~b~\mathtt{then}~\mathtt{null}~\mathtt{else}~\mathtt{skip}, & \text{ if } n=0
\\
\mathtt{if}~b~\mathtt{then}~c;[\mathtt{while}~b~\mathtt{do}~c]_{k}, & \text{ if } n=k+1
\end{cases}
\]
We then interpret loops as the supremum of interpretations of finite
executions:\footnote{%
  This is well-defined, since the family $\{ \interpret{\Gamma\vdash
  [\mathtt{while}~e~\mathtt{do}~c]_n} \}_{n \in \mathbb{N}}$ is an
  $\omega$-chain with respect to the $\omega\mathbf{CPO}_\bot$-enrichment
$\sqsubseteq$ of $\Meas_{\mathcal{G}}$.}
\[
\interpret{\Gamma \vdash \mathtt{while}~b~\mathtt{do}~c}
	= \sup_{ n\in\mathbb{N}} \interpret{\Gamma\vdash [\mathtt{while}~e~\mathtt{do}~c]_n}.
\]

\subsection{Proof of Soundness of the Program Logic}

\begin{lemma}
The [assn] rule is sound.
\end{lemma}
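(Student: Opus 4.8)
The plan is to unfold the denotational semantics of the two assignments and then exhibit an explicit measurable witness function. Recall that $x_i \leftarrow e_i$ abbreviates $x_i \xleftarrow{\$} \mathtt{Dirac}(e_i)$, and that the denotation of $\mathtt{Dirac}(e_i)$ is $\eta \circ \interpret{e_i}$. Since the strength $\mathrm{st}$ of $\mathcal{G}$ carries a Dirac measure to a Dirac measure, the interpretation of the assignment command collapses to a deterministic update: $\interpret{\Gamma \vdash x_i \leftarrow e_i} = \eta_{\interpret{\Gamma}} \circ u_i$, where $u_i = \mathrm{rw}\lrangle{\Gamma \mid x_i \colon \tau_i} \circ \langle \mathrm{id}_{\interpret{\Gamma}}, \interpret{e_i} \rangle \colon \interpret{\Gamma} \to \interpret{\Gamma}$ is the overwrite map. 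This $u_i$ is measurable, being built from the Cartesian structure of $\Meas$ and the measurable denotation of $e_i$.

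Next I would record a substitution lemma for relational assertions, proved by a routine induction on the structure of $\Phi$: for every $(m_1, m_2) \in \interpret{\Gamma} \times \interpret{\Gamma}$,
\[
(m_1, m_2) \in \Rinterpret{\Gamma \vdash^R \Phi\{e_1\lrangle{1}, e_2\lrangle{2}/x_1\lrangle{1}, x_2\lrangle{2}\}} \iff (u_1(m_1), u_2(m_2)) \in \Rinterpret{\Gamma \vdash^R \Phi}.
\]
Equivalently, $\Rinterpret{\Phi\{\cdots\}} = (u_1 \times u_2)^{-1}(\Rinterpret{\Phi})$, so that $u_1 \times u_2$ restricts and corestricts to a measurable map $w \colon \Rinterpret{\Phi\{\cdots\}} \to \Rinterpret{\Phi}$ between the corresponding subspaces of $\interpret{\Gamma} \times \interpret{\Gamma}$.

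The witness function sends a related pair to a pair of equal Dirac measures. Define $l = \langle \eta_{\Rinterpret{\Phi}} \circ w, \eta_{\Rinterpret{\Phi}} \circ w \rangle$, i.e. $l(m_1, m_2) = (\mathbf{d}_z, \mathbf{d}_z)$ with $z = (u_1(m_1), u_2(m_2))$, which is a point of the apex $\Rinterpret{\Phi}$ exactly by the substitution lemma. Then $l$ is measurable as a composite of measurable maps, and it factors through $W(\interpret{\Gamma \vdash^R \Phi}, \mathbf{\Delta}, 1_A, 0)$: by reflexivity of $\mathbf{\Delta}$ we have $\Delta^{1_A}_{\Rinterpret{\Phi}}(\mathbf{d}_z, \mathbf{d}_z) \leq 0$, hence $(\mathbf{d}_z, \mathbf{d}_z) \in W(\interpret{\Gamma \vdash^R \Phi}, \mathbf{\Delta}, 1_A, 0)$, and the corestriction remains measurable because $W$ carries the subspace $\sigma$-algebra of $\mathcal{G}\Rinterpret{\Phi} \times \mathcal{G}\Rinterpret{\Phi}$.

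It then remains to check that $(\interpret{\Gamma \vdash x_1 \leftarrow e_1}, \interpret{\Gamma \vdash x_2 \leftarrow e_2}, l)$ is a $\Span(\Meas)$-morphism $\interpret{\Gamma \vdash^R \Phi\{\cdots\}} \to \interpret{\Gamma \vdash^R \Phi}^{\sharp(\mathbf{\Delta}, 1_A, 0)}$, i.e. that it commutes with the legs of the two spans. Unfolding the legs of the target (namely $\mathcal{G}(\pi_i|_{\Rinterpret{\Phi}}) \circ \pi_i$) and using $\mathcal{G}g(\mathbf{d}_z) = \mathbf{d}_{g(z)}$, the left condition reduces, for $(m_1, m_2) \in \Rinterpret{\Phi\{\cdots\}}$, to $\eta_{\interpret{\Gamma}}(u_1(m_1)) = \mathbf{d}_{\pi_1(z)}$, and both sides equal $\mathbf{d}_{u_1(m_1)}$; the right condition is symmetric. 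This produces the required witness function, so the judgment is valid, which is exactly soundness of [assn]. The only step requiring care is the substitution lemma; everything else is a direct computation with Dirac measures together with reflexivity of the divergence family.
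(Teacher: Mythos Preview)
Your proof is correct and follows essentially the same approach as the paper: both recognize that the assignment is deterministic (its denotation is $\eta_{\interpret{\Gamma}}$ composed with the overwrite map $u_i$, which the paper calls $f_i$), use the substitution lemma to obtain a span morphism $(u_1,u_2,(u_1\times u_2)|_{\Rinterpret{\Phi\{\cdots\}}})$ into $\interpret{\Phi}$, and then post-compose with the unit $\langle\eta,\eta\rangle$ of the approximate span-lifting, invoking reflexivity of $\mathbf{\Delta}$ to land in $W(\interpret{\Phi},\mathbf{\Delta},1_A,0)$. The only cosmetic differences are that the paper assumes $x_1\neq x_2$ without loss of generality and argues the substitution step by hand rather than packaging it as a separate lemma.
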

\begin{proof}%[{Soundness of [assn]}]
We may assume $x_1 \neq x_2$ without loss of generality. Let
\[
  ((\phi^1,a_1^1,a_2^1),(\phi^2,a_1^2,a_2^2))
  \in \Rinterpret{\Gamma \vdash^R \Phi\{e_1\lrangle{1},e_2\lrangle{2}/x_1\lrangle{1},x_2\lrangle{2}\}}
\]
where $a^i_j$ is a value of variable $x_j$ ($i = 1,2$).
%
% \jh{I don't think we defind the interpretation of relation judgments
% $\interpret{\Gamma \vdash^R \Phi}$---what is the type?}
% \tts{This was a abuse of notation. $\interpret{\Gamma \vdash^R \Phi}$ is a span
% $(\interpret{\Gamma},\interpret{\Gamma},\interpret{\Gamma \vdash
% \Psi},\pi_1,\pi_2)$, and I took the third component of the span...I will fix
% it.}
%
Since $x_1$ and $x_2$ are not free variables in $e_1$ and $e_2$ respectively, we have 
\[
((\phi_1,\interpret{\Gamma \vdash^t e_1 \colon \tau}(\phi^1,a_1^1,a_2^1),a_2^1),(\phi^2,a^2_1,\interpret{\Gamma \vdash^t e_2 \colon \tau}(\phi^2,a_1^2,a_2^2)) \in \Rinterpret{\Gamma \vdash^R \Phi}.
\]
Therefore, 
\[
(f_1(\phi^1,a_1^1,a_2^1), f_2(\phi^2,a_1^2,a_2^2)) \in \Rinterpret{\Gamma \vdash^R \Phi}
\]
where
$f_i = \mathrm{rw}\lrangle{\Gamma \mid x_i \colon\tau} \circ \lrangle{\mathrm{id}_\interpret{\Gamma},\interpret{\Gamma \vdash^t e_i \colon \tau}}$ ($i = 1,2$).
Therefore, we obtain the following morphism of spans
(note that both $\interpret{\Gamma \vdash^R
\Phi\{e_1\lrangle{1},e_2\lrangle{2}/x_1\lrangle{1},x_2\lrangle{2}\}}$ and
$\interpret{\Gamma \vdash^R \Phi}$ are binary relation converted to spans)
,
\begin{align*}
\lefteqn{(f_1,f_2,(f_1\times f_2)|_\Rinterpret{\Gamma \vdash^R \Phi\{e_1\lrangle{1},e_2\lrangle{2}/x_1\lrangle{1},x_2\lrangle{2}\}})\colon}\\
&\qquad\qquad\qquad\qquad\interpret{\Gamma \vdash^R \Phi\{e_1\lrangle{1},e_2\lrangle{2}/x_1\lrangle{1},x_2\lrangle{2}\}}
\to \interpret{\Gamma \vdash^R \Phi}.
\end{align*}
Letting
$g_i = \eta_{\interpret{\Gamma}} \circ f_i = \interpret{\Gamma \vdash x_i \leftarrow e_i}$,
we conclude
\begin{align*}
\lefteqn{(g_1,g_2,\lrangle{\eta_\Phi,\eta_\Phi} \circ (g_1\times g_2)|_\Rinterpret{\Gamma \vdash^R \Phi\{e_1\lrangle{1},e_2\lrangle{2}/x_1\lrangle{1},x_2\lrangle{2}\}})\colon}\\
&\qquad\qquad\qquad\qquad\interpret{\Gamma \vdash^R \Phi\{e_1\lrangle{1},e_2\lrangle{2}/x_1\lrangle{1},x_2\lrangle{2}\}}\to \interpret{\Gamma \vdash^R \Phi}^{\sharp(\mathbf{\Delta},1_A,0)}.
\end{align*}
\end{proof}

\begin{lemma}
The [seq] rule is sound.
\end{lemma}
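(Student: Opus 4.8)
The plan is to unfold the validity of the two premises into morphisms of $\Span(\Meas)$, apply the graded Kleisli lifting of the approximate span-lifting from Theorem~\ref{thm:span-lifting:conclusion} to the morphism coming from the second premise, compose it with the morphism from the first premise, and finally recognize the resulting triple as a witness for the conclusion via the denotational semantics of sequential composition.

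Concretely, first I would use validity of $\Gamma \vdash c_1 \sim^{\mathbf{\Delta}}_{\alpha,\delta} c_1' \colon \Phi \implies \Phi'$ and $\Gamma \vdash c_2 \sim^{\mathbf{\Delta}}_{\beta,\gamma} c_2' \colon \Phi' \implies \Psi$ to obtain measurable witness functions $l_1,l_2$ fitting into morphisms
\[
(\interpret{\Gamma \vdash c_1},\interpret{\Gamma \vdash c_1'},l_1) \colon \interpret{\Gamma \vdash^R \Phi} \to \interpret{\Gamma \vdash^R \Phi'}^{\sharp(\mathbf{\Delta},\alpha,\delta)},
\]
\[
(\interpret{\Gamma \vdash c_2},\interpret{\Gamma \vdash c_2'},l_2) \colon \interpret{\Gamma \vdash^R \Phi'} \to \interpret{\Gamma \vdash^R \Psi}^{\sharp(\mathbf{\Delta},\beta,\gamma)}
\]
in $\Span(\Meas)$. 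Since the ambient graded family $\mathbf{\Delta}$ (one of $\mathbf{\Delta^{\mathtt{DP}}}$, $D^\alpha$, $\mathbf{\Delta^{\mathtt{zCDP}}}$, $\Delta^{\omega-\mathtt{tCDP}}$) is reflexive, composable, and additive by Theorems~\ref{properties:DP}, \ref{properties:Renyi}, \ref{properties:zCDP}, and \ref{properties:tCDP}, Theorem~\ref{thm:span-lifting:conclusion} applies: the Kleisli-lifting clause, instantiated at the extra parameters $(\alpha,\delta)$ from the first premise, turns the second morphism into
\[
(\interpret{\Gamma \vdash c_2}^\sharp, \interpret{\Gamma \vdash c_2'}^\sharp, l_3) \colon \interpret{\Gamma \vdash^R \Phi'}^{\sharp(\mathbf{\Delta},\alpha,\delta)} \to \interpret{\Gamma \vdash^R \Psi}^{\sharp(\mathbf{\Delta},\alpha\beta,\delta+\gamma)},
\]
where $l_3 = (\pi_1 \circ l_2)^\sharp \times (\pi_2 \circ l_2)^\sharp$ is measurable. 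Composing after the first morphism yields
\[
\bigl(\interpret{\Gamma \vdash c_2}^\sharp \circ \interpret{\Gamma \vdash c_1},\; \interpret{\Gamma \vdash c_2'}^\sharp \circ \interpret{\Gamma \vdash c_1'},\; l_3 \circ l_1\bigr) \colon \interpret{\Gamma \vdash^R \Phi} \to \interpret{\Gamma \vdash^R \Psi}^{\sharp(\mathbf{\Delta},\alpha\beta,\delta+\gamma)}.
\]
By the denotational semantics, $\interpret{\Gamma \vdash c_1;c_2} = \interpret{\Gamma \vdash c_2}^\sharp \circ \interpret{\Gamma \vdash c_1}$ and likewise for the primed commands, so this triple — whose third component $l_3 \circ l_1$ is a measurable map into $W(\interpret{\Gamma \vdash^R \Psi},\mathbf{\Delta},\alpha\beta,\delta+\gamma)$ — is exactly the data witnessing validity of $\Gamma \vdash c_1;c_2 \sim^{\mathbf{\Delta}}_{\alpha\beta,\delta+\gamma} c_1';c_2' \colon \Phi \implies \Psi$.

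The grading bookkeeping is routine: the monoid structure on $A \times \ol\RR$ combines the indices as $(\alpha\beta,\delta+\gamma)$, using commutativity of the $\ol\RR$-component and of the $A$-component in each concrete instance, matching the conclusion. The only substantive input is already packaged in Theorem~\ref{thm:span-lifting:conclusion}, namely well-definedness of the Kleisli lifting: that $l_3$ lands in the subspace $W(-,\mathbf{\Delta},\alpha\beta,\delta+\gamma)$, which uses composability of $\mathbf{\Delta}$ to bound $\Delta^{\alpha\beta}_{\Psi}$ of the pushed-forward witnesses by $\delta+\gamma$, and that the square conditions $\mathcal{G}\rho'_i \circ \pi_i \circ l_3 = \interpret{\Gamma \vdash c_2}^\sharp \circ \rho_i$ hold. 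I do not anticipate a genuine obstacle here; the only part worth spelling out is the verification that these span-commutativity conditions persist under the composition $l_3 \circ l_1$, which follows from the corresponding conditions for $l_1$ and $l_2$ together with functoriality of $\mathcal{G}$ and naturality of the Kleisli extension.
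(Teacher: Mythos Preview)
Your proposal is correct and follows essentially the same approach as the paper: obtain the two span-morphisms from validity of the premises, apply the graded Kleisli lifting of Theorem~\ref{thm:span-lifting:conclusion} to the second one, compose with the first, and identify the first two components with $\interpret{\Gamma\vdash c_1;c_2}$ and $\interpret{\Gamma\vdash c_1';c_2'}$ via the semantics of sequencing. Your write-up is in fact slightly more explicit than the paper's (you spell out $l_3=(\pi_1\circ l_2)^\sharp\times(\pi_2\circ l_2)^\sharp$ and note that composability of $\mathbf{\Delta}$ is what makes $l_3$ land in the correct witness space); the remark on commutativity of the grading monoid is harmless since all four concrete monoids are commutative.
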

\begin{proof}%[Soundness of {[seq]}]
Since the judgments $\Gamma \vdash c_1\sim^{\mathbf{\Delta}}_{\alpha,\delta}
c'_1 \colon \Phi \implies \Phi'$ and $\Gamma \vdash c_2
\sim^{\mathbf{\Delta}}_{\beta,\gamma} c'_2 \colon \Phi' \implies
\Psi$ are valid,
we obtain the following two morphisms in $\Span(\Meas)$ for witness functions $l_1$ and $l_2$:
\begin{align*}
(\interpret{\Gamma \vdash c_1}, \interpret{\Gamma \vdash c_1'}, l_1) &\colon \interpret{\Gamma \vdash^R \Phi} \to \interpret{\Gamma \vdash^R \Phi'}^{\sharp(\mathbf{\Delta},\alpha,\delta)}\\
(\interpret{\Gamma \vdash c_2}, \interpret{\Gamma \vdash c_2'}, l_2) &\colon \interpret{\Gamma \vdash^R \Phi'} \to \interpret{\Gamma \vdash^R \Psi}^{\sharp(\mathbf{\Delta},\beta,\gamma)}
\end{align*}
By taking the graded Kleisli lifting of the second morphism $(\interpret{\Gamma
\vdash c_2}, \interpret{\Gamma \vdash c_2'}, l_2)$, for some witness function
$l_3$, we have a $\Span(\Meas)$-morphism
\[
(\interpret{\Gamma \vdash c_2}^\sharp, \interpret{\Gamma \vdash c_2'}^\sharp, l_3) \colon \interpret{\Gamma \vdash^R \Phi'}^{\sharp(\mathbf{\Delta},\alpha,\delta)} \to \interpret{\Gamma \vdash^R \Psi}^{\sharp(\mathbf{\Delta},\alpha\beta,\delta+\gamma)}.
\]
Composing, we have a span-morphism giving validity of $\Gamma \vdash c_1;c_2
\sim^{\mathbf{\Delta}}_{\alpha\beta,\delta+\gamma} c'_1;c'_2 \colon \Phi
\implies \Psi$ :
\[
(\interpret{\Gamma \vdash c_2}^\sharp \circ \interpret{\Gamma \vdash c_1}, \interpret{\Gamma \vdash c_2'}^\sharp \circ \interpret{\Gamma \vdash c_1'}, l_3 \circ l_1) \colon \interpret{\Gamma \vdash^R \Phi} \to \interpret{\Gamma \vdash^R \Psi}^{\sharp(\mathbf{\Delta},\alpha\beta,\delta+\gamma)}.
\]
\end{proof}

\begin{lemma}
The [weak] rule is sound
\end{lemma}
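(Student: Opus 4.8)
The plan is to obtain the span morphism witnessing validity of the conclusion by composing, in $\Span(\Meas)$, the morphism that witnesses validity of the premise with the span-lifting images of the two implication premises and with an inclusion morphism, all of which are available from the constructions of Section~\ref{sec:span-liftings}. First I would unfold the hypotheses: validity of $\Gamma \vdash c_1 \sim^{\mathbf{\Delta}}_{\alpha, \delta} c_2 \colon \Phi \implies \Psi$ supplies a measurable witness function $l$ such that
\[
(\interpret{\Gamma \vdash c_1}, \interpret{\Gamma \vdash c_2}, l) \colon \interpret{\Gamma \vdash^R \Phi} \to \interpret{\Gamma \vdash^R \Psi}^{\sharp(\mathbf{\Delta},\alpha,\delta)}
\]
is a morphism in $\Span(\Meas)$, while the side conditions $\Gamma \vdash^I \Phi' \implies \Phi$ and $\Gamma \vdash^I \Psi \implies \Psi'$ are interpreted (as in the semantics of relational assertions) as morphisms $\interpret{\Gamma \vdash^R \Phi'} \to \interpret{\Gamma \vdash^R \Phi}$ and $\interpret{\Gamma \vdash^R \Psi} \to \interpret{\Gamma \vdash^R \Psi'}$, each of which has identity maps in its first two components.

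Next I would assemble the composite. Applying the functor part of Theorem~\ref{thm:span-lifting:conclusion} to the morphism $\interpret{\Gamma \vdash^R \Psi} \to \interpret{\Gamma \vdash^R \Psi'}$ at grade $(\alpha,\delta)$ yields a morphism $\interpret{\Gamma \vdash^R \Psi}^{\sharp(\mathbf{\Delta},\alpha,\delta)} \to \interpret{\Gamma \vdash^R \Psi'}^{\sharp(\mathbf{\Delta},\alpha,\delta)}$, again with identities on the underlying $\mathcal{G}$-components; and since $\alpha \preceq \beta$ (which is precisely the premise $\alpha \leq \beta$ in the grading monoid) and $\delta \leq \gamma$, the inclusion part of Theorem~\ref{thm:span-lifting:conclusion} gives $\interpret{\Gamma \vdash^R \Psi'}^{\sharp(\mathbf{\Delta},\alpha,\delta)} \to \interpret{\Gamma \vdash^R \Psi'}^{\sharp(\mathbf{\Delta},\beta,\gamma)}$. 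Composing in order
\[
\interpret{\Gamma \vdash^R \Phi'} \to \interpret{\Gamma \vdash^R \Phi} \to \interpret{\Gamma \vdash^R \Psi}^{\sharp(\mathbf{\Delta},\alpha,\delta)} \to \interpret{\Gamma \vdash^R \Psi'}^{\sharp(\mathbf{\Delta},\alpha,\delta)} \to \interpret{\Gamma \vdash^R \Psi'}^{\sharp(\mathbf{\Delta},\beta,\gamma)}
\]
produces a $\Span(\Meas)$-morphism whose first two components reduce to $\interpret{\Gamma \vdash c_1}$ and $\interpret{\Gamma \vdash c_2}$ (every factor except the middle one contributes an identity, and $\mathcal{G}$ preserves identities) and whose third component is a composite $l'$ of measurable functions. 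Reading this morphism back through the definition of validity gives $\Gamma \models c_1 \sim^{\mathbf{\Delta}}_{\beta, \gamma} c_2 \colon \Phi' \implies \Psi'$, as required.

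There is no substantial obstacle here: the commuting conditions of a span morphism and the measurability of $l'$ are inherited for free from composition in $\Span(\Meas)$ together with the already-established measurability of the functor- and inclusion-component maps in Theorem~\ref{thm:span-lifting:conclusion}. The only care needed is bookkeeping: one must check that the underlying-space components of the implication morphisms and of the span-lifting functor and inclusion are literally identity maps, so that the first two components of the final composite are genuinely $\interpret{\Gamma \vdash c_1}$ and $\interpret{\Gamma \vdash c_2}$, and that the grade at which the functor part of Theorem~\ref{thm:span-lifting:conclusion} is applied is exactly the grade $(\alpha,\delta)$ delivered by validity of the premise, so that the subsequent inclusion to $(\beta,\gamma)$ is applicable.
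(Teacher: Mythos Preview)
Your proposal is correct and follows essentially the same approach as the paper: obtain the witness morphism from validity of the premise, precompose with the implication morphism $\interpret{\Phi'}\to\interpret{\Phi}$, postcompose with the functor image of $\interpret{\Psi}\to\interpret{\Psi'}$ and the grade inclusion, and observe that the first two components reduce to $\interpret{c_1},\interpret{c_2}$. If anything, your account is slightly more explicit than the paper's, which lists the implication morphisms and the inclusion but does not spell out the functor step before concluding with the restricted witness.
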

\begin{proof}%[{Soundness of [weak]}]
Since the judgment $\Gamma \vdash c_1\sim^{\mathbf{\Delta}}_{\alpha,\delta} c_2
\colon \Phi \implies \Psi$ is valid, we have a witness function $l \colon
\Rinterpret{\Gamma\vdash^R\Phi} \to
W(\interpret{\Gamma\vdash^R\Psi},\mathbf{\Delta},\alpha,\delta)$ such that
\[
(\interpret{\Gamma\vdash c_1},\interpret{\Gamma\vdash c_2},l) \colon \interpret{\Gamma\vdash^R\Phi} \to \interpret{\Gamma\vdash^R\Psi}^{\sharp(\mathbf{\Delta},\alpha,\delta)}
\]
From the inclusions $\Gamma \vdash^I \Phi' \implies \Phi$  and $\Gamma \vdash^I
\Psi \implies \Psi'$ of relations, we have
%
% \jh{Shouldn't these ``inclusions'' be morphisms in some context $\Gamma$?}
%
\begin{align*}
(\mathrm{id}_\interpret{\Gamma},\mathrm{id}_\interpret{\Gamma},(\mathrm{id}_\interpret{\Gamma}\times\mathrm{id}_\interpret{\Gamma})|_{\Rinterpret{\Gamma\vdash^R\Phi'}}) &\colon \interpret{\Gamma\vdash^R\Phi'} \to \interpret{\Gamma\vdash^R\Phi}\\
(\mathrm{id}_\interpret{\Gamma},\mathrm{id}_\interpret{\Gamma},(\mathrm{id}_\interpret{\Gamma}\times\mathrm{id}_\interpret{\Gamma})|_{\Rinterpret{\Gamma\vdash^R\Psi}}) &\colon \interpret{\Gamma\vdash^R\Psi} \to \interpret{\Gamma\vdash^R\Psi'}.
\end{align*}
Thanks to the inclusion structure of the span-lifting $({-})^{\sharp(\mathbf{\Delta})}$,
we obtain
\[
(\mathcal{G}\mathrm{id}_\interpret{\Gamma},\mathcal{G}\mathrm{id}_\interpret{\Gamma},(\mathcal{G}\mathrm{id}_\interpret{\Gamma}\times \mathcal{G}\mathrm{id}_\interpret{\Gamma})|_{W(\interpret{\Gamma\vdash^R\Phi},\mathbf{\Delta},\alpha,\delta)} ) \colon
\interpret{\Gamma\vdash^R\Psi'}^{\sharp(\mathbf{\Delta},\alpha,\delta)} \to \interpret{\Gamma\vdash^R\Psi'}^{\sharp(\mathbf{\Delta},\beta,\gamma)}
\]
Therefore, we conclude
\[
(\interpret{\Gamma\vdash c_1},\interpret{\Gamma\vdash c_2},l|_{\Rinterpret{\Gamma\vdash^R \Phi'}})
\colon \interpret{\Gamma\vdash^R\Phi'} \to \interpret{\Gamma\vdash^R\Psi'}^{\sharp(\mathbf{\Delta},\beta,\gamma)}
\]
\end{proof}

\begin{lemma}
The [cond] rule is sound.
\end{lemma}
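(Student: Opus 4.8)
The plan is to reduce soundness of [cond] to the coproduct structure of $\Span(\Meas)$ together with an elementary identity about the denotation of conditionals. First I would record that the premise $\Gamma \vdash^I \Phi \implies b\lrangle{1} = b'\lrangle{2}$ means that on $\Rinterpret{\Gamma \vdash^R \Phi}$ the two guards always take equal values. Since $\mathtt{bool}$ is interpreted by the two-point discrete space, the subsets $\Rinterpret{\Gamma \vdash^R \Phi \wedge b\lrangle{1}}$ and $\Rinterpret{\Gamma \vdash^R \Phi \wedge \neg b\lrangle{1}}$ are measurable and partition $\Rinterpret{\Gamma \vdash^R \Phi}$; by distributivity of $\Meas$ this yields an isomorphism $\Rinterpret{\Gamma \vdash^R \Phi} \cong \Rinterpret{\Gamma \vdash^R \Phi \wedge b\lrangle{1}} + \Rinterpret{\Gamma \vdash^R \Phi \wedge \neg b\lrangle{1}}$ in $\Meas$, where on the first summand both $\interpret{\Gamma \vdash^t b}$ and $\interpret{\Gamma \vdash^t b'}$ evaluate to $\mathtt{true}$ and on the second to $\mathtt{false}$.

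Next I would unfold $\interpret{\Gamma \vdash \mathtt{if}~b~\mathtt{then}~c_1~\mathtt{else}~c_2} = [\interpret{\Gamma\vdash c_1},\interpret{\Gamma\vdash c_2}] \circ \mathrm{br}\lrangle{\Gamma} \circ \langle\interpret{\Gamma\vdash b},\mathrm{id}\rangle$ and observe, via the distributive isomorphism $2 \times \interpret{\Gamma} \cong \interpret{\Gamma} + \interpret{\Gamma}$ underlying $\mathrm{br}\lrangle{\Gamma}$, that this map restricted to the subspace of memories where $b$ evaluates to $\mathtt{true}$ equals $\interpret{\Gamma \vdash c_1}$, and on the complementary subspace equals $\interpret{\Gamma \vdash c_2}$; likewise for the primed conditional with $b'$, $c_1'$, $c_2'$. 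Combining this with the previous paragraph, precomposing $\interpret{\Gamma \vdash \mathtt{if}~b~\mathtt{then}~c_1~\mathtt{else}~c_2}$ with the first span projection out of $\Rinterpret{\Gamma \vdash^R \Phi \wedge b\lrangle{1}}$ agrees with precomposing $\interpret{\Gamma \vdash c_1}$ with that projection, and correspondingly for the second conditional, $c_1'$, and the second projection; symmetrically on the $\neg b\lrangle{1}$ summand with $c_2$, $c_2'$.

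Then I would invoke validity of the two command premises to obtain measurable witness functions $l_1 \colon \Rinterpret{\Gamma \vdash^R \Phi \wedge b\lrangle{1}} \to W(\interpret{\Gamma \vdash^R \Psi},\mathbf{\Delta},\alpha,\delta)$ and $l_2 \colon \Rinterpret{\Gamma \vdash^R \Phi \wedge \neg b\lrangle{1}} \to W(\interpret{\Gamma \vdash^R \Psi},\mathbf{\Delta},\alpha,\delta)$ making $(\interpret{\Gamma\vdash c_1},\interpret{\Gamma\vdash c_1'},l_1)$ and $(\interpret{\Gamma\vdash c_2},\interpret{\Gamma\vdash c_2'},l_2)$ morphisms of spans into $\interpret{\Gamma \vdash^R \Psi}^{\sharp(\mathbf{\Delta},\alpha,\delta)}$. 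Setting $l = [l_1,l_2]$ along the coproduct decomposition gives a measurable function (a copairing of measurable maps out of a coproduct in $\Meas$) landing in $W(\interpret{\Gamma \vdash^R \Psi},\mathbf{\Delta},\alpha,\delta)$. The span-morphism equations for $(\interpret{\Gamma \vdash \mathtt{if}~b~\mathtt{then}~c_1~\mathtt{else}~c_2},\interpret{\Gamma \vdash \mathtt{if}~b'~\mathtt{then}~c_1'~\mathtt{else}~c_2'},l)$ are then checked separately on each summand of the partition, where by the identities of the second step they collapse exactly to the span-morphism equations already provided by $l_1$ and $l_2$.

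The hard part is not conceptual but bookkeeping: verifying the denotational identity that a conditional restricts to the denotation of its selected branch on the guard-conditioned subspace (a short computation with $\mathrm{br}\lrangle{\Gamma}$ and the distributive isomorphism), and checking that measurability survives the copairing along the measurable partition of $\Rinterpret{\Gamma \vdash^R \Phi}$. The role of $\Gamma \vdash^I \Phi \implies b\lrangle{1} = b'\lrangle{2}$ is precisely to synchronize the two conditionals so that related memories always take corresponding branches; everything else is the coproduct structure of $\Span(\Meas)$, used here much as in the [case] rule.
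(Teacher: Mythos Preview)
Your proposal is correct and follows essentially the same approach as the paper: both use the premise $\Gamma \vdash^I \Phi \implies b\lrangle{1} = b'\lrangle{2}$ to split $\Rinterpret{\Gamma \vdash^R \Phi}$ into the two guard-conditioned pieces, take the copairing $[l_1,l_2]$ of the witness functions from the two premises, and reduce the span-morphism equations to those already provided on each summand. The only cosmetic difference is that the paper packages the splitting as an explicit $\Span(\Meas)$-morphism $\interpret{\Phi} \to \interpret{\Phi \wedge b\lrangle{1}} \mathbin{\dot+} \interpret{\Phi \wedge \neg b\lrangle{1}}$ built from the distributive isomorphisms (its map $H$) and then composes at the span level, whereas you carry out the decomposition directly in $\Meas$ and check the span conditions by hand; these are the same argument.
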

\begin{proof}%[{Soundness of [case]}]
Since the judgments
$\Gamma \vdash c_1\sim^{\mathbf{\Delta}}_{\alpha,\delta} c_2 \colon \Phi \wedge b\lrangle{1} \implies \Psi$ and
$\Gamma \vdash c_1'\sim^{\mathbf{\Delta}}_{\alpha,\delta} c_2' \colon \Phi \wedge \neg b\lrangle{1} \implies \Psi$ are valid,
we have two witness functions $l_T \colon \Rinterpret{\Gamma\vdash^R \Phi \wedge b\lrangle{1}} \to W(\interpret{\Gamma\vdash^R \Psi},\mathbf{\Delta},\alpha,\delta)$ and $l_F \colon \Rinterpret{\Gamma\vdash^R \Phi \wedge \neg b\lrangle{1}} \to W(\interpret{\Gamma\vdash^R \Psi},\mathbf{\Delta},\alpha,\delta)$ that make the following morphisms in $\Span(\Meas)$:
\begin{align*}
(\interpret{\Gamma\vdash c_1},\interpret{\Gamma\vdash c_2},l_T) &\colon \interpret{\Gamma\vdash^R \Phi \wedge b\lrangle{1}} \to  \interpret{\Gamma\vdash^R \Psi}^{\sharp(\mathbf{\Delta},\alpha,\delta)}\\
(\interpret{\Gamma\vdash c_1'},\interpret{\Gamma\vdash c_2'},l_F) &\colon \interpret{\Gamma\vdash^R \Phi \wedge \neg b\lrangle{1}} \to  \interpret{\Gamma\vdash^R \Psi}^{\sharp(\mathbf{\Delta},\alpha,\delta)}.
\end{align*}
By the coproduct structure of $\Span(\Meas)$, we have the following span-morphism:
\begin{align*}
\lefteqn{(
[\interpret{\Gamma\vdash c_1},\interpret{\Gamma\vdash c_1'}],
[\interpret{\Gamma\vdash c_2},\interpret{\Gamma\vdash c_2'}],
[l_T,l_F]
)\colon}\\
&\qquad\qquad\qquad\interpret{\Gamma\vdash^R \Phi \wedge b\lrangle{1}} \mathbin{\dot{+}} \interpret{\Gamma\vdash^R \Phi \wedge \neg b\lrangle{1}} \to  \interpret{\Gamma\vdash^R \Psi}^{\sharp(\mathbf{\Delta},\alpha,\delta)}.
\end{align*}

We write
$g_1 = \mathrm{br}\lrangle{\Gamma}\circ \lrangle{\interpret{\Gamma \vdash^t b},\mathrm{id}_\interpret{\Gamma}}$ and 
$g_2 = \mathrm{br}\lrangle{\Gamma}\circ \lrangle{\interpret{\Gamma \vdash^t \neg b},\mathrm{id}_\interpret{\Gamma}}$.
%
% \jh{Is $g_1 = g_2$?}
% \tts{Sorry, it is a serious typo. I fixed it ($g_1 \neq g_2$).}
%
We construct the following morphism by using $\Gamma \vdash^I \Phi \implies b\lrangle{1} = b'\lrangle{2}$
%
% \jh{In some context $\Gamma$?}
%
\begin{equation}\label{eq:techinicalpointofcase}
(g_1,g_2,H \circ (g_1 \times g_2)|_{\Rinterpret{\Gamma\vdash^R \Phi}})\colon
\interpret{\Gamma\vdash^R \Phi} \to \interpret{\Gamma\vdash^R \Phi \wedge b\lrangle{1}} \mathbin{\dot{+}} \interpret{\Gamma\vdash^R \Phi \wedge \neg b\lrangle{1}},
\end{equation}
where $H$ is the composition $H_3 \circ H_2 \circ H_1$ of 
\begin{itemize}
\item $H_1 \colon (\interpret{\Gamma}+\interpret{\Gamma}) \times
(\interpret{\Gamma}+\interpret{\Gamma}) \cong 4 \times (\interpret{\Gamma}
\times\interpret{\Gamma})$

defined by $(\iota_i(\phi_1),\iota_j(\phi_2)) \mapsto ((i,j),(\phi_1,\phi_2))$ where $i,j \in 2$,
\item $H_2 \colon  4 \times (\interpret{\Gamma} \times\interpret{\Gamma}) \to 2 \times
(\interpret{\Gamma} \times\interpret{\Gamma})$

defined by
$((b_1,b_2),\phi^1,\phi^2) \mapsto (b_1,\phi^1,\phi^2)$,
\item $H_3 \colon  2\times (\interpret{\Gamma} \times \interpret{\Gamma}) \cong (\interpret{\Gamma}\times\interpret{\Gamma}) + (\interpret{\Gamma}\times\interpret{\Gamma})$

defined by
$(b,(\phi^1,\phi^2)) \mapsto \iota_b (\phi^1,\phi^2)$.
\end{itemize}
Here, $\iota_i$ are coprojections $\iota_1 \colon A \to A + B$ and $ \iota_2 \colon B \to A + B$.
The bijections $H_1$ and $H_3$ are given from the distributivity of products and coproducts in $\Meas$,
and $H_2$ is given by a projection.

% \jh{Confused: Where do $b$ and $\neg b$ come from? I don't see them in the
% component maps. Are they from the canonical bijections?}
%

Now, let $(\phi^1,\phi^2) \in \Rinterpret{\Gamma\vdash^R \Phi}$.
Since we suppose $\Gamma \vdash^I \Phi \implies b\lrangle{1} = b'\lrangle{2}$, we have
\[
(g_1(\phi^1), g_2(\phi^2))=
\begin{cases}
((1,\phi^1),(1,\phi^2)) & (\phi^1,\phi^2) \in \Rinterpret{\Gamma\vdash^R \Phi \wedge b\lrangle{1}} = \Rinterpret{\Gamma\vdash^R \Phi \wedge b'\lrangle{2}}\\
((2,\phi^1),(2,\phi^2)) & (\phi^1,\phi^2) \in \Rinterpret{\Gamma\vdash^R \Phi \wedge \neg b\lrangle{1}} = \Rinterpret{\Gamma\vdash^R \Phi \wedge \neg b'\lrangle{2}}.
\end{cases}
\]
We observe the role of $H$ in the first case ($(\phi^1,\phi^2) \in \Rinterpret{\Gamma\vdash^R \Phi \wedge b\lrangle{1}}$),
\begin{align*}
H(g_1(\phi^1), g_2(\phi^2))
&= H_3 \circ H_2 \circ H_1 ((1,\phi^1),(1,\phi^2))\\
&= H_3 \circ H_2 ((1,1), (\phi^1, \phi^2))
= H_3 (1,(\phi^1, \phi^2))
= \iota_1 (\phi^1, \phi^2).
\end{align*}
In the same way, we have $H(g_1(\phi^1), g_2(\phi^2)) = \iota_2 (\phi^1, \phi^2)$ in the second case.
Therefore, the measurable function $H \circ (g_1 \times
g_2)|_{\Rinterpret{\Gamma\vdash^R \Phi}}$ forms a function from $\Rinterpret{\Gamma\vdash^R \Phi}$ to
$\Rinterpret{\Gamma\vdash^R
\Phi \wedge b\lrangle{1}} + \Rinterpret{\Gamma\vdash^R \Phi
\wedge \neg b\lrangle{1}}$ satisfying (\ref{eq:techinicalpointofcase}).

Since $\interpret{\Gamma \vdash \mathtt{if}~b~\mathtt{then}~c~\mathtt{else}~c'} = [\interpret{\Gamma\vdash c},\interpret{\Gamma\vdash c'}]\circ \mathrm{br}\lrangle{\Gamma}\circ \lrangle{\interpret{\Gamma \vdash^t b},\mathrm{id}_\interpret{\Gamma}}$, we conclude the soundness.
\end{proof}

\begin{remark}
  Similarly, we have soundness of [case].
\end{remark}

\begin{remark}
The soundness of the [while] rule is a consequence of the soundness of [seq], [weak], and the [case] rule since the [while] in our logic deal only with finite-loops.
\end{remark}

\begin{lemma}
The rule [RDP-G] is sound.
\end{lemma}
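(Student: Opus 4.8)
The plan is to unfold the definition of validity, extract an explicit measurable witness function from the known R\'enyi-divergence bound for the Gaussian mechanism, and then transport it through the graded-monad structure of Section~\ref{sec:span-liftings}. Since $D^\alpha$ is reflexive, composable, and additive (Theorem~\ref{properties:Renyi}), Theorem~\ref{thm:span-lifting:conclusion} equips $(-)^{\sharp(D^\alpha,\ast,-)}$ with a graded monad and a double strength, which I will use freely. Write $f = \mathcal{N}({-},\sigma^2)\colon \RR \to \mathcal{G}\RR$ for the map underlying the Gaussian mechanism and $\Phi_r = \Set{(x,y) \in \RR \times \RR \mid |x-y| \le r}$, viewed as a subspace of $\RR \times \RR$ and hence a $\Span(\Meas)$-object with its two projections. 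Unfolding the denotation $\interpret{\Gamma \vdash x_i \xleftarrow{\$} \mathtt{Gauss}(e_i,\sigma^2)} = \mathcal{G}(\mathrm{rw}\lrangle{\Gamma \mid x_i\colon\mathtt{real}}) \circ \mathrm{st} \circ \lrangle{\mathrm{id}_{\interpret{\Gamma}}, \mathcal{N}(\interpret{e_i},\sigma^2)}$, the goal is to produce a measurable witness $l$ making $(\interpret{\Gamma \vdash x_1 \xleftarrow{\$} \mathtt{Gauss}(e_1,\sigma^2)}, \interpret{\Gamma \vdash x_2 \xleftarrow{\$} \mathtt{Gauss}(e_2,\sigma^2)}, l)$ a morphism from $\interpret{\Gamma \vdash^R |e_1\lrangle{1} - e_2\lrangle{2}| \le r}$ into $\interpret{\Gamma \vdash^R x_1\lrangle{1} = x_2\lrangle{2}}^{\sharp(D^\alpha,\ast,\alpha r^2/2\sigma^2)}$.

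The core step is a span-morphism over $\RR$. The map $f$ is measurable, and by \citet[Proposition 7]{Mironov17} it satisfies $|x-y| \le r \implies D^\alpha_{\RR}(f(x)||f(y)) \le \alpha r^2/2\sigma^2$. Using Theorem~\ref{prop:span-lift:describe} to identify $\mathrm{Eq}_{\RR}^{\sharp(D^\alpha,\ast,\alpha r^2/2\sigma^2)}$ with the span whose apex is $W := \Set{(\mu_1,\mu_2) \in \mathcal{G}\RR \times \mathcal{G}\RR \mid D^\alpha_{\RR}(\mu_1,\mu_2) \le \alpha r^2/2\sigma^2}$ equipped with the two projections, I would check that
\[
(f,\ f,\ (f \times f)|_{\Phi_r}) \colon \Phi_r \to \mathrm{Eq}_{\RR}^{\sharp(D^\alpha,\ast,\alpha r^2/2\sigma^2)}
\]
is a $\Span(\Meas)$-morphism: the two commuting squares collapse to $\pi_i \circ (f \times f) = f \circ \pi_i$; the image of $(f \times f)|_{\Phi_r}$ lies in $W$ exactly by Mironov's bound; and since $W$ carries the subspace structure of $\mathcal{G}\RR \times \mathcal{G}\RR$, measurability of the witness is immediate from measurability of $f$. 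This is where span-liftings pay off: the witness $f \times f$ is handed to us concretely, with no appeal to the axiom of choice, in contrast to the discrete construction recalled in Section~\ref{sec:cont-case-motiv}.

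It then remains to wire this core morphism into the full judgment over $\Gamma$. Precomposing with the span-morphism $(\interpret{e_1}, \interpret{e_2}, \lrangle{\interpret{e_1}\circ\pi_1,\ \interpret{e_2}\circ\pi_2})\colon \interpret{\Gamma \vdash^R |e_1\lrangle{1} - e_2\lrangle{2}| \le r} \to \Phi_r$---well defined because related memories $(m_1,m_2)$ satisfy $|\interpret{e_1}(m_1) - \interpret{e_2}(m_2)| \le r$ by definition of the precondition---gives a morphism from the precondition into $\mathrm{Eq}_{\RR}^{\sharp(D^\alpha,\ast,\alpha r^2/2\sigma^2)}$. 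Then, pairing with the unit of the span-lifting, applying the double strength, and functoring along the span-morphism induced by the overwriting maps $\mathrm{rw}\lrangle{\Gamma \mid x_i\colon\mathtt{real}}$ (all from Theorem~\ref{thm:span-lifting:conclusion}), I recover a morphism whose first two components are exactly $\interpret{\Gamma \vdash x_i \xleftarrow{\$} \mathtt{Gauss}(e_i,\sigma^2)}$ and whose target is $\interpret{\Gamma \vdash^R x_1\lrangle{1} = x_2\lrangle{2}}^{\sharp(D^\alpha,\ast,\alpha r^2/2\sigma^2)}$, since overwriting the two sampled slots with equal values lands in the equality relation on those variables while leaving the remaining memory components unconstrained. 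The witness component of the composite is the required $l$.

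The main obstacle is not the R\'enyi bound, which is imported, but this last bookkeeping step: matching the denotation of the sampling command---which threads the sampled value first through the strength and then through the overwriting map---with the unit, double-strength, and functor clauses of the span-lifting, and checking that every square in $\Span(\Meas)$ closes on the nose. The one point one must not skip is that measurability of the transported witness is preserved at each stage, since all operations involved (the functor action $\mathcal{G}(-)$, the strength, and composition) act by measurable maps on the witness component---precisely the property the span-lifting framework is engineered to supply.
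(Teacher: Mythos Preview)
Your proposal is correct and follows essentially the same approach as the paper's proof: both extract the core span-morphism $(f,f,(f\times f)|_{\Phi_r})$ from Mironov's R\'enyi bound on Gaussians, precompose with the evaluation maps $\interpret{e_i}$, and then thread through the memory using the unit, double strength, and functoriality along the overwriting maps $\mathrm{rw}\lrangle{\Gamma\mid x_i\colon\mathtt{real}}$ from Theorem~\ref{thm:span-lifting:conclusion}. Your emphasis on the witness $f\times f$ being handed concretely (avoiding choice) and on measurability being preserved at each stage matches exactly the point the paper makes.
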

\begin{proof}%[Soundness of {[RDP-G]}]
We assume $x_1 \neq x_2$.
First, it can be directly checked that the function 
$f = {\mathcal{N}({-},\sigma^2)} \colon \mathbb{R} \to \mathcal{G}\mathbb{R}$
is measurable.  From \citet[Proposition 3]{Mironov17}, the function $f$ satisfies
$D^\alpha(f(x)||f(y)) \leq \alpha r^2/2\sigma^2$ whenever $|x - y| \leq r$.
Hence, $(f,f,(f \times f)|_\Phi)$ is a span-morphism
$\Phi \to \mathrm{Eq}_{\mathbb{R}}^{\sharp(D^\alpha,\ast,\alpha r^2/2\sigma^2)}$ 
where $\Phi = \Set{(x,y) \in \mathbb{R} \times \mathbb{R}|{|x - y|} \leq r}$ is regarded as a span.

We next construct a span-morphism
$(h_1,h_2, (h_1 \times h_2)|_\Theta)$ mapping $\Theta \to \mathrm{Eq}_{\mathbb{R}}^{\sharp(D^\alpha,\ast,\alpha r^2/2\sigma^2)}$
where $\Theta = \interpret{\Gamma \vdash^R |e_1\lrangle{1} - e_2\lrangle{2}| \leq r}$ and 
$h_i = \interpret{\Gamma \vdash^p \mathtt{Gauss}(e_i,\sigma^2) \colon \mathtt{real}}$ ($i = 1, 2$).
We write
$g_i = \interpret{\Gamma \vdash^t e_i \colon \mathtt{real}}$ ($i = 1, 2$).
It is clear that $(g_1,g_2,(g_1 \times g_2)|_\Theta)$ is a span-morphism $\Theta \to \Phi$.
Since $h_i = f_i \circ g_i$ ($i = 1,2$), the triple $(h_1, h_2, (h_1 \times
h_2)|_\Theta)$ is a span-morphism $\Theta \to
\mathrm{Eq}_{\mathbb{R}}^{\sharp(D^\alpha,\ast,\alpha r^2/2\sigma^2)}$.

Now, the triple $(\mathrm{id}_\interpret{\Gamma} \times h_1,
\mathrm{id}_\interpret{\Gamma} \times h_2, (\mathrm{id}_\interpret{\Gamma}
\times \mathrm{id}_\interpret{\Gamma}) \times (h_1 \times h_2)|_\Theta)$ is a
morphism of spans $\top_\interpret{\Gamma} \mathbin{\dot\times} \Theta \to
\top_\interpret{\Gamma} \mathbin{\dot\times}
(\mathrm{Eq}_{\mathbb{R}})^{\sharp(D^\alpha,\ast,\alpha r^2/2\sigma^2)}$ where
$\top_\interpret{\Gamma} =
(\interpret{\Gamma},\interpret{\Gamma},\interpret{\Gamma}\times\interpret{\Gamma},\pi_1,\pi_2)$.
Thanks to the unit and the double strength of the span-lifting
$\{(-)^{\sharp(D^\alpha,\ast,\rho)}\}_\rho$, the triple 
$(\mathrm{st}_{\interpret{\Gamma},\mathbb{R}}, \mathrm{st}_{\interpret{\Gamma},\mathbb{R}},
\langle \mathrm{st}_{\interpret{\Gamma},\mathbb{R}} \circ (\pi_1 \times \pi_1),
\mathrm{st}_{\interpret{\Gamma},\mathbb{R}} \circ (\pi_2 \times \pi_2) \rangle
|_{(\interpret{\Gamma} \times \interpret{\Gamma})\times W(\mathrm{Eq}_{\mathbb{R}},D^\alpha,\ast,\alpha r^2/2\sigma^2)})$
is a morphism of spans $\top_\interpret{\Gamma} \mathbin{\dot\times}
(\mathrm{Eq}_{\mathbb{R}})^{\sharp(D^\alpha,\ast,\alpha r^2/2\sigma^2)} \to
(\top_\interpret{\Gamma} \mathbin{\dot\times}
\mathrm{Eq}_{\mathbb{R}})^{\sharp(D^\alpha,\ast,\alpha r^2/2\sigma^2)}$.
%
% \jh{I suppose that $\mathrm{st}$ is the usual strength, which is obtained by the
% unit and the double strength $\mathrm{dst}$. Might be good to say this...}
%
% \jh{Also, I do not understand the third component. What is the type? Why do we
% not use $\pi_2$/the witnesses $W$?}
% \tts{I fixed two $\pi_1$ to $\pi_2$. It was a typo.
% We actually use $W$ in above.
% For $i = 1,2$, the function $\mathrm{st}_{\interpret{\Gamma},\mathbb{R}} \circ (\pi_i \times \pi_i)$ takes an $i$-th component of $(\interpret{\Gamma} \times \interpret{\Gamma})$ and an $i$-th component of $W(\mathrm{Eq}_{\mathbb{R}},D^\alpha,\ast,\alpha r^2/2\sigma^2)$.
% }

We write $k_i = \mathrm{rw}\lrangle{\Gamma \mid x_i\colon\mathtt{real}}$ ($i = 1,2$).
For any $(((\phi^1,a_1^1,a_2^1),r),((\phi^2,a_1^2,a_2^2),r)) \in
\top_\interpret{\Gamma} \mathbin{\dot\times} \mathrm{Eq}_{\mathbb{R}}$ where
$a^i_j$ is a value of variable $x_j$ ($i = 1,2$), 
we have 
\begin{align*}
&(\mathrm{rw}\lrangle{\Gamma \mid x_1\colon\mathtt{real}}((\phi^1,a_1^1,a_2^1),r^1),
\mathrm{rw}\lrangle{\Gamma \mid x_2\colon\mathtt{real}}((\phi^2,a_1^2,a_2^2),r^2)
)\\
&=
((\phi^1,r,a_2^1),
(\phi^2,a^2_1,r))
\in
\Rinterpret{\Gamma \vdash^R x_1\lrangle{1} = x_2\lrangle{2}}
\end{align*}
Hence, the triple $(k_1,k_2,(k_1 \times k_2)|_{(\interpret{\Gamma} \times
\interpret{\Gamma}) \times \mathrm{Eq}_{\mathbb{R}}})$ forms a morphism of spans
$(\top_\interpret{\Gamma} \mathbin{\dot\times} \mathrm{Eq}_{\mathbb{R}}) \to
\interpret{\Gamma \vdash^R x_1\lrangle{1} = x_2\lrangle{2}}$. (Note that
$(\top_\interpret{\Gamma} \mathbin{\dot\times} \mathrm{Eq}_{\mathbb{R}})$ and
$\interpret{\Gamma \vdash^R x_1\lrangle{1} = x_2\lrangle{2}}$ are binary relations
converted to spans.)

By the functoriality of the span-lifting $\{(-)^{\sharp(D^\alpha,\ast,\rho)}\}_\rho$, we obtain in $\Span(\Meas)$,
\begin{align*}
\lefteqn{(k_1,k_2,(k_1 \times k_2)|_{(\interpret{\Gamma} \times \interpret{\Gamma}) \times \mathrm{Eq}_{\mathbb{R}}})^{\sharp(D^\alpha,\ast,\alpha r^2/2\sigma^2)}\colon}\\
&\qquad \qquad \qquad (\top_\interpret{\Gamma} \mathbin{\dot\times} \mathrm{Eq}_{\mathbb{R}})^{\sharp(D^\alpha,\ast,\alpha r^2/2\sigma^2)}
\to \interpret{\Gamma \vdash^R x_1\lrangle{1} = x_2\lrangle{2}}^{\sharp(D^\alpha,\ast,\alpha r^2/2\sigma^2)}.
\end{align*}

Since $\interpret{\Gamma \vdash x_i \xleftarrow{\$} \mathtt{Gauss}(e_i,\sigma^2)} = \mathcal{G}k_i \circ \mathrm{st}_{\interpret{\Gamma},\mathbb{R}} \circ \lrangle{\mathrm{id}_\interpret{\Gamma}, h_i}$ ($i = 1,2$),
we conclude the soundness of [RDP-G]:
\begin{align*}
\lefteqn{(\interpret{\Gamma \vdash x_1 \xleftarrow{\$} \mathtt{Gauss}(e_1,\sigma^2)},\interpret{\Gamma \vdash x_2 \xleftarrow{\$} \mathtt{Gauss}(e_2,\sigma^2)},l)}\\
&= 
(k_1,k_2,(k_1 \times k_2)|_{(\interpret{\Gamma} \times \interpret{\Gamma}) \times \mathrm{Eq}_{\mathbb{R}}})^{\sharp(D^\alpha,\ast,\alpha r^2/2\sigma^2)}\\
& \qquad \circ
(\mathrm{st}_{\interpret{\Gamma},\mathbb{R}}, \mathrm{st}_{\interpret{\Gamma},\mathbb{R}},
\langle
\mathrm{st}_{\interpret{\Gamma},\mathbb{R}} \circ (\pi_1 \times \pi_1),
\mathrm{st}_{\interpret{\Gamma},\mathbb{R}} \circ (\pi_2 \times \pi_2)
\rangle
|_{(\interpret{\Gamma} \times \interpret{\Gamma})\times W(\mathrm{Eq}_{\mathbb{R}},D^\alpha,\ast,\alpha r^2/2\sigma^2)})\\
& \qquad \circ
(\mathrm{id}_\interpret{\Gamma} \times h_1, \mathrm{id}_\interpret{\Gamma} \times h_2, (\mathrm{id}_\interpret{\Gamma} \times \mathrm{id}_\interpret{\Gamma}) \times (h_1 \times h_2)|_\Theta)\\
& \qquad \circ
(\lrangle{\mathrm{id}_\interpret{\Gamma},\mathrm{id}_\interpret{\Gamma}},\lrangle{\mathrm{id}_\interpret{\Gamma},\mathrm{id}_\interpret{\Gamma}},\lrangle{\mathrm{id}_{\interpret{\Gamma}\times\interpret{\Gamma}}|_{\Theta},\mathrm{id}_\Theta})\colon\\
&\Theta
%\to \top_\interpret{\Gamma}  \mathbin{\dot\times}\Theta
%\to \top_\interpret{\Gamma} \mathbin{\dot\times}(\mathrm{Eq}_{\mathbb{R}})^{\sharp(D^\alpha,\ast,\alpha r^2/2\sigma^2)}
\to 
\interpret{\Gamma \vdash^R x_1\lrangle{1} = x_2\lrangle{2}}^{\sharp(D^\alpha,\ast,\alpha r^2/2\sigma^2)}.
\end{align*}
% \jh{The types don't seem to match? The domain of the first map is
%   $\top_\interpret{\Gamma} \mathbin{\dot{\times}} \Theta$, not $\Theta$.}
% \tts{I fixed.}
% \jh{Thanks. Also one thing I have been wondering about: is there a difference
%   between $\langle f_1, f_2 \rangle$ and $f_1 \times f_2$, for $f_1, f_2$
% morphisms?}
% \tts{
% Let $f_1 \colon X \to Y$, $f_2 \colon X \to W$, we have $\langle f_1, f_2 \rangle \colon X \to Y \times W$
% and $f_1 \times f_2 \colon X \times X \to Y \times W$.
% As functions, $\langle f_1, f_2 \rangle$ takes a single input $x$ and returns a pair $(f_1(x),f_2(x))$.
% On the other hand, $f_1 \times f_2$ take a pair of inputs $(x,y)$ and returns $(f_1(x),f_2(y))$.
% }
\end{proof}

Soundness of other mechanism rules follows similarly using \citet[Propositions
5, 6, 7]{Mironov17}, \citet[Proposition 1]{DworkMcSherryNissimSmith2006},
\citet[Lemma 4.2]{Sato2016MFPS} (a refinement of \citet[Theorem
3.22]{DworkRothTCS-042}), the soundness of the transitivity rules are proved by
\citet[Lemma 4.2(iii)]{olmedo2014approximate}, \citet[Proposition 27]{BunS16} and
Lemma \ref{lem:Renyi:weak-triangle}, and the soundness of the conversion rules
follows by \citet[Proposition 4]{BunS16}, \citet[Proposition 3]{Mironov17}, and
\citet[Lemmas 3.2, 3.5]{BunS16}.
\fi

\section{Omitted proofs}
\begin{theorem}[Theorem \ref{thm:div:cont+comp->mon}]
  An $A$-graded family $\mathbf{\Delta}$ is additive if it is continuous and composable.
\end{theorem}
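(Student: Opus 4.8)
The plan is to obtain additivity by applying \emph{composability} to a carefully chosen pair of Kleisli maps, and then using \emph{continuity} to simplify the resulting error term. Fix $X,Y\in\Meas$, measures $\mu_1,\mu_2\in\mathcal{G}X$, $\mu_3,\mu_4\in\mathcal{G}Y$, and grades $\alpha,\beta\in A$.

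First I would introduce the measurable maps $f,g\colon X\to\mathcal{G}(X\times Y)$ defined by $f(x)=\mathbf{d}_x\otimes\mu_3$ and $g(x)=\mathbf{d}_x\otimes\mu_4$; these are measurable since, for instance, $f=\mathrm{dst}_{X,Y}\circ\langle\eta_X,\mathrm{const}_{\mu_3}\rangle$. A short computation shows $f^\sharp(\mu_1)=\mu_1\otimes\mu_3$: indeed $f^\sharp(\mu_1)$ and $\mu_1\otimes\mu_3$ are measures on $X\times Y$ that agree on rectangles, since
\[
f^\sharp(\mu_1)(A\times B)=\int_X(\mathbf{d}_x\otimes\mu_3)(A\times B)\,d\mu_1(x)=\int_X\mathbf{d}_x(A)\,\mu_3(B)\,d\mu_1(x)=\mu_1(A)\,\mu_3(B),
\]
so they coincide by uniqueness of the product measure; likewise $g^\sharp(\mu_2)=\mu_2\otimes\mu_4$. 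Applying composability of $\mathbf{\Delta}$ to $f$ and $g$ then gives
\[
\Delta^{\alpha\cdot\beta}_{X\times Y}(\mu_1\otimes\mu_3,\mu_2\otimes\mu_4)\;\leq\;\Delta^{\alpha}_X(\mu_1,\mu_2)\;+\;\sup_{x\in X}\Delta^{\beta}_{X\times Y}(\mathbf{d}_x\otimes\mu_3,\mathbf{d}_x\otimes\mu_4).
\]

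It remains to bound each term $\Delta^{\beta}_{X\times Y}(\mathbf{d}_x\otimes\mu_3,\mathbf{d}_x\otimes\mu_4)$ by $\Delta^{\beta}_Y(\mu_3,\mu_4)$, and this is exactly where continuity is used. For the measurable map $j_x\colon Y\to X\times Y$, $j_x(y)=(x,y)$, one checks directly that $\mathcal{G}j_x(\mu_i)=\mathbf{d}_x\otimes\mu_i$. Using continuity of $\Delta^\beta$ at $X\times Y$ and functoriality of $\mathcal{G}$,
\[
\Delta^{\beta}_{X\times Y}(\mathcal{G}j_x(\mu_3),\mathcal{G}j_x(\mu_4))=\sup_{I\in\FinSet,\ k\colon X\times Y\to I}\Delta^{\beta}_I\bigl(\mathcal{G}(k\circ j_x)(\mu_3),\mathcal{G}(k\circ j_x)(\mu_4)\bigr),
\]
and since each $k\circ j_x\colon Y\to I$ occurs among the maps over which the continuity equation for $\Delta^\beta$ at $Y$ takes its supremum, every term on the right is $\leq\Delta^{\beta}_Y(\mu_3,\mu_4)$, hence so is the supremum. (This is precisely the observation that continuity implies functoriality, which could be recorded as a small lemma and invoked for $j_x$.) Combining the two displays yields the additivity inequality.

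The measurability of $f,g$ and the product-measure identities are routine; the one point that needs care is recognizing that it is continuity---rather than an assumed functoriality or reflexivity---that lets us collapse $\Delta^\beta$ on the padded space $X\times Y$ back onto $Y$. In the discrete case this collapse is immediate from the explicit $f$-divergence formula, which is why additivity there is a direct instance of composability; the continuity argument above is the abstract substitute.
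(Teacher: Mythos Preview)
Your proof is correct and uses essentially the same ingredients as the paper's: realizing the product measure as a Kleisli extension of the map $x\mapsto\mathbf{d}_x\otimes\mu$, applying composability, and then invoking continuity to obtain the functoriality bound $\Delta^\beta_{X\times Y}(\mathcal{G}j_x(\mu_3),\mathcal{G}j_x(\mu_4))\leq\Delta^\beta_Y(\mu_3,\mu_4)$. The only cosmetic difference is the order of operations: the paper first discretizes via some $k\colon X\times Y\to I$ and applies composability to the maps $K_\mu=\mathcal{G}k\circ(x\mapsto\mathbf{d}_x\otimes\mu)\colon X\to\mathcal{G}I$, whereas you apply composability directly on $X\times Y$ and discretize only inside the functoriality step---your version is slightly more streamlined but otherwise the same argument.
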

\begin{proof}
From the continuity of $\mathbf{\Delta}$,
\[
\Delta^{\alpha\beta}_{X\times Y}(\mu_1 \otimes \mu_3,\mu_2 \otimes \mu_4)
= \sup\Set{\Delta^{\alpha\beta}_{I}(\mathcal{G}k(\mu_1 \otimes \mu_3),\mathcal{G}k(\mu_2 \otimes \mu_4))| k \colon X\times Y \to I}.
\]

We fix $k \colon X \times Y \to I$.
For any $\mu \in \mathcal{G}Y$, we define $K_{\mu} \colon X \to \mathcal{G}I$ by 
$K_\mu = \mathcal{G}k \circ \mathrm{st}_{X,Y} \circ (\mathrm{id}_X \times \overline {\mu}) \circ \inverse{\rho}_X$
where $\overline{\mu}$ is the generalized element $1 \to \mathcal{G}Y$ assigning $\mu$, and $\rho_X$ is a canonical isomorphism $X \cong X \times 1$.
We then obtain for any $\mu \in \mathcal{G}X$, 
\begin{align*}
K_{\mu}^\sharp(\mu')
& = \mathcal{G}k \circ \mu_{X \times Y} \circ \mathcal{G}\mathrm{st}_{X,Y} \circ \mathcal{G} (\mathrm{id}_X \times \overline {\mu}) \circ \mathcal{G} \inverse{\rho}_X (\mu')\\
& = \mathcal{G}k \circ \mu_{X \times Y} \circ \mathcal{G}\mathrm{st}_{X,Y} \circ \mathcal{G} (\mathrm{id}_X \times \overline {\mu}) \circ \mathrm{st}'_{X,1} \circ \inverse{\rho}_{\mathcal{G}X} (\mu')\\
& = \mathcal{G}k \circ \mu_{X \times Y} \circ \mathcal{G}\mathrm{st}_{X,Y}\circ \mathrm{st}'_{X,\mathcal{G}Y}  \circ (\mathrm{id}_{\mathcal{G}X} \times \overline {\mu}) \circ \inverse{\rho}_{\mathcal{G}X} (\mu')\\
& = \mathcal{G}k \circ \mathrm{dst}_{X,Y} (\mu',\mu) = \mathcal{G}k(\mu' \otimes \mu).
\end{align*}
We also obtain $K_\mu(x) = \mathcal{G}k(\mathbf{d}_{x} \otimes \mu)$ for any $x\in X$.
This implies $K_\mu(x) = \mathcal{G}k(x,-)(\mu)$ where $k(x,-) Y \to I$ is measurable because $(\mathbf{d}_{x} \otimes \mu)(\inverse{k}(A)) = \mu((\inverse{k}(A))|_x) = \mu(\inverse{k(x,-)}(A))$ for any $A \subseteq I$.
From the composability and continuity of $\mathbf{\Delta}$, we have
\begin{align*}
\Delta^{\alpha\beta}_{I}(\mathcal{G}k(\mu_1 \otimes \mu_3),\mathcal{G}k(\mu_2 \otimes \mu_4))
&= \Delta^{\alpha\beta}_{I}(K_{\mu_3}^\sharp(\mu_1),K_{\mu_4}^\sharp(\mu_2))\\
&\leq \Delta^{\alpha}_{X}(\mu_1,\mu_2) + \sup_{x \in X}\Delta^{\beta}_{I}(K_{\mu_3}(x),K_{\mu_4}(x))\\
&= \Delta^{\alpha}_{X}(\mu_1,\mu_2) + \sup_{x \in X}\Delta^{\beta}_{I}( \mathcal{G}k(x,-)(\mu_3),\mathcal{G}k(x,-)(\mu_4))\\
&\leq \Delta^{\alpha}_{X}(\mu_1,\mu_2) +\Delta^{\beta}_{Y}(\mu_3,\mu_4).
\end{align*}

Since $k \colon X \times Y \to I$ is arbitrary, we conclude the additivity of $\mathbf{\Delta}$.
\end{proof}

\begin{theorem}[Theorem \ref{thm:div:fin-comp->comp}]
A continuous approximable $A$-graded family $\mathbf{\Delta}$ is composable if
finite-composable.
\end{theorem}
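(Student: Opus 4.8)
The plan is to derive composability from finite-composability by a double discretization: first discretize the output space $Y$ using continuity, then discretize the input space $X$ using approximability, so that what remains is exactly the purely finite statement. Two elementary monad identities will do the bookkeeping. First, $\mathcal{G}k \circ h^\sharp = (\mathcal{G}k \circ h)^\sharp$ for any $h \colon X \to \mathcal{G}Z$ and $k \colon Z \to W$ in $\Meas$; second, $(\bar f \circ m \circ m^\ast)^\sharp = (\bar f \circ m)^\sharp \circ \mathcal{G}m^\ast$ for $m^\ast \colon X \to J$, $m \colon J \to X$, and $\bar f \colon X \to \mathcal{G}I$. Both follow from the monad laws of $\mathcal{G}$ together with $\mathcal{G}k = (\eta \circ k)^\sharp$. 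I will also use that \emph{continuity implies functoriality}: for $k \colon X \to Y$, expanding $\Delta^\alpha_Y(\mathcal{G}k(\mu_1), \mathcal{G}k(\mu_2))$ by continuity at $Y$ exhibits it as a supremum over finite quotients $h \colon Y \to I$ of the values $\Delta^\alpha_I(\mathcal{G}(h \circ k)(\mu_1), \mathcal{G}(h \circ k)(\mu_2))$, each of which is dominated by $\Delta^\alpha_X(\mu_1, \mu_2)$ by continuity at $X$; hence $\Delta^\alpha_Y(\mathcal{G}k(\mu_1), \mathcal{G}k(\mu_2)) \le \Delta^\alpha_X(\mu_1, \mu_2)$.

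Now fix measurable $f, g \colon X \to \mathcal{G}Y$, measures $\mu_1, \mu_2 \in \mathcal{G}X$, and set $C \defeq \Delta^\alpha_X(\mu_1, \mu_2) + \sup_{x \in X} \Delta^\beta_Y(f(x), g(x))$; the goal is $\Delta^{\alpha\beta}_Y(f^\sharp\mu_1, g^\sharp\mu_2) \le C$. By continuity at $Y$ it suffices to prove $\Delta^{\alpha\beta}_I(\mathcal{G}k(f^\sharp\mu_1), \mathcal{G}k(g^\sharp\mu_2)) \le C$ for every $I \in \FinSet$ and measurable $k \colon Y \to I$. Writing $\bar f = \mathcal{G}k \circ f$ and $\bar g = \mathcal{G}k \circ g$, which are measurable maps $X \to \mathcal{G}I$, the first monad identity rewrites this quantity as $\Delta^{\alpha\beta}_I(\bar f^\sharp\mu_1, \bar g^\sharp\mu_2)$.

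Apply approximability of $\mathbf{\Delta}$ (at grading $\alpha\beta$) to $\bar f, \bar g, \mu_1, \mu_2$: there are $J_n \in \FinSet$ and $m^\ast_n \colon X \to J_n$, $m_n \colon J_n \to X$ in $\Meas$ with $\Delta^{\alpha\beta}_I(\bar f^\sharp\mu_1, \bar g^\sharp\mu_2) = \lim_n \Delta^{\alpha\beta}_I((\bar f \circ m_n \circ m^\ast_n)^\sharp\mu_1, (\bar g \circ m_n \circ m^\ast_n)^\sharp\mu_2)$. For a fixed $n$, the second monad identity gives $(\bar f \circ m_n \circ m^\ast_n)^\sharp\mu_1 = (\bar f \circ m_n)^\sharp(\mathcal{G}m^\ast_n(\mu_1))$, and likewise for $g$; since $J_n$ and $I$ are finite and $\bar f \circ m_n, \bar g \circ m_n \colon J_n \to \mathcal{G}I$, finite-composability (at $\alpha, \beta$) applied with input measures $\mathcal{G}m^\ast_n(\mu_1), \mathcal{G}m^\ast_n(\mu_2) \in \mathcal{G}J_n$ yields
\[ \Delta^{\alpha\beta}_I\big((\bar f \circ m_n \circ m^\ast_n)^\sharp\mu_1,\ (\bar g \circ m_n \circ m^\ast_n)^\sharp\mu_2\big) \le \Delta^\alpha_{J_n}(\mathcal{G}m^\ast_n(\mu_1), \mathcal{G}m^\ast_n(\mu_2)) + \sup_{j \in J_n} \Delta^\beta_I(\bar f(m_n(j)), \bar g(m_n(j))). \]
By functoriality the first summand is $\le \Delta^\alpha_X(\mu_1, \mu_2)$; and since $\bar f(m_n(j)) = \mathcal{G}k(f(m_n(j)))$ and similarly for $\bar g$, functoriality again bounds each term of the second summand by $\Delta^\beta_Y(f(m_n(j)), g(m_n(j))) \le \sup_{x \in X} \Delta^\beta_Y(f(x), g(x))$. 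Hence the right-hand side is $\le C$, uniformly in $n$. Letting $n \to \infty$ gives $\Delta^{\alpha\beta}_I(\bar f^\sharp\mu_1, \bar g^\sharp\mu_2) \le C$, and taking the supremum over all $I$ and $k$ gives $\Delta^{\alpha\beta}_Y(f^\sharp\mu_1, g^\sharp\mu_2) \le C$, completing the proof.

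The main obstacle is the bookkeeping with Kleisli extensions across the two discretization steps: one must check that after pushing the computation forward along $k$ and then pre-discretizing the input through $m^\ast_n$ and reconstructing through $m_n$, the relevant Kleisli extension genuinely factors through the finite space $J_n$, which is precisely the content of $(\bar f \circ m_n \circ m^\ast_n)^\sharp = (\bar f \circ m_n)^\sharp \circ \mathcal{G}m^\ast_n$; only then does finite-composability apply. A secondary point of care is that divergence values lie in $\ol\RR$, so one should note that a convergent sequence bounded above by $C$ has limit $\le C$ even when $C$ or the terms are $\pm\infty$, and that the suprema arising from continuity may equal $+\infty$ without affecting any of the inequalities.
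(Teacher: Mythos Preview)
Your proof is correct and follows essentially the same approach as the paper's: discretize $Y$ via continuity, discretize $X$ via approximability, apply finite-composability on the resulting $J_n \to \mathcal{G}I$ data, and bound the two summands using continuity again. Your version is more explicit about the Kleisli identities $(\mathcal{G}k \circ h)^\sharp = \mathcal{G}k \circ h^\sharp$ and $(\bar f \circ m_n \circ m^\ast_n)^\sharp = (\bar f \circ m_n)^\sharp \circ \mathcal{G}m^\ast_n$, and you spell out why continuity yields the functoriality-type bounds (the paper just invokes continuity directly, noting that $m^\ast_n$ and $k$ land in finite spaces and hence appear among the maps in the defining supremum); otherwise the arguments match.
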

\begin{proof}
Let $\mu_1,\mu_2 \in \mathcal{G}X$ and $f,g \colon X \to \mathcal{G}Y$.
Since $\mathbf{\Delta}$ is continuous, approximable, and finite composable, we obtain,
\begin{align*}
\lefteqn{\Delta^{\alpha\beta}_{Y}(f^\sharp (\mu_1), g^\sharp (\mu_2))}\\
&\leq\sup\Set{
			\Delta^{\alpha\beta}_I (\mathcal{G}k(f^\sharp(\mu_1)),\mathcal{G}k(f^\sharp(\mu_2)))
		\mid 
			I \in \FinSet, k \colon X \to I
		}
	\\
&\leq\sup\Set{
			\lim_{n \to \infty}\Delta^{\alpha\beta}_I
			(
				(\mathcal{G}k \circ f \circ m_n \circ m^\ast_n)^\sharp(\mu_1),
				(\mathcal{G}k \circ g \circ m_n \circ m^\ast_n)^\sharp(\mu_2)
			)
		\mid 
			I \in \FinSet, k \colon X \to I.
		}
	\\
& \leq 
	\sup\Set{
			\lim_{n \to \infty}\Delta^{\alpha}_{J_n}
			(
				\mathcal{G} m^\ast_n (\mu_1),
				\mathcal{G} m^\ast_n (\mu_2)
			)
		\mid 
			I \in \FinSet, k \colon X \to I
		}
	\\
& \quad +
	\sup\Set{
			\lim_{n \to \infty}\sup_{j \in J_n}\Delta^\beta_{I}
			(
				\mathcal{G}k \circ f \circ m_n(j),
				\mathcal{G}k \circ g \circ m_n(j)
			)
		\mid 
			I \in \FinSet, k \colon X \to I
		}
	\\
\end{align*}
%& \leq \Delta^{\alpha}_X(\mu_1,\mu_2) +
%	\sup\Set{
%		\sup_{x \in X}\Delta^\beta_{I} (\mathcal{G}k \circ f(x),\mathcal{G}k \circ
%		g(x))\mid
%		I \in \FinSet, k \colon X \to I
%	}
%	\\
%&\leq \Delta^{\alpha}_X(\mu_1,\mu_2) +\sup_{x \in X}\Delta^\beta_{Y} (f(x),g(x)).
%\end{align*}
Regarding the first term of the last inequality, 
since $m^\ast_n \colon X \to J_n$ where $J_n \in \FinSet$, and $\Delta^\alpha$ is continuous, we have
\[
\Delta^{\alpha}_{J_n}(\mathcal{G} m^\ast_n (\mu_1),\mathcal{G} m^\ast_n (\mu_2))
\leq \Delta^{\alpha}_X(\mu_1,\mu_2)
.\]
Concerning the second term, since $m_n(j) \in X$ for any $n$ and $j \in J_n$, and $k \colon I \to X$ and $\Delta^\beta$ is continuous, we obtain 
\[
\sup_{j \in J_n}\Delta^\beta_{I}
(
\mathcal{G}k \circ f \circ m_n(j),
\mathcal{G}k \circ g \circ m_n(j)
)
\leq \sup_{x \in X}\Delta^\beta_{I}
(
\mathcal{G}k \circ f(x),
\mathcal{G}k \circ g(x)
)
\leq 
\sup_{x \in X}\Delta^\beta_{Y}(f(x),g(x))
.\]
%By letting $n \to \infty$, we conclude the composability of $\mathbf{\Delta}$:
%\[
%\Delta^{\alpha\beta}_{Y}(f^\sharp (\mu_1), g^\sharp (\mu_2))
%\leq \Delta^{\alpha}_X(\mu_1,\mu_2) + \sup_{x \in X}\Delta^\beta_{Y}(f(x),g(x))
%.\]
This completes the proof.
\end{proof}

\begin{theorem}[Theorem \ref{f-divergence:approximability}]
The $f$-divergence $\Delta^f$ is approximable for any weight function $f$. 
\end{theorem}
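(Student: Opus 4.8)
The plan is to realise approximability by building an explicit discretisation of $X$ adapted to the given data $(h_1,h_2,\mu_1,\mu_2)$ --- I write $h_1,h_2\colon X\to\mathcal{G}I$ for the two maps, to avoid clashing with the weight function $f$ --- and then to pass to the limit using that on a finite space the definition of $\Delta^f$ unfolds to a finite sum $\Delta^f_I(\nu,\nu')=\sum_{a\in I}\widetilde f(\nu(\{a\}),\nu'(\{a\}))$ of the perspective function $\widetilde f(s,t)=t f(s/t)$, extended by $\widetilde f(s,0)=s f'_\infty$ for $s>0$, where $f'_\infty=\lim_{u\to\infty}f(u)/u\in(-\infty,+\infty]$, and $\widetilde f(0,0)=0$ --- precisely the conventions imposed on weight functions.

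For each $n$ I would partition $X$ as follows. The map $\langle h_1,h_2\rangle\colon X\to\mathcal{G}I\times\mathcal{G}I$ is measurable, and $\mathcal{G}I\times\mathcal{G}I$ is a compact subspace of a finite cube whose measurable structure is the inherited Borel one; pick a finite Borel partition of it into blocks of diameter $<1/n$, refine it by the finitely many measurable sets $\{(\nu,\nu'):\nu(\{a\})=0\}$ and $\{(\nu,\nu'):\nu'(\{a\})=0\}$ for $a\in I$, and pull everything back along $\langle h_1,h_2\rangle$ to a finite measurable partition $\{A^n_j\}_{j\in J_n}$ of $X$ (discarding empty blocks). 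For each $j\in J_n$ choose a point $x_j\in A^n_j$ (a finite choice), and set $m^\ast_n(x)=j$ when $x\in A^n_j$ and $m_n(j)=x_j$; these are measurable because $J_n$ is finite discrete. Two features of the construction are what the rest of the argument needs: first, $d(h_i(x),h_i(x_{m^\ast_n(x)}))<1/n$ for all $x$ and $i\in\{1,2\}$ (the block containing $x$ has small $\langle h_1,h_2\rangle$-image), so $h_i\circ m_n\circ m^\ast_n\to h_i$ uniformly; second, each block $A^n_j$ lies entirely in $\{x:h_i(x)(\{a\})=0\}$ or entirely in its complement, so that whenever $(h_i^\sharp\mu_i)(\{a\})=0$ --- equivalently $h_i(\cdot)(\{a\})=0$ $\mu_i$-a.e.\ --- every block of positive $\mu_i$-mass lies in the zero set, hence $h_i(x_j)(\{a\})=0$ on all such blocks and $\bigl((h_i\circ m_n\circ m^\ast_n)^\sharp\mu_i\bigr)(\{a\})=\sum_j h_i(x_j)(\{a\})\,\mu_i(A^n_j)=0$ for every $n$.

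Writing $\nu_i=h_i^\sharp\mu_i$ and $\nu^n_i=(h_i\circ m_n\circ m^\ast_n)^\sharp\mu_i$, the uniform estimate gives $|\nu^n_i(\{a\})-\nu_i(\{a\})|\le\int_X|h_i(x)(\{a\})-h_i(x_{m^\ast_n(x)})(\{a\})|\,d\mu_i(x)\le1/n$ for every $a\in I$, so each coordinate converges in $\RR$; it remains to transfer this to $\widetilde f(\nu^n_1(\{a\}),\nu^n_2(\{a\}))\to\widetilde f(\nu_1(\{a\}),\nu_2(\{a\}))$ for each $a$, after which summing over the finite set $I$ yields $\Delta^f_I(\nu^n_1,\nu^n_2)\to\Delta^f_I(\nu_1,\nu_2)$, which is the claim. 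Here care is needed: for convex $f$ continuous at $0$, the perspective function $\widetilde f$ is continuous on $\RR_{\ge0}^2$ at every point except possibly the origin --- continuity at $(0,t)$ with $t>0$ uses continuity of $f$ at $0$, and at $(s,0)$ with $s>0$ one computes $t f(s/t)=s\,\bigl(f(s/t)/(s/t)\bigr)\to s f'_\infty$ --- and the only bad point $(0,0)$, which occurs when $f'_\infty=+\infty$, is exactly ruled out by the zero-set refinement: if $(\nu_1(\{a\}),\nu_2(\{a\}))=(0,0)$ then $(\nu^n_1(\{a\}),\nu^n_2(\{a\}))=(0,0)$ for all $n$ and both sides vanish, and otherwise the limit point is a point of continuity of $\widetilde f$, so convergence of the arguments gives convergence of the values in $\ol\RR$. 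I expect the discussion of this boundary behaviour --- pinning down that the refinement forces exact zeros and that no other degeneracy of $\widetilde f$ can arise for subprobability measures --- to be the main obstacle; the remaining ingredients (finite choice, measurability of the pullback partition and of $m^\ast_n,m_n$, and the coordinatewise convergence) are routine.
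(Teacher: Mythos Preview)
Your proof is correct and takes essentially the same approach as the paper: discretise $X$ by pulling back a fine partition of $[0,1]^{2|I|}$ (with zeros singled out) along $\langle h_1,h_2\rangle$, choose representatives to define $m_n,m^\ast_n$, obtain uniform convergence $h_i\circ m_n\circ m^\ast_n\to h_i$ and hence coordinatewise convergence of the pushforwards, and then pass to the limit in the finite sum defining $\Delta^f_I$. Your treatment of the perspective function $\widetilde f$ and the role of the zero-set refinement is more explicit than the paper's --- the paper uses dyadic blocks with $A^n_0=\{0\}$ as a separate cell, which achieves the same effect as your refinement, but at the final step simply invokes ``continuity of the weight function $f$'' without spelling out why isolating zeros is needed to handle the discontinuity of $\widetilde f$ at the origin when $f'_\infty=+\infty$.
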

\begin{proof}
Consider $h,k \colon X \to \mathcal{G}I$.
Let $|I| = N$.
We may regard $\mathcal{G}I \subseteq [0,1]^N$.
We define a partition $\{C^n_{j_1 \ldots j_{2N}}\}_{j_1,\ldots,j_{2N} \in \{0,1,\ldots, 2^n -1\}}$ of $X$ by
\begin{align*}
C^n_{j_1 \ldots j_{2N}} &= \inverse{h}(B^n_{j_1 \ldots j_N}) \cap \inverse{k}(B^n_{j_{N+1} \ldots j_{2N}})\\
& \qquad \text{ where } B^n_{j_1 \ldots j_N} = A_{j_1} \times \cdots \times A_{j_N}, \quad A^n_0 = \{0\} \text{ and } A^n_{l+1} = \left({l}/{2^n}, {(l+1)}/{2^n} \right].
\end{align*}
We define $J_n = \Set{(j_1,\ldots,j_{2N})\mid j_1,\ldots,j_{2N} \in \{0,1,\ldots, 2^n -1\}, C^n_{j_1 \ldots j_{2N}} \neq \emptyset}$.
We next define $m^\ast_n \colon X \to J_n$ and $m_n \colon J_n \to X$ as follows:
$m^\ast_n(x)$ is the unique element $(j_1,\ldots,j_{2N}) \in J_n$ satisfying $x \in C^n_{j_1,\ldots,j_{2N}}$, and 
$m_n(j_1,\ldots,j_{2N})$ is an element of $C^n_{j_1,\ldots,j_{2N}}$.

From the construction of $\{C^n_{j_1 \ldots j_{2N}}\}_{j_1,\ldots,j_{2N} \in \{0,1,\ldots, 2^n -1\}}$,
for any $n \in \mathbb{N}$, $x \in X$, and $i \in I$, 
\[
|h(x)(i) - (h \circ m_n \circ m^\ast_n)(x)(i)| \leq 2/2^n,\quad
|k(x)(i) - (k \circ m_n \circ m^\ast_n)(x)(i)| \leq 2/2^n
\]
holds.
In particular, for any $i \in I$, the sequences of functions $\{ (h \circ m_n \circ m^\ast_n)(-)(i)\}_{n \in \mathbb{N}}$ and $\{(k \circ m_n \circ m^\ast_n)(-)(i)\}_{n \in \mathbb{N}}$ converge \emph{uniformly} to $h(-)(i)$ and $k(-)(i)$ respectively.
Hence, for any  $\mu_1, \mu_2 \in \mathcal{G}X$, we have
\begin{align*}
h^\sharp(\mu_1)(i) = \int_X h(-)(i)~d\mu_1
&= \lim_{n \to \infty}\int_X (h \circ m_n \circ m^\ast_n)(-)(i)~d\mu_1
 = \lim_{n \to \infty} (h \circ m_n \circ m^\ast_n)^\sharp(\mu_1)(i),\\
g^\sharp(\mu_2)(i) = \int_X k(-)(i)~d\mu_2
&= \lim_{n \to \infty}\int_X (k \circ m_n \circ m^\ast_n)(-)(i)~d\mu_2
=\lim_{n \to \infty} (k \circ m_n \circ m^\ast_n)^\sharp(\mu_2)(i).
\end{align*}
Therefore,
\begin{align*}
\Delta^f_I (h^\sharp(\mu_1),k^\sharp(\mu_2))
& = \sum_{i \in I} g^\sharp(\mu_2)(i) f\left(\frac{h^\sharp(\mu_1)(i)}{g^\sharp(\mu_2)(i)}\right)\\
& = \sum_{i \in I} (\lim_{n \to \infty} (k \circ m_n \circ m^\ast_n)^\sharp(\mu_2)(i)) f\left( \frac{\lim_{n \to \infty} (h \circ m_n \circ m^\ast_n)^\sharp(\mu_1)(i)}{\lim_{n \to \infty} (k \circ m_n \circ m^\ast_n)^\sharp(\mu_2)(i)}\right)\\
&= \lim_{n \to \infty}  \sum_{i \in I} (k \circ m_n \circ m^\ast_n)^\sharp(\mu_2)(i) f\left( \frac{(h \circ m_n \circ m^\ast_n)^\sharp(\mu_1)(i)}{(k \circ m_n \circ m^\ast_n)^\sharp(\mu_2)(i)}\right)\\
&= \lim_{n \to \infty} \Delta^f_I ((h \circ m_n \circ m^\ast_n)^\sharp(\mu_1),(k \circ m_n \circ m^\ast_n)^\sharp(\mu_2)) 
\end{align*}
Remark that the third equality in the above calculation is obtained from the continuity of the weight function $f$.
We then conclude that $\Delta^f$ is approximable.
\end{proof}

\begin{theorem}[Theorem \ref{properties:Renyi}]
  For any $\alpha > 1$, the R\'enyi divergence $D^\alpha$ of order $\alpha$ is reflexive,
  continuous, approximable, composable, and additive (as a singleton-graded
  family).  
\end{theorem}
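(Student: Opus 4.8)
\emph{Plan.} The plan is to exploit that the (subprobability) R\'enyi divergence of order $\alpha$ factors through an $f$-divergence by post-composition with a monotone continuous scalar map. Let $f_\alpha \colon \RR_{\ge 0} \to \RR$ be the weight function $f_\alpha(t) = t^\alpha$; it is convex for $\alpha > 1$ and continuous at $0$, with $f_\alpha(1) = 1$ (allowed, since the definition of weight function does not impose $f(1)=0$). Let $\psi \colon [0,+\infty] \to [-\infty,+\infty]$ be $\psi(t) = \tfrac{1}{\alpha-1}\log t$, a strictly increasing continuous bijection with $\psi(0)=-\infty$ and $\psi(1)=0$. Writing divergences of subprobability measures via Radon--Nikodym derivatives against a common dominating measure, one reads off that $\Delta^{f_\alpha}_X(\mu_1,\mu_2)\in[0,+\infty]$ and $D^\alpha_X(\mu_1||\mu_2) = \psi\bigl(\Delta^{f_\alpha}_X(\mu_1,\mu_2)\bigr)$ for all $\mu_1,\mu_2\in\mathcal{G}X$. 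Each required property of $D^\alpha$ is then transferred from the corresponding property of $\Delta^{f_\alpha}$ across $\psi$.

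Reflexivity is immediate: taking $\mu$ itself as dominating measure gives $\Delta^{f_\alpha}_X(\mu,\mu) = f_\alpha(1)\,\mu(X) = \mu(X)\le 1$, hence $D^\alpha_X(\mu,\mu)=\psi(\mu(X))\le\psi(1)=0$. For continuity and approximability, I use that $\Delta^{f_\alpha}$ is continuous (Theorem~\ref{f-divergence:continuity}) and approximable (Theorem~\ref{f-divergence:approximability}). Since $\psi$ is monotone and continuous on $[0,+\infty]$, it commutes with suprema of arbitrary families and with limits of convergent sequences in $[0,+\infty]$; applying $\psi$ to the defining identities for continuity and approximability of $\Delta^{f_\alpha}$ yields exactly those identities for $D^\alpha$ --- for approximability with the very same data $J_n$, $m^\ast_n$, $m_n$.

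The one genuinely computational step, which I expect to be the main obstacle (though it is a standard estimate), is finite-composability. Fix $I,J\in\FinSet$, $f,g\colon I\to\mathcal{G}J$ and $d_1,d_2\in\mathcal{G}I$. Since $\alpha-1>0$, applying the increasing map $x\mapsto\exp((\alpha-1)x)$ shows the finite-composability inequality for $D^\alpha$ is equivalent to the multiplicative inequality $\Delta^{f_\alpha}_J(f^\sharp d_1, g^\sharp d_2)\le \Delta^{f_\alpha}_I(d_1,d_2)\cdot\sup_{i\in I}\Delta^{f_\alpha}_J(f(i),g(i))$. In the finite case $\Delta^{f_\alpha}_J(\nu_1,\nu_2)=\sum_j \nu_1(j)^\alpha\nu_2(j)^{1-\alpha}$, with the conventions $0^\alpha 0^{1-\alpha}=0$ and $a^\alpha 0^{1-\alpha}=+\infty$ for $a>0$. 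The key fact is that $(a,b)\mapsto a^\alpha b^{1-\alpha}$ is the perspective of the convex function $t\mapsto t^\alpha$, hence jointly convex and positively homogeneous of degree $1$ on $[0,+\infty)^2$, and therefore sublinear, so $\bigl(\sum_i a_i\bigr)^\alpha\bigl(\sum_i b_i\bigr)^{1-\alpha}\le\sum_i a_i^\alpha b_i^{1-\alpha}$ for finite families of nonnegatives. Applying this for each $j\in J$ with $a_i = d_1(i)f(i)(j)$ and $b_i = d_2(i)g(i)(j)$, then exchanging the order of summation and bounding the inner sum over $j$ by $\sup_i\Delta^{f_\alpha}_J(f(i),g(i))$, yields the multiplicative inequality; the only care needed is in the degenerate cases where some $b_i$ or a marginal vanishes, which the stated conventions handle.

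Finally, continuity, approximability and finite-composability of $D^\alpha$ give composability by Theorem~\ref{thm:div:fin-comp->comp}, and continuity together with composability give additivity by Theorem~\ref{thm:div:cont+comp->mon}. Throughout, $D^\alpha$ is treated as a singleton-graded family over the trivial monoid, so no index bookkeeping arises.
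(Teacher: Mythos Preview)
Your proof is correct and follows essentially the same route as the paper: factor $D^\alpha$ as $\psi\circ\Delta^{f_\alpha}$ with $f_\alpha(t)=t^\alpha$ and $\psi=\tfrac{1}{\alpha-1}\log$, transfer continuity and approximability of the $f$-divergence across the monotone continuous $\psi$, verify reflexivity directly, prove finite-composability by the multiplicative inequality for $\Delta^{f_\alpha}$, and then invoke Theorems~\ref{thm:div:fin-comp->comp} and~\ref{thm:div:cont+comp->mon}. The only cosmetic difference is that the paper phrases the finite-composability step as a direct application of Jensen's inequality to $t\mapsto t^\alpha$, whereas you phrase it as sublinearity of the perspective function $(a,b)\mapsto a^\alpha b^{1-\alpha}$; these are the same inequality, and your formulation has the mild advantage of making the treatment of zero denominators more transparent.
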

\begin{proof}
By Theorems \ref{f-divergence:continuity} and \ref{f-divergence:approximability},
the $f$-divergence $\Delta^{\mathrm{R}(\alpha)}$ of the weight function $f(t) = t^\alpha$ is
continuous and approximable. Since the function $g \colon \mathbb{R}_{\leq 0} \to \ol\RR$ defined by
$g(t) =
\frac{1}{\alpha-1}\log(t)$ is monotone and continuous, $D^\alpha = \frac{1}{\alpha-1}\log \Delta^{\mathrm{R}(\alpha)}$ is also continuous and approximable.  Thus, it suffices to show the
reflexivity and \emph{finite-composability} of $D^\alpha$.  The reflexivity is
obvious: $D^\alpha_X(\mu||\mu)_X = \frac{1}{\alpha - 1} \log \mu(X) \leq 0$.  We
show the finite-composability.  Let $I,J\in \FinSet$, $d_1, d_2 \in
\mathcal{G}J$, and $h,k \colon J \to \mathcal{G}I$.  We calculate by Jensen's
inequality:
\begin{align*}
\Delta_{I}^{\mathrm{R}(\alpha)}(h^\sharp d_1,k^\sharp d_2)
&= \sum_{i \in I} \left(\sum_{j \in J}d_2(j)\cdot k(j)(i) \right) \left(\frac{\sum_{j \in J}d_1(j)\cdot h(j)(i)}{\sum_{j \in J}d_2(j)\cdot k(j)(i)} \right)^\alpha\\
&\leq \sum_{j \in J} d_2(j)\left(\frac{d_1(j)}{d_2(j)} \right)^\alpha \sum_{i \in I} k(j)(i)\left(\frac{h(j)(i)}{k(j)(i)} \right)^\alpha\\
&\leq \sum_{j \in J} d_2(j)\left(\frac{d_1(j)}{d_2(j)} \right)^\alpha \cdot \Delta_{\alpha}^{\mathrm{R}(\alpha)}(h(j), k(j))\\
&\leq \Delta_{J}^{\mathrm{R}(\alpha)} (d_1, d_2)\cdot\sup_{j \in J}\Delta_{I}^{\mathrm{R}(\alpha)} (h(j), k(j)).
\end{align*}
This implies
$
D^\alpha_I (h^\sharp d_1||k^\sharp d_2)
\leq D^\alpha_J (d_1||d_2) + \sup_{j \in J}D^\alpha_I(h(j)||k(j))
$.
\end{proof}

\begin{proposition}[Proposition \ref{lem:Renyi:monotonicity}]
  If $1 < \alpha \leq \beta$ then
\[
D^\alpha_X(\mu_1||\mu_2) \leq D^\beta_X(\mu_1||\mu_2).
\]
\end{proposition}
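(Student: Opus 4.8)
The plan is to reduce the inequality to a single convexity fact about the order parameter. Writing $Z_\gamma := \int_X \mu_2(x)\,(\mu_1(x)/\mu_2(x))^\gamma\,dx$, so that $D^\gamma_X(\mu_1||\mu_2) = \tfrac1{\gamma-1}\log Z_\gamma$, I would first dispose of the degenerate cases exactly as in the appendix proof of continuity, via the Lebesgue decomposition of $\mu_1$ with respect to $\mu_2$: if $\mu_1 \not\ll \mu_2$ (in particular if $\mu_2 = 0 \ne \mu_1$) then $Z_\gamma = +\infty$ for every $\gamma > 1$ and there is nothing to prove, and if $\mu_1 = 0$ then $D^\gamma_X = -\infty$; so one may assume $\mu_1 \ll \mu_2$, set $s := d\mu_1/d\mu_2$, and read $Z_\gamma = \int_X s^\gamma\,d\mu_2 \in (0,+\infty]$.

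The core step is to establish that $\Lambda(\gamma) := \log Z_\gamma = \log\int_X e^{\gamma\log s}\,d\mu_2$ is convex on $[1,\infty)$ (as an extended-real-valued function). This is the standard convexity of a cumulant generating function and I would prove it by Hölder's inequality: for $\gamma_\lambda = \lambda\gamma_0 + (1-\lambda)\gamma_1$ with $\lambda \in [0,1]$, factor the integrand as $s^{\gamma_\lambda} = (s^{\gamma_0})^\lambda(s^{\gamma_1})^{1-\lambda}$ and apply Hölder with conjugate exponents $1/\lambda$ and $1/(1-\lambda)$ to get $Z_{\gamma_\lambda} \le Z_{\gamma_0}^\lambda Z_{\gamma_1}^{1-\lambda}$, i.e. $\Lambda(\gamma_\lambda) \le \lambda\Lambda(\gamma_0) + (1-\lambda)\Lambda(\gamma_1)$. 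I would also record that $\Lambda(1) = \log\int_X s\,d\mu_2 = \log\mu_1(X) \le 0$; this is the only place the passage to \emph{sub}probability measures is used.

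With convexity in hand, the difference quotients based at $1$ are nondecreasing, so for $1 < \alpha \le \beta$ one has
\[
  \frac{\Lambda(\alpha)-\Lambda(1)}{\alpha-1} \;\le\; \frac{\Lambda(\beta)-\Lambda(1)}{\beta-1}.
\]
Then, adding $\Lambda(1)/(\alpha-1)$ on the left and $\Lambda(1)/(\beta-1)$ on the right and using $\Lambda(1)\le 0$ together with $\tfrac1{\alpha-1}\ge\tfrac1{\beta-1}>0$ (so $\Lambda(1)/(\alpha-1)\le\Lambda(1)/(\beta-1)$), I obtain
\[
  D^\alpha_X(\mu_1||\mu_2) = \frac{\Lambda(\alpha)}{\alpha-1} \;\le\; \frac{\Lambda(\beta)}{\beta-1} = D^\beta_X(\mu_1||\mu_2).
\]
An equivalent route avoids $\Lambda$ altogether: since $Z_\gamma = \int_X s^{\gamma-1}\,d\mu_1 = \|s\|_{L^{\gamma-1}(\mu_1)}^{\gamma-1}$, one has $\exp D^\alpha_X(\mu_1||\mu_2) = \|s\|_{L^{\alpha-1}(\mu_1)} \le \mu_1(X)^{\frac1{\alpha-1}-\frac1{\beta-1}}\|s\|_{L^{\beta-1}(\mu_1)} \le \|s\|_{L^{\beta-1}(\mu_1)}$, the last step using $\mu_1(X)\le 1$, which is the $L^p$-monotonicity inequality on a space of total mass at most one.

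The hard part will not be the analytic core---which is the two-line Hölder estimate---but the case bookkeeping: checking that the convexity/chord argument remains valid when some $Z_\gamma$ equal $+\infty$ (it does, since those cases force the right-hand side to be $+\infty$), handling $\mu_1 = 0$, $\mu_2 = 0$, and $\mu_1 \not\ll \mu_2$ cleanly, and making sure the subprobability normalisation $\mu_1(X)\le 1$ is invoked with the correct sign when combining the two chord contributions.
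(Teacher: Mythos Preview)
Your argument is correct. The analytic core---H\"older's inequality applied to the power decomposition of $s^\gamma$---is the same as the paper's, but the packaging differs. The paper first invokes the continuity theorem (Theorem~\ref{f-divergence:continuity}) to reduce to the finite discrete case, and then applies Jensen's inequality for the concave map $t\mapsto t^{(\alpha-1)/(\beta-1)}$ against the probability weights $p_i/|p|$, handling the subprobability normalisation via the inequality $(1/|p|)^{(\alpha-1)/(\beta-1)}\le 1/|p|$. You instead work directly in the continuous setting, proving convexity of $\gamma\mapsto\log\int s^\gamma\,d\mu_2$ by H\"older and then using the chord inequality at the anchor point $\gamma=1$, with $\Lambda(1)=\log\mu_1(X)\le 0$ playing the role that $|p|\le 1$ plays in the paper. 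Your alternative $L^p$-monotonicity phrasing is the cleanest version of the same idea. What your route buys is self-containment: you do not need the discretization machinery. What the paper's route buys is that the absolute-continuity and degenerate-case bookkeeping you flag at the end is absorbed into the already-established continuity theorem, so the displayed computation can stay purely algebraic.
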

\begin{proof}
The proof is almost the same as \citet[Theorem 3]{6832827}.
Since $D^\alpha$ and $D^\beta$ are continuous, it suffices to prove in finite discrete case.
We denote by $|p|$ the sum $\sum_{i \in I} p_i$.
We may assume $|p| > 0$ since if $|p| = 0$ then $D^\alpha_I(p||q) = D^\beta_I(p||q) = -\infty$.
We remark that the function $t \mapsto t^{\frac{\alpha-1}{\beta-1}}$ is concave.
We have $\left(\frac{1}{|p|}\right)^{\frac{\alpha-1}{\beta-1}} \leq \frac{1}{|p|}$ since $1 \leq \frac{1}{|p|}$.
Therefore,
\begin{align*}
\frac{1}{\alpha - 1} \log \sum_{i \in I} p_i^{\alpha}q_i^{1 - \alpha}
&= \frac{1}{\alpha - 1} \log \left( |p| \sum_{i \in I} \frac{p_i}{|p|} \left(\left(\frac{p_i}{q_i}\right)^{1-\beta}\right)^{\frac{\alpha-1}{\beta-1}}\right)\\
&\leq \frac{1}{\alpha - 1} \log  \left( |p|  \left(\sum_{i \in I} \frac{p_i}{|p|} \left(\frac{p_i}{q_i}\right)^{1-\beta}\right)^{\frac{\alpha-1}{\beta-1}}\right)\\
&\leq \frac{1}{\alpha - 1} \log \left(|p| \cdot \sum_{i \in I} \frac{p_i}{|p|} \left(\frac{p_i}{q_i}\right)^{1-\beta}\right)^{\frac{\alpha-1}{\beta-1}}\\
&= \lefteqn{\frac{1}{\beta - 1} \log \sum_{i \in I} p_i^{\beta}q_i^{1 - \beta}}
\end{align*}
This completes the proof.
\end{proof}

\begin{proposition}[Proposition \ref{lem:Renyi:weak-triangle}]
For any $\alpha > 1$, $\mu_1,\mu_2, \mu_3 \in \mathcal{G}X$, and  $p,q > 1$ satisfying $\frac{1}{p}+\frac{1}{q} = 1$, we have
\[
D^\alpha_X(\mu_1||\mu_3) \leq \frac{p\alpha -1 }{p(\alpha - 1)}D^{p\alpha}_X(\mu_1||\mu_2) + D^{\frac{q}{p}(p\alpha - 1)}_X(\mu_1||\mu_2)
.\]
\end{proposition}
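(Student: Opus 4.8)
The plan is to follow the same two-step strategy used for Proposition~\ref{lem:Renyi:monotonicity}: reduce the statement to the finite discrete case using continuity of R\'enyi divergence, and then prove the discrete inequality by a single application of H\"older's inequality. As in \citet[Lemma~4.1]{Langlois2014}, the second summand on the right is to be read as $D^{\frac{q}{p}(p\alpha-1)}_X(\mu_2\|\mu_3)$, so that the inequality genuinely chains $\mu_1$ to $\mu_3$ through $\mu_2$.

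First I would reduce to finite discrete spaces. By Theorem~\ref{properties:Renyi} each $D^\gamma$ with $\gamma>1$ is continuous, so $D^\alpha_X(\mu_1\|\mu_3)=\sup\{D^\alpha_I(\mathcal{G}k\mu_1\|\mathcal{G}k\mu_3)\mid I\in\FinSet,\ k\colon X\to I\}$, and for any such $k$ and any $\gamma>1$ one has $D^\gamma_I(\mathcal{G}k\mu_i\|\mathcal{G}k\mu_j)\le D^\gamma_X(\mu_i\|\mu_j)$, since $k$ is among the maps witnessing the supremum that defines the right-hand side. Hence it suffices, for each fixed measurable $k\colon X\to I$ with $I$ finite, to prove the inequality for the discrete sub-distributions $\mathcal{G}k\mu_1,\mathcal{G}k\mu_2,\mathcal{G}k\mu_3$ on $I$; bounding the two resulting right-hand terms upward by $D^{p\alpha}_X(\mu_1\|\mu_2)$ and $D^{\frac{q}{p}(p\alpha-1)}_X(\mu_2\|\mu_3)$ and then taking the supremum over $k$ yields the claim.

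For the finite discrete case, let $a_i,b_i,c_i$ denote the weights of three discrete sub-distributions on a finite index set. If $\sum_i a_i=0$ the left side is $-\infty$, and if a right-hand term is $+\infty$ the bound is trivial, so assume otherwise. From $\tfrac{a_i}{c_i}=\tfrac{a_i}{b_i}\cdot\tfrac{b_i}{c_i}$ and H\"older with conjugate exponents $p,q$,
\[
\sum_i a_i^\alpha c_i^{1-\alpha}=\sum_i\Bigl(\tfrac{a_i}{b_i}\Bigr)^{\alpha}\Bigl(\tfrac{b_i}{c_i}\Bigr)^{\alpha-1}b_i\ \le\ \Bigl(\sum_i a_i^{p\alpha}b_i^{1-p\alpha}\Bigr)^{1/p}\Bigl(\sum_i b_i^{\gamma}c_i^{1-\gamma}\Bigr)^{1/q},
\]
where $\gamma:=q(\alpha-1)+1$; the split uses the identity $\tfrac{q}{p}(p\alpha-1)=q(\alpha-1)+1=\gamma$, which is precisely where $\tfrac1p+\tfrac1q=1$ (equivalently $pq=p+q$) enters. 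The first factor equals $\exp\bigl(\tfrac{p\alpha-1}{p}D^{p\alpha}(a\|b)\bigr)$, the $\tfrac1q$-th power of the second equals $\exp\bigl(\tfrac{\gamma-1}{q}D^{\gamma}(b\|c)\bigr)=\exp\bigl((\alpha-1)D^{\gamma}(b\|c)\bigr)$, and the left side equals $\exp\bigl((\alpha-1)D^\alpha(a\|c)\bigr)$; taking logarithms and dividing by $\alpha-1>0$ gives the asserted inequality, with order $\gamma=\tfrac{q}{p}(p\alpha-1)$ on the second term. The conventions $0f(a/0)=\lim_{t\to0^+}tf(a/t)$ and $0f(0/0)=0$ absorb vanishing denominators.

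The exponent arithmetic above is routine. The one point that needs care is combining H\"older with the $f$-divergence conventions when some $b_i$ or $c_i$ vanish, so that the rewriting $\sum_i(b_i/c_i)^{q(\alpha-1)}b_i=\sum_i b_i^{\gamma}c_i^{1-\gamma}$ and the identification of each factor with an exponentiated R\'enyi divergence remain valid after singular contributions are tracked; routing through the finite discrete case avoids the messier alternative of arguing on $X$ directly via Lebesgue decompositions of $\mu_1,\mu_2,\mu_3$ against a common dominating measure, so I do not expect any genuine obstacle beyond this edge-case bookkeeping.
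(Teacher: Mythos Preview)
Your argument is correct and uses the same H\"older split as the paper; you also correctly read the second summand as $D^{\frac{q}{p}(p\alpha-1)}_X(\mu_2\|\mu_3)$, matching the paper's own proof.

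The only real difference is the route to the continuous case. The paper does not pass through a discrete reduction: it assumes without loss of generality that $\mu_1\ll\mu_2\ll\mu_3$ (since otherwise the right-hand side is $+\infty$), writes $\Delta^{\mathrm{R}(\alpha)}_X(\mu_1,\mu_3)=\int_X(d\mu_1/d\mu_3)^\alpha\,d\mu_3$, factors $d\mu_1/d\mu_3=(d\mu_1/d\mu_2)(d\mu_2/d\mu_3)$ via the chain rule for Radon--Nikodym derivatives, and applies H\"older directly to the integral. Your detour through continuity and Theorem~\ref{properties:Renyi} is perfectly valid and has the advantage that the edge-case bookkeeping with vanishing masses reduces to a finite sum with the paper's $0f(0/0)$ conventions; the paper's direct approach is shorter but leans on the absolute-continuity reduction and the chain rule to avoid that bookkeeping. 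The underlying exponent arithmetic, in particular the identification $\frac{q}{p}(p\alpha-1)=q(\alpha-1)+1$, is identical in both.
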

\begin{proof}
Recall that if $\mu_1 \not\ll \mu_2$ then $D^\alpha_X(\mu_1||\mu_2) = \infty$.
Hence, we may assume $\mu_1 \ll \mu_2 \ll \mu_3$ without loss of generality 
(if not so, the right-hand side should be infinity).
By chain rule of Radon-Nikodym derivative and H\"{o}lder's inequality, 
\begin{align*}
\Delta^{\mathrm{R}(\alpha)}_X(\mu_1,\mu_3)
&= \int_X \left(\frac{d\mu_1}{d\mu_3}\right)^\alpha d\mu_3\\
&= \int_X \left(\frac{d\mu_1}{d\mu_2} \cdot \frac{d\mu_2}{d\mu_3}\right)^\alpha d\mu_3\\
&= \int_X \left(\frac{d\mu_1/d\mu_3}{d\mu_2/d\mu_3}\right)^\alpha \cdot \left(\frac{d\mu_2}{d\mu_3}\right)^{\frac{1}{p}} \cdot \left(\frac{d\mu_2}{d\mu_3}\right)^{\alpha-\frac{1}{p}} d\mu_3\\
&\leq \left(
	\int_X \left( \frac{d\mu_1/d\mu_3}{d\mu_2/d\mu_3} \right)^{p\alpha}
		\cdot \left( \frac{d\mu_2}{d\mu_3} \right)
	d\mu_3
	\right)^{\frac{1}{p}}
\cdot \left(\int_X
		\left(\frac{d\mu_2}{d\mu_3}\right)^{q(\alpha-\frac{1}{p})} d\mu_3
	\right)^{\frac{1}{q}} \\
& =
\Delta^{\mathrm{R}(p\alpha)}_X (\mu_1 || \mu_2)^{\frac{1}{p}} \cdot
\Delta^{\mathrm{R}(q\alpha-\frac{q}{p})}_X (\mu_2 || \mu_3)^{\frac{1}{q}}
\end{align*}
We then conclude 
$D^\alpha_X(\mu_1||\mu_3) \leq \frac{p\alpha -1 }{p(\alpha - 1)}D^{p\alpha}_X(\mu_1||\mu_2) + D^{\frac{q}{p}(p\alpha - 1)}_X(\mu_1||\mu_2)$.
\end{proof}

\begin{theorem}[Theorem \ref{properties:zCDP}]
The $\mathbb{R}_{\geq 0}$-graded family $\mathbf{\Delta^{\mathtt{zCDP}}}=\{\Delta^{\mathtt{zCDP}(\xi)}\}_{0 \leq \xi}$ %\sk{graded by which monoid?}
is reflexive, continuous, composable, and additive.
\end{theorem}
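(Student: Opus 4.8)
The plan is to transfer each property of the singleton-graded families $\{D^\alpha\}$ established in Theorem~\ref{properties:Renyi} through the supremum over the R\'enyi order $\alpha$ and through the monotone affine rescaling defining $\Delta^{\mathtt{zCDP}(\xi)}$. Write $\Delta^{\mathtt{zCDP}(\xi)}_X(\mu_1,\mu_2) = \sup_{\alpha > 1}\phi_\alpha^\xi(D^\alpha_X(\mu_1||\mu_2))$, where $\phi_\alpha^\xi \colon \overline\RR \to \overline\RR$ is the map $t \mapsto (t - \xi)/\alpha$; it is monotone and preserves suprema, and $\xi \le \xi'$ implies $\phi_\alpha^{\xi'} \le \phi_\alpha^\xi$ pointwise, so $\mathbf{\Delta^{\mathtt{zCDP}}}$ is genuinely an $\RR_{\ge 0}$-graded family (the grading order being contravariant, as required), with sums in $\overline\RR$ read under the convention $\infty + (-\infty) = -\infty$ fixed in Section~\ref{sec:divergences}. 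Reflexivity is then immediate: reflexivity of each $D^\alpha$ gives $D^\alpha_X(\mu||\mu) \le 0$, hence $\phi_\alpha^\xi(D^\alpha_X(\mu||\mu)) = (D^\alpha_X(\mu||\mu) - \xi)/\alpha \le 0$ since $\xi \ge 0$ and $\alpha > 1$, so the supremum over $\alpha$ is $\le 0$.

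For continuity, I would start from continuity of each $D^\alpha$, namely $D^\alpha_X(\mu_1||\mu_2) = \sup\{D^\alpha_I(\mathcal{G}k\mu_1||\mathcal{G}k\mu_2) \mid I \in \FinSet,\ k \colon X \to I\}$, apply the supremum-preserving $\phi_\alpha^\xi$, and then swap the two suprema (over $\alpha > 1$ and over pairs $(I,k)$), obtaining $\Delta^{\mathtt{zCDP}(\xi)}_X(\mu_1,\mu_2) = \sup_{I,k}\Delta^{\mathtt{zCDP}(\xi)}_I(\mathcal{G}k\mu_1,\mathcal{G}k\mu_2)$, which is continuity of $\mathbf{\Delta^{\mathtt{zCDP}}}$.

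For composability, fix $f,g \colon X \to \mathcal{G}Y$, $\mu_1,\mu_2 \in \mathcal{G}X$, and grades $\xi,\xi'$ (the monoid operation on $\RR_{\ge 0}$ being addition). For each fixed $\alpha > 1$, composability of $D^\alpha$ gives $D^\alpha_Y(f^\sharp\mu_1||g^\sharp\mu_2) \le D^\alpha_X(\mu_1||\mu_2) + \sup_{x}D^\alpha_Y(f(x)||g(x))$; applying the monotone $\phi_\alpha^{\xi+\xi'}$ and splitting the affine expression yields
\[
  \phi_\alpha^{\xi+\xi'}\big(D^\alpha_Y(f^\sharp\mu_1||g^\sharp\mu_2)\big) \le \phi_\alpha^{\xi}\big(D^\alpha_X(\mu_1||\mu_2)\big) + \phi_\alpha^{\xi'}\big(\sup_{x}D^\alpha_Y(f(x)||g(x))\big).
\]
Bounding the first summand by $\Delta^{\mathtt{zCDP}(\xi)}_X(\mu_1,\mu_2)$, pulling $\sup_x$ out of the supremum-preserving $\phi_\alpha^{\xi'}$ in the second, then taking $\sup_{\alpha>1}$ and commuting it with $\sup_x$, gives exactly $\Delta^{\mathtt{zCDP}(\xi+\xi')}_Y(f^\sharp\mu_1,g^\sharp\mu_2) \le \Delta^{\mathtt{zCDP}(\xi)}_X(\mu_1,\mu_2) + \sup_x\Delta^{\mathtt{zCDP}(\xi')}_Y(f(x),g(x))$, i.e.\ composability. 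Additivity then follows for free from Theorem~\ref{thm:div:cont+comp->mon}, since $\mathbf{\Delta^{\mathtt{zCDP}}}$ is continuous and composable.

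The arguments are all routine manipulations of suprema; the one point needing care is the extended-real bookkeeping, namely checking that the splitting $\phi_\alpha^{\xi+\xi'}(a+b) = \phi_\alpha^{\xi}(a) + \phi_\alpha^{\xi'}(b)$ and the pulling of suprema through $\phi_\alpha^{\xi}$ remain valid when $D^\alpha$ attains the values $\pm\infty$, under the convention $\infty+(-\infty) = -\infty$. I would also stress that, unlike the $f$-divergence and plain R\'enyi cases, we make no approximability claim here, so composability must be derived directly from Theorem~\ref{properties:Renyi} by commuting suprema rather than through the route of Theorem~\ref{thm:div:fin-comp->comp}.
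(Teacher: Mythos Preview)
Your proposal is correct and mirrors the paper's own proof essentially step for step. The paper introduces the auxiliary divergences $\Delta^{\mathtt{zCDP+}(\xi,\alpha)} = \tfrac{1}{\alpha}(D^\alpha - \xi)$ (your $\phi_\alpha^\xi \circ D^\alpha$), checks reflexivity, continuity, and composability for each fixed $\alpha$ from Theorem~\ref{properties:Renyi}, passes to the supremum over $\alpha$ by commuting suprema (explicitly dropping approximability for the reason you name), and then obtains additivity from Theorem~\ref{thm:div:cont+comp->mon}.
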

\begin{proof}
Consider any $\alpha > 1$. We consider a $(\mathbb{R}_{\geq 0},+,0,\leq)$-graded family $\mathbf{\Delta}^{\mathtt{zCDP+}(\alpha)} = \{\Delta^{\mathtt{zCDP+}(\xi,\alpha)}\}_{\xi \in \mathbb{R}_{\geq 0}}$ of the following divergences:
\[
\Delta^{\mathtt{zCDP+}(\xi,\alpha)}_X(\mu_1,\mu_2)
=\frac{1}{\alpha} \left(D^\alpha(\mu_1||\mu_2)) - \xi \right).
\]
By the previous theorem \ref{properties:Renyi}, this family is reflexive and continuous for any $\alpha > 1$.
The composability of the  family $\mathbf{\Delta}^{\mathtt{zCDP+}(\alpha)} = \{\Delta^{\mathtt{zCDP+}(\xi,\alpha)}\}_{\xi \in \mathbb{R}_{\geq 0}}$
is the direct consequence of the composability of $\alpha$-R\'enyi divergence:
for any $\mu_1,\mu_2 \in \mathcal{G}X$, and $f,g \colon X \to \mathcal{G}Y$,
\[
\frac{1}{\alpha}(D^\alpha_X(f^\sharp(\mu_1)||g^\sharp(\mu_2)) - (\xi_1+\xi_2))
\leq 
\frac{1}{\alpha}(D^\alpha_X(\mu_1||\mu_2) - \xi_1)
+
\sup_{x\in X} \frac{1}{\alpha}(D^\alpha_Y(f(x)||g(y)) - \xi_2).
\]
Since $\Delta^{\mathtt{zCDP}(\xi)} = \sup_{\alpha > 1}
\Delta^{\mathtt{zCDP+}(\xi,\alpha)}$, the graded family
$\mathbf{\Delta^{\mathtt{zCDP}}}=\{\Delta^{\mathtt{zCDP}(\xi)}\}_{0 \leq \xi}$
is reflexive, continuous, and composable.\footnote{%
  We obtain these properties from commutativity $\sup_{y \in Y} \sup_{x \in X}
  f(x,y) = \sup_{x \in X} \sup_{y \in Y}f(x,y)$ of supremums.  We drop the
  approximability, which is not given by a supremum but rather by a limit.}
The additivity is obtained from Theorem \ref{thm:div:cont+comp->mon}.
\end{proof}

\subsection{Detailed Construction and Proof of Well-definedness of Approximate Span-lifting}

\begin{definition}[Functors]
If the family $\mathbf{\Delta}$ is \emph{functorial} then
%the approximate span-lifting
%$(-)^{\sharp(\mathbf{\Delta},\alpha,\delta)}$ forms an \emph{endofunctor} on
%$\Span(\Meas)$ for any pair of $\alpha \in A$ and $\delta \in \mathbb{R}\cup
%\{-\infty,+\infty \}$ as follows:
the structure of \emph{endofunctor} on $\Span(\Meas)$ of the approximate span-lifting $(-)^{\sharp(\mathbf{\Delta},\alpha,\delta)}$ is given as follows:
for all $\alpha \in A$, $\delta \in \ol\RR$, and 
$(h,k,l) \colon (X,Y,\Phi,\rho_1,\rho_2) \to (X',Y',\Psi,\rho'_1,\rho'_2)$ in $\Span(\Meas)$,
\begin{equation}\label{eq:span-lifting:functor_part}
(\mathcal{G}h,\mathcal{G}k,(\mathcal{G}l \times \mathcal{G}l)|_{W(\Phi,\mathbf{\Delta},\alpha,\delta)})
\colon (X,Y,\Phi,\rho_1,\rho_2)^{\sharp(\mathbf{\Delta},\alpha,\delta)}
\to (X',Y',\Psi,\rho'_1,\rho'_2)^{\sharp(\mathbf{\Delta},\alpha,\delta)}.
\end{equation}
\end{definition}
\begin{theorem}[Well-definedness]
If $\mathbf{\Delta}$ is functorial then the above structure $(-)^{\sharp(\mathbf{\Delta},\alpha,\delta)}$ forms indeed an endofunctor on $\Span(\Meas)$.
\end{theorem}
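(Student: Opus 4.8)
The plan is to verify the three defining properties of a functor for the assignment (\ref{eq:span-lifting:functor_part}): that it is a well-defined $\Span(\Meas)$-morphism, that it preserves identities, and that it preserves composition. The first two components are simply $\mathcal{G}h$ and $\mathcal{G}k$, which are measurable since $\mathcal{G}$ is a functor on $\Meas$, so all of the bookkeeping is inherited from the functoriality of the sub-Giry monad $\mathcal{G}$; the only place where the hypothesis is actually used is the well-definedness of the third component, and there it is precisely \emph{functoriality of} $\mathbf{\Delta}$ that we need.

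\textbf{Well-definedness.} First I would fix a morphism $(h,k,l) \colon (X,Y,\Phi,\rho_1,\rho_2) \to (X',Y',\Psi,\rho'_1,\rho'_2)$ in $\Span(\Meas)$, so that $h \circ \rho_1 = \rho'_1 \circ l$ and $k \circ \rho_2 = \rho'_2 \circ l$ in $\Meas$. I claim $\mathcal{G}l \times \mathcal{G}l$ restricts to a measurable map $W(\Phi,\mathbf{\Delta},\alpha,\delta) \to W(\Psi,\mathbf{\Delta},\alpha,\delta)$. Indeed, for $(\nu_1,\nu_2) \in W(\Phi,\mathbf{\Delta},\alpha,\delta)$, i.e.\ $\Delta^\alpha_\Phi(\nu_1,\nu_2) \leq \delta$, functoriality of $\mathbf{\Delta}$ applied to $l \colon \Phi \to \Psi$ yields $\Delta^\alpha_\Psi(\mathcal{G}l(\nu_1),\mathcal{G}l(\nu_2)) \leq \Delta^\alpha_\Phi(\nu_1,\nu_2) \leq \delta$, hence $(\mathcal{G}l(\nu_1),\mathcal{G}l(\nu_2)) \in W(\Psi,\mathbf{\Delta},\alpha,\delta)$; measurability of the restriction is automatic, since both $W$-spaces carry the subspace $\sigma$-algebras from $\mathcal{G}\Phi \times \mathcal{G}\Phi$ and $\mathcal{G}\Psi \times \mathcal{G}\Psi$, on which $\mathcal{G}l \times \mathcal{G}l$ is already measurable. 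It then remains to check the two span equations relating the legs; these follow by applying the functor $\mathcal{G}$ to $h \circ \rho_1 = \rho'_1 \circ l$ and $k \circ \rho_2 = \rho'_2 \circ l$ and precomposing with the canonical projections of $W(\Phi,\mathbf{\Delta},\alpha,\delta)$.

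\textbf{Identities and composition.} For the identity $(\mathrm{id}_X,\mathrm{id}_Y,\mathrm{id}_\Phi)$ on $(X,Y,\Phi,\rho_1,\rho_2)$, the assigned morphism is $(\mathcal{G}\mathrm{id}_X,\mathcal{G}\mathrm{id}_Y,(\mathcal{G}\mathrm{id}_\Phi \times \mathcal{G}\mathrm{id}_\Phi)|_{W(\Phi,\mathbf{\Delta},\alpha,\delta)})$, which equals $(\mathrm{id}_{\mathcal{G}X},\mathrm{id}_{\mathcal{G}Y},\mathrm{id}_{W(\Phi,\mathbf{\Delta},\alpha,\delta)})$ by functoriality of $\mathcal{G}$, i.e.\ the identity on the lifted object. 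For composition, given a further morphism $(h',k',l') \colon (X',Y',\Psi,\rho'_1,\rho'_2) \to (X'',Y'',\Omega,\rho''_1,\rho''_2)$, the composite $(h' \circ h, k' \circ k, l' \circ l)$ is assigned $(\mathcal{G}(h' \circ h),\mathcal{G}(k' \circ k),(\mathcal{G}(l' \circ l) \times \mathcal{G}(l' \circ l))|_{W(\Phi,\mathbf{\Delta},\alpha,\delta)})$, which agrees componentwise with the composite of the assignments of $(h,k,l)$ and $(h',k',l')$ since $\mathcal{G}(h' \circ h) = \mathcal{G}h' \circ \mathcal{G}h$ and similarly for $k$ and $l$, the intermediate restriction through $W(\Psi,\mathbf{\Delta},\alpha,\delta)$ being well-defined by the first step.

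\textbf{Main obstacle.} There is no real obstacle: the statement amounts to checking that the object-level span-lifting of Definition \ref{def:span} extends functorially, and the single substantive input --- closure of $W(\Phi,\mathbf{\Delta},\alpha,\delta)$ under $\mathcal{G}l \times \mathcal{G}l$ --- is exactly the functoriality clause of Definition \ref{def:div:properties}. The only point that merits a little care is keeping track of the various subspace $\sigma$-algebras on the $W(-,\mathbf{\Delta},\alpha,\delta)$ so as to confirm that every restriction of a measurable map is again measurable, which is immediate from the definition of subspace.
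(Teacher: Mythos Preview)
Your proposal is correct and follows essentially the same approach as the paper: the paper also reduces well-definedness of the third component to functoriality of $\mathbf{\Delta}$, derives the span equations by applying $\mathcal{G}$ to $h \circ \rho_1 = \rho'_1 \circ l$ and $k \circ \rho_2 = \rho'_2 \circ l$, and then dismisses preservation of identities and composition as ``obvious by definition.'' If anything, your version is more explicit than the paper's on the functor axioms, but the argument is the same.
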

\begin{proof}%[{Theorem \ref{span-lifting:monoidal+gradedmonad}}]
%First we prove that if $\mathbf{\Delta}$ is functorial then the approximate span-lifting $(-)^{\sharp(\mathbf{\Delta},\alpha,\delta)}$ forms an endofunctor on $\Span(\Meas)$.
We first show the well-definedness of (\ref{eq:span-lifting:functor_part}).
We fix $(h,k,l) \colon (X,Y,\Phi,\rho_1,\rho_2) \to (X',Y',\Psi,\rho'_1,\rho'_2)$ in $\Span(\Meas)$ and parameters 
$\alpha \in A$ and $\delta \in \ol\RR$.
Let $(\nu_1,\nu_2)\in W(\Phi,\mathbf{\Delta},\alpha,\delta)$. 
The pair satisfies $\Delta^\alpha_\Phi(\nu_1,\nu_2) \leq \delta$.
Since the divergence $\Delta^\alpha$ is functorial, we have 
$\Delta^\alpha_\Psi(\mathcal{G}(l)(\nu_1),\mathcal{G}(l)(\nu_2)) \leq \delta$.
Thus, $(\mathcal{G}l \times \mathcal{G}l)|_{W(\Phi,\mathbf{\Delta},\alpha,\delta)}$ is a measurable function from 
$W(\Phi,\mathbf{\Delta},\alpha,\delta)$ to
$W(\Psi,\mathbf{\Delta},\alpha,\delta)$.\footnote{%
  Strictly speaking, we consider the function $W(\Phi,\mathbf{\Delta},\alpha,\delta)
  \xrightarrow{(\mathcal{G}l \times
  \mathcal{G}l)|_{W(\Phi,\mathbf{\Delta},\alpha,\delta)}} (\text{Image})
  \xrightarrow{\text{inclusion}} W(\Phi,\mathbf{\Delta},\alpha,\delta)$ through
  the image $(\text{Image})$.  Functoriality shows the existence of the
inclusion.}
Since $\mathcal{G}$ is a functor on $\Meas$, we obtain,
\begin{align*}
\mathcal{G}\rho'_1\circ \pi_1 \circ (\mathcal{G}l \times \mathcal{G}l)|_{W(\Phi,\mathbf{\Delta},\alpha,\delta)}
= \mathcal{G}\rho'_1\circ \mathcal{G}l \circ \pi_1|_{W(\Phi,\mathbf{\Delta},\alpha,\delta)}
&= \mathcal{G}h \circ \mathcal{G}\rho_1 \circ \pi_1|_{W(\Phi,\mathbf{\Delta},\alpha,\delta)},\\
\mathcal{G}\rho'_2 \circ \pi_2\circ (\mathcal{G}l \times \mathcal{G}l)|_{W(\Phi,\mathbf{\Delta},\alpha,\delta)}
= \mathcal{G}\rho'_2\circ \mathcal{G}l \circ \pi_2|_{W(\Phi,\mathbf{\Delta},\alpha,\delta)}
&= \mathcal{G}k \circ \mathcal{G}\rho_2 \circ \pi_2|_{W(\Phi,\mathbf{\Delta},\alpha,\delta)}.
\end{align*}
Thus, the construction (\ref{eq:span-lifting:functor_part}) is a mapping on $\Span(\Meas)$-morphisms.

The functoriality is obvious by definition.
\end{proof}
%\paragraph{Graded Monad}

\begin{definition}[Graded monad structures]
If the family $\mathbf{\Delta}$ is \emph{reflexive and composable} then 
%the approximate span-liftings $(-)^{\sharp(\mathbf{\Delta},\alpha,\delta)}$ form an
the structure of $A \times (\ol\RR,+,0,\leq)$-\emph{graded monad} on
$\Span(\Meas)$ is given as follows.
\begin{description}
  \item[Unit:] for any span $(X,Y,\Phi,\rho_1,\rho_2)$, we define
\begin{equation}\label{eq:span-lifting:unit}
(\eta_X,\eta_Y,\langle \eta_\Phi,\eta_\Phi \rangle) \colon
(X,Y,\Phi,\rho_1,\rho_2) \to (X,Y,\Phi,\rho_1,\rho_2)^{\sharp(\mathbf{\Delta},1_A,0)}.
\end{equation}
    %Here, $\langle \eta_\Phi,\eta_\Phi \rangle$ forms a measurable function $\Phi \to W(\Phi,\mathbf{\Delta},1_A,0)$.
    %The well-definedness of the unit is proved from the reflexivity of $\mathbf{\Delta}$.
\item[Kleisli extensions:] for any morphism $(h,k,l) \colon(X,Y,\Phi,\rho_1,\rho_2) \to(X',Y',\Psi,\rho'_1,\rho'_2)^{\sharp(\mathbf{\Delta},\alpha,\delta)}$ in $\Span(\Meas)$, 
  we define
\begin{align}\label{eq:span-lifting:Kleisli}
\lefteqn{
(h^\sharp, k^\sharp, ((\pi_1|_{W(\Psi,\mathbf{\Delta},\alpha,\delta)} \circ l)^\sharp \times (\pi_2|_{W(\Psi,\mathbf{\Delta},\alpha,\delta)} \circ l)^\sharp)|_{W(\Phi,\mathbf{\Delta},\beta,\gamma)})\colon}\notag\\
&\qquad\qquad\qquad\qquad (X,Y,\Phi,\rho_1,\rho_2)^{\sharp(\mathbf{\Delta},\beta,\gamma)} \to (X',Y',\Psi,\rho'_1,\rho'_2)^{\sharp(\mathbf{\Delta},\alpha\beta,\delta+\gamma)}
\end{align}
  %The well-definedness of them are proved from the composability of $\mathbf{\Delta}$.
\item[Inclusions:] for any $\alpha \preceq \beta$, $\delta \leq \gamma$, and
  $(X,Y,\Phi,\rho_1,\rho_2)$ in $\Span(\Meas)$, we define
\begin{equation}\label{eq:span-lifting:inclusion}
(\mathrm{id}_{\mathcal{G}X},\mathrm{id}_{\mathcal{G}Y}, (\mathrm{id}_{\mathcal{G}\Phi} \times \mathrm{id}_{\mathcal{G}\Phi})|_{W(\Phi,\mathbf{\Delta},\alpha,\delta)})
\colon
(X,Y,\Phi,\rho_1,\rho_2)^{\sharp(\mathbf{\Delta},\alpha,\delta)}\hspace{-1em} \to (X,Y,\Phi,\rho_1,\rho_2)^{\sharp(\mathbf{\Delta},\beta,\gamma)}.
\end{equation}
  \end{description}
\end{definition}
We remark here that each $(-)^{\sharp(\mathbf{\Delta},\alpha,\delta)}$ is also an endofunctor because $\mathbf{\Delta}$ is the functorial since it is both reflexive and composable.
\begin{theorem}[Well-definedness]
If $\mathbf{\Delta}$ is reflexive and composable then the above structures $(-)^{\sharp(\mathbf{\Delta},\alpha,\delta)}$ form indeed an $A \times \ol\RR$-graded monad on $\Span(\Meas)$.
\end{theorem}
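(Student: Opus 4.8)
The plan is to split the statement into two parts: (A) each of the three structure maps (unit~(\ref{eq:span-lifting:unit}), Kleisli extension~(\ref{eq:span-lifting:Kleisli}), inclusion~(\ref{eq:span-lifting:inclusion})) is a well-defined morphism of $\Span(\Meas)$ between the stated objects, and (B) these data satisfy the graded-monad equations. The endofunctor part $(-)^{\sharp(\mathbf{\Delta},\alpha,\delta)}$ is already available: reflexivity together with composability (Definition~\ref{def:div:properties}), instantiated with $f=g$ and $\beta=1_A$ so that $\sup_x\Delta^{1_A}_\Psi(f(x),f(x))\le 0$, yields substitutivity, which in turn specialises to functoriality via $\mathcal{G}k=(\eta\circ k)^\sharp$; hence the preceding well-definedness theorem applies. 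Throughout, a morphism of spans into $(-)^{\sharp(\mathbf{\Delta},\alpha,\delta)}$ is a triple whose first two components are $\Meas$-morphisms and whose third is a measurable map into the subspace $W(\Phi,\mathbf{\Delta},\alpha,\delta)\subseteq\mathcal{G}\Phi\times\mathcal{G}\Phi$, so for (A) it suffices to check, for each structure map, that (i) the third component indeed lands in the correct witness space and (ii) the two triangles involving the legs $\mathcal{G}\rho_i\circ\pi_i$ commute.

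For (A)(i): the unit's third component $\langle\eta_\Phi,\eta_\Phi\rangle$ sends $x\in\Phi$ to $(\mathbf{d}_x,\mathbf{d}_x)$, and $\Delta^{1_A}_\Phi(\mathbf{d}_x,\mathbf{d}_x)\le 0$ by reflexivity, so its image lies in $W(\Phi,\mathbf{\Delta},1_A,0)$. For the Kleisli extension of a span morphism $(h,k,l)\colon(X,Y,\Phi,\rho_1,\rho_2)\to(X',Y',\Psi,\rho'_1,\rho'_2)^{\sharp(\mathbf{\Delta},\alpha,\delta)}$, the maps $\pi_1\circ l,\pi_2\circ l\colon\Phi\to\mathcal{G}\Psi$ are measurable and, since $l$ is valued in $W(\Psi,\mathbf{\Delta},\alpha,\delta)$, satisfy $\sup_{x\in\Phi}\Delta^\alpha_\Psi((\pi_1\circ l)(x),(\pi_2\circ l)(x))\le\delta$; then for $(\nu_1,\nu_2)\in W(\Phi,\mathbf{\Delta},\beta,\gamma)$ composability gives
\[
\Delta^{\alpha\beta}_\Psi\bigl((\pi_1\circ l)^\sharp\nu_1,\ (\pi_2\circ l)^\sharp\nu_2\bigr)\ \le\ \Delta^\beta_\Phi(\nu_1,\nu_2)\ +\ \sup_{x\in\Phi}\Delta^\alpha_\Psi\bigl((\pi_1\circ l)(x),(\pi_2\circ l)(x)\bigr)\ \le\ \gamma+\delta,
\]
so $(\pi_1\circ l)^\sharp\times(\pi_2\circ l)^\sharp$ restricts to a map $W(\Phi,\mathbf{\Delta},\beta,\gamma)\to W(\Psi,\mathbf{\Delta},\alpha\beta,\delta+\gamma)$, measurable because the $W$'s are subspaces of products of copies of $\mathcal{G}$. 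For the inclusion, $\alpha\preceq\beta$ and $\delta\le\gamma$ force $\Delta^\beta_\Phi\le\Delta^\alpha_\Phi$ (contravariance of a graded family in its grading), hence $W(\Phi,\mathbf{\Delta},\alpha,\delta)\subseteq W(\Phi,\mathbf{\Delta},\beta,\gamma)$, and the identity restricts appropriately.

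For (A)(ii) the triangles are pure monad bookkeeping. By naturality of $\eta$, $\mathcal{G}\rho_i\circ\pi_i\circ\langle\eta_\Phi,\eta_\Phi\rangle=\mathcal{G}\rho_i\circ\eta_\Phi=\eta\circ\rho_i$. For the Kleisli extension one uses $\mathcal{G}g\circ f^\sharp=(\mathcal{G}g\circ f)^\sharp$, then the hypothesis $\mathcal{G}\rho'_1\circ\pi_1\circ l=h\circ\rho_1$ (the first triangle of the morphism $(h,k,l)$), then $(h\circ\rho_1)^\sharp=h^\sharp\circ\mathcal{G}\rho_1$, to obtain $\mathcal{G}\rho'_1\circ\pi_1\circ\bigl((\pi_1\circ l)^\sharp\times(\pi_2\circ l)^\sharp\bigr)=h^\sharp\circ(\mathcal{G}\rho_1\circ\pi_1)$, which is exactly compatibility with the leg of the source span; the second leg is symmetric in $\rho_2,k$. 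For the inclusion the source and target spans have identical legs, so the triangles are trivial.

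Finally, (B): each graded-monad equation---the two unit laws $f^{1\sharp e}\circ\eta=f$ and $\eta^{1\sharp e}=\mathrm{id}$, associativity of Kleisli composition, and the two inclusion-compatibility laws---is an equation of $\Span(\Meas)$-morphisms, hence can be verified componentwise. Its first two components are precisely the corresponding law for the (ordinary) monad $\mathcal{G}\times\mathcal{G}$, which holds. Its third component reduces to the monad laws of $\mathcal{G}$ on $\mathcal{G}\Phi$ and $\mathcal{G}\Psi$: $\eta_\Phi^\sharp=\mathrm{id}_{\mathcal{G}\Phi}$ gives $\eta^{1\sharp e}=\mathrm{id}$; $(\pi_1\circ l)^\sharp\circ\eta_\Phi=\pi_1\circ l$ together with $\langle\pi_1\circ l,\pi_2\circ l\rangle=l$ (valid since $l$ is valued in $W\subseteq\mathcal{G}\Psi\times\mathcal{G}\Psi$) gives $f^{1\sharp e}\circ\eta=f$; $(g^\sharp\circ f)^\sharp=g^\sharp\circ f^\sharp$ gives associativity; and since an inclusion changes only the codomain grading and not the underlying map, the inclusion-compatibility laws are immediate. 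The one genuinely substantive point---and the only place a property of $\mathbf{\Delta}$ beyond being a mere endofunctor is used---is step (A)(i) for the Kleisli extension: checking that $\pi_i\circ l$ are measurable so composability applies, and then invoking composability to land in $W(\Psi,\mathbf{\Delta},\alpha\beta,\delta+\gamma)$. Everything else is routine monad calculus.
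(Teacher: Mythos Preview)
Your proposal is correct and follows essentially the same approach as the paper: verify that each structure map is a well-defined $\Span(\Meas)$-morphism (using reflexivity for the unit, composability for the Kleisli extension, and contravariance of the grading for the inclusion), then observe that the graded-monad axioms reduce componentwise to the ordinary monad laws of $\mathcal{G}$. Your treatment is in fact somewhat more explicit than the paper's---you spell out the derivation of functoriality from reflexivity plus composability (which the paper only remarks on) and you unpack part~(B) law by law where the paper simply says ``it is easy to check''.
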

\begin{proof}
We first prove that the components are well-defined.
\begin{description}
\item[Unit:]
We show the well-definedness of (\ref{eq:span-lifting:unit}).
We fix $(X,Y,\Phi,\rho_1,\rho_2)$ in $\Span(\Meas)$.
For any 
$\phi \in \Phi$, we have 
$\langle \eta_\Phi,\eta_\Phi \rangle(\phi) = (\mathbf{d}_\phi,\mathbf{d}_\phi)$.
Since $\mathbf{\Delta}$ is reflexive, we have $\Delta^{1_A}(\mathbf{d}_\phi,\mathbf{d}_\phi) \leq 0$.
Thus, $\langle \eta_\Phi,\eta_\Phi \rangle$ is indeed a measurable function from $(X,Y,\Phi,\rho_1,\rho_2)$ to $W(\Phi,\mathbf{\Delta},1_A,0)$.
Since $\eta$ is a unit of the sub-Giry monad $\mathcal{G}$, 
we obtain
\begin{align*}
\mathcal{G}\rho_1 \circ \pi_1|_{W(\Phi,\mathbf{\Delta},1_A,0)} \circ \langle\eta_\Phi,\eta_\Phi \rangle
= \mathcal{G}\rho_1 \circ \eta_\Phi
&= \eta_X \circ \rho_1,\\
\mathcal{G}\rho_2 \circ \pi_2|_{W(\Phi,\mathbf{\Delta},1_A,0)} \circ \langle\eta_\Phi,\eta_\Phi \rangle
= \mathcal{G}\rho_2 \circ \eta_\Phi
&= \eta_Y \circ \rho_2.
\end{align*}
Thus (\ref{eq:span-lifting:unit}) is well-defined.
\item[Kleisli extensions:]
We show the well-definedness of (\ref{eq:span-lifting:Kleisli}).
We fix a $\Span(\Meas)$-morphism
\[
  (h,k,l) \colon(X,Y,\Phi,\rho_1,\rho_2) \to(X',Y',\Psi,\rho'_1,\rho'_2)^{\sharp(\mathbf{\Delta},\alpha,\delta)}
\]
and parameters $\beta \in A$ and $\gamma \in \ol\RR$.  
For any $\phi \in \Phi$, we have
$\Delta_{\Psi}^\alpha (\pi_1|_{W(\Psi,\mathbf{\Delta},\alpha,\delta)} \circ l(\phi),\pi_2|_{W(\Psi,\mathbf{\Delta},\alpha,\delta)} \circ l(\phi)) \leq \delta$.
Since $\mathbf{\Delta}$ is composable, we have
for any $(\nu_1,\nu_2) \in W(\Phi,\mathbf{\Delta},\delta,\gamma)$, 
\[
\Delta_{\Psi}^{\alpha\beta} ((\pi_1|_{W(\Psi,\mathbf{\Delta},\alpha,\delta)} \circ l)^\sharp(\nu_1),(\pi_2|_{W(\Psi,\mathbf{\Delta},\alpha,\delta)} \circ l)^\sharp(\nu_2)) \leq \delta+\gamma
\]
This implies that $ ((\pi_1|_{W(\Psi,\mathbf{\Delta},\alpha,\delta)} \circ l)^\sharp \times (\pi_2|_{W(\Psi,\mathbf{\Delta},\alpha,\delta)} \circ l)^\sharp)|_{W(\Phi,\mathbf{\Delta},\beta,\gamma)}$ is indeed a measurable function from $W(\Phi,\mathbf{\Delta},\beta,\gamma)$ to $W(\Psi,\mathbf{\Delta},\alpha\beta,\delta+\gamma)$.
Since $(-)^\sharp$ is the Kleisli lifting of the sub-Giry monad, we obtain
\begin{align*}
\lefteqn{\mathcal{G}\rho'_1 \circ \pi_1|_{W(\Psi,\mathbf{\Delta},\alpha\beta,\delta+\gamma)} \circ  ((\pi_1|_{W(\Psi,\mathbf{\Delta},\alpha,\delta)} \circ l)^\sharp \times (\pi_2|_{W(\Psi,\mathbf{\Delta},\alpha,\delta)} \circ l)^\sharp)|_{W(\Phi,\mathbf{\Delta},\beta,\gamma)}}\\
& = \mathcal{G}\rho'_1 \circ (\pi_1|_{W(\Psi,\mathbf{\Delta},\alpha,\delta)} \circ l)^\sharp \circ \pi_1|_{W(\Phi,\mathbf{\Delta},\beta,\gamma)}
= (\mathcal{G}\rho'_1 \circ \pi_1|_{W(\Psi,\mathbf{\Delta},\alpha,\delta)} \circ l)^\sharp \circ \pi_1|_{W(\Phi,\mathbf{\Delta},\beta,\gamma)}\\
&= h^\sharp \circ \mathcal{G}\rho_1\circ \pi_1|_{W(\Phi,\mathbf{\Delta},\beta,\gamma)}\\
\lefteqn{\mathcal{G}\rho'_2 \circ \pi_2|_{W(\Psi,\mathbf{\Delta},\alpha\beta,\delta+\gamma)} \circ  ((\pi_1|_{W(\Psi,\mathbf{\Delta},\alpha,\delta)} \circ l)^\sharp \times (\pi_2|_{W(\Psi,\mathbf{\Delta},\alpha,\delta)} \circ l)^\sharp)|_{W(\Phi,\mathbf{\Delta},\beta,\gamma)}}\\
&= k^\sharp \circ \mathcal{G}\rho_2\circ \pi_2|_{W(\Phi,\mathbf{\Delta},\beta,\gamma)}
\end{align*}
Thus (\ref{eq:span-lifting:Kleisli}) is well-defined.
\item[Inclusions:]
We show the well-definedness of (\ref{eq:span-lifting:inclusion}).
We fix $(X,Y,\Phi,\rho_1,\rho_2)$ in $\Span(\Meas)$ and parameters $\alpha \preceq \beta$ and $\delta \leq \gamma$.
Since $\mathbf{\Delta}$ is an $A$-graded family of divergences, we have $\Delta^\beta \leq \Delta^\alpha$.
This implies that there is the inclusion function $W(\Phi,\mathbf{\Delta},\alpha,\delta) \hookrightarrow W(\Phi,\mathbf{\Delta},\beta,\gamma)$ in $\Meas$.
Hence, by treating the restrictions of functions, we obtain
\begin{align*}
\mathrm{id}_{\mathcal{G}X} \circ \mathcal{G}\rho_1 \circ \pi_1|_{W(\Phi,\mathbf{\Delta},\alpha,\delta)} &= \mathcal{G}\rho_1 \circ \pi_1 \circ (\mathrm{id}_{\mathcal{G}\Phi} \times \mathrm{id}_{\mathcal{G}\Phi})|_{W(\Phi,\mathbf{\Delta},\alpha,\delta)}\\
\mathrm{id}_{\mathcal{G}Y} \circ \mathcal{G}\rho_2 \circ \pi_2|_{W(\Phi,\mathbf{\Delta},\alpha,\delta)} &= \mathcal{G}\rho_2 \circ \pi_2 \circ (\mathrm{id}_{\mathcal{G}\Phi} \times \mathrm{id}_{\mathcal{G}\Phi})|_{W(\Phi,\mathbf{\Delta},\alpha,\delta)}.
\end{align*}
Therefore (\ref{eq:span-lifting:inclusion}) is well defined.
\end{description}
Therefore, the components of graded monad structures are well-defined.
It is easy to check the axioms of graded monad in \citet[Definition 2.3]{Katsumata2014PEM} by using monad structure of the sub-Giry monad $\mathcal{G}$ since the graded monad structure of the approximate span-lifting is given by using the monad structure of $\mathcal{G}$ and restrictions.
\end{proof}

%The well-definedness of the unit, graded Kleisli extensions, and inclusions are
%derived from the reflexivity, composability, and the $A$-grading of the family
%of $\mathbf{\Delta}$ respectively. Hence, the structures form indeed a graded
%span-lifting of the sub-Giry monad $\mathcal{G}$. The axioms of graded monad
%follow from the structure of sub-Giry monad.
%\paragraph{Double Strength}
\begin{definition}[Double strength]
If the family $\mathbf{\Delta}$ is reflexive, composable, and \emph{additive}
then a \emph{double strength} of the graded monad $(-)^{\sharp(\mathbf{\Delta},\alpha,\delta)}$ is given as follows:
for any pair $(X,Y,\Phi,\rho_1,\rho_2)$ and $(X',Y',\Psi,\rho'_1,\rho'_2)$ of spans,
\begin{align}\label{eq:span-lifting:monoidality}
\lefteqn{(\mathrm{dst}_{X,X'},\mathrm{dst}_{Y,Y'},\langle \mathrm{dst}_{\Phi,\Psi}\circ(\pi_1 \times \pi_1), \mathrm{dst}_{\Phi,\Psi}\circ(\pi_2 \times \pi_2)\rangle|_{W(\Phi,\mathbf{\Delta},\alpha,\delta) \times W(\Psi,\mathbf{\Delta},\beta,\gamma)})\colon}\notag\\
&\qquad\qquad
(X,Y,\Phi,\rho_1,\rho_2)^{\sharp(\mathbf{\Delta},\alpha,\delta)} \mathbin{\dot\times} (X',Y',\Psi,\rho'_1,\rho'_2)^{\sharp(\mathbf{\Delta},\beta,\gamma)} \to (\Phi \mathbin{\dot\times} \Psi)^{\sharp(\mathbf{\Delta},\alpha\beta,\delta+\gamma)}.
\end{align}
\end{definition}

%The well-definedness is given by the additivity of $\mathbf{\Delta}$.  The
%axioms of a double strength are derived from the commutativity of the sub-Giry
%monad $\mathcal{G}$.
\begin{theorem}[Well-definedness (Theorem \ref{thm:span-lifting:conclusion})]
If $\mathbf{\Delta}$ is reflexive, composable, and additive then the above structure forms indeed a double strength of the graded monad $(-)^{\sharp(\mathbf{\Delta},\alpha,\delta)}$ on $\Span(\Meas)$.
\end{theorem}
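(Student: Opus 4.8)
I would split the argument into (i) showing that the triple in~(\ref{eq:span-lifting:monoidality}) is a well-defined $\Span(\Meas)$-morphism and (ii) checking the coherence laws of a double strength. For (i), the first two components $\mathrm{dst}_{X,X'}$ and $\mathrm{dst}_{Y,Y'}$ are exactly the double strength of the commutative strong monad $\mathcal{G}$, so nothing is required there beyond recalling that $\mathrm{dst}$ is measurable; the content sits in the third component. I would first observe that $\langle \mathrm{dst}_{\Phi,\Psi}\circ(\pi_1\times\pi_1),\mathrm{dst}_{\Phi,\Psi}\circ(\pi_2\times\pi_2)\rangle$ is a measurable map $\mathcal{G}\Phi\times\mathcal{G}\Phi\times\mathcal{G}\Psi\times\mathcal{G}\Psi\to\mathcal{G}(\Phi\times\Psi)\times\mathcal{G}(\Phi\times\Psi)$ (a composite of $\mathrm{dst}$ and projections), so its restriction to the subspace $W(\Phi,\mathbf{\Delta},\alpha,\delta)\times W(\Psi,\mathbf{\Delta},\beta,\gamma)$ is automatically measurable. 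The crucial point is that this restriction corestricts to $W(\Phi\mathbin{\dot\times}\Psi,\mathbf{\Delta},\alpha\beta,\delta+\gamma)$: for $(\nu_1,\nu_2)\in W(\Phi,\mathbf{\Delta},\alpha,\delta)$ and $(\nu_1',\nu_2')\in W(\Psi,\mathbf{\Delta},\beta,\gamma)$ the image is $(\nu_1\otimes\nu_1',\nu_2\otimes\nu_2')$, and additivity of $\mathbf{\Delta}$ gives
\[
\Delta^{\alpha\beta}_{\Phi\times\Psi}(\nu_1\otimes\nu_1',\nu_2\otimes\nu_2')\;\le\;\Delta^{\alpha}_{\Phi}(\nu_1,\nu_2)+\Delta^{\beta}_{\Psi}(\nu_1',\nu_2')\;\le\;\delta+\gamma,
\]
so the pair lies in the target witness space. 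It then remains to verify the two span-square equations: writing out the legs of $\Phi^{\sharp(\mathbf{\Delta},\alpha,\delta)}\mathbin{\dot\times}\Psi^{\sharp(\mathbf{\Delta},\beta,\gamma)}$ and of $(\Phi\mathbin{\dot\times}\Psi)^{\sharp(\mathbf{\Delta},\alpha\beta,\delta+\gamma)}$, both equalities reduce to naturality of $\mathrm{dst}$, i.e. $\mathcal{G}(\rho_i\times\rho_i')(\nu\otimes\nu')=\mathcal{G}\rho_i(\nu)\otimes\mathcal{G}\rho_i'(\nu')$ for $i=1,2$, which is the standard fact that pushing a product measure forward along a product map yields the product of the pushforwards.

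For (ii), I would use the reflection argument already sketched for Theorem~\ref{thm:span-lifting:conclusion}: a morphism of $\Span(\Meas)$ is just a triple of $\Meas$-morphisms, so two parallel such morphisms coincide as soon as all three components agree, and every component of every structure map ($\eta$, the graded Kleisli lifting, the grading inclusions, the functorial action, and the double strength just built) is assembled purely from the monad and strength data of $\mathcal{G}$ on the respective objects. Hence each double-strength axiom---naturality in both object arguments, compatibility with the grading inclusions, the associativity and unit coherences for $\mathbin{\dot\times}$, the unit law $\mathrm{dst}_{1_A,1_A}\circ(\eta\mathbin{\dot\times}\eta)=\eta$, and the compatibility of $\mathrm{dst}$ with the graded Kleisli lifting---becomes, on each of the three components, the corresponding identity for $\mathcal{G}$ (on $\Meas$ for the $\Phi$-component, and for $\mathcal{G}\times\mathcal{G}$ on $\Meas\times\Meas$ for the other two), all of which are standard commutative-strong-monad identities; the restrictions to witness subspaces do not interfere, since subspace inclusions are monic and restriction commutes with composition. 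On the grading side, $\ol\RR$ is a commutative monoid under addition, and the only way the grades touch $\mathrm{dst}$ is the bookkeeping of $\delta+\gamma$ against the index $\alpha\beta$, which is precisely what additivity of $\mathbf{\Delta}$ arranges.

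The single genuinely load-bearing step is therefore the corestriction of the third component into the correct witness space---this is where additivity of $\mathbf{\Delta}$ is used, and is the reason it appears among the hypotheses. Everything else is either a measurability remark inherited from $\mathcal{G}$ being a strong monad on $\Meas$, or a diagram chase that collapses to a known commutative-strong-monad identity once one notes that a $\Span(\Meas)$-morphism is determined by its three $\Meas$-components.
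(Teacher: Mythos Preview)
Your proposal is correct and follows essentially the same approach as the paper: you use additivity of $\mathbf{\Delta}$ to show the third component corestricts to the target witness space, verify the span-morphism squares via (bi)naturality of $\mathrm{dst}$ for $\mathcal{G}$, and then reduce the double-strength axioms to the known identities for the commutative strong monad $\mathcal{G}$ by observing that $\Span(\Meas)$-morphisms are determined componentwise. Your write-up is in fact more explicit than the paper's, which dispatches the axioms with a one-line ``easy to check'' remark.
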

\begin{proof}
Since $\mathbf{\Delta}$ is reflexive and composable,  $(-)^{\sharp(\mathbf{\Delta},\alpha,\delta)}$ forms an $A \times \ol\RR$-graded monad on $\Span(\Meas)$.
We show the well-definedness of (\ref{eq:span-lifting:monoidality}).
We fix spans $(X,Y,\Phi,\rho_1,\rho_2)$ and $(X',Y',\Psi,\rho'_1,\rho'_2)$ and parameters $\alpha,\beta \in A$ and 
$\gamma,\delta \in \ol\RR$.
Since $\mathbf{\Delta}$ is additive,
$\langle \mathrm{dst}_{\Phi,\Psi}\circ (\pi_1 \times \pi_1), \mathrm{dst}_{\Phi,\Phi'}\circ(\pi_2 \times \pi_2) \rangle|_{W(\Phi,\mathbf{\Delta},\alpha,\delta) \times W(\Psi,\mathbf{\Delta},\beta,\gamma)}$ is indeed a measurable function from $W(\Phi,\mathbf{\Delta},\alpha,\delta) \times W(\Psi,\mathbf{\Delta},\beta,\gamma)$ to $W(\Phi\mathbin{\dot\times}\Psi,\mathbf{\Delta},\alpha\beta,\delta+\gamma)$.
From the binaturality of the double strength $\mathrm{dst}$ of the sub-Giry monad $\mathcal{G}$, we have
\begin{align*}
& 
\mathcal{G}(\rho_1\times \rho'_1) \circ \pi_1|_{W(\Phi\mathbin{\dot\times}\Psi,\mathbf{\Delta},\alpha\beta,\delta+\gamma)} \circ
\langle \mathrm{dst}_{\Phi,\Psi}\circ(\pi_1 \times \pi_1), \mathrm{dst}_{\Phi,\Psi}\circ(\pi_2 \times \pi_2) \rangle|_{W(\Phi,\mathbf{\Delta},\alpha,\delta) \times W(\Psi,\mathbf{\Delta},\beta,\gamma)}
\\
&=
\mathcal{G}(\rho_1\times \rho'_1) \circ \mathrm{dst}_{\Phi,\Psi}\circ(\pi_1 \times \pi_1)|_{W(\Phi,\mathbf{\Delta},\alpha,\delta) \times W(\Psi,\mathbf{\Delta},\beta,\gamma)}\\
&=
\mathcal{G}(\rho_1\times \rho'_1) \circ \mathrm{dst}_{\Phi,\Psi}\circ(\pi_1|_{W(\Phi,\mathbf{\Delta},\alpha,\delta)} \times \pi_1|_{W(\Psi,\mathbf{\Delta},\beta,\gamma)})\\
&=
\mathrm{dst}_{X,X'}\circ
( (\mathcal{G}\rho_1 \circ \pi_1|_{W(\Phi,\mathbf{\Delta},\alpha,\delta)}) \times 
(\mathcal{G}\rho'_1 \circ\pi_1|_{ W(\Psi,\mathbf{\Delta},\beta,\gamma)})),\\
&
\mathcal{G}(\rho_2\times \rho'_2) \circ \pi_1|_{W(\Phi\mathbin{\dot\times}\Psi,\mathbf{\Delta},\alpha\beta,\delta+\gamma)} \circ
\langle \mathrm{dst}_{\Phi,\Psi}\circ(\pi_1 \times \pi_1), \mathrm{dst}_{\Phi,\Psi}\circ(\pi_2 \times \pi_2) \rangle|_{W(\Phi,\mathbf{\Delta},\alpha,\delta) \times W(\Psi,\mathbf{\Delta},\beta,\gamma)}\\
&= \mathrm{dst}_{Y,Y'}\circ
( (\mathcal{G}\rho_2 \circ \pi_1|_{W(\Phi,\mathbf{\Delta},\alpha,\delta)}) \times 
(\mathcal{G}\rho'_2 \circ\pi_1|_{ W(\Psi,\mathbf{\Delta},\beta,\gamma)})).\\
\end{align*}
Hence, (\ref{eq:span-lifting:monoidality}) is well-defined.
It is easy to check the axioms of double strength (modulo grading)
by using double strength of the sub-Giry monad $\mathcal{G}$.
\end{proof}
\end{document}